\newcommand{\mce}[1]{\mathrm{UMTC}_{/#1}}
\newcommand{\aute}{\mathrm{Aut}}
\theoremstyle{definition}
\newtheorem{thm}{Theorem}
\newtheorem{cor}[thm]{Corollary}
\newtheorem{conj}{Conjecture}
\newtheorem{dfn}{Definition}
\newtheorem{rmk}{Remark}
\newtheorem{lem}{Lemma}
\newtheorem{exa}{Example}
\newcommand{\mext}{\mathcal{M}_{ext}}
\newcommand{\ov}{\overline}
\newcommand{\Hilb}{\mathrm{Vec}}
\newcommand{\Fun}{\mathrm{Fun}}
\begin{document}
\begin{titlepage}

  \title{Classification of 2+1D topological orders and SPT orders\\
for bosonic and fermionic systems with on-site symmetries}

\author{Tian Lan} 
\affiliation{Perimeter Institute for Theoretical Physics, Waterloo, Ontario N2L 2Y5, Canada} 
\affiliation{Department of Physics and Astronomy,
  University of Waterloo, Waterloo, Ontario N2L 3G1, Canada}

\author{Liang Kong} 
\affiliation{Department of Mathematics and Statistics,
University of New Hampshire, Durham, NH 03824, USA}
\affiliation{Center of Mathematical Sciences and Applications, Harvard University, Cambridge, MA 02138}

\author{Xiao-Gang Wen} 
\affiliation{Department of Physics, Massachusetts Institute of Technology, Cambridge, Massachusetts 02139, USA}
\affiliation{Perimeter Institute for Theoretical Physics, Waterloo, Ontario N2L 2Y5, Canada}

\begin{abstract} 
Gapped quantum liquids (GQL) include both topologically ordered states (with
long range entanglement) and symmetry protected topological (SPT) states (with
short range entanglement).  In this paper, we propose a classification of 2+1D
GQL for both bosonic and fermionic systems: 2+1D bosonic/fermionic GQLs with
finite on-site symmetry are classified by non-degenerate unitary braided fusion
categories over a symmetric fusion category (SFC) $\cal E$, abbreviated as
UMTC$_{/\cal E}$, together with their modular extensions and total chiral
central charges.  The SFC $\cal E$ is $\text{Rep}(G)$ for bosonic symmetry $G$,
or $\text{sRep}(G^f)$ for fermionic symmetry $G^f$.  As a special case of the
above result, we find that the modular extensions of $\text{Rep}(G)$ classify
the 2+1D bosonic SPT states of symmetry $G$, while the $c=0$ modular extensions
of $\text{sRep}(G^f)$ classify the 2+1D fermionic SPT states of symmetry $G^f$.
Many fermionic SPT states are studied based on the constructions from
free-fermion models. But it is not clear if free-fermion constructions can produce
all fermionic SPT states. Our classification does not have such a drawback. 
We show that, for interacting 2+1D fermionic systems, there are
exactly 16 superconducting phases with no symmetry and no fractional
excitations (up to $E_8$ bosonic quantum Hall states). Also, there are exactly
8 $Z_2\times Z_2^f$-SPT phases, 2 $Z_8^f$-SPT phases, and so on.  Besides, we
show that two topological orders with identical bulk excitations and central
charge always differ by the stacking of the SPT states of the same symmetry.


\end{abstract}

\pacs{11.15.-q, 11.15.Yc, 02.40.Re, 71.27.+a}

\maketitle

\end{titlepage}

{\small \setcounter{tocdepth}{1} \tableofcontents }
\section{Introduction}

Topological order\cite{W8987,W9039,WN9077} is a new kind of order beyond the
symmetry breaking orders\cite{L3726} in gapped quantum systems.  Topological
orders are patterns of \emph{long-range entanglement}\cite{CGW1038} in
\emph{gapped quantum liquids} (GQL)\cite{ZW1490}.  Based on the unitary modular
tensor category (UMTC) theory for non-abelian
statistics\cite{MS8977,BakK01,K062}, in \Ref{KW1458,W150605768}, it is proposed
that 2+1D bosonic topological orders are classified by $\{\text{UMTC}\}\times
\{\text{iTO}_B\}$, where $\{\text{UMTC}\}$ is the set of UMTCs and
$\{\text{iTO}_B\}$ is the set of invertible topological orders
(iTO)\cite{KW1458,F1478} for 2+1D boson systems.  In fact $\{\text{iTO}_B\}=\Z$
which is generated by the $E_8$ bosonic quantum Hall (QH) state, and a table of
UMTCs was obtained in \Ref{RSW0777,W150605768}.  Thus, we have a table (and a
classification) of 2+1D bosonic topological orders.

In a recent work\cite{LW150704673}, we show that 2+1D fermionic topological
orders are classified by $\{\mce{\sRp(Z_2^f)}\}\times \{\text{iTO}_F\}$,
where $\{\mce{\sRp(Z_2^f)}\}$ is the set of non-degenerate unitary
braided fusion categories (UBFC) over the symmetric fusion category (SFC)
$\sRp(Z_2^f)$ (see Definition \ref{defMTC}). We also require
$\mce{\sRp(Z_2^f)}$s to have modular extensions.
$\{\text{iTO}_F\}$ is the set of invertible topological orders for 2+1D fermion
systems.  In fact $\{\text{iTO}_F\}=\Z$ which is generated by the $p+\ii p$
superconductor. In \Ref{LW150704673} we computed the table for
$\mce{\sRp(Z_2^f)}$s, and  obtained a table (and a classification) of
2+1D fermionic topological orders.

In \Ref{LW150704673}, we also point out the importance of modular extensions.
If a $\mce{\sRp(Z_2^f)}$ does not have a modular extension, it means
that the fermion-number-parity symmetry is not on-site (\ie
anomalous\cite{W1313}).  On the other hand, if a $\mce{\sRp(Z_2^f)}$
does have modular extensions, then the $\mce{\sRp(Z_2^f)}$ is
realizable by a lattice model of fermions. In this case, a given $\mce{\sRp(Z_2^f)}$ may have several
modular extensions. We found that different modular extensions of
$\mce{\sRp(Z_2^f)}$ contain information of iTO$_F$s.

Our result on fermionic topological orders can be easily generalized to
describe bosonic/fermionic topological orders with symmetry.  This will be the
main topic of this paper. (Some of the results are announced in
\Ref{LW150704673}).  In this paper, we will consider symmetric GQL phases for
2+1D bosonic/fermionic systems.  The notion of GQL was defined in \Ref{ZW1490}.
The symmetry group of GQL is $G$ (for bosonic systems) or $G^f$ (for fermionic
systems).  If a symmetric GQL has long-range entanglement (as defined in
\Ref{CGW1038,ZW1490}), it corresponds to a symmetry enriched topological
(SET) order\cite{CGW1038}.  If a symmetric GQL has short-range entanglement, it
corresponds to a symmetry protected trivial (SPT) order [which is also
known as symmetry protected topological (SPT)
order]\cite{GW0931,PBT1225,CLW1141,CGL1314,CGL1204}.

In this paper, we are going to show that, 2+1D symmetric GQLs are classified by
$\mce{\cE}$ plus their modular extensions and chiral central charge.  In other
words, GQLs are labeled by triples $(\cC,\cM,c)$, where $\cC$ is a
$\mce{\cE}$, $\cM$ a modular extension of $\cC$, and $c$ the chiral central
charge of the edge state. (To be more precise, a modular extension of $\cC$,
$\cM$, is a UMTC with a fully faithful embedding $\cC \to \cM$.  In particular,
even if the UMTC $\cM$ is fixed, different embeddings correspond to
different modular extensions.) Here the SFC $\cE$ is given by $\cE=\Rp(G)$ for
bosonic cases, or $\cE=\sRp(G^f)$ for fermionic cases.  In yet another way to
phrase our result: we find that the structure $\cE \hookrightarrow \cC
\hookrightarrow \cM$ classifies
the 2+1D GQLs with symmetry $\cE$, where $\hookrightarrow$ represents the
embeddings and
$\cen{\cE}{\cM}=\cC$ (see Definition \ref{cendef}).

As a special case of the above result, we find that bosonic 2+1D SPT phase with
symmetry $G$ are classified by the modular extensions of $\Rp(G)$, while
fermionic 2+1D SPT phase with symmetry $G^f$ are classified by the modular
extensions of $\sRp(G^f)$ that have central charge $c=0$.

We like to mention that \Ref{BBC1440} has classified bosonic GQLs with symmetry
$G$, using $G$-crossed UMTCs. This paper uses a different approach so that we
can classify both bosonic and fermionic  GQLs with symmetry.  We also like to
mention that there is a mathematical companion \Ref{LW160205936} of this paper, where
 one can find detailed proof and explanations for related mathematical results.

The paper is organized as the following.  In Section \ref{GQLsymm}, we review
the notion of topological order and introduce category theory as a theory of
quasiparticle excitations in a GQL.  We will introduce a categorical way to
view the symmetry as well.  In Section \ref{inv}, we discuss invertible GQLs
and their classification based on modular extensions.  In Sections \ref{clGQL} and
\ref{clGQL2}, we generalize the above results and propose a classification of
all GQLs.  Section \ref{stack} investigates the stacking operation from
physical and mathematical points of view.  Section \ref{howto} describes how to
numerically calculate the modular extensions and Section \ref{examples}
discusses some simple examples.  For people with physics background, one way to
read this paper is to start with the Sections \ref{GQLsymm} and \ref{clGQL2},
and then go to Section \ref{examples} for the examples.

\section{Gapped quantum liquids, topological order and symmetry}
\label{GQLsymm}

\subsection{The finite on-site symmetry and symmetric fusion category}

In this paper, we consider physical systems with an on-site symmetry described
by a finite group $G$.  For fermionic systems, we further require that $G$
contains a central $Z_2$ fermion-number-parity subgroup.  More precisely,
fermionic symmetry group is a pair $(G,f)$, where $G$ is a finite group, $f\neq
1$ is an element of $G$ satisfying $f^2=1,fg=gf,\forall g\in G$. We denote the
pair $(G,f)$ as $G^f$.

There is another way to view the on-site symmetries, which is nicer because
bosonic and fermionic symmetries can be formulated in the same manner.
Consider a bosonic/fermionic product state $|\psi\ket$ that does not break the
symmetry $G$: $U_g|\psi\ket=|\psi\ket,\ g\in G$. Then the new way to view the
symmetry is to use the properties of the excitations above the product state to
encode the information of the symmetry $G$.

The product state contain only local excitations that can be created by acting
local operators $O$ on the ground state $O|\psi\ket$.  For any group action
$U_g$, $U_g O|\psi\ket=U_g O U_g^\dag U_g|\psi\ket=U_g O U_g^\dag |\psi\ket$ is
an excited state with the same energy as $O|\psi\ket$. Since we assume the
symmetry to be on-site, $U_g OU_g^\dag$ is also a local operator.  Therefore,
$U_g OU_g^\dag|\psi\ket$ and $O|\psi\ket$ correspond to the degenerate local
excitations. We see that  local excitations ``locally'' carry group
representations.  In other words, different types of local excitations are
labeled by irreducible representations of the symmetry group. 

By looking at how the local excitations (more precisely, their group
representations) fuse and braid with each other, we arrive at the mathematical
structure called symmetric fusion categories (SFC).  By definition a SFC is a
braided fusion category where all the objects (the excitations) have trivial
mutal statistics (\ie centralize each other, see next section). 
A SFC is automatically a unitary braided fusion category. 

In fact, there are only two kinds of SFCs: one is representation category of
$G$: $\Rp(G)$, with the usual braiding (all representations are bosonic); the
other is $\sRp(G^f)$ where an irreducible representation is bosonic if $f$ is
represented trivially ($+1$), and fermionic if $f$ is represented
non-trivially($-1$).

It turns out SFC (or the fusion and braiding properties of the local
excitations) fully characterize the symmetry group.  Therefore, it is
equivalent to say finite on-site symmetry is given by a SFC $\cE$. By Tannaka
duality $\cE$ gives rises to a unique finite group $G$ and by checking the
braiding in $\cE$ we know whether it is bosonic or fermionic.  This is the new
way, the categorical way, to view the symmetry.  Such a categorical view of
bosonic/fermionic symmetry allows us to obtain a classification of symmetric
topological/SPT orders.

\subsection{Categorical description of topological excitations with symmetry}

In symmetric GQLs with topological order (\ie with long range entanglement),
there can be particle-like excitations with local energy density, but they
cannot be created by local operators. They are known as (non-trivial)
topological excitations. Topological excitations do not necessarily carry group
representations. Nevertheless, we can still study how they fuse and braid with
each other; so we have a unitary braided fusion category (UBFC) to describe the
particle-like excitations. To proceed, we need the following key definition on
``centralizers.''
\begin{dfn}
  The objects $X,Y$ in a UBFC $\cC$ are said to \emph{centralize} (mutually
  local to) each other if 
  \begin{align}
   c_{Y,X}\circ c_{X,Y}=\mathrm{id}_{X\otimes Y},
  \end{align}
  where $c_{X,Y}: X\otimes Y\cong Y\otimes X$ is the braiding in $\cC$.
\end{dfn}

  Physically, we say that $X$ and $Y$ have trivial mutual statistics.

\begin{dfn}
\label{cendef}
  Given a subcategory $\cD\subset \cC$, its \emph{centralizer}
  $\cen{\cD}{\cC}$ in $\cC$
  is the full subcategory of objects in $\cC$ that centralize all the objects in
  $\cD$.  
\end{dfn}

We may roughly view a category as a ``set'' of particle-like excitations.
So the centralizer $\cen{\cD}{\cC}$ is the ``subset'' of particles in $\cC$
that have trivial mutual statistics with all the particles in $\cD$.

\begin{dfn}
\label{defMTC}
  A UBFC $\cE$ is a \emph{symmetric} fusion category if $\cen{\cE}{\cE}=\cE$.
A UBFC $\cC$ with a fully faithful embedding $\cE\inj \cen{\cC}{\cC}$ is called
a UBFC over $\cE$. Moreover, $\cC$ is called a non-degenerate UBFC over $\cE$, or
$\mce{\cE}$, if $\cen{\cC}{\cC}=\cE$.
\end{dfn}

\begin{dfn} \label{def:hom-bfce}
Two UBFCs over $\cE$, $\cC$ and $\cC'$ are equivalent if there is a unitary braided
equivalence $F:\cC\to\cC'$ such that it preserves the embeddings, i.e.,
the following diagram commute.
  \newdir^{ (}{{}*!/-5pt/@^{(}}
\begin{align}
\label{Ceq}
\xymatrix{
  \cE\ar@{^{ (}->}[r]\ar@{=}[d]&\cC\ar[d]^{F}
  \\
  \cE\ar@{^{ (}->}[r]&\cC'
}
\end{align}
We denote the category of unitary braided auto-equivalences of $\cC$ by $\mathcal{A}\mathrm{ut}(\cC)$ and its underlining group by $\aute(\cC)$. 
\end{dfn}

We recover the usual definition of
UMTC when $\cE$ is trivial, \ie the category of Hilbert spaces, denoted by
$\Hilb=\Rp(\{1\})$.  In this case the subscript is omitted.

Physically, a UBFC $\cC$ is the collection of all bulk topological excitations
plus their fusion and braiding data. Requiring $\cC$ to be a 
$\mce{\cE}$ means: (1) the set of local excitations, $\cE$ (which is the set of
all the irreducible representations of the symmetry group), is included in
$\cC$ as a subcategory; (2) $\cC$ is anomaly-free, \ie all the topological
excitations (the ones not in $\cE$) can be detected by mutual
braiding\cite{KW1458}. In other words, every topological excitation must have
non-trivial mutual statistics with some excitations.  Those excitations that
cannot be detected by mutual braiding (i.e., $\cen{\cC}{\cC}$) are exactly the
local excitations in $\cE$. Moreover, we want the symmetry to be on-site
(gaugeable), which requires the existence of modular extensions (see Definition \ref{mextdef}). Such an understanding leads to the following
conjecture:
\begin{conj}
  Bulk topological excitations of topological orders with symmetry $\cE$ are
classified by $\mce{\cE}$'s that have modular extensions.
\end{conj}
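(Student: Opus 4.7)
The plan is to prove the two directions of the correspondence separately: that every 2+1D topological order with on-site symmetry $\mathcal{E}$ gives rise to a $\mce{\mathcal{E}}$ that admits a modular extension, and conversely that every such $\mce{\mathcal{E}}$-with-modular-extension is realized by a microscopic symmetric Hamiltonian.

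For the forward direction, I would take the collection $\mathcal{C}$ of bulk superselection sectors together with its fusion and braiding data; by general principles of 2+1D gapped phases, $\mathcal{C}$ is a UBFC. The symmetry charges are local excitations of the form $O|\psi\rangle$, which by the discussion of the preceding subsection are labeled by irreducible representations assembled in $\mathcal{E}$ and braid trivially with every other excitation, since the local operator $O$ can be transported past any topological defect without monodromy. This supplies a fully faithful embedding $\mathcal{E}\hookrightarrow\mathcal{C}$ whose image lies in the symmetric center of $\mathcal{C}$, making $\mathcal{C}$ a UBFC over $\mathcal{E}$. The anomaly-freeness condition then forces the reverse inclusion: any bulk excitation that braids trivially with everything cannot be distinguished by remote detection from a local excitation, and must therefore already belong to $\mathcal{E}$. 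Together these give $\mathcal{C}\in\mce{\mathcal{E}}$. Existence of a modular extension follows from on-siteness of the symmetry, which is equivalent to its being gaugeable: gauging $G$ produces an anomaly-free topological order without symmetry, whose UMTC $\mathcal{M}$ contains $\mathcal{C}$ fully faithfully and is therefore the required modular extension.

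For the reverse direction, given $(\mathcal{C},\mathcal{M})$ one builds a microscopic realization in two steps. First, realize $\mathcal{M}$ itself as an anomaly-free topological order by a Turaev--Viro/Levin--Wen string-net model (in the bosonic case) or its fermionic analogue, possibly stacked with copies of the relevant invertible topological order to match the chiral central charge. Second, ungauge the subcategory $\mathcal{E}\hookrightarrow\mathcal{M}$, i.e., condense the $\mathcal{E}$-Wilson lines so as to promote $G$ from a gauge symmetry to a genuine on-site global symmetry; the surviving symmetric bulk excitations then form exactly $\mathcal{C}$, while the embedding $\mathcal{C}\hookrightarrow\mathcal{M}$ records the response of the phase to regauging. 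Well-definedness on equivalence classes follows because the braided equivalence class of $(\mathcal{C},\mathcal{M})$ together with its embeddings is in principle measurable from symmetric braiding data combined with the action of symmetry defects.

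The main obstacle is the reverse direction. The forward direction is essentially a sequence of structural identifications already built into the definitions introduced above. Realizability, however, requires an explicit lattice construction for every admissible categorical datum, and this is delicate in the fermionic and chiral ($c\neq 0$) cases, where neither string-net nor free-fermion constructions are individually sufficient: one must combine string-net models with stacking of invertible topological orders (which is where the chiral central charge $c$ enters the classification) and invoke existence/condensation theorems from the mathematical companion paper to guarantee that the ungauging step terminates in a consistent, nontrivial symmetric phase. A further subtle point is establishing a genuine bijection rather than a mere surjection: one must show that inequivalent triples $(\mathcal{C},\mathcal{M},c)$ produce phases distinguishable by purely symmetric, local observables, so that no two distinct categorical data collapse to the same physical phase.
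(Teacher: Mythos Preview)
The statement you are attempting to prove is explicitly labeled a \emph{Conjecture} in the paper, and the paper does not supply a proof. The surrounding text offers only physical motivation: local excitations carry representations of $G$ and hence assemble into $\cE$; the anomaly-free condition identifies the M\"uger center with $\cE$; and on-siteness of the symmetry is equated with gaugeability, which in turn is equated with existence of a modular extension. Your forward direction is a faithful rephrasing of exactly this heuristic, so in that sense it matches the paper --- but neither you nor the paper has promoted these identifications to a rigorous argument. In particular, the claim that on-siteness implies (and is implied by) the existence of a modular extension is asserted rather than derived; the paper later treats it as an additional working hypothesis (see the remarks surrounding Conjecture~\ref{classSETA} and the appendix on factorization homology).

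Your reverse direction overshoots the statement. The conjecture concerns only the classification of \emph{bulk topological excitations}, i.e., it asserts a bijection between equivalence classes of bulk excitation data and $\mce{\cE}$'s admitting a modular extension. You instead sketch realizability of the triple $(\cC,\cM,c)$, which is the content of the later and strictly stronger Conjecture~\ref{classSET}. For the present conjecture you would only need that every $\mce{\cE}$ with some modular extension arises as the bulk-excitation category of \emph{some} symmetric lattice model --- not that each choice of $\cM$ and $c$ is separately realized --- and conversely that two phases with the same bulk excitations give the same $\cC$. Even this weaker realizability is, as you note, not established in the fermionic or chiral cases; the paper does not claim to resolve it either. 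So the honest status is: the forward direction is physically motivated but not proved, the reverse direction is open, and the statement remains a conjecture.
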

\noindent

We like to remark that $\mce{\cE}$'s fail to classify
topological orders.  This is because two different topologically ordered phases
may have bulk topological excitations with the same non-abelian statistics (\ie
described by the same $\mce{\cE}$).  However, $\mce{\cE}$'s, with modular
extensions, do classify topological orders up to invertible ones.  See next
section for details.  The relation between anomaly and modular extension will
also be discussed later.

\section{Invertible GQLs and modular extension}
\label{inv}

\subsection{Invertible GQLs}

There exist non-trivial topological ordered states that have only trivial
topological excitations in the bulk (but non-trivial edge states). They are
``invertible'' under the stacking operation\cite{KW1458,F1478}
(see Section \ref{stack} for details). More generally,
we define  
\begin{dfn}
A GQL is invertible if its bulk topological excitations are all trivial
(\ie can all be created by local operators).
\end{dfn}
Consider some invertible GQLs with the same symmetry $\cE$.  The bulk
excitations of those invertible GQLs are the same which are described by the
same SFC $\cE$. Now the question is: How to distinguish those invertible GQLs?

First, we believe that invertible bosonic topological orders with no symmetry
are generated by the $E_8$ QH state (with central charge $c=8$) via
time-reversal and stacking, and form a $\Z$ group. Stacking with an $E_8$ QH state only
changes the central charge by $8$, and does not change the bulk excitations or
the symmetry. So the only data we need to know to determine the invertible
bosonic topological order with no symmetry is the central charge $c$.  The
story is parallel for invertible fermionic topological orders with no symmetry,
which are believed to be generated by the $p+\ii p$ superconductor state with
central charge $c=1/2$.

Second, invertible bosonic GQLs with symmetry are generated by bosonic SPT
states and invertible bosonic topological orders (\ie $E_8$ states) via
stacking.  We know that the bosonic  SPT states with symmetry $G$ are
classified by the 3-cocycles in $H^3[G,U(1)]$.  Therefore, bosonic invertible
GQLs with symmetry $G$ are classified by $H^3[G,U(1)]\times \Z$ (where $\Z$
corresponds to layers of $E_8$ states).

However, this result and this point of view is not natural to generalize to
fermionic cases or non-invertible GQLs. Thus, we introduce an equivalent point
of view, which can cover boson, fermion,  and non-invertible GQLs in the same
fashion.

\subsection{Modular extension}

First, we introduce the notion of modular extension of a $\mce{\cE}$:
\begin{dfn}
\label{mextdef}
  Given a $\mce{\cE}$ $\cC$, its \emph{modular extension} is a UMTC
  $\cM$, together with a fully faithful embedding $\iota_\cM:
  \cC\hookrightarrow\cM$, such that $\cen{\cE}{\cM}=\cC$, equivalently
  $\dim(\cM)=\dim(\cC)\dim(\cE)$.

  Two modular extensions $\cM$ and $\cM'$ are equivalent if
  there is an equivalence between the UMTCs ${F:\cM\to\cM'}$ that preserves the
  embeddings, i.e., the following diagram commute.
  \newdir^{ (}{{}*!/-5pt/@^{(}}
\begin{align}
\label{MEeq}
\xymatrix{
  \cC\ar@{^{ (}->}[r]\ar@{=}[d]&\cM\ar[d]^{F}\\
  \cC\ar@{^{ (}->}[r]&\cM'
}
\end{align}
  We
  denote the set of equivalent classes of modular extensions of $\cC$ by $\mext(\cC)$.
\end{dfn}
\begin{rmk}
  Since the total quantum dimension of modular extensions of a given $\cC$ is
fixed, there are only finitely many different modular extensions, due to
\Ref{BNRW13}. In principle we can always perform a finite search to exhaust all
the modular extensions.
\end{rmk}

Remember that $\cC$ describes the particle-like excitations in our topological
state. Some of those excitations are local that have trivial mutual statistics
with all other excitations.  Those local excitation form $\cE \subset \cC$.
The modular extension $\cM$ of $\cC$ is obtained as adding particles that have
non-trivial mutual statistics with the local excitations in $\cE$, so that
every particle in $\cM$ will always have non-trivial mutual statistics with
some particles in $\cM$.  Since the particles in $\cE$ carry ``charges'' (\ie
the irreducible representations of $G$), the added particles correspond to
``flux'' (\ie the symmetry twists of $G$).  So the modular extension correspond
to gauging\cite{LG1209} the on-site symmetry $G$. Since we can use the gauged
symmetry to detect SPT orders\cite{HW1339}, we like to propose the following
conjecture
\begin{conj}\label{invclassB}
  Invertible bosonic GQLs with symmetry $\cE=\Rp(G)$ are classified by
$(\cM,c)$ where $\cM$ is a  modular extension of $\cE$ and $c=0$ mod 8.
\end{conj}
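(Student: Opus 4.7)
\medskip

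\noindent\textbf{Proof proposal.} The plan is to produce a bijection between invertible bosonic GQLs with symmetry $G$ and pairs $(\cM,c)$ with $\cM\in\mext(\Rp(G))$ and $c\in 8\Z$, by refining the classification $H^3[G,U(1)]\times\Z$ (SPT phases times $E_8$ layers) recalled just above the conjecture. The strategy is to (i) identify $\mext(\Rp(G))$ with $H^3[G,U(1)]$ via the physical operation of gauging the on-site symmetry, and (ii) interpret the extra $\Z$ factor as $c/8$, checking that stacking with $E_8$ states leaves the modular extension untouched.

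For step (i), given a cocycle $\omega\in H^3[G,U(1)]$ representing a bosonic SPT state, gauging the $G$ symmetry produces the twisted Dijkgraaf--Witten UMTC $\cM_\omega=Z(\Hilb_G^\omega)$, where $\Hilb_G^\omega$ is the fusion category of $G$-graded vector spaces with associator twisted by $\omega$. One checks directly that $\cM_\omega$ has central charge $0$, that it contains $\Rp(G)$ as the subcategory of ``pure charge'' (non-flux) objects, and that $\cen{\Rp(G)}{\cM_\omega}=\Rp(G)$ with $\dim(\cM_\omega)=|G|^2=\dim(\Rp(G))^2$, so $\cM_\omega$ is indeed a modular extension in the sense of Definition~\ref{mextdef}. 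The assignment $\omega\mapsto\cM_\omega$ factors through $H^3[G,U(1)]$ because coboundary changes of $\omega$ induce monoidal equivalences $\Hilb_G^\omega\simeq\Hilb_G^{\omega'}$ and hence braided equivalences of centers preserving the embedding of $\Rp(G)$.

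The main obstacle will be establishing that $\omega\mapsto\cM_\omega$ is bijective, and this is where essentially all the work sits. Surjectivity requires showing every modular extension $\cM$ of $\Rp(G)$ arises in this way. Since $\Rp(G)\hookrightarrow\cM$ is Tannakian with full centralizer $\Rp(G)$ itself, one may de-equivariantize with respect to the regular algebra of $\Rp(G)$ to obtain a $G$-crossed braided fusion category $\cM_G$; the hypothesis $\cen{\Rp(G)}{\cM}=\Rp(G)$ together with the dimension count forces every component of $\cM_G$ to be invertible, i.e., $\cM_G$ is a pointed $G$-crossed braided extension of $\Hilb$. Such pointed $G$-crossed extensions are classified by $H^3[G,U(1)]$ (Drinfeld--Gelaki--Nikshych--Ostrik), and re-equivariantizing recovers a twisted DW theory, yielding surjectivity. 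Injectivity is the statement that distinct cohomology classes give inequivalent modular extensions, which follows because the $F$-symbols governing flux--flux--flux associativity in $\cM_\omega$ directly recover $\omega$ up to coboundary; equivalently, the de-equivariantization procedure inverts the construction at the level of cohomology classes.

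For step (ii), invertible bosonic topological orders with no symmetry form $\Z$, generated by the $E_8$ state with $c=8$; stacking such a state onto a fixed SPT changes only the chiral central charge, leaving the bulk particle content and braiding unchanged and hence leaving $\cM$ unchanged while shifting $c$ by $8$. Conversely, all modular extensions in the image of step (i) have $c_\cM=0$, and by the surjectivity just sketched this exhausts $\mext(\Rp(G))$; so the only residual freedom is the integer $c/8$. Combining (i) and (ii) gives the asserted classification by pairs $(\cM,c)$ with $\cM\in\mext(\Rp(G))$ and $c\equiv 0\pmod 8$. The only piece I do not expect to be formal is the pointedness reduction via de-equivariantization, which is genuinely where the content of ``invertibility $\Rightarrow$ SPT'' enters.
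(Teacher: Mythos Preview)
Your outline matches the paper's own justification essentially line for line, but you should be aware that the statement is labeled a \emph{Conjecture} in the paper, not a theorem: the physical decomposition ``invertible bosonic GQLs $=$ SPT states $\times$ $E_8$ layers $= H^3[G,U(1)]\times\Z$'' is taken as input from prior physics arguments rather than derived. The paper then supports the conjecture via its Theorem~\ref{thm:1to1-repG} (modular extensions of $\Rp(G)$ are in bijection with $H^3[G,U(1)]$ and all have $c\equiv 0\bmod 8$), which it attributes to \cite{dgno2007}; your step~(i) is exactly this theorem, and your de-equivariantization argument---reducing an arbitrary modular extension to a pointed $G$-crossed extension of $\Hilb$ and invoking the DGNO classification---is the same mechanism the paper and its companion \cite{LW160205936} use (see also Sec.~\ref{sec:mext-repG}, where the paper spells out the $G$-grading on $\cM_A$ for $A=\Fun(G)$). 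Your step~(ii) is likewise identical to the paper's reasoning in the remarks following Corollary~1. So there is no disagreement in approach; just keep in mind that what you have written is the intended \emph{consistency argument} for the conjecture, not a proof from first principles, since the identification of invertible GQLs with $H^3[G,U(1)]\times\Z$ is itself a physical claim.
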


\subsection{Classify 2+1D bosonic SPT states}

Invertible bosonic GQLs described by $(\cM,c)$ include both bosonic SPT states
and bosonic topological orders. Among those, $(\cM,c=0)$ classify bosonic SPT
states.  In other words:
\begin{cor}
2+1D bosonic SPT states with symmetry $G$ are classified by the modular
extensions of $\Rp(G)$ (which always have $c=0$).
\end{cor}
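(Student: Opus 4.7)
The plan is to derive the corollary from Conjecture \ref{invclassB} by pinning down which invertible bosonic GQLs actually realize SPT order.

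First I would invoke the physical characterization of SPT: such a phase is a \emph{short-range entangled} invertible GQL, which forces (i) the bulk to have only trivial topological excitations, so the underlying UBFC is $\Rp(G)$ itself, and (ii) the edge to be non-chiral, i.e., $c=0$. Conjecture \ref{invclassB} labels every invertible bosonic GQL with symmetry $G$ by a pair $(\cM,c)$ with $\cM\in\mext(\Rp(G))$ and $c\in 8\Z$; imposing the SPT condition singles out $c=0$ and reduces the datum to $\cM$ alone. Thus it remains to verify that the choice $c=0$ is admissible for \emph{every} $\cM\in\mext(\Rp(G))$.

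The key step is the parenthetical claim that every modular extension of $\Rp(G)$ has $c=0 \pmod 8$. I would first exhibit a canonical extension, the Drinfeld center $Z(\Hilb_G)$: it embeds $\Rp(G)$ fully faithfully, it is the bulk UMTC of an untwisted discrete $G$-gauge theory, and its central charge vanishes. More generally, for each $\omega\in H^3[G,U(1)]$ the twisted Drinfeld center $Z(\Hilb_G^\omega)$ is again a modular extension of $\Rp(G)$ with $c=0$, since gauging a finite on-site symmetry (with or without a cocycle twist) cannot shift the edge central charge. I would then appeal to the stacking operation of Section \ref{stack}, under which $\mext(\Rp(G))$ becomes a torsor over the group of bosonic SPT phases with symmetry $G$; because stacking is additive in $c$ and my representative has $c=0$, every element of $\mext(\Rp(G))$ inherits $c=0$.

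The main obstacle is the torsor claim itself, namely that every modular extension of $\Rp(G)$ arises from the Drinfeld center by stacking with an SPT phase (equivalently, that the twisted Drinfeld centers exhaust $\mext(\Rp(G))$). This is a ``no extraneous modular extensions'' statement that I would not attempt to prove from scratch; instead I would cite the detailed categorical arguments in the mathematical companion \Ref{LW160205936}. Granted it, $(\cM,0)$ carries the same information as $\cM\in\mext(\Rp(G))$, and the corollary follows.
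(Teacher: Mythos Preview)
Your proposal is essentially correct and tracks the paper's own logic: the corollary is obtained from Conjecture~\ref{invclassB} by imposing the SPT condition $c=0$, with the parenthetical claim that every modular extension of $\Rp(G)$ has $c\equiv 0\pmod 8$ supplied separately. The paper does not give an independent argument for the corollary beyond this; it simply defers the $c=0$ statement to Theorem~\ref{thm:1to1-repG}, which it cites from \Ref{dgno2007}.

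One point of phrasing to tighten: your appeal to the ``torsor over the group of bosonic SPT phases'' is slightly circular as written, since identifying SPT phases with $\mext(\Rp(G))$ is precisely the corollary at hand. What Section~\ref{stack} actually establishes (Theorem~\ref{hegroup}) is that $\mext(\Rp(G))$ is itself a finite abelian group under $\boxtimes_\cE$, with identity $Z(\Rp(G))$. The group structure alone does not force $c=0$ for every element (additivity of $c$ plus $c(\text{identity})=0$ only gives $c(\cM)+c(\overline{\cM})=0$). The real input, as you correctly flag as the main obstacle, is the exhaustion statement that every modular extension of $\Rp(G)$ is a twisted Drinfeld center $Z(\Hilb_G^\omega)$; this is exactly the content of Theorem~\ref{thm:1to1-repG} (and Section~\ref{sec:mext-repG}), and once granted the $c=0$ claim is immediate. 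So your route and the paper's coincide, but you should cite Theorem~\ref{thm:1to1-repG} or \Ref{dgno2007} directly rather than passing through the torsor language.
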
 

In \Ref{CLW1141,CGL1314,CGL1204}, it was shown that 2+1D bosonic SPT states are
classified by $H^3[G,U(1)]$.  Such a result agrees with our conjecture, due to
the following theorem, which follows immediately from results in \Ref{dgno2007}. 

\begin{thm} \label{thm:1to1-repG}
The modular extensions of $\Rp(G)$ 1-to-1 correspond to 3-cocycles in $H^3[G,U(1)]$. The central charge of these modular extensions are $c=0$ mod 8.
\end{thm}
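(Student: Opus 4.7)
The plan is to use the de-equivariantization procedure of Drinfeld--Gelaki--Nikshych--Ostrik. Given a modular extension $\iota:\Rp(G)\hookrightarrow\cM$, the subcategory $\Rp(G)\subset\cM$ is Tannakian, so I would de-equivariantize $\cM$ by the regular algebra $A=\Fun(G)$ of $\Rp(G)$ to obtain a braided $G$-crossed fusion category $\cM_G$ whose $G$-equivariantization canonically recovers $(\cM,\iota)$. The first step is to translate the defining properties of a modular extension into properties of $\cM_G$: the identity $\dim(\cM)=\dim(\Rp(G))^2=|G|^2$ forces $\dim(\cM_G)=|G|$, and the condition $\cen{\Rp(G)}{\cM}=\Rp(G)$ forces the trivial graded component $(\cM_G)_e$ to be the unit category $\Hilb$. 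Combined with faithfulness of the $G$-grading, each homogeneous component of $\cM_G$ is then $1$-dimensional, so $\cM_G$ is a pointed braided $G$-crossed extension of $\Hilb$.

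Next, I would invoke the classification result of \Ref{dgno2007}: equivalence classes of pointed braided $G$-crossed extensions of $\Hilb$ are in bijection with $H^3[G,U(1)]$, with the class $[\omega]$ realized by $\Hilb_G^\omega$ equipped with its canonical $G$-crossed structure. Equivariantizing back then returns the twisted Drinfeld center $Z(\Hilb_G^\omega)$ as the modular extension of $\Rp(G)$ corresponding to $[\omega]$. To upgrade this to a bijection at the level of equivalence classes in Definition \ref{mextdef}, I would check that braided equivalences $F:\cM\to\cM'$ restricting to the identity on $\Rp(G)$ match precisely with grading-preserving equivalences of the associated $G$-crossed extensions of $\Hilb$; this matching is built into the $2$-equivalence between equivariantization and de-equivariantization.

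For the central charge, every such $\cM\cong Z(\Hilb_G^\omega)$ is a Drinfeld center and admits a canonical Lagrangian algebra coming from the underlying fusion category $\Hilb_G^\omega$. A UMTC that admits a Lagrangian algebra has multiplicative central charge $\tau^+/\sqrt{\dim(\cM)}$ equal to $1$, which forces $c=0$ mod $8$. The hardest part will be the naturality check in the second step: the identification of pointed braided $G$-crossed extensions of $\Hilb$ with $H^3[G,U(1)]$ unpacks into concrete cocycle conditions on associators and $G$-braidings, and matching these up with the notion of equivalence from Definition \ref{mextdef} is where the DGNO formalism does the real work. Once this naturality is in place, the central charge statement is immediate from the standard Gauss-sum computation for Drinfeld centers.
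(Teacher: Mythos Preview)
Your proposal is correct and follows essentially the same route as the paper. Both arguments condense the regular algebra $A=\Fun(G)$ in $\cM$: the paper observes that $A$ is Lagrangian (since $(\dim A)^2=|G|^2=\dim\cM$), so $\cM\simeq Z(\cM_A)$, and then cites \Ref{dgno2007} to identify $\cM_A$ as a pointed, faithfully $G$-graded fusion category, hence $\cM_A\simeq\Hilb_G^\omega$ for a unique $\omega\in H^3(G,U(1))$; you phrase the same step as de-equivariantization yielding a braided $G$-crossed extension of $\Hilb$, and then invoke the DGNO classification of such extensions by $H^3(G,U(1))$. The only cosmetic difference is that the paper extracts $\omega$ directly from the associator of the $G$-graded fusion category $\cM_A$ (with a physical picture for the grading map $\phi:x\mapsto\mathrm{Aut}(F)=G$), whereas you track the full $G$-crossed braided structure; since the identity component is $\Hilb$, this extra structure is uniquely determined and the two descriptions coincide. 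Your central-charge argument via the Lagrangian algebra in a Drinfeld center is exactly the mechanism behind the paper's claim that $c=0$ mod $8$.
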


\begin{rmk}
In Sec.\,\ref{sec:mext-repG}, we give more detailed explanation of the 1-to-1 correspondence in Theorem\,\ref{thm:1to1-repG}. Moreover, we will prove a stronger result in Theorem\,\ref{thm:spt}. It turns out that the set $\cM_{ext}(\Rp(G))$ of modular extensions of $\Rp(G)$ is naturally equipped with a physical stacking operation such that $\cM_{ext}(\Rp(G))$ forms an abelian group, which is isomorphic to the group $H^3[G,U(1)]$. 
\end{rmk}

\begin{rmk}
$c/8$ determines the number of layers of the $E_8$ QH states, which is the
topological order part of invertible bosonic symmetric GQLs.  In other words
\begin{align}
&\ \ \ \
\{ \text{invertible bosonic symmetric GQLs} \} 
\nonumber\\
&=
\{ \text{bosonic SPT states} \}\times \{ \text{layers of $E_8$ states} \}.
\end{align}
\end{rmk}

\subsection{Classify 2+1D fermionic SPT states}

The above approach also apply to fermionic case.  Note that, the invertible
fermionic GQLs with symmetry $G^f$
have bulk excitations described by SFC $\cE=\sRp(G^f)$.
So we would like to conjecture that
\begin{conj}\label{invclassF}
  Invertible fermionic GQLs with symmetry $G^f$  are classified by
$(\cM,c)$, where $\cM$ is a  modular extension of $\cE=\sRp(G^f)$,
and $c$ is the central charge determining the layers of $\nu=8$ IQH states.
\end{conj}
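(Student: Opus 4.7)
The plan is to establish a bijection between equivalence classes of invertible fermionic GQLs with symmetry $G^f$ and pairs $(\cM,c)$ with $\cM\in\mext(\sRp(G^f))$, in direct parallel to the bosonic Conjecture~\ref{invclassB}. The forward map sends an invertible fermionic GQL $\mathcal{A}$ to the modular extension of $\sRp(G^f)$ obtained by gauging $G^f$ categorically, together with the chiral central charge $c$ of its edge. Since $\mathcal{A}$ is invertible, every bulk excitation is local, and in a fermionic theory with on-site $G^f$ symmetry the local excitations carry super-representations in which $f$ acts as the fermion parity, so the category of bulk particles is exactly $\cE=\sRp(G^f)$. Introducing extrinsic $G$-symmetry defects---equivalently, equivariantizing the $G$-crossed extension of $\cE$ that $\mathcal{A}$ induces---produces a UMTC $\cM$ with a fully faithful embedding $\cE\hookrightarrow\cM$ satisfying $\cen{\cE}{\cM}=\cE$, hence an element of $\mext(\sRp(G^f))$.

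Next I would verify that this assignment descends to equivalence classes and is injective. Well-definedness follows from functoriality of the defect/equivariantization construction under symmetric invertible local unitaries. For injectivity, suppose two phases $\mathcal{A}_1$ and $\mathcal{A}_2$ produce the same $(\cM,c)$; then $\mathcal{A}_1\boxtimes\ov{\mathcal{A}_2}$ has trivial gauged theory and vanishing central charge, and the trivial invertible fermionic GQL should be the unique such phase. This last uniqueness step can be made rigorous by an explicit construction of a symmetric fermionic product state on any lattice with the given $G^f$ action, together with an invertible-local-unitary argument connecting it to $\mathcal{A}_1\boxtimes\ov{\mathcal{A}_2}$.

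For the reverse direction, one must realize every $(\cM,c)$ by a microscopic fermionic lattice model. The algebraic half of this step de-equivariantizes $\cM$ along the bosonic subcategory $\Rp(G/\{1,f\})\subset\cE$ to obtain a $G$-crossed UBFC over $\sRp(Z_2^f)$, and then invokes known fermionic constructions (the generalized super-cohomology of Gu--Wen and Kapustin--Thorngren--Turzillo--Wang, or $G$-crossed string-net/Walker--Wang models adapted to fermions) to build a representative Hamiltonian. The central charge $c$ is then tuned by stacking with $\nu=8$ IQH states of electrons, which carry $c=8$ and are equivalent to an $E_8$ bosonic state tensored with trivial fermions, so they shift $c$ without altering $\cM$.

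The hard part is this last realizability claim: no fully general construction is known that produces a fermionic Hamiltonian for every modular extension of $\sRp(G^f)$, so one must either assemble such constructions case by case, or match the count of modular extensions against Freed--Hopkins' bordism-based classification of invertible fermionic phases with symmetry. A secondary subtlety is pinning down the allowed central charges for a fixed $\cM$: the Gauss sum $\frac{1}{\sqrt{\dim\cM}}\sum_{i}d_i^2\theta_i=e^{2\pi\ii c/8}$ determines $c$ modulo $8$, so the residual $\Z$-worth of freedom must match precisely with the stacking of $\nu=8$ IQH layers---a check that would use the abelian-group stacking structure on $\mext(\sRp(G^f))$ developed in Sec.~\ref{stack}.
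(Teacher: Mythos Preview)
The statement you are attempting to prove is explicitly labeled a \emph{Conjecture} in the paper, not a theorem, and the paper offers no proof of it. The authors present it as a proposal motivated by analogy with the bosonic case (Conjecture~\ref{invclassB}) and by the physical interpretation of modular extensions as gauging the symmetry; their supporting evidence consists of consistency checks such as Theorem~\ref{pE8} (that sixteen layers of $p+\ii p$ equal a $\nu=8$ IQH state) and the numerical tables of modular extensions matching known SPT classifications in specific examples ($Z_2^f$, $Z_4^f$, $Z_6^f$, $Z_8^f$, $Z_2\times Z_2^f$). So there is no ``paper's own proof'' to compare against.

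Your outline is a reasonable sketch of what a proof would require, and you are honest about the principal gap: the realizability direction. But note that you have not actually closed it---you explicitly say ``no fully general construction is known,'' and your proposed workarounds (case-by-case constructions, or matching against Freed--Hopkins bordism counts) are themselves open problems or would require substantial additional input. The injectivity direction also leans on the claim that the trivial invertible fermionic GQL is the unique phase with trivial gauged theory and $c=0$, which is essentially a restatement of part of the conjecture rather than an independent argument. In short, your proposal correctly identifies the architecture a proof would need, but it does not supply one, and neither does the paper; the statement remains a conjecture.
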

\begin{rmk}
Note that, the central charge $c$ mod 8 is determined by $\cM$, while
$ (c - \text{mod}(c,8))/8$ determines the number of layers of the
$\nu=8$ IQH states.
\end{rmk}
\begin{rmk}
Invertible fermionic symmetric GQLs include both fermion SPT states and
fermionic topological orders. $(\cM,c)$ with $c=0$ classify fermionic SPT
states.  
\end{rmk} 
In other words, 
\begin{cor}
2+1D fermionic SPT states with symmetry $G$ are classified by the  $c=0$ modular extensions of
$\sRp(G^f)$. 
\end{cor}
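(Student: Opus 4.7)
The plan is to derive this corollary as a direct consequence of Conjecture \ref{invclassF} combined with the physical definition of an SPT phase as a symmetric invertible gapped phase whose underlying state is short-range entangled, i.e., adiabatically connected to a product state once the symmetry is ignored. Under Conjecture \ref{invclassF}, invertible fermionic GQLs with symmetry $G^f$ are labeled by pairs $(\cM,c)$ with $\cM\in\mext(\sRp(G^f))$, so the task reduces to characterizing those pairs that correspond to short-range entangled phases.

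For the forward direction, I would argue that SPT implies $c=0$. The chiral central charge is a bulk topological invariant of a gapped phase, detected by physical observables such as the thermal Hall conductance, and is unchanged under any adiabatic deformation whether or not the deformation preserves the symmetry. Since the trivial product state has $c=0$, any phase adiabatically connected to it after forgetting symmetry must also have $c=0$.

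For the converse, I would invoke the classification of invertible fermionic topological orders with no symmetry, $\{\mathrm{iTO}_F\}=\Z$, generated by the $p+\ii p$ superconductor at $c=1/2$. Forgetting the $G^f$ symmetry sends a symmetric invertible GQL with data $(\cM,c)$ to an invertible fermionic iTO with the same total chiral central charge $c$; such an iTO is trivial precisely when $c=0$. Hence every pair $(\cM,c=0)$ represents a short-range entangled symmetric phase, i.e., an SPT. Combining both directions yields the claimed classification.

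The main technical obstacle is to formalize ``forgetting symmetry'' as a precise categorical operation on the data $(\cM,c)$. Concretely, one needs a restriction map $\mext(\sRp(G^f))\to\mext(\sRp(Z_2^f))$ induced by the inclusion $\sRp(Z_2^f)\hookrightarrow\sRp(G^f)$ corresponding to the central fermion-parity subgroup $\langle f\rangle\subset G^f$, together with the statement that the chiral central charge is constant on each equivalence class of modular extensions. Once these foundational points are settled (presumably using the tools developed in the mathematical companion), the corollary follows immediately by selecting the $c=0$ stratum in the classification of Conjecture \ref{invclassF}.
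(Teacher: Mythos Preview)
Your proposal is correct and aligns with the paper's approach, though you are being considerably more careful than the paper itself. The paper treats this corollary as an immediate restatement: it simply writes ``In other words,'' after the remark that ``$(\cM,c)$ with $c=0$ classify fermionic SPT states,'' and then states the corollary. No argument is given beyond the physical identification of SPT phases as the short-range-entangled subset of invertible GQLs, together with the input that invertible fermionic topological orders without symmetry are generated by the $p+\ii p$ state and hence detected entirely by $c$.

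Your forward and converse directions are exactly the physical reasoning the paper leaves implicit. The ``technical obstacle'' you flag---a categorical restriction map $\mext(\sRp(G^f))\to\mext(\sRp(Z_2^f))$---is not addressed in the paper either; the paper works at the level of physical phases and simply invokes that forgetting symmetry sends a symmetric invertible GQL to an element of $\{\mathrm{iTO}_F\}=\Z$. So you have not missed anything the paper provides; if anything, you have identified a point the paper glosses over.
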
 
\begin{rmk}
Unlike the bosonic case, in general
\begin{align}
&\ \ \ \
\{ \text{invertible fermionic symmetric GQLs} \} 
\\
& \neq
\{ \text{fermionic SPT states} \}\times \{ \text{layers of $p+\ii p$ states} \}.
\nonumber 
\end{align}
\end{rmk}

When there is no symmetry, the invertible fermionic GQLs become the invertible
fermionic topological order, which have bulk excitations described by
$\cE=\sRp(Z_2^f)$.  $\sRp(Z_2^f)$ has 16  modular extensions, with central
charges $c=n/2, n=0,1,2,\dots,15$.  There is only one modular extension with
$c=0$, which correspond to trivial product state. Thus there is no non-trivial
fermionic SPT state when there is no symmetry, as expected.

The modular extensions with $c=n/2$ correspond to invertible fermionic
topological order formed by $n$ layers of $p+\ii p$ states.
Since the modular extensions can only determine $c$ mod 8,
in order for the above picture to be consistent, we need to show the following
\begin{thm}\label{pE8}
The stacking of 16 layers $c=1/2$ $p+\ii p$ states is equivalent to a $\nu=8$
IQH state, which is in turn equivalent to a $E_8$ bosonic QH state stacked with
a trivial fermionic product state.
\end{thm}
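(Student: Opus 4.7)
The plan is to apply the classification of invertible fermionic GQLs from Conjecture \ref{invclassF}: with only fermion parity as the ``symmetry,'' each such phase is labeled by a pair $(\cM,c)$ consisting of a modular extension $\cM$ of $\sRp(Z_2^f)$ and a chiral central charge $c$, and two phases agree iff both labels agree. Since $\sRp(Z_2^f)$ has exactly 16 modular extensions with $c\equiv 0,\tfrac{1}{2},\dots,\tfrac{15}{2}\pmod{8}$, the modular extension pins down $c$ modulo $8$ and the residual integer $(c-c\bmod 8)/8$ counts $E_8$ layers. It therefore suffices to match $(\cM,c)$ on both sides of each of the two claimed equivalences.

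For the first equivalence, the central charges agree on the nose: $16\cdot\tfrac{1}{2}=8$ matches the $c=8$ of the $\nu=8$ IQH state. For the modular extensions, I would appeal to the group structure on $\mext(\sRp(Z_2^f))$ under physical fermionic stacking (the fermionic analogue of Theorem \ref{thm:spt}, which in this case is the celebrated ``sixteenfold way''): this group is cyclic of order 16, generated by the Ising modular extension $\cM_{1/2}$ belonging to $p+\ii p$. Hence the $16$-fold stack of $\cM_{1/2}$ lands on the group identity, namely the trivial modular extension $\cM_0$ of $\sRp(Z_2^f)$. On the other hand, the $\nu=8$ IQH state has only the electron and its bound states as bulk excitations, all already lying in $\sRp(Z_2^f)$, so its modular extension is also $\cM_0$. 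Matching $(\cM_0,8)$ on both sides produces the equivalence.

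For the second equivalence, the $E_8$ bosonic QH state is invertible with $c=8$ and trivial bosonic bulk; stacking it with the trivial fermionic product state (which carries $\cE=\sRp(Z_2^f)$, modular extension $\cM_0$, and $c=0$) yields an invertible fermionic GQL with $c=8$ and, since an invertible bosonic layer cannot alter the fermionic modular extension, still $\cM_0$. This again matches the label of the $\nu=8$ IQH state. The main obstacle is justifying the collapse $\cM_{1/2}^{\boxtimes 16}\cong\cM_0$: naively the sixteenfold Deligne product is a modular extension of $\sRp(Z_2^f)^{\boxtimes 16}$, and one must condense the Lagrangian algebra that identifies all 16 copies of the physical fermion in order to return to $\mext(\sRp(Z_2^f))$. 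Doing the condensation by hand is tedious; the cleanest route is to invoke the abelian group law on $\mext(\cE)$ proved in the mathematical companion \Ref{LW160205936}, and then recognize the isomorphism $\mext(\sRp(Z_2^f))\cong\Z_{16}$, whence vanishing of the 16th power is automatic.
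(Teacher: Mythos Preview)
Your argument is internally consistent but takes a fundamentally different route from the paper's, and the difference is not merely stylistic. The paper proves Theorem~\ref{pE8} \emph{unconditionally}, without touching Conjecture~\ref{invclassF} or the categorical machinery. For the first equivalence it uses the elementary fact that two layers of $p+\ii p$ equal one $\nu=1$ IQH layer (two chiral Majorana edge modes combine into one chiral Dirac mode), so sixteen layers give $\nu=8$. For the second it stays entirely within $K$-matrix theory: $K_{\nu=8}=I_{8\times 8}$ stabilized by one copy of $\left(\begin{smallmatrix}1&0\\0&-1\end{smallmatrix}\right)$ and $K_{E_8}\oplus\left(\begin{smallmatrix}1&0\\0&-1\end{smallmatrix}\right)$ are both odd unimodular integral forms of the same signature and determinant, hence $SL(N,\Z)$-equivalent by the classical classification of such forms---and that stable equivalence is precisely the criterion for two fermionic abelian phases to coincide.

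Your route instead assumes Conjecture~\ref{invclassF} and matches $(\cM,c)$ labels on both sides. But the paper explicitly positions the theorem as a consistency check: ``in order for the above picture to be consistent, we need to show the following.'' Invoking the very conjecture whose consistency is being tested renders the check empty; what you have actually established is only the implication Conjecture~\ref{invclassF}~$\Rightarrow$~Theorem~\ref{pE8}, not the theorem itself. The $K$-matrix argument is what makes Theorem~\ref{pE8} genuine independent evidence \emph{for} the conjecture rather than a tautological consequence of it.
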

\begin{proof}
First, two layers of $p+\ii p$ states is equal to one layer of $\nu=1$ IQH
state.  Thus, 16 layers $c=1/2$ $p+\ii p$ states is equivalent to a $\nu=8$ IQH
state.  To show $\nu=8$ IQH state is equivalent to $E_8$ bosonic QH state
stacked with a trivial fermionic product state, we note that the  $\nu=8$ IQH
state is described by $K$-matrix $K_{\nu=8}=I_{8\times 8}$ which is a 8-by-8
identity matrix. While the $E_8$ bosonic QH state stacked with a trivial
fermionic product state is described by $K$-matrix $K_{E_8\boxtimes
\cF_0}=K_{E_8}\oplus \bpm 1 &0 \\ 0& -1 \epm $, where $K_{E_8}$ is the matrix
that describe the $E_8$ root lattice. We also know that two odd\footnote{An odd
matrix is a symmetric integer matrix with at least one of its diagonal elements
being odd.} $K$-matrices $K_1$ and $K_2$ describe the same fermionic topological
order if after direct summing with proper number of $\bpm 1 &0 \\ 0& -1 \epm
$'s:
\begin{align}
K_1'&=K_1\oplus  \bpm 1 &0 \\ 0& -1 \epm \oplus \cdots
\nonumber\\
K_2'&=K_2\oplus  \bpm 1 &0 \\ 0& -1 \epm \oplus \cdots,
\end{align}
$K_1'$ and $K_2'$ become equivalent, \ie
\begin{align}
 K_1' = U K_2' U^T,\ \ \ \ U \in SL(N,\Z).
\end{align}
Notice that $K_{\nu=8}\oplus \bpm 1 &0 \\ 0& -1 \epm$ and $K_{E_8\boxtimes
\cF_0}$ have the same determinant $-1$ and the same signature. Using the result
that odd matrices with $\pm 1$ determinants are equivalent if they have
the same signature, we find that $K_{\nu=8}\oplus \bpm 1 &0 \\ 0& -1 \epm$
and $K_{E_8\boxtimes \cF_0}$ are equivalent. Therefore $\nu=8$ IQH state is
equivalent to $E_8$ bosonic QH state stacked with a trivial fermionic product
state.
\end{proof}

\section{A full classification of 2+1D GQLs with symmetry}
\label{clGQL}

We have seen that all invertible GQLs with symmetry $G$ (or $G^f$) have the
same kind of bulk excitations, described by $\Rp(G)$ (or $\sRp(G^f)$).  To
classify distinct invertible GQLs that shared the same kind of bulk
excitations, we need to compute the modular extensions of $\Rp(G)$ (or
$\sRp(G^f)$).  This result can be generalized to non-invertible topological
orders.

In general, the bulk excitations of a 2+1D bosonic/fermionic SET are described
by a $\mce{\cE}$ $\cC$.  However, there can be many distinct SET orders that
have the same kind of bulk excitations described by the same $\cC$.  To
classify distinct invertible SET orders that shared the same kind of bulk
excitations $\cC$, we need to compute the modular extensions of $\cC$.  This
leads to the following
\begin{conj}
\label{classSET}
  2+1D GQLs with symmetry $\cE$ (\ie the 2+1D SET orders) are classified by $(\cC,\cM,c)$,
  where $\cC$ is a $\mce{\cE}$ describing the bulk topological excitations,
  $\cM$ is a  modular extension of $\cC$ describing the edge state up to
  $E_8$ states, and $c$ is the central charge determining the layers of $E_8$
  states.
\end{conj}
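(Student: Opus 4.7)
The plan is to reduce the full classification to the invertible case (Conjectures \ref{invclassB} and \ref{invclassF}) by regarding the set of 2+1D symmetric GQLs with a fixed bulk $\cC$ as a torsor over the invertible GQLs with symmetry $\cE$. With this in hand, the three pieces of data split naturally: the $\mce{\cE}$ $\cC$ records the topological excitations of the phase, the modular extension $\cM$ records how the SPT/gauge-theoretic data sits on top of that bulk, and $c$ pins down the remaining ambiguity by $E_8$ layers (with $c$ mod $8$ already determined by $\cM$).

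First, I would make the forward map $\mathcal{P}\mapsto(\cC,\cM,c)$ well-defined. The bulk particle-like excitations of $\mathcal{P}$ form a UBFC in which the locally-creatable excitations constitute $\cE$, and since the phase admits an anomaly-free lattice realization one has $\cen{\cC}{\cC}=\cE$, making $\cC$ a $\mce{\cE}$. The modular extension $\cM$ is produced by gauging the on-site $\cE$-symmetry: adjoin the symmetry twist defects of each group element, which are precisely the objects needed to render the resulting category non-degenerate, and the condition $\cen{\cE}{\cM}=\cC$ is exactly the statement that after gauging, the originally-local $\cE$-charges braid nontrivially with the new twist sectors while everything already topological remains in $\cC$. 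The central charge $c$ is read off the edge CFT.

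Next, for injectivity I would use the stacking operation developed in Section \ref{stack}. Given two GQLs $\mathcal{P},\mathcal{P}'$ that yield the same triple, the stack $\mathcal{P}\boxtimes\overline{\mathcal{P}'}$ has bulk $\cC\boxtimes_\cE\bar\cC$, which admits a canonical Lagrangian condensation to the trivial bulk; hence $\mathcal{P}\boxtimes\overline{\mathcal{P}'}$ is an invertible GQL with symmetry $\cE$ whose modular extension and central charge are both trivial, and so by Conjectures \ref{invclassB}/\ref{invclassF} it is the trivial phase, yielding $\mathcal{P}\simeq\mathcal{P}'$. Surjectivity, given an abstract $(\cC,\cM,c)$, can be realized by a Walker--Wang-type construction from the UMTC $\cM$, condensing the subcategory $\cE\subset\cM$ to implement the symmetry, and finally stacking the appropriate number of $E_8$ layers to fix $c$.

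The main obstacle is the injectivity step, specifically the assertion that $\mext(\cC)$ is a torsor over $\mext(\cE)$ under stacking with SPT phases, and that this categorical torsor structure matches the physical one. Phrased physically, one must show that any two SET orders with identical bulk $\cC$ differ precisely by stacking with an SPT of symmetry $\cE$, with the difference faithfully detected by the pair of modular extensions. This is the core content to be developed in Section \ref{stack}, and it is where most of the real work lies; the extraction of the triple from a phase and the surjective construction are, by comparison, direct applications of gauging and Walker--Wang.
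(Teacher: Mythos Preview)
The statement you are addressing is labeled a \emph{conjecture} in the paper, and the paper does not attempt a proof. What the paper offers instead is (i) physical motivation via gauging the on-site symmetry to produce $\cM$, (ii) the mathematical development in Section~\ref{stack} of the stacking operation $\boxtimes_\cE$ and the proof that $\mext(\cC)$ is a $\mext(\cE)$-torsor (Theorem~\ref{hetorsor}), and (iii) numerical evidence from the tables. Your outline is in the same spirit as this motivation and correctly identifies the torsor structure as the heart of the matter, but you should be aware that you are not comparing against a proof so much as against an argument for plausibility.

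That said, there is a genuine gap in your injectivity step that goes beyond the paper's caution. Your argument reduces to Conjectures~\ref{invclassB} and~\ref{invclassF}, which are themselves conjectures of exactly the same status as the one you are trying to establish; so the reduction buys nothing logically. More substantively, you assert that the physical stack $\mathcal{P}\boxtimes\overline{\mathcal{P}'}$ can be deformed to an invertible phase via ``canonical Lagrangian condensation'' of $\cC\boxtimes_\cE\bar\cC$ down to $\cE$. Categorically this condensation exists (Lemma~\ref{Lag}), but the claim that it is realized by a \emph{physical} deformation without a phase transition is precisely the nontrivial physical input that is not established anywhere; the paper only verifies this explicitly in the special case $\cC=\cE=\Rp(G)$ (Section~\ref{grpSPT}). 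You also do not address the equivalence relation \eqref{TOeq}: the paper is careful to note that for fixed $\cC$ the GQLs correspond to $\mext(\cC)/\aute(\cC)$, so two genuinely different modular extensions can label the same phase when related by an automorphism of $\cC$ (cf.\ the discussion of $5^{\zeta_2^1}_1$ in Section~\ref{Z2N5}). Your surjectivity sketch via Walker--Wang is reasonable for bosons but does not obviously extend to the fermionic case $\cE=\sRp(G^f)$.
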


Let $\cM$ be a modular extension of a $\mce{\cE}$ $\cC$. We note that
all the simple objects (particles) in $\cC$ are contained in $\cM$ as
simple objects.  Assume that the particle labels of $\cM$ are
$\{i,j,\dots, x, y,\dots\}$, where $i,j,\cdots $ correspond to the particles in
$\cC$ and $x,y,\cdots $ the additional particles (not in $\cC$).  Physically,
the additional particles $x,y,\cdots $ correspond to the symmetry twists of the
on-site symmetry\cite{W1447}.  The  modular extension $\cM$ describes the
fusion and the braiding of original particles $i,j,\cdots $ with the symmetry
twists.  In other words, the modular extension $\cM$ is the resulting
topological order after we gauge the on-site symmetry\cite{LG1209}.

Now, it is clear that the existence of modular extension is closely related to
the on-site symmetry (\ie anomaly-free symmetry) which is gaugable (\ie allows
symmetry twists).  For non-on-site symmetry (\ie anomalous
symmetry\cite{W1313}), the  modular extension does not exist since the symmetry
is not gaugable (\ie does not allow symmetry twists). We also have
\begin{conj}
\label{classSETA}
  2+1D GQLs with anomalous symmetry\cite{W1313} $\cE$ are
classified by $\mce{\cE}$'s that have no modular extensions.
\end{conj}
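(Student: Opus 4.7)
The plan is to establish the correspondence in both directions, with a bulk-boundary interpretation doing the conceptual heavy lifting. By the earlier proposal (the conjecture preceding Definition \ref{cendef}), any 2+1D symmetric GQL --- anomalous or not --- has bulk topological excitations organized as a $\mce{\cE}$ $\cC$, because the fusion/braiding of local excitations already forces the SFC $\cE$ to sit inside $\cen{\cC}{\cC}$. The question thus reduces to: when does the existence of $\cC$ as an honest 2+1D phase force (or forbid) a modular extension?

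For the ``anomaly-free implies modular extension exists'' direction, I would argue that gauging an on-site symmetry is a well-defined physical operation, and Conjecture \ref{classSET} asserts the result is classified by the modular extension $\cM$: the new simple objects of $\cM$ are precisely the gauge fluxes (symmetry twists), and the centralizer condition $\cen{\cE}{\cM}=\cC$ is the statement that the fluxes non-trivially braid with the original charges in $\cE$ via Aharonov--Bohm phases. Contrapositively, if a $\mce{\cE}$ $\cC$ admits no modular extension, then no consistent set of symmetry twists can be adjoined; by the characterization of anomalies in \cite{W1313}, this is exactly the statement that $\cE$ acts anomalously on $\cC$.

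For the converse --- that every $\mce{\cE}$ without modular extension genuinely arises as some anomalous 2+1D symmetric GQL --- I would invoke bulk-boundary correspondence. One should realize $\cC$ as the surface theory of a 3+1D $\cE$-anomaly-inflow bulk, whose presence is precisely the obstruction to gauging on the 2+1D surface alone. This parallels the result in Theorem \ref{thm:1to1-repG} classifying modular extensions of $\Rp(G)$ by $H^3[G,U(1)]$: just as the modular extensions correspond to 3-cocycles, the \emph{obstruction} to having one should live in a higher cohomology group (morally $H^4[G,U(1)]$), matching the expected classification data of 3+1D bosonic SPT phases. The category-theoretic counterpart of this obstruction is the non-vanishing of an associator-like defect in the attempted adjunction of twists.

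The main obstacle is the realization step. While it is categorically plausible that the obstruction to a modular extension classifies the required 3+1D bulks, proving the correspondence bijective --- in particular producing explicit 2+1D boundary models (e.g., Walker--Wang-type) for every obstructed $\mce{\cE}$ --- requires a general theory of anomaly inflow for arbitrary UBFCs over $\cE$, beyond the group-cohomology cases. Without this, I can only establish one direction rigorously (absence of modular extension rules out an on-site realization) and interpret the other direction at the level of the bulk-boundary heuristic, which is why the statement must remain a conjecture rather than a theorem.
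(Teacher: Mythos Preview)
The statement you are attempting to prove is labeled a \emph{Conjecture} in the paper, and the paper does not supply a proof. Its entire supporting discussion is a one-sentence heuristic immediately preceding the conjecture: since a modular extension amounts to adjoining a consistent set of symmetry twists (gauge fluxes), and an anomalous (non-on-site) symmetry is by definition not gaugable, no modular extension can exist. That is the full extent of the paper's argument --- one direction only, at the physical level.

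Your proposal is therefore not wrong so much as over-ambitious relative to what the paper actually does. Your first direction (no modular extension $\Rightarrow$ anomalous) is essentially the contrapositive of the paper's heuristic and matches it. Your second direction --- realizing every obstructed $\mce{\cE}$ as a genuine anomalous surface theory via a 3+1D inflow bulk, with the obstruction living in something like $H^4[G,U(1)]$ --- goes well beyond anything the paper attempts; indeed, you correctly identify this realization step as the gap preventing a theorem. Your closing sentence, that the statement must remain a conjecture, is exactly the paper's own position. So there is no discrepancy to flag: you have reconstructed the paper's heuristic for one direction and sketched a plausible program for the other that the paper does not pursue.
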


It is also important to clarify the equivalence relation between the triples
$(\cC,\cM,c)$. Two triples $(\cC,\cM,c)$ and $(\cC',\cM',c')$ are equivalent
if: (1) $c=c'$; (2) there exists braided equivalences $F_\cC:\cC\to\cC'$ and
$F_\cM:\cM\to\cM'$ such that all the embeddings are preserved, i.e., the
following diagram commutes.
  \newdir^{ (}{{}*!/-5pt/@^{(}}
\begin{align}
\label{TOeq}
\xymatrix{
  \cE\ar@{^{ (}->}[r]\ar@{=}[d]&\cC\ar@{^{ (}->}[r]\ar[d]^{F_\cC}
  &\cM\ar[d]^{F_\cM}\\
  \cE\ar@{^{ (}->}[r]&\cC'\ar@{^{ (}->}[r]&\cM'
}
\end{align}
The equivalence classes will be in one-to-one
correspondence with GQLs (\ie SET orders and SPT orders).

Note that the group of the automorphisms of a $\mce{\cE}$ $\cC$, denoted by
$\aute(\cC)$ (recall Definition\,\ref{def:hom-bfce}), naturally acts on the
modular extensions $\mext(\cC)$ by changing the embeddings, i.e. $F\in\aute(\cC)$ acts as follows: 
$$
(\cC\hookrightarrow\cM)\mapsto (\cC\xrightarrow{F}\cC\hookrightarrow \cM)
$$ 
For a fixed $\cC$, the above equivalence relation amounts to say that GQLs with bulk excitations described by a fixed $\cC$ are in one-to-one correspondence with the quotient
$\mext(\cC)/\aute(\cC)$ plus a central charge $c$. When $\cC=\cE$, the GQLs
with bulk excitations described by $\cE$ and central charge $c=0$ are SPT phases. In this case, the group $\aute(\cE)$, where $\cE$ is viewed as the trivial $\mce{\cE}$, is trivial. Thus, SPT phases are classified by the modular extensions of
$\cE$ with $c=0$.

\section{Another description of 2+1D GQLs with symmetry}
\label{clGQL2}

Although the above result has a nice mathematical structure, it is hard to
implement numerically to produce a table of GQLs.  To fix this problem, we
propose a different description of 2+1D GQLs.  The second description is
motivated by a conjecture that the fusion and the spins of the particles,
$(\cN^{IJ}_K,\cS_I)$, completely characterize a UMTC. We conjecture that
\begin{conj}
\label{NsNsNs}
The data $( \tilde N^{ab}_c,\tilde s_a; N^{ij}_k,s_i; \cN^{IJ}_K,\cS_I;c)$,  up
to some equivalence relations, gives a one-to-one classification of
2+1D GQLs with symmetry $G$ (for boson) or $G^f$ (for fermion), with a
restriction that the symmetry group can be fully characterized by the fusion
ring of its irreducible representations.  The data $( \tilde N^{ab}_c,\tilde
s_a; N^{ij}_k,s_i; \cN^{IJ}_K,\cS_I;c)$ satisfies the conditions described in
Appendix \ref{cnds} (see \Ref{W150605768} for UMTCs).  
\end{conj}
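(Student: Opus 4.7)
The plan is to reduce the stated conjecture to the categorical classification already asserted in Conjecture~\ref{classSET}, together with a reconstruction statement: each layer $\cE\hookrightarrow\cC\hookrightarrow\cM$ can be recovered from its fusion rules and topological spins. So I would proceed in three stages, matching the three pieces of the numerical data.

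First I would recover the symmetry. The data $(\tilde N^{ab}_c,\tilde s_a)$ is the fusion ring of $\cE$ together with the spins of its simples, the latter taking values in $\{0,\tfrac{1}{2}\}$. Under the stated hypothesis that the fusion ring of irreducible representations characterizes the group, Tannakian reconstruction yields $G$ (or $G^f$), and the spin data then picks out $\Rp(G)$ versus $\sRp(G^f)$ and fixes the symmetric braiding.

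Next I would recover $\cC$ and $\cM$. Given $\cE$, the data $(N^{ij}_k,s_i)$ together with the distinguished subset of labels that belong to $\cE$ should determine $\cC$ as a UBFC over $\cE$; similarly $(\cN^{IJ}_K,\cS_I)$ with the subset labeling $\cC\subset\cM$ determines the modular extension. The tool here is that on the UMTC $\cM$ the Verlinde formula recovers the modular $S$-matrix from $(\cN^{IJ}_K,\cS_I)$, so we obtain the full $(S,T)$ data of $\cM$; the induced braided subcategory structure on $\cC=\cen{\cE}{\cM}$ is then forced by the embedding. Finally I would translate the categorical equivalence in diagram~\eqref{TOeq} into a numerical one: two tuples are identified when there exist simultaneous relabelings of simple objects preserving all fusion coefficients, all spins, and the two nested inclusions of label subsets, with matching $c$. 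The action of $\aute(\cC)$ on $\mext(\cC)$ is absorbed into these relabelings, and the additive $c$ component of Conjecture~\ref{classSET} enters unchanged.

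The main obstacle is precisely the reconstruction step: it is still an open problem whether a UMTC is determined up to braided equivalence by its fusion ring and topological spins, and without this one cannot promote the numerical equivalence to a categorical one. This is why the statement must remain a conjecture rather than a theorem. A secondary difficulty is controlling ``hidden'' gauge automorphisms that permute simples while fixing $N$ and $s$ yet twist the embedding of $\cE$ into $\cC$ (or of $\cC$ into $\cM$); any such ambiguity has to be shown to correspond exactly to an element of $\aute(\cE)$ or $\aute(\cC)$, so that the stated equivalence relation on the numerical data neither over- nor under-identifies physical phases.
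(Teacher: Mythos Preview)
The statement is a \emph{conjecture}, and the paper does not supply a proof; it only gives heuristic motivation. Your proposal is therefore not to be compared against a proof but against that motivation, and in broad outline you have captured it: reduce to Conjecture~\ref{classSET}, then argue that the numerical data reconstructs the categorical triple $(\cE,\cC,\cM)$, with the well-known open problem on UMTCs being the obstruction.

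There is, however, one genuine gap in your reasoning that the paper addresses directly and that you gloss over. You write that $(N^{ij}_k,s_i)$ together with the distinguished $\cE$-labels ``should determine $\cC$ as a UBFC over $\cE$''. The paper explicitly states the opposite: for a non-modular $\cC$, the pair $(N^{ij}_k,s_i)$ does \emph{not} determine $\cC$, and concrete examples are worked out in Section~\ref{Z2N5} where two genuinely distinct $\mce{\Rp(Z_2)}$'s share identical $(N^{ij}_k,s_i)$ but differ in their $F$- and $R$-tensors. The paper's point is that this ambiguity is resolved only by the \emph{full} data including $(\cN^{IJ}_K,\cS_I)$: since every $\mce{\cE}$ has the same number of modular extensions (Theorem~\ref{hetorsor}), an entry $(N^{ij}_k,s_i)$ that corresponds to several $\cC$'s will exhibit a proportionally larger set of modular extensions, and each individual modular extension singles out one $\cC$ via $\cC=\cen{\cE}{\cM}$. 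You do eventually invoke $\cC=\cen{\cE}{\cM}$, but only after the faulty claim; the logical order should be reversed, with the reconstruction of $\cC$ deferred until $\cM$ is in hand.

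Your identification of the principal obstruction---that $(\cN^{IJ}_K,\cS_I)$ determining a UMTC up to braided equivalence is itself conjectural---matches the paper's stance exactly, and is the reason the statement is labeled a conjecture.
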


Here $( \tilde N^{ab}_c,\tilde s_a; N^{ij}_k,s_i; \cN^{IJ}_K,\cS_I;c)$
is closely related to $(\cE;\cC;\cM;c)$ discussed above.  The data $(\tilde
N^{ab}_c,\tilde s_a)$ describes the symmetry (\ie the SFC $\cE$):
$a=1,\cdots,\tilde N$ label the irreducible representations and $\tilde
N^{ab}_c$ are the fusion coefficients of irreducible representations.  $\tilde
s_a =0$ or $1/2$ depending on if the fermion-number-parity transformation $f$
is represented trivially or non-trivially in the representation $a$.  The data
$(N^{ij}_k,s_i)$ describes fusion and the spins of the bulk particles
$i=1,\cdots,N$ in the GQL. The data $(N^{ij}_k,s_i)$ contains $(\tilde
N^{ab}_c,\tilde s_a)$ as a subset, where $a$ is identified with the first
$\tilde N$ particles of the GQL.  The data $(\cN^{IJ}_K,\cS_I)$  describes
fusion and the spins of a UMTC, and it includes $(N^{ij}_k,s_i)$ as a subset,
where $i$ is identified with the first $N$ particles of the UMTC.  Also among
all the particles in UMTC, only the first $N$ (\ie $I=1,\cdots,N$) have trivial
mutual statistics with first $\tilde N$ particles (\ie $I=1,\cdots,\tilde N$).
Last, $c$ is the chiral central charge of the edge state.

If the data $( \tilde N^{ab}_c,\tilde s_a; N^{ij}_k,s_i)$ fully characterized
the $\mce{\cE}$, then the Conjecture \ref{NsNsNs} would be equivalent to the
Conjecture \ref{classSET}.
However, for non-modular tensor category, $( \tilde
N^{ab}_c,\tilde s_a; N^{ij}_k,s_i)$ fails to to fully characterize a
$\mce{\cE}$. In other words, there are different $\mce{\cE}$'s that have the
same data $( \tilde N^{ab}_c,\tilde s_a; N^{ij}_k,s_i)$.  We need to include
the extra data, such as the $F$-tensor and the $R$-tensor, to fully
characterize the $\mce{\cE}$.

In Appendix \ref{SETtbl}, we list the data $( \tilde N^{ab}_c,\tilde s_a;
N^{ij}_k,s_i)$ that satisfy the conditions in Appendix \ref{cnds} (without the
modular extension condition) in many tables. Those tables include all the
$\mce{\cE}$'s (up to certain total quantum dimensions), but the tables are not
perfect: (1) some entries in the tables may be fake and do not correspond to
any $\mce{\cE}$ (for the conditions are only necessary); (2) some entries in
the tables may correspond to more then one $\mce{\cE}$ (since $( \tilde
N^{ab}_c,\tilde s_a; N^{ij}_k,s_i)$ does not fully characterize a $\mce{\cE}$).

We then continue to compute $(\cN^{IJ}_K,\cS_I;c)$, the modular extensions of
$( \tilde N^{ab}_c,\tilde s_a; N^{ij}_k,s_i)$.  We find that the modular
extensions can fix the imperfectness mentioned above.  First, we find that the
fake entries do not have modular extensions, and are ruled out.  Second, as we
will show in Section \ref{stack}, all $\mce{\cE}$'s have the same numbers of
modular extensions (if they exist); therefore, the entry that corresponds to
more $\mce{\cE}$'s has more modular extensions. The modular extensions can tell
us which entries correspond to multiple $\mce{\cE}$'s.  This leads to the
conjecture that the full data $( \tilde N^{ab}_c,\tilde s_a; N^{ij}_k,s_i;
\cN^{IJ}_K,\cS_I;c)$ gives rise to an one-to-one classification of 2+1D GQLs, and allows us to calculate the
tables of 2+1D GQLs, which include 2+1D SET states and 2+1D SPT states.  Those
are given in Section \ref{examples}.

As for the equivalence relation, we only need to consider
$(\cN^{IJ}_K,\cS_I;c)$, since the data  $(\tilde N^{ab}_c,\tilde s_a;
N^{ij}_k,s_i)$ is included in $(\cN^{IJ}_K,\cS_I;c)$.  Two such data
$(\cN^{IJ}_K,\cS_I;c)$ and $(\bar \cN^{IJ}_K,\bar \cS_I;\bar c)$ are called
equivalent if $c=\bar c$, and $(\cN^{IJ}_K,\cS_I)$ and $(\bar \cN^{IJ}_K,\bar
\cS_I)$ are related by two permutations of indices in the range $N_{\cM} \geq I
> N$ and in the range $N\geq I > \tilde N$, where $N_{\cM}$ is the range of
$I$.  Such an equivalence relation corresponds to the one in eqn. (\ref{TOeq})
and will be called the TO-equivalence relation.  We use the TO-equivalence
relation to count the number of GQL phases (\ie the number of SET orders and
SPT orders).

We can also define another equivalence relation, called ME-equivalence
relation: we say $(\cN^{IJ}_K,\cS_I;c)$ and $(\bar \cN^{IJ}_K,\bar \cS_I;\bar
c)$ to be ME-equivalent if $c=\bar c$ and they only differ by a permutation
of indices in range $I > N$.  The ME-equivalence relation is closely related to
the one defined in eqn. (\ref{MEeq}). We use the ME-equivalence relation to
count the number of modular extensions of a \emph{fixed} $\cC$.  

Last, let us explain the restriction on the symmetry group.  In the Conjecture
\ref{NsNsNs}, we try to use the fusion $\tilde N^{ab}_c$ of the irreducible
representations to characterize the symmetry group.  However, it is known that
certain different groups may have identical fusion ring for their irreducible
representations.  So we need to restrict the symmetry group to be the group
that can be fully characterized by its fusion ring.  Those groups include
simple groups and abelian groups\cite{Yuan16}.  If we do not impose such a
restriction, then the Conjecture \ref{NsNsNs} give rise to GQLs with a given
symmetry fusion ring, instead of a given symmetry group.

\section{The stacking operation of GQLs}
\label{stack}

\subsection{Stacking operation}

Consider two GQLs $\cC_1$ and $\cC_2$. If we stack them together (without
introducing interactions between them), we obtain another GQL, which is denoted
by $\cC_1\boxtimes \cC_2$.  The stacking operation $\boxtimes$ makes the set of
GQLs into a monoid.  $ \boxtimes $ does not makes the set of GQLs into a group,
because in general, a GQL $\cC$ may not have an inverse under $\boxtimes$.  \ie
there is no GQL $\cD$ such that $\cC\boxtimes \cD$ becomes a trivial product
state.  This is because when a GQL have non-trivial topological excitations,
stacking it with another GQL can never cancel out those topological
excitations.

When we are considering GQLs with symmetry $\cE$, the simple stacking
$\boxtimes$ will ``double'' the symmetry, leads to a GQL with symmetry
$\cE\bt\cE$ ($\Rp(G\times G)$ or $\sRp(G^f\times G^f)$). In general we allow
local interactions between the two layers to break some symmetry such that the
resulting system only has the original symmetry $\cE$ (In terms of the symmetry group, keep only the
subgroup $G\hookrightarrow G\times G$ with the diagonal embedding $g\mapsto
(g,g)$). This leads to the stacking between GQLs with symmetry $\cE$,
denoted by $\bt_\cE$. Similarly, $\bt_\cE$ makes GQLs with symmetry $\cE$ a
monoid, but in general not all GQLs are invertible.

However, if the bulk excitations of $\cC$ are all local (\ie all described by
SFC $\cE$), then $\cC$ will have an inverse under the stacking operation
$\boxtimes_\cE$, and this is why we call such GQL invertible.  Those invertible
GQLs include invertible topological orders and SPT states.

\subsection{The group structure of bosonic SPT states}
\label{grpSPT}

We have proposed that 2+1D SPT states are classified by $c=0$ modular
extensions of the SFC $\cE$ that describes the symmetry.  Since SPT states are
invertible, they form a group under the stacking operation $ \boxtimes_\cE $.
This implies that the modular extensions of the SFC should also form a group
under the  stacking operation.  So checking if the  modular extensions of the
SFC have a group structure is a way to find support for our conjecture.

However, in this section, we will first discuss such stacking
operation and group structure from a physical point of view.
We will only consider bosonic SPT states.

It has been proposed that the bosonic SPT states are described by group
cohomology $\cH^{d+1}[G,U(1)]$\cite{CLW1141,CGL1314,CGL1204}.  However, it has
not been shown that those bosonic SPT states form a group under stacking
operation. Here we will fill this gap.  An ideal bosonic SPT state of symmetry
$G$ in $d+1$D is described the following path integral
\begin{align}
 Z =\sum_{\{g_i\}} \prod_{\{i,j,\cdots \}} \nu_{d+1}(g_i,g_j,\cdots )
\end{align}
where $\nu_{d+1}(g_i,g_j,\cdots )$ is a function $G^{d+1} \to U(1)$, which is a
cocycle $\nu_{d+1}\in \cH^{d+1}[G,U(1)]$. Here the space-time is a complex whose
vertices are labeled by $i,j,\cdots $, and $\prod_{\{i,j,\cdots \}}$ is the
product over all the simplices of the space-time complex.
Also $\sum_{\{g_i\}}$ is a sum over all $g_i$ on each vertex.

Now consider the stacking of two SPT states described by cocycle
$\nu_{d+1}'$ and 
$\nu_{d+1}''$: 
\begin{align}
 Z =\sum_{\{g'_i,g''_i\}} \prod_{\{i,j,\cdots \}} 
\nu'_{d+1}(g'_i,g'_j,\cdots )
\nu''_{d+1}(g''_i,g''_j,\cdots ) .
\end{align}
Such a stacked state has a symmetry
$G \times G$ and is a $G \times G$ SPT state.

Now let us add a term to break the $G \times G$-symmetry to $G$-symmetry
and consider
\begin{align}
\label{ZU}
 Z =\sum_{\{g'_i,g''_i\}}  \prod_{\{i,j,\cdots \}} &
\nu'_{d+1}(g'_i,g'_j,\cdots )
\nu''_{d+1}(g''_i,g''_j,\cdots ) \times 
\nonumber\\
 \prod_i & \ee^{-U|g_i'-g_i''|^2}
,
\end{align}
where $|g'-g''|$ is an invariant distance between group elements.  As we change
$U=0$ to $U=+ \infty $, the stacked system changes into the system for an ideal
SPT state described by the cocycle
$\nu_{d+1}(g_i,g_j,\cdots)=\nu_{d+1}'(g_i,g_j,\cdots )
\nu_{d+1}''(g_i,g_j,\cdots )$.  If such a deformation does not cause any phase
transition, then we can show that the stacking of a $\nu_{d+1}'$-SPT state with
a $\nu_{d+1}''$-SPT state give rise to a $\nu_{d+1}=\nu_{d+1}'\nu_{d+1}''$-SPT
state.  Thus, the key to show the stacking operation to give rise to the group
structure for the SPT states, is to show the theory \eqn{ZU} has no phase
transition as we change $U=0$ to $U= +\infty $.

To show there is no phase transition, we put the system on a closed space-time
with no boundary, say $S^{d+1}$.  In this case, $\prod_{\{i,j,\cdots \}} 
\nu'_{d+1}(g'_i,g'_j,\cdots ) \nu''_{d+1}(g''_i,g''_j,\cdots )=1$, since
$\nu_{d+1}'$ and $\nu_{d+1}''$ are cocycles.
Thus the path integral \eq{ZU} is reduced to
\begin{align}
 Z =\sum_{\{g'_i,g''_i\}}  \prod_i  \ee^{-U|g_i'-g_i''|^2} 
= \Big(|G| \sum_g \ee^{-U|1-g|^2}\Big)^{N_v} ,
\end{align}
where $N_v$ is the number of vertices and $|G|$ the order of the symmetry
group.  We see that the free energy density
\begin{align}
 f = -\lim_{N_v\to\infty}\ln Z/N_v
\end{align}
is a smooth function of $U$ for $U\in [0, \infty )$. There is indeed no phase
transition.

The above result is highly non trivial from a categorical point of view.
Consider two 2+1D bosonic SPT states described by two modular extensions $\cM'$
and $\cM''$ of $\Rp(G)$.  The natural tensor product $\cM' \boxtimes  \cM''$ is
not a modular extension of $\Rp(G)$,  but a modular extension of $\Rp(G)
\boxtimes \Rp(G)=\Rp(G \times G)$.  So, $\cM' \boxtimes  \cM''$ describes a $G
\times G$-SPT state.  According to the above discussion, we need to break the
$G \times G$-symmetry down to the $G$-symmetry to obtain the $G$-SPT state.
Such a symmetry breaking process correspond to the so call ``anyon
condensation'' in category theory. We will discuss such anyon condensation
later.  The stacking operation $\bt_\cE$, with such a symmetry breaking process
included, is the correct stacking operation that maintains the symmetry $G$.

\subsection{Mathematical construction of the stacking operation}

We have conjectured that a 2+1D topological order with symmetry $\cE$ is
classified by $(\cC,\cM_\cC,c)$, where $\cC$ is a $\mce{\cE}$ , $\cM_\cC$ is a
modular extension of $\cC$, and $c$ is the central charge.  If we have another
topological order of the same symmetry $\cE$ described by $(\cC',\cM_{\cC'},c')$,
stacking $(\cC,\cM_\cC,c)$ and $(\cC',\cM_{\cC'},c')$ should give a third
topological order described by similar data $(\cC'',\cM_{\cC''},c'')$:
\begin{align}
 (\cC,\cM_\cC,c) \boxtimes_\cE (\cC',\cM_{\cC'},c') =
(\cC'',\cM_{\cC''},c'')
\end{align}

In this section, we will show that such a stacking operation can be defined
mathematically. This is an evidence supporting our Conjecture \ref{classSET}.
We like to point out that a special case of the above result for
$\cC=\cC'=\cC''=\cE=\Rp(G)$ was discussed in section \ref{grpSPT}.

To define $\boxtimes_\cE$ mathematically, first, we like to introduce
\begin{dfn}\label{alg}
  A \emph{condensable algebra} in a UBFC $\cC$ is a
  triple $(A,m,\eta)$, $A\in\cC$,
  $m:A\ot A\to A$, $\eta:\one\to A$ satisfying
  \begin{itemize}
    \item Associative: $m(\id_A\ot m)=m(m\ot \id_A)$
    \item Unit: $m(\eta\ot\id_A)=m(\id_A\ot\eta)=\id_A$
    \item Isometric: $m m^\dag=\id_A$
    \item Connected: $\Hom(\one,A)=\C$
    \item Commutative: $m c_{A,A}=m$
  \end{itemize}
\end{dfn}
Physically, such an condensable algebra $A$ is a composite self-bosonic anyon
satisfies additional conditions such that one can condense $A$ to obtain
another topological phase.

\begin{dfn}
  A (left) \emph{module} over a condensable algebra $(A,m,\eta)$ in $\cC$ is a
  pair $(X,\rho)$, $X\in\cC$, $\rho:A\ot X\to X$ satisfying
  \begin{gather}
    \rho(\id_A\ot\rho)=\rho(m\ot \id_M),\nonumber\\
    \rho(\eta\ot\id_M)=\id_M.
  \end{gather}
  It is further a \emph{local} module if
  \begin{align*}
    \rho c_{M,A} c_{A,M}=\rho.
  \end{align*}
\end{dfn}
  We denote the category of left $A$ modules by $\cC_A$.
  A left module $(X,\rho)$ is turned into a right module via the braiding,
  $(X,\rho c_{X,A})$ or $(X,\rho c_{A,X}^{-1})$, and thus an $A$-$A$ bimodule.
  The relative tensor functor $\ot_A$ of bimodules then turns $\cC_A$ into a fusion category.
  (This is known as $\alpha$-induction in subfactor context.)
  In general there can be two monoidal structures on $\cC_A$, since there are
  two ways to turn a left module into a bimodule (usually we pick one for
  definiteness
  when considering $\cC_A$ as a fusion category).
  The two monoidal structures coincide for the fusion subcategory $\cC_A^0$ of
  local $A$ modules. Moreover, $\cC_A^0$ inherited the braiding from $\cC$ and
  is also a UBFC. The local modules are nothing but the anyons in the
  topological phases after condensing $A$.
\begin{lem}[DMNO\cite{dmno}]
  \[\dim(\cC_A)=\frac{\dim(\cC)}{\dim(A)}.\]
  If $\cC$ is a UMTC, then so is $\cC_A^0$, and
  \[\dim(\cC_A^0)=\frac{\dim(\cC)}{\dim(A)^2}.\]
\end{lem}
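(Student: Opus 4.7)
The plan is to treat $\cC_A$ as an indecomposable left $\cC$-module category and reduce both statements to standard dimension formulas for module categories and for Drinfeld centers. The left action $X\cdot M := X\otimes M$ (the tensor in $\cC$, with module structure $A\otimes X\otimes M \to X\otimes A \otimes M \to X\otimes M$ obtained by braiding $A$ past $X$ and then applying $\rho$) makes $\cC_A$ into a $\cC$-module category. The free-module functor $F_A(X):=X\otimes A$ is left adjoint to the forgetful functor $U_A:\cC_A\to\cC$ and generates $\cC_A$, so this module category is indecomposable, and the unit $A\in\cC_A$ has internal endomorphism algebra $\underline{\mathrm{End}}(A)=A$ by Frobenius reciprocity.

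For the first equality I would invoke Ostrik's dimension theorem: for any indecomposable $\cC$-module category $\cM$ and any simple $X\in\cM$,
\begin{align*}
\dim(\cM)\cdot\dim(\underline{\mathrm{End}}(X)) = \dim(\cC).
\end{align*}
Taking $\cM=\cC_A$ and $X=A$ gives $\dim(\cC_A)=\dim(\cC)/\dim(A)$. In the pseudo-unitary setting of UBFCs considered here, Frobenius--Perron dimension agrees with the categorical dimension, so this is the stated formula.

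For the second statement, I first check that $\cC_A^0\subset\cC_A$ is a UBFC whenever $\cC$ is a UMTC: the two monoidal structures on $\cC_A$ coincide on local modules, and the braiding of $\cC$ descends unambiguously. Modularity of $\cC_A^0$ then reduces to showing its M\"uger center is trivial, which follows by combining the locality condition $\rho c_{M,A} c_{A,M}=\rho$ with non-degeneracy of the braiding in $\cC$ to rule out transparent simple local modules other than $A$. For the dimension I would use the braided equivalence
\begin{align*}
Z(\cC_A) \simeq \cC \boxtimes \ov{\cC_A^0},
\end{align*}
valid when $\cC$ is modular and $A$ is condensable. Combining the universal identity $\dim Z(\cC_A) = \dim(\cC_A)^2$ with the dimension of the right-hand side, and substituting the first formula, yields $\dim(\cC_A^0) = \dim(\cC)/\dim(A)^2$.

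The main obstacle is the braided equivalence $Z(\cC_A)\simeq \cC\boxtimes\ov{\cC_A^0}$. Establishing it requires producing explicit half-braidings on $A$-modules from the braiding of $\cC$, decomposing a general object of $Z(\cC_A)$ into a ``$\cC$-part'' (the central image of $\cC$ inside $Z(\cC_A)$) and a ``$\cC_A^0$-part'' (coming from local modules), and then using non-degeneracy of $\cC$ to show no further ingredients appear. Verifying non-degeneracy of $\cC_A^0$ is of comparable difficulty and is essentially dual to the center calculation, so these two pieces really form a single interlocking argument rather than two independent steps.
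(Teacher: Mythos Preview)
The paper does not prove this lemma; it is quoted verbatim from the cited reference DMNO with no argument supplied, so there is no in-paper proof to compare against. Your outline is a reasonable reconstruction of the standard argument and is essentially what one finds in the literature.

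A couple of points worth tightening. For the first formula, applying Ostrik's theorem requires $A$ to be a \emph{simple} object of $\cC_A$; this holds precisely because of the connectedness axiom $\Hom(\one,A)=\mathbb{C}$ in Definition~\ref{alg}, which you should state explicitly. An alternative and slightly more direct route (closer to what DMNO actually do) is to observe that for any $A$-module $M$ the Frobenius--Perron dimension in $\cC_A$ satisfies $\mathrm{FPdim}_{\cC_A}(M)=\mathrm{FPdim}_\cC(M)/\mathrm{FPdim}(A)$, and then sum over simples using that every simple $A$-module is a summand of a free module. For the second formula, your plan via $Z(\cC_A)\simeq \cC\boxtimes\ov{\cC_A^0}$ is correct and is the cleanest way to get both modularity of $\cC_A^0$ and the dimension count simultaneously; you are right that this equivalence is the substantive step, and in DMNO it is packaged as the statement that $\cC$ and $\cC_A^0$ are Witt equivalent.
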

  A non-commutative algebra $A$ is also of interest. We have the left center
  $A_l$ of $A$, the maximal subalgebra such that $m c_{A_l,A}=m$, and the right
  center $A_r$, the maximal subalgebra such that $m c_{A,A_r}=m$. $A_l$ and
  $A_r$ are commutative subalgebras, thus condensable.
\begin{thm}[FFRS\cite{FFRS03}]
  There is a canonical equivalence between the categories of local modules
  over the left and right centers, $\cC_{A_l}^0=\cC_{A_r}^0$.
\end{thm}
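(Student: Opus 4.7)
The plan is to realize both $\cC_{A_l}^0$ and $\cC_{A_r}^0$ as equivalent full subcategories of the category ${}_A\cC_A$ of $A$-$A$ bimodules in $\cC$, passing through an intermediate subcategory of ``centered'' bimodules. Let ${}_A\cC_A^{\mathrm{cen}} \subset {}_A\cC_A$ be the full subcategory of bimodules $X$ for which the left and right $A$-actions are intertwined by the full monodromy $c_{A,X}c_{X,A}$. The goal is to exhibit braided equivalences $\cC_{A_l}^0 \simeq {}_A\cC_A^{\mathrm{cen}} \simeq \cC_{A_r}^0$.

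First, I would construct a functor $F_l \colon \cC_{A_l}^0 \to {}_A\cC_A$ by induction: a local $A_l$-module $(M,\rho)$ is sent to $A \otimes_{A_l} M$ with left $A$-action by multiplication on the first factor, and right $A$-action defined by braiding the right-acting $A$ past $M$, i.e.\ via $(m \otimes \mathrm{id}_M)\circ(\mathrm{id}_A \otimes c_{M,A})$. This descends to the relative tensor product precisely because $m\circ c_{A_l,A}=m$ (the defining property of the left center) and because $M$ is local as an $A_l$-module. A parallel construction gives $F_r \colon \cC_{A_r}^0 \to {}_A\cC_A$ by $N \mapsto N \otimes_{A_r} A$. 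A direct diagrammatic check using the centralizing property of $A_l$ (resp.\ $A_r$) confirms that the images of both $F_l$ and $F_r$ land in ${}_A\cC_A^{\mathrm{cen}}$.

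Second, I would show $F_l$ is an equivalence onto ${}_A\cC_A^{\mathrm{cen}}$ by constructing a quasi-inverse $X \mapsto X^{A_l}$, where $X^{A_l}$ is the $A_l$-invariant subobject obtained as the image of a projector built from the unit and the dagger of the multiplication. This projector splits because the isometric condition $mm^\dagger=\mathrm{id}_A$ endows $A$ with a separable Frobenius structure. One verifies that $X^{A_l}$ inherits a local $A_l$-module structure and that the unit and counit of the adjunction are isomorphisms, using centeredness of $X$ to show that $X^{A_l}$ carries the full information of $X$. The analogous argument gives an equivalence $F_r\colon \cC_{A_r}^0 \simeq {}_A\cC_A^{\mathrm{cen}}$, and composing yields the desired equivalence $\cC_{A_l}^0 \simeq \cC_{A_r}^0$.

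The main obstacle, as I anticipate it, is verifying that the resulting composite is a \emph{braided} equivalence rather than merely a linear one. Monoidality requires comparing three distinct relative tensor products: concretely, one must establish a natural isomorphism $(A \otimes_{A_l} M)\otimes_A (A \otimes_{A_l} M') \simeq A \otimes_{A_l} (M \otimes_{A_l} M')$, which reduces to a coequalizer computation using the associativity of $m$, the defining relation $m\circ c_{A_l,A}=m$, and the Frobenius splitting. The braiding on ${}_A\cC_A^{\mathrm{cen}}$ is the one inherited from $\cC$, which is well-defined precisely because of the centeredness condition; the remaining check is that this braiding matches the braidings of $\cC_{A_l}^0$ and $\cC_{A_r}^0$ under the two equivalences. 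This last step is where I expect the bulk of the diagrammatic work to lie, but it is forced by naturality of $c$ together with the defining relations of the two centers.
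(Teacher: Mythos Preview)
The paper does not prove this theorem at all; it simply states it with the citation to FFRS and moves on. So there is no ``paper's own proof'' to compare against --- this is quoted as a known result from the literature.

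That said, your sketch is a reasonable outline and is close in spirit to the original FFRS argument, which also passes through an intermediate category of $A$-$A$ bimodules (there realized via $\alpha$-induction) and identifies both local module categories with it. Your identification of the main technical burden --- checking that the composite is braided monoidal rather than merely an equivalence of linear categories --- is accurate; in FFRS this is handled by tracking how $\alpha$-induction interacts with the braiding on both sides. One point to be careful about: your quasi-inverse $X \mapsto X^{A_l}$ needs $A_l$ itself to be a separable symmetric Frobenius algebra (not just $A$), which in the unitary setting follows because subalgebras of special symmetric Frobenius algebras with the induced structure inherit these properties, but this deserves a line of justification rather than being folded into ``the isometric condition on $A$.''
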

\begin{dfn}
  The Drinfeld center $Z(\cA)$ of a monoidal category $\cA$ is a monoidal category with
  objects as pairs $(X\in\cA,b_{X,-})$, where $b_{X,-}: X\ot -\to -\ot X$ are
  half-braidings that satisfy similar conditions as braidings. Morphisms and
  the tensor product are naturally defined.
\end{dfn}
  $Z(\cA)$ is a braided monoidal category. There is a forgetful tensor functor
  $for_\cA:Z(\cA)\to \cA$, $(X,b_{X,-})\mapsto X$ that forgets the half-braidings.
\begin{thm}[M{\"u}ger\cite{Mue01}]
  $Z(\cA)$ is a UMTC if $\cA$ is a fusion category and
  $\dim(Z(\cA))=\dim(\cA)^2$.
\end{thm}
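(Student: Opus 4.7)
The plan is to establish the theorem in three stages: that $Z(\cA)$ is a braided fusion category, that the dimension formula $\dim Z(\cA)=\dim(\cA)^2$ holds, and that $Z(\cA)$ is non-degenerate. For the first stage, rigidity transports from $\cA$: given $(X,b_{X,-})\in Z(\cA)$, the dual $X^*\in\cA$ carries a canonical half-braiding built from $b_{X,-}$ together with the evaluation and coevaluation of $\cA$, and associators and the unit are inherited pointwise. The braiding on $Z(\cA)$ is then defined tautologically by $c_{(X,b_X),(Y,b_Y)}:=b_{X,Y}$, and the hexagon axioms follow directly from the constraints imposed on half-braidings. Semisimplicity together with finiteness of the set of isomorphism classes of simples I would obtain by identifying $Z(\cA)$ with the fusion category $\Fun_{\cA|\cA}(\cA,\cA)$ of $\cA$-bimodule endofunctors of the regular $\cA$-bimodule category, invoking the general theory of module categories over fusion categories.

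The dimension formula is the heart of the argument, and for this I would introduce the forgetful tensor functor $for_\cA\colon Z(\cA)\to\cA$ already available in the paper, together with its right adjoint $I\colon\cA\to Z(\cA)$. The adjoint admits the explicit presentation $I(Y)=\bigoplus_i X_i\otimes Y\otimes X_i^*$, summed over representatives of the simples of $\cA$, with half-braiding assembled from the associators and duality data of $\cA$; this presentation immediately yields $\dim I(Y)=\dim(\cA)\dim(Y)$. Frobenius reciprocity then gives, for each simple $X_i$ of $\cA$,
\[
I(X_i)\cong\bigoplus_\alpha [for_\cA(Z_\alpha):X_i]\,Z_\alpha,
\]
where $\{Z_\alpha\}$ is a complete set of simples of $Z(\cA)$. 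Multiplying by $\dim(X_i)$, summing over $i$, and using $\dim for_\cA(Z_\alpha)=\dim Z_\alpha$ produces
\[
\dim(\cA)^2=\sum_i\dim(X_i)\,\dim I(X_i)=\sum_\alpha(\dim Z_\alpha)^2=\dim Z(\cA),
\]
which is the required identity.

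For non-degeneracy I would argue by contradiction. Suppose $(W,c_{W,-})$ lies in the Müger center of $Z(\cA)$, i.e.\ double-braids trivially with every object. Testing this condition against the entire family of induced objects $I(Y)$ for $Y\in\cA$ and unwinding the adjunction translates the transparency hypothesis into a statement that $c_{W,-}$ must agree pointwise with the identity half-braiding. Hence $(W,c_{W,-})$ is isomorphic to a direct sum of copies of $(\one,\mathrm{id})$, so the Müger center of $Z(\cA)$ coincides with $\Hilb$ and $Z(\cA)$ is modular.

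The hardest step will be the non-degeneracy argument: converting transparency against the entire family $\{I(Y)\}_{Y\in\cA}$ into a pointwise triviality statement for $c_{W,-}$ requires a coend computation together with an averaging argument over the simple objects of $\cA$, exploiting the fact that $I(\one)=\bigoplus_i X_i\otimes X_i^*$ behaves as a ``regular representation'' whose components collectively separate half-braidings. A secondary difficulty, if one wishes to avoid the bimodule-endofunctor description, is establishing semisimplicity of $Z(\cA)$ directly from the adjoint pair $(for_\cA,I)$; this requires showing that $I(\one)$ is a separable condensable algebra in the sense of Definition \ref{alg} and analyzing its category of local modules via the DMNO lemma quoted above.
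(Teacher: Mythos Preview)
The paper does not prove this theorem; it is quoted from M\"uger's work as background and stated without argument. So there is no ``paper's own proof'' to compare against.

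Your outline is essentially the standard proof from the cited reference (and the later treatment by Etingof--Nikshych--Ostrik). The dimension computation via the induction functor $I$ and Frobenius reciprocity is correct as written. The non-degeneracy step is where the real content lies, and your sketch is on the right track but would benefit from being made more concrete: rather than arguing by contradiction against all $I(Y)$, the cleanest route is to compute directly that for a transparent object $(W,c_W)$ the double braiding with $I(\one)$ forces $c_{W,-}$ to be the trivial half-braiding. This uses that $I(\one)=\bigoplus_i X_i\otimes X_i^*$ is a commutative separable algebra in $Z(\cA)$ (indeed a condensable algebra in the sense of Definition~\ref{alg}) whose category of local modules is $\Hilb$, so any object centralizing $I(\one)$---and hence any transparent object---must lie in $\Hilb$. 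Your secondary remark about obtaining semisimplicity from separability of $I(\one)$ is exactly the right observation and can replace the bimodule-endofunctor identification entirely.
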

\begin{dfn}
  Let $\cC$ be a braided fusion category and $\cA$ a fusion category, a tensor
  functor $F:\cC\to \cA$ is called a central functor if it factorizes through
  $Z(\cA)$, i.e., there exists a braided tensor functor $F':\cC\to Z(\cA)$ such
  that $F=F'for_\cA$.
\end{dfn}

\begin{lem}
  [DMNO\cite{dmno}]
  Let $F:\cC\to\cA$ be a central functor, and $R:\cA\to\cC$ the right adjoint
functor of $F$.
Then the object $A=R(\one) \in\cC$ has a canonical structure of condensable algebra.
$\cC_A$ is monoidally equivalent to the image of
$F$, i.e. the smallest fusion subcategory of $\cA$ containing $F(\cC)$.
\end{lem}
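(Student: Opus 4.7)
The plan is to build the algebra structure from the adjunction, verify the five conditions of Definition~\ref{alg} in turn, and then exhibit the monoidal equivalence between $\cC_A$ and the image of $F$ via the projection formula together with the free module functor. First I would use the fact that a right adjoint to a strong monoidal functor inherits a canonical lax monoidal structure: the map $R(X)\ot R(Y)\to R(X\ot Y)$ is the adjunct of
\[
F(R(X)\ot R(Y))\simeq F(R(X))\ot F(R(Y))\xrightarrow{\epsilon\ot\epsilon}X\ot Y,
\]
where $\epsilon$ is the counit. Specializing to $X=Y=\one$ produces the multiplication $m:A\ot A\to A$, and $\eta:\one\to A$ is the unit of the adjunction at $\one$. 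Associativity and unitality of $(A,m,\eta)$ are then formal consequences of $R$ being lax monoidal. Connectedness is immediate, since $\Hom_\cC(\one,A)=\Hom_\cA(F(\one),\one)=\Hom_\cA(\one,\one)=\C$. The isometry $mm^\dag=\id_A$ is a separability statement that transfers through $R$ from the tautologically separable structure of $\one\in\cA$, using that in the unitary finite semisimple setting the adjunction $F\dashv R$ is ambidextrous.

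Commutativity $m\circ c_{A,A}=m$ is the step that crucially uses the centrality of $F$. Passing through the adjunction, $m$ becomes $\epsilon_\one\ot\epsilon_\one:F(A)\ot F(A)\to\one$, while $m\circ c_{A,A}$ becomes the same map precomposed with $F(c_{A,A})$. The factorization $F=for_\cA\circ F'$ with $F'$ braided forces $F(c_{A,A})$ to coincide with the half-braiding $b_{F(A),F(A)}$ inherited from $F'(A)\in Z(\cA)$. Naturality of the half-braiding in its second argument, applied to the counit $\epsilon:F(A)\to\one$ and combined with the triviality of any half-braiding with the unit object, forces $(\epsilon\ot\epsilon)\circ b_{F(A),F(A)}=\epsilon\ot\epsilon$, so commutativity holds.

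For the identification of $\cC_A$ with the image of $F$, I would use the projection formula $R(Y)\cong Y\ot R(\one)=Y\ot A$, valid because $F$ is strong monoidal and the categories are rigid. Combined with adjunction, this yields
\[
\Hom_\cA(F(X),F(Y))\cong\Hom_\cC(X,Y\ot A)\cong\Hom_{\cC_A}(X\ot A,Y\ot A),
\]
so the free module functor $\cC\to\cC_A$, $X\mapsto X\ot A$, promotes $F$ to a fully faithful monoidal functor $\widetilde{F}$ from the full subcategory of free $A$-modules into $\cA$, with essential image precisely the fusion subcategory generated by $F(\cC)$. The main obstacle, and the final step, is to extend this equivalence from free modules to all of $\cC_A$: in the finite semisimple setting every $A$-module is a direct summand of a free one, so $\cC_A$ is the idempotent completion of its full subcategory of free modules, and the image of $F$ in $\cA$ is likewise idempotent complete, hence the two completions match and give the desired monoidal equivalence.
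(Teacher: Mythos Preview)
The paper does not prove this lemma; it is quoted verbatim from DMNO as a background result and used without argument. Your sketch follows the standard route taken in that reference: build the lax monoidal structure on $R$, check the axioms of Definition~\ref{alg} with commutativity coming from centrality, then identify $\cC_A$ with the image of $F$ via the projection formula and idempotent completion. The overall architecture is correct.

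One notational slip to fix: you write the projection formula as $R(Y)\cong Y\ot R(\one)$, but as stated this does not typecheck, since $Y\in\cA$ while $R(\one)\in\cC$. What you need (and what you actually use in the next line) is $R(F(Y))\cong Y\ot R(\one)$ for $Y\in\cC$, which is the genuine projection formula for the adjunction $F\dashv R$ with $F$ strong monoidal. With that correction the chain of isomorphisms
\[
\Hom_\cA(F(X),F(Y))\cong\Hom_\cC(X,R(F(Y)))\cong\Hom_\cC(X,Y\ot A)\cong\Hom_{\cC_A}(X\ot A,Y\ot A)
\]
goes through and the rest of your argument is sound.
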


\begin{exa}
  If $\cC$ is a UBFC, it is naturally embedded into
  $Z(\cC)$, so is $\ov\cC$. Therefore, $\cC\bt\ov\cC\hookrightarrow Z(\cC)$.
  Compose this embedding with the forgetful functor $for_\cC:Z(\cC)\to\cC$ we
  get a central functor
\begin{align*}
  \cC\bt\ov\cC &\to \cC\\
  X\bt Y&\mapsto X\ot Y.
\end{align*}
Let $R$ be its right adjoint functor, we obtain a condensable algebra
$L_\cC:=R(\one)\cong \oplus_i ( i\bt \bar i) \in \cC\bt\ov\cC$ ($\bar i$
denotes the dual object, or anti-particle of $i$) and $\cC=(
\cC\bt\ov\cC)_{L_\cC}$, $\dim(L_\cC)=\dim(\cC)$.
In particular, for a symmetric category $\cE$, $L_\cE$ is a condensable algebra
in $\cE\bt\cE$, and $\cE=(\cE\bt\cE)_{L_\cE}=(\cE\bt\cE)_{L_\cE}^0$ for $\cE$
is symmetric, all $L_\cE$-modules are local.
Condensing $L_\cE$ is nothing but breaking the symmetry from $\cE\bt\cE$ to
$\cE$.
\end{exa}

Now, we are ready to define the stacking operation for $\mce{\cE}$'s as well
as their  modular extensions.
\begin{dfn}\label{stacking}
  Let $\cC,\cD$ be $\mce{\cE}$'s, and $\cM_\cC,\cM_\cD$ their 
  modular extensions. The stacking is defined by:
  \begin{align*}
    \cC\bt_\cE\cD:=(\cC\bt\cD)^0_{L_\cE},\quad \cM_\cC\bt_\cE \cM_\cD:=(\cM_\cC\bt
    \cM_\cD)_{L_\cE}^0
  \end{align*}
\end{dfn}
Note that in Ref.~\onlinecite{DNO11}, the tensor product $\bt_\cE$ for
$\mce{\cE}$'s is defined as $(\cC\bt\cD)_{L_\cE}$. For $\mce{\cE}$'s the two
definitions coincide $(\cC\bt\cD)^0_{L_\cE}=(\cC\bt\cD)_{L_\cE}$, for $L_\cE$ lies in
the centralizer of $\cC\bt\cD$ which is $\cE\bt\cE$. But for the  modular extensions we have to take
the unusual definition above.

\begin{thm}
  $\cC\bt_\cE\cD$ is a $\mce{\cE}$, and
  $\cM_\cC\bt_\cE\cM_\cD$ is a  modular extension of $\cC\bt_\cE\cD$.
\end{thm}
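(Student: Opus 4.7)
The plan is to verify both claims by dimension counting and by tracking the key subcategory inclusions through the local-module condensation functor $(-)^0_{L_\cE}$. First I would check that $L_\cE$ is condensable in both $\cC\bt\cD$ and $\cM_\cC\bt\cM_\cD$: the braided embeddings $\cE\hookrightarrow \cen{\cC}{\cC}$ and its analogues give a chain $\cE\bt\cE\hookrightarrow \cC\bt\cD\hookrightarrow \cM_\cC\bt\cM_\cD$, and the algebra structure of $L_\cE$ constructed in the Example persists along these fully faithful braided functors. Moreover, because $\cC,\cD$ are $\mce{\cE}$, one has $\cen{\cC\bt\cD}{\cC\bt\cD}=\cen{\cC}{\cC}\bt\cen{\cD}{\cD}=\cE\bt\cE$, so $L_\cE$ sits in the symmetric center of $\cC\bt\cD$; hence every $L_\cE$-module in $\cC\bt\cD$ is automatically local, and $(\cC\bt\cD)^0_{L_\cE}=(\cC\bt\cD)_{L_\cE}$, matching the DNO definition of the stacking used in Ref.\,\onlinecite{DNO11}.

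Next I would establish that $\cC\bt_\cE\cD$ is a $\mce{\cE}$. The DMNO lemma gives $\dim(\cC\bt_\cE\cD)=\dim(\cC)\dim(\cD)/\dim(\cE)$. For the $\cE$-structure, applying $(-)^0_{L_\cE}$ to the inclusion $\cE\bt\cE\hookrightarrow \cC\bt\cD$ and using $(\cE\bt\cE)^0_{L_\cE}=\cE$ (from the Example) produces a braided embedding $\cE\hookrightarrow \cC\bt_\cE\cD$. The non-degeneracy requirement $\cen{\cC\bt_\cE\cD}{\cC\bt_\cE\cD}=\cE$ I would then deduce from $\cen{\cC\bt\cD}{\cC\bt\cD}=\cE\bt\cE$ by invoking the general principle that under local-module condensation by an algebra lying in the symmetric center, the centralizer of a subcategory descends to the centralizer of its image.

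For $\cM_\cC\bt_\cE\cM_\cD$, the crucial input is that $\cM_\cC\bt\cM_\cD$ is already a UMTC (product of two UMTCs), so by the DMNO lemma $(\cM_\cC\bt\cM_\cD)^0_{L_\cE}$ is again a UMTC with dimension $\dim(\cM_\cC)\dim(\cM_\cD)/\dim(L_\cE)^2=\dim(\cC)\dim(\cD)$. The fully faithful embeddings $\cC\hookrightarrow \cM_\cC$ and $\cD\hookrightarrow \cM_\cD$ intertwine the $L_\cE$-action, so applying $(-)^0_{L_\cE}$ produces a braided embedding $\cC\bt_\cE\cD\hookrightarrow \cM_\cC\bt_\cE\cM_\cD$, which is fully faithful because local-module condensation preserves fully faithful braided inclusions of subcategories containing $L_\cE$ in their centralizer. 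The identity $\dim(\cM_\cC\bt_\cE\cM_\cD)=\dim(\cC\bt_\cE\cD)\cdot\dim(\cE)$ is then immediate from the above, which by the equivalent dimension formulation in Definition \ref{mextdef} certifies $\cM_\cC\bt_\cE\cM_\cD$ as a modular extension of $\cC\bt_\cE\cD$.

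The hard part will be making precise how centralizers transform under the functor $(-)^0_{L_\cE}$, both to secure non-degeneracy of $\cC\bt_\cE\cD$ over $\cE$ and to recognize the image of $\cC\bt_\cE\cD$ inside $\cM_\cC\bt_\cE\cM_\cD$ as the full centralizer $\cen{\cE}{\cM_\cC\bt_\cE\cM_\cD}$, not merely a subcategory of the correct total dimension. This structural step likely requires either direct manipulation of the de-equivariantization picture or an appeal to the braided Morita theory for $\mce{\cE}$ developed in Ref.\,\onlinecite{DNO11}, as well as the detailed calculations in the mathematical companion paper \Ref{LW160205936}, where the stacking $\bt_\cE$ is upgraded to a genuine symmetric monoidal operation on modular extensions.
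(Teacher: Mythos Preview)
Your approach is the paper's: push the embeddings $\cE\bt\cE\hookrightarrow\cC\bt\cD\hookrightarrow\cM_\cC\bt\cM_\cD$ through $(-)^0_{L_\cE}$, apply the DMNO dimension lemma, and conclude via the dimension criterion in Definition~\ref{mextdef}; the paper is simply terser, calling the embeddings ``obvious'' and recording only the two dimension identities $\dim(\cC\bt_\cE\cD)=\dim(\cC)\dim(\cD)/\dim(\cE)$ and $\dim(\cM_\cC\bt_\cE\cM_\cD)=\dim(\cC)\dim(\cD)$. Your ``hard part'' is not actually needed and the paper does not attempt it: non-degeneracy $\cen{\cC\bt_\cE\cD}{\cC\bt_\cE\cD}=\cE$ follows a posteriori, since once the dimension identity forces $\cen{\cE}{\cM_\cC\bt_\cE\cM_\cD}=\cC\bt_\cE\cD$ inside the UMTC $\cM_\cC\bt_\cE\cM_\cD$, M\"uger's double-centralizer theorem gives $\cen{\cC\bt_\cE\cD}{\cM_\cC\bt_\cE\cM_\cD}=\cE$, and intersecting with $\cC\bt_\cE\cD$ yields the claim---so no separate analysis of how centralizers behave under $(-)^0_{L_\cE}$, and no appeal to the companion paper, is required here.
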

\begin{proof}
  The embeddings
$\cE=(\cE\bt\cE)_{L_\cE}^0\hookrightarrow (\cC\bt\cD)^0_{L_\cE}=\cC\bt_\cE\cD
\hookrightarrow (\cM_\cC\bt\cM_\cD)^0_{L_\cE}=\cM_\cC\bt_\cE\cM_\cD$
are obvious.
So $\cC\bt_\cE\cD$ is a UBFC over $\cE$. Also
\begin{align*}
  \dim(\cC\bt_\cE\cD)=\frac{\dim(\cC\bt\cD)}{\dim(L_\cE)}
  =\frac{\dim(\cC)\dim(\cD)}{\dim(\cE)},
\end{align*}
and $\cM_\cC\bt_\cE\cM_\cD$ is a UMTC,
\begin{align*}
  \dim(\cM_\cC\bt_\cE\cM_\cD)
  =\frac{\dim(\cM_\cC\bt\cM_\cD)}{\dim(L_\cE)^2}=\dim(\cC)\dim(\cD).
\end{align*}
  Thus, $\cM_\cC\bt_\cE\cM_\cD$ is a
  modular extension of $\cC\bt_\cE\cD$.
\end{proof}

  Take $\cD=\cE$. Note that $\cC\bt_\cE\cE=\cC$. Therefore,  for any
  modular extension $\cM_\cE$ of $\cE$, $\cM_\cC\bt_\cE\cM_\cE$ is still a
  modular extension of $\cC$. In the following we want to show the inverse,
  that one can extract the ``difference'', a modular extension of $\cE$,
  between two modular extensions of $\cC$.

\begin{lem}\label{Lag}
  We have $(\cC\bt\ov\cC)_{L_\cC}^0=\cen{\cC}{\cC}$.
\end{lem}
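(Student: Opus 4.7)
The plan is to reduce the statement to the free-module equivalence supplied by the DMNO lemma cited above and then analyze the locality condition factor by factor. Consider the multiplication central functor $F:\cC\bt\ov\cC\to\cC$, $X\bt Y\mapsto X\otimes Y$, which is visibly essentially surjective. Its right adjoint $R$ sends $\one$ to $L_\cC=\oplus_i(i\bt\bar i)$, as one verifies directly from the adjunction $\Hom_{\cC\bt\ov\cC}(X\bt Y,R(\one))=\Hom_\cC(X\otimes Y,\one)$. The DMNO lemma then produces a monoidal equivalence $\Phi:\cC\xrightarrow{\sim}(\cC\bt\ov\cC)_{L_\cC}$, realized on objects by the free-module construction $X\mapsto L_\cC\otimes(X\bt\one)$.

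The second step is to single out the local modules inside $\Phi(\cC)$. Recall that the braiding on $\cC\bt\ov\cC$ acts as the braiding $c$ of $\cC$ on the first tensor factor and as $c^{-1}$ on the second. For the free module $\Phi(X)$, the double braiding with a simple summand $j\bt\bar j\subset L_\cC$ factorizes as $(c_{X,j}\circ c_{j,X})\bt\id_{\bar j}$, because every object of $\ov\cC$ centralizes the unit. Imposing the locality condition $\rho\,c_{\Phi(X),L_\cC}c_{L_\cC,\Phi(X)}=\rho$ therefore reduces, summand by summand, to the requirement $c_{j,X}\circ c_{X,j}=\id_{X\otimes j}$ for every simple $j\in\cC$. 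This is exactly the defining condition for $X\in\cen{\cC}{\cC}$, so $\Phi$ restricts to a fully faithful embedding $\cen{\cC}{\cC}\hookrightarrow(\cC\bt\ov\cC)^0_{L_\cC}$.

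To conclude I would check essential surjectivity and braided compatibility. Every object of $(\cC\bt\ov\cC)_{L_\cC}$ is a summand of some $\Phi(X)$ by the DMNO equivalence; since $(\cC\bt\ov\cC)^0_{L_\cC}$ and $\cen{\cC}{\cC}$ are each closed under direct summands, and since $\Phi(X)$ is local if and only if $X$ centralizes $\cC$, the image of $\cen{\cC}{\cC}$ under $\Phi$ exhausts $(\cC\bt\ov\cC)^0_{L_\cC}$. For braided compatibility, the braiding induced on local modules from that of $\cC\bt\ov\cC$ restricts on free modules coming from $\cen{\cC}{\cC}$ to the original braiding of $\cC$, because the $\ov\cC$-factor contribution exactly cancels the twist built into the $L_\cC$-action. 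A useful sanity check: in the modular case $\dim(L_\cC)^2=\dim(\cC)^2=\dim(\cC\bt\ov\cC)$, forcing $(\cC\bt\ov\cC)^0_{L_\cC}=\Hilb=\cen{\cC}{\cC}$, as expected.

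The step I expect to be the main obstacle is the last one, verifying that $\Phi$ is genuinely a \emph{braided} equivalence rather than merely a fusion equivalence. The braiding on $(\cC\bt\ov\cC)^0_{L_\cC}$ is defined through the relative tensor product $\ot_{L_\cC}$ together with half-braiding data descending from $\cC\bt\ov\cC$, and carefully tracking the cancellation between the inverse braiding on the $\ov\cC$-factor and the twist built into the $L_\cC$-module structure calls for a routine but nontrivial string-diagram calculation.
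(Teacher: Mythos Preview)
Your approach is essentially the same as the paper's: use the DMNO equivalence $(\cC\bt\ov\cC)_{L_\cC}\simeq\cC$ realized via free modules $X\mapsto L_\cC\otimes(X\bt\one)$, then observe that the free module is local precisely when $X\bt\one$ centralizes $L_\cC$, which amounts to $X\in\cen{\cC}{\cC}$. The paper's proof is considerably terser---it states the locality criterion for free modules as a known fact and does not separately address essential surjectivity or braided compatibility---whereas you spell these out and correctly flag the braided-equivalence verification as the point requiring the most care; this extra diligence is appropriate, since the lemma is later used to embed $\cE$ as a braided subcategory.
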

\begin{proof}
  $(\cC\bt\ov\cC)_{L_\cC}$ is equivalent to $\cC$ (as a fusion
category). Moreover, for $X\in\cC$ the equivalence gives the free module $
L_\cC\ot(X\bt
\one )\cong L_\cC\ot(\one\bt X)$. $L_\cC\ot(X\bt\one )$ is a local $L_\cC$ module if and only
if $X\bt \one$ centralize $L_\cC$. This is the same as $X\in
\cen{\cC}{\cC}$. Therefore we have $(\cC\bt \ov\cC)_{L_\cC}^0=\cen{\cC}{\cC}$.
\end{proof}

\begin{thm}\label{main}
 let $\cM$ and $\cM'$ be two modular extensions of the $\mce{\cE}$ $\cC$. There
 exists a unique $\cK\in\mext(\cE)$ such that $\cK\bt_\cE\cM=\cM'$. Such $\cK$
 is given by
   \begin{align*}
     \cK=(\cM'\bt \ov\cM)_{L_\cC}^0.
   \end{align*}
\end{thm}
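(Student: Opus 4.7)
The plan is to carry out three tasks: (i) verify that the proposed $\cK := (\cM' \bt \ov\cM)^0_{L_\cC}$ lies in $\mext(\cE)$; (ii) establish the identity $\cK \bt_\cE \cM \simeq \cM'$ as modular extensions of $\cC$; and (iii) deduce uniqueness from (i) and (ii). For (i), both $\cM$ and $\cM'$ extend $\cC$, so the embedding $\cC \bt \ov\cC \hookrightarrow \cM' \bt \ov\cM$ puts the canonical algebra $L_\cC$ inside the UMTC $\cM' \bt \ov\cM$ as a condensable algebra. The DMNO lemma then makes $\cK$ a UMTC, and a dimension count gives
\begin{align*}
\dim(\cK) = \frac{\dim(\cM')\dim(\cM)}{\dim(L_\cC)^2} = \frac{\dim(\cC)^2 \dim(\cE)^2}{\dim(\cC)^2} = \dim(\cE)^2,
\end{align*}
the dimension required of a modular extension of $\cE$. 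An embedding $\cE \hookrightarrow \cK$ arises by combining Lemma~\ref{Lag} applied to $\cC$ (which gives $\cE = (\cC \bt \ov\cC)^0_{L_\cC}$) with the restriction of the embedding $\cC \bt \ov\cC \hookrightarrow \cM' \bt \ov\cM$ to local $L_\cC$-modules. The dimension count then forces $\cen{\cE}{\cK} = \cE$, completing (i).

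The crux is step (ii), the identity
\begin{align*}
((\cM' \bt \ov\cM)^0_{L_\cC} \bt \cM)^0_{L_\cE} \simeq \cM'.
\end{align*}
My strategy is to view both sides as anyon condensations inside the triple UMTC $\cM' \bt \ov\cM \bt \cM$. On the one hand, applying Lemma~\ref{Lag} to $\cM$ gives $(\cM \bt \ov\cM)^0_{L_\cM} = \cen{\cM}{\cM} = \Hilb$, so condensing $L_\cM$ (embedded in the last two factors) in the triple product returns exactly $\cM'$ with its canonical $\cC$-embedding. On the other hand, the iterated condensation on the left-hand side is, by associativity of anyon condensation, the condensation of a composite algebra assembled from $L_\cC$ (on the first two factors) and $L_\cE$ (straddling the second and third). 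One verifies this composite algebra has the same local modules as $L_\cM$; the dimension check
\begin{align*}
\dim(\cK \bt_\cE \cM) = \frac{\dim(\cK)\dim(\cM)}{\dim(L_\cE)^2} = \dim(\cC)\dim(\cE) = \dim(\cM')
\end{align*}
matches, and the natural $\cC$-embeddings on both sides coincide, giving the claimed equivalence as modular extensions of $\cC$.

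For (iii), suppose $\cK' \in \mext(\cE)$ also satisfies $\cK' \bt_\cE \cM \simeq \cM'$. Substituting into the formula yields
\begin{align*}
(\cM' \bt \ov\cM)^0_{L_\cC} \simeq ((\cK' \bt_\cE \cM) \bt \ov\cM)^0_{L_\cC},
\end{align*}
and reorganizing as a nested condensation inside $\cK' \bt \cM \bt \ov\cM$, the inner factor collapses via $(\cM \bt \ov\cM)^0_{L_\cM} = \Hilb$, reducing the right-hand side to $\cK' \bt_\cE \cE = \cK'$. Hence $\cK' = (\cM' \bt \ov\cM)^0_{L_\cC} = \cK$, as required.

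The main obstacle will be justifying the associativity-of-condensation manipulations that appear in both (ii) and (iii): one must verify that local modules for $L_\cE$ over local modules for $L_\cC$ inside $\cM' \bt \ov\cM \bt \cM$ agree, compatibly with braiding and with the $\cC$-embedding, with local modules for $L_\cM$ inside $\cM' \bt (\ov\cM \bt \cM)$. This is a categorical Fubini statement for nested anyon condensations which, while expected in the DMNO/FFRS framework, requires careful bookkeeping of algebra structures and of the preserved embeddings.
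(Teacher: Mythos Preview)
Your overall architecture matches the paper's proof exactly: both establish (i) via Lemma~\ref{Lag} and a dimension count, then reduce (ii) and (iii) to comparing two anyon condensations inside the triple product $\cM' \bt \ov\cM \bt \cM$ (resp.\ $\cK \bt \cM \bt \ov\cM$). The difference is that you have correctly identified the obstacle---the ``categorical Fubini'' identification of the two nested condensations---but have not supplied the mechanism that resolves it. The paper does not appeal to a general associativity-of-condensation principle; instead it invokes the FFRS theorem (Theorem~4 in the paper) on left and right centers of a \emph{non-commutative} algebra. Concretely, for (ii) one forms the algebra
\[
B := (L_\cC \bt \one)\otimes(\one \bt L_{\ov\cM}) \quad \text{in } \cM'\bt\ov\cM\bt\cM,
\]
and checks that its left center is $\one \bt L_{\ov\cM}$ while its right center is $(L_\cC\bt\one)\otimes(\one\bt L_\cE)$. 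FFRS then gives the canonical braided equivalence
\[
(\cM'\bt\ov\cM\bt\cM)^0_{\one\bt L_{\ov\cM}} \;\simeq\; (\cM'\bt\ov\cM\bt\cM)^0_{(L_\cC\bt\one)\otimes(\one\bt L_\cE)},
\]
which is exactly the identification you need. The uniqueness step (iii) uses the same trick with $B' := (L_\cE\bt\one)\otimes(\one\bt L_\cM)$ in $\cK\bt\cM\bt\ov\cM$.

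So your proposal is not wrong, but it is incomplete at precisely the point you flagged, and the missing ingredient is specific rather than generic: you need to identify the auxiliary non-commutative algebra and compute its left and right centers, then cite FFRS. Without this, the ``associativity of condensation'' remains an assertion rather than an argument, and the embedding compatibility (which FFRS delivers automatically) would also require separate justification.
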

\begin{proof}
  $\cK$ is a modular extension of $\cE$. This follows
  Lemma \ref{Lag}, that $\cE=\cen{\cC}{\cC}=(\cC\bt\ov \cC)^0_{L_\cC}$ is a full
  subcategory of $\cK$. $\cK$ is a UMTC by construction, and
  $\dim(\cK)=\frac{\dim(\cM)\dim(\cM')}{\dim(L_\cC)^2}=\dim(\cE)^2$.

  To show that $\cK=(\cM'\bt\ov\cM)_{L_\cC}$ satisfies
  $\cM'=\cK\bt_\cE\cM$, note that
  $\cM'=\cM'\bt\Hilb=\cM'\bt(\ov\cM\bt\cM)_{L_{\ov\cM}}^0$. It suffies that
  \begin{align*}
    (\cM'\bt\ov\cM\bt\cM)_{\one\bt
      L_{\ov\cM}}^0=[(\cM'\bt\ov\cM)_{L_\cC}^0\bt
      \cM]_{L_\cE}^0\\
      =(\cM'\bt\ov\cM\bt\cM)_{(L_\cC\bt\one)\ot(\one\bt
      L_\cE)}^0.
  \end{align*}
  This follows that $\one\bt L_{\ov\cM} $ and $(L_\cC\bt\one)\ot(\one\bt
  L_\cE)$ are left and right centers of the algebra $(L_\cC\bt\one)\ot(\one\bt
  L_{\ov\cM})$.

  If $\cM'=\cK\bt_\cE\cM=(\cK\bt\cM)_{L_\cE}^0$,
  then
  \begin{align*}
    \cK= (\cK\bt\cM\bt\ov\cM)_{\one\bt
    L_\cM}^0=
    (\cK\bt\cM\bt\ov\cM)_{(L_\cE\bt\one)\ot(\one\bt 
    L_\cC)}^0\\
    =[(\cK\bt_\cE\cM)\bt\ov\cM]_{L_\cC}^0
    =(\cM'\bt\ov\cM)_{L_\cC}^0.
  \end{align*}
  It is similar here that $\one\bt L_{\cM} $ and
  $(L_\cE\bt\one)\ot(\one\bt
  L_\cC)$ are the left and right centers of the algebra
  $(L_\cE\bt\one)\ot(\one\bt
  L_{\cM})$. This proves the uniqueness of $\cK$.

\end{proof}

Let us list several consequences of Theorem \ref{main}.
\begin{thm}\label{hegroup}
  $\mext(\cE)$ forms an finite abelian group.
 \end{thm}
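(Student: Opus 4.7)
My plan is to derive the abelian group structure on $\mext(\cE)$ directly from Theorem \ref{main} specialized to $\cC=\cE$, which exhibits $\mext(\cE)$ as a torsor over itself under the stacking operation $\bt_\cE$ of Definition \ref{stacking}. The finiteness is already provided by the remark following Definition \ref{mextdef}, so all that remains is to extract the group axioms.

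First, I would check that $\bt_\cE$ is a well-defined binary operation on $\mext(\cE)$: taking $\cC=\cD=\cE$ in the theorem right before Lemma \ref{Lag}, any $\cM\bt_\cE\cM'$ is a modular extension of $\cE\bt_\cE\cE=(\cE\bt\cE)^0_{L_\cE}=\cE$. Commutativity of $\bt_\cE$ is immediate from the symmetry of the Deligne product $\bt$ together with the fact that $L_\cE$ lives in $\cE\bt\cE$ and is invariant under the symmetric braiding. For associativity, I would rewrite both $(\cM_1\bt_\cE\cM_2)\bt_\cE\cM_3$ and $\cM_1\bt_\cE(\cM_2\bt_\cE\cM_3)$ as local modules over the canonical condensable algebra in $\cM_1\bt\cM_2\bt\cM_3$ obtained by placing $L_\cE$ between the first two factors and between the last two factors; by the standard compatibility of iterated local-module constructions (the same mechanism that appears in the proof of Theorem \ref{main}, with left/right centers controlling the bookkeeping), both iterated condensations agree with the single condensation by this combined algebra.

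Next, I apply Theorem \ref{main} with $\cC=\cE$. This is legitimate because $\cen{\cE}{\cE}=\cE$ makes $\cE$ a (trivial) $\mce{\cE}$, and every $\cM\in\mext(\cE)$ trivially contains $\cE$. The theorem then says: for every pair $\cM,\cM'\in\mext(\cE)$ there is a \emph{unique} $\cK\in\mext(\cE)$ with $\cK\bt_\cE\cM=\cM'$. Thus for each fixed $\cM$, the map $\cK\mapsto\cK\bt_\cE\cM$ is a bijection of $\mext(\cE)$ to itself. Fixing any $\cM_0\in\mext(\cE)$ and letting $\cK_0$ be the unique solution of $\cK_0\bt_\cE\cM_0=\cM_0$, I would verify that $\cK_0$ is a two-sided identity: for arbitrary $\cM\in\mext(\cE)$, use transitivity to write $\cM=\cJ\bt_\cE\cM_0$, and then
\begin{align*}
\cK_0\bt_\cE\cM
  &= \cK_0\bt_\cE\cJ\bt_\cE\cM_0\\
  &= \cJ\bt_\cE\cK_0\bt_\cE\cM_0\\
  &= \cJ\bt_\cE\cM_0 \;=\; \cM,
\end{align*}
using associativity and commutativity established in the previous step. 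Inverses are then provided by Theorem \ref{main} itself: given $\cM$, solve $\cK\bt_\cE\cM=\cK_0$ to produce $\cM^{-1}$.

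The main obstacle is the associativity verification, since it is the one step not supplied by Theorem \ref{main} directly. One has to keep careful track of which condensable algebra controls each iterated $\bt_\cE$, and invoke the fact that condensing a nested pair of commutative algebras agrees with condensing the combined algebra, using the DMNO/FFRS machinery (Lemmas before Theorem \ref{hegroup}). Once associativity is in hand, the identity, inverses, and commutativity all follow formally from the torsor property expressed by Theorem \ref{main} together with the symmetry of the Deligne product.
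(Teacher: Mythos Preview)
Your proposal is correct and follows essentially the same approach as the paper: specialize Theorem~\ref{main} to $\cC=\cE$, verify associativity and commutativity of $\bt_\cE$ separately, and then extract identity and inverses from the resulting torsor structure. The only minor differences are that the paper explicitly notes nonemptiness via the Drinfeld center $Z(\cE)$ (which you silently assume when fixing $\cM_0$), and the paper uses the explicit formula $\cK=(\cM'\bt\ov\cM)^0_{L_\cE}=\cM'\bt_\cE\ov\cM$ to identify the identity concretely as $\ov\cM\bt_\cE\cM=Z(\cE)$ and the inverse as $\ov\cM$, whereas you extract them abstractly from uniqueness.
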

 \begin{proof}
   Firstly, there exists at least one modular extension of a symmetric fusion
   category $\cE$,
   the Drinfeld center $Z(\cE)$. So the set $\mext(\cE)$ is not empty.
   The multiplication is given by the stacking $\bt_\cE$.
   It is easy to verify that the stacking $\bt_\cE$ for modular extensions
   is associative and commutative. To show that they form a group we only need
   to find out the identity and inverse.
   In this case $\cK=(\cM'\bt \ov \cM)^0_{L_\cE}=\cM'\bt_\cE\ov \cM$,
   Theorem \ref{main} becomes $\cM'\bt_\cE\ov\cM\bt_\cE\cM=\cM'$, for any
   modular extensions $\cM,\cM'$ of $\cE$.
   Thus, $\ov{\cM'}\bt_\cE
   \cM'=\ov{\cM'}\bt_\cE
   \cM'\bt_\cE\ov\cM\bt_\cE\cM
   =\ov\cM\bt_\cE\cM$, i.e. $\ov\cM\bt_\cE\cM$, is the same category
   for any extension $\cM$, which turns out to be $Z(\cE)$. It is exactly the identity element. It is then
   obvious that the inverse of $\cM$ is $\ov\cM$.
   The finiteness follows from \Ref{BNRW13}.
 \end{proof}

  \begin{exa}
    For bosonic case we find that $\mext(\Rp(G))=H^3(G,U(1))$, which is
    discussed in more detail in the next subsection. For fermionic case a
    general group cohomological classification is still lacking. We know some
    simple ones such as $\mext(\sRp(\Z_2^f))=\Z_{16}$, which agrees with
    Kitaev's
    16-fold way\cite{K062}.
  \end{exa}

 \begin{thm}\label{hetorsor}
   For a $\mce{\cE}$ $\cC$, if the modular extensions exist, $\mext(\cC)$ form
   a $\mext(\cE)$-torsor. In particular, $|\mext(\cC)|=|\mext(\cE)|$.
 \end{thm}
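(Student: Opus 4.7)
The plan is to verify that the stacking operation $\bt_\cE$ endows $\mext(\cC)$ with a simply transitive action of the group $\mext(\cE)$, i.e.\ makes it a $\mext(\cE)$-torsor, where the group structure on $\mext(\cE)$ is the one established in Theorem \ref{hegroup}. The hypothesis that modular extensions of $\cC$ exist guarantees $\mext(\cC)\neq\emptyset$, which is a prerequisite for a torsor.

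First I would define the action $\mext(\cE)\times\mext(\cC)\to\mext(\cC)$ by $(\cK,\cM)\mapsto\cK\bt_\cE\cM$. Well-definedness uses the identity $\cE\bt_\cE\cC=\cC$ noted right after Definition \ref{stacking}, together with the preceding stacking theorem: concretely, $\cK\bt_\cE\cM$ is a modular extension of $\cE\bt_\cE\cC=\cC$. Associativity of $\bt_\cE$ (already exploited in the proof of Theorem \ref{hegroup}) supplies the relation $(\cK_1\bt_\cE\cK_2)\bt_\cE\cM=\cK_1\bt_\cE(\cK_2\bt_\cE\cM)$.

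The decisive step is simple transitivity, which is precisely the content of Theorem \ref{main}: for any two modular extensions $\cM,\cM'\in\mext(\cC)$ there is a unique $\cK\in\mext(\cE)$ with $\cK\bt_\cE\cM=\cM'$, and indeed $\cK=(\cM'\bt\ov\cM)^0_{L_\cC}$ works. The remaining action axiom—that the identity $Z(\cE)\in\mext(\cE)$ acts trivially—is then forced by freeness: writing $Z(\cE)\bt_\cE\cM=\cM'$, associativity gives $\cK\bt_\cE\cM'=(\cK\bt_\cE Z(\cE))\bt_\cE\cM=\cK\bt_\cE\cM$ for every $\cK\in\mext(\cE)$, whence the uniqueness clause of Theorem \ref{main} forces $\cM'=\cM$. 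Thus $\mext(\cC)$ is genuinely a $\mext(\cE)$-torsor, and $|\mext(\cC)|=|\mext(\cE)|$ is the standard consequence for non-empty torsors over finite groups (finiteness of $\mext(\cE)$ being part of Theorem \ref{hegroup}).

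Because all the non-trivial categorical content—the condensable algebras $L_\cC$ and $L_\cE$, the comparison of left/right centers supplied by the FFRS theorem, and the identification $\cen{\cC}{\cC}=(\cC\bt\ov\cC)^0_{L_\cC}$ from Lemma \ref{Lag}—has already been packaged into Theorem \ref{main}, no new obstacle arises. The only subtlety worth flagging is the identity axiom: it does \emph{not} follow from a direct computation that $Z(\cE)\bt_\cE\cM=\cM$, but rather from combining associativity with the uniqueness half of Theorem \ref{main}, as above. Everything else is formal bookkeeping.
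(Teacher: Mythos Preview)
Your proposal is correct and follows essentially the same approach as the paper: define the action by $\bt_\cE$, invoke Theorem \ref{main} for simple transitivity, and deduce the identity axiom from the group structure of $\mext(\cE)$ together with Theorem \ref{main}.

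One small point worth tightening: in your argument that $Z(\cE)$ acts trivially, the inference ``$\cK\bt_\cE\cM'=\cK\bt_\cE\cM$ for all $\cK$, whence uniqueness forces $\cM'=\cM$'' is not literally what Theorem \ref{main} says---uniqueness there is about the acting element $\cK$, not about the base point, so cancellation on the $\cM$-side is not immediate. The paper handles this more directly via transitivity: write any $\cM'$ as $\cK\bt_\cE\cM$ (existence part of Theorem \ref{main}), then $Z(\cE)\bt_\cE\cM'=(Z(\cE)\bt_\cE\cK)\bt_\cE\cM=\cK\bt_\cE\cM=\cM'$. Your route can also be completed (e.g.\ the unique $\cL$ with $\cL\bt_\cE\cM=\cM$ is idempotent by uniqueness, hence equals $Z(\cE)$), but it is worth spelling out.
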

 \begin{proof}
   The action is given by the stacking $\bt_\cE$.
   For any two extensions $\cM,\cM'$, there is a unique extension $\cK$ of
   $\cE$, such that $\cM\bt_\cE\cK=\cM'$. To see $Z(\cE)$ acts trivially, note
   that $\cM'\bt_\cE Z(\cE)=\cM\bt_\cE \cK\bt_\cE Z(\cE)=\cM\bt_\cE\cK=\cM'$
   holds for any $\cM'$. Due to uniqueness we also know that only $\cZ_\cE$
   acts trivially. Thus, the action is free and transitive.
 \end{proof}
 This means that for any modular extension of $\cC$,
   stacking with a nontrivial modular extensions of $\cE$, one always obtains
   a different modular extension of $\cC$; on the other hand, starting with a
   particular modular extension of $\cC$, all the other modular extensions can
   be generated by staking modular extensions of $\cE$ (in other words, there
   is only on orbit). However, in general, there is no preferred choice of the
   starting modular extension, unless $\cC$ is the form $\cC_0\bt \cE$ where
   $\cC_0$ is a UMTC.

\subsection{Modular extensions of $\Rp(G)$} \label{sec:mext-repG}

\begin{figure}[tb]
$$
\setlength{\unitlength}{.5pt}
\begin{picture}(100, 185)
   \put(-80,){\scalebox{1}{\includegraphics{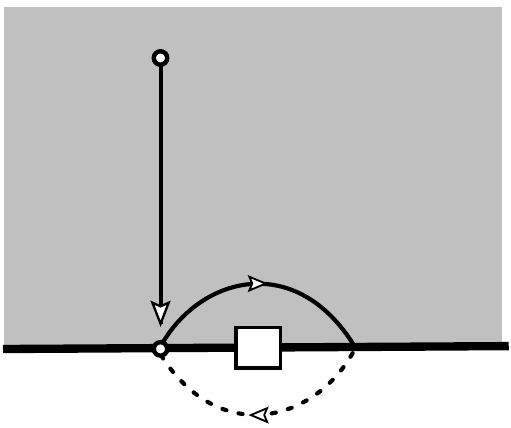}}}
   \put(0,0){
     \put(-18,-40){
     \put(20, 265)     { $e\in \Rp(G) \subset \cM$}
     \put(-30, 170)    { $-\otimes A$}
     \put(75, 81)     { $x$}
     \put(75,30)        { $\gamma_2$}
     \put(75,132)        { $\gamma_1$}
     \put(100, 210)    {$\cM$}
     \put(230, 81)    {$\cM_A$}
     \put(160,45)     {$\Hilb$}
     \put(0, 60)        {$F(e)$}
     }\setlength{\unitlength}{1pt}}
  \end{picture}
$$
\caption{Consider a physical situation in which the excitations in the $2+1$D
  bulk are given by a modular extension $\cM$ of $\Rp(G)$, and those on the
  gapped boundary by the UFC $\cM_A$. Consider a simple particle $e\in \Rp(G)$
  in the bulk moving toward the boundary. The bulk-to-boundary map is given by
  the central functor $-\otimes A: \cM \to \cM_A$, which restricted to $\Rp(G)$
  is nothing but the forgetful functor $F:\Rp(G) \to \Hilb$. Let $x$ be a
  simple excitation in $\cM_A$ sitting next to $F(e)$. We move $F(e)$ along the
  semicircle $\gamma_1$ (defined by the half-braiding), then move along the
  semicircle $\gamma_2$ (defined by the symmetric braiding in the trivial phase
  $\mathrm{Vec}$).}
\label{fig:G-grading}
\end{figure}

We set $\cE=\Rp(G)$ throughout this subsection. Let $(\cM,\iota_\cM)$ be a modular
extension of $\Rp(G)$. $\iota_\cM$ is the embedding
$\iota_\cM:\cE\hookrightarrow\cM$ that we need to consider explicitly in this
subsection. The algebra $A=\Fun(G)$ is a condensable algebra in $\Rp(G)$ and
also a condensable algebra in $\cM$. Moreover, $A$ is a Lagrangian algebra in
$\cM$ because $(\dim A)^2 = |G|^2=(\dim \Rp(G))^2 = \dim \cM$. Therefore, $\cM\simeq Z(\cM_A)$, where $\cM_A$ is the category of right $A$-modules in $\cM$. In other words, $\cM$ describes the bulk excitations in a 2+1D topological phase with a gapped boundary (see Fig.\,\ref{fig:G-grading}). Moreover, the fusion category $\cM_A$ is pointed and equipped with a canonical fully faithful $G$-grading\cite{dgno2007}, which means that 
$$
\cM_A=\oplus_{g\in G} (\cM_A)_g, \quad (\cM_A)_g\simeq \Hilb, \,\, \forall g\in G, 
$$ 
$$
\quad \mbox{and} \quad \otimes: (\cM_A)_g \boxtimes (\cM_A)_h \xrightarrow{\simeq} (\cM_A)_{gh}.
$$
Let us recall the construction of this $G$-grading. The physical meaning of
acquiring a $G$-grading on $\cM_A$ after condensing the algebra $A=\Fun(G)$ in
$\cM$ is depicted in Figure\,\ref{fig:G-grading}.  The process in
Figure\,\ref{fig:G-grading} defines the isomorphism $F(e) \otimes_A x
  \xrightarrow{z_{e,x}} x \otimes_A F(e) = F(e) \otimes_A x$,
  which further gives a monoidal automorphism $\phi(x)\in \mathrm{Aut}(F)=G$ of
the fiber functor $F: \Rp(G) \to \Hilb$.  

Since $\phi$ is an isomorphism, the associator of the monoidal category $\cM_A$ determines a unique
$\omega_{(\cM,\iota_M)}\in H^3(G, U(1))$ such that $\cM_A \simeq \Hilb_G^\omega$ as $G$-graded fusion categories. 


\begin{thm} \label{thm:spt}
The map $(\cM, \iota_\cM) \mapsto \omega_{(\cM, \iota_\cM)}$ defines a group isomorphism $\cM_{ext}(\Rp(G)) \simeq H^3(G, U(1))$. In particular, we have 
$$ 
(Z(\mathrm{Vec}_G^{\omega_1}),\iota_{\omega_1}) \boxtimes_\cE
(Z(\mathrm{Vec}_G^{\omega_2}),\iota_{\omega_2}) \simeq (Z(\mathrm{Vec}_G^{\omega_1+\omega_2}), \iota_{\omega_1+\omega_2}).
$$
\end{thm}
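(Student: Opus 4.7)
The plan is to exploit the Lagrangian-algebra/boundary picture developed just before the theorem. For any modular extension $(\cM,\iota_\cM)$ of $\Rp(G)$, the algebra $A=\Fun(G)$, transported into $\cM$ along $\iota_\cM$, is a Lagrangian algebra (since $\dim A=|G|=\dim\Rp(G)$), so $\cM\simeq Z(\cM_A)$, and the whole modular extension is encoded in the boundary fusion category $\cM_A$. The construction preceding the theorem produces a canonical fully faithful $G$-grading on the pointed fusion category $\cM_A$, yielding a monoidal equivalence $\cM_A\simeq \Hilb_G^{\omega_{(\cM,\iota_\cM)}}$ for a unique $\omega_{(\cM,\iota_\cM)}\in H^3(G,U(1))$. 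The first step is to check that this assignment is well-defined on equivalence classes of modular extensions: the $G$-grading is intrinsic, since it is read off from the half-braidings with $\iota_\cM(\Rp(G))$ as in Fig.\,\ref{fig:G-grading}, so any equivalence of modular extensions induces a grading-preserving monoidal equivalence on the boundary, producing the same cohomology class by the standard classification of pointed $G$-graded fusion categories.

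Bijectivity follows almost formally once the construction is in place. For surjectivity, for any $\omega\in H^3(G,U(1))$ take $\cM=Z(\Hilb_G^\omega)$ with its canonical embedding $\iota_\omega:\Rp(G)\hookrightarrow Z(\Hilb_G^\omega)$; the dimension identity $\dim Z(\Hilb_G^\omega)=|G|^2=\dim\Rp(G)\cdot\dim\Rp(G)$ implies $\cen{\Rp(G)}{\cM}=\Rp(G)$, so it is a modular extension, and the $A$-condensation recovers $\Hilb_G^\omega$, so the assigned class is $\omega$. For injectivity, two modular extensions with the same invariant have equivalent boundaries as $G$-graded fusion categories, so their centers are equivalent as UMTCs, and the equivalence preserves the $\Rp(G)$ subcategory because $\Rp(G)$ is recovered intrinsically from the Lagrangian algebra $A$.

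The main obstacle is proving the group law, i.e.\ the displayed identity
\begin{equation*}
(Z(\Hilb_G^{\omega_1}),\iota_{\omega_1})\boxtimes_\cE(Z(\Hilb_G^{\omega_2}),\iota_{\omega_2})\simeq(Z(\Hilb_G^{\omega_1+\omega_2}),\iota_{\omega_1+\omega_2}).
\end{equation*}
My strategy is to translate the bulk stacking of Definition \ref{stacking} to a computation on the boundary. The plain Deligne product is $Z(\Hilb_G^{\omega_1})\boxtimes Z(\Hilb_G^{\omega_2})\simeq Z(\Hilb_{G\times G}^{\omega_1\times\omega_2})$, which is a modular extension of $\Rp(G\times G)$ whose Lagrangian-algebra boundary is $\Hilb_{G\times G}^{\omega_1\times\omega_2}$. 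Condensing $L_\cE$ in the bulk amounts to symmetry breaking from $G\times G$ to the diagonal $G$, and I expect this to correspond on the boundary to restricting the $G\times G$-grading to the diagonal, whose classifying cocycle is the pullback $\Delta^*(\omega_1\times\omega_2)$, representing $\omega_1+\omega_2$ in additive notation. Taking the Drinfeld center of the resulting pointed $G$-graded boundary then yields $Z(\Hilb_G^{\omega_1+\omega_2})$, and compatibility with the embeddings $\iota_{\omega_i}$ is automatic since they are recovered from $A$.

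The technical heart is controlling the interaction between the two algebra condensations: the bulk condensation of $L_\cE$ in Definition \ref{stacking} and the boundary condensation of $A$. I would prove this by realizing both as parts of a single iterated condensation in $Z(\Hilb_G^{\omega_1})\boxtimes Z(\Hilb_G^{\omega_2})$ and commuting them using the left/right-center result of FFRS cited before Definition \ref{stacking}; this reduces the claim to the purely algebraic statement that the relative Deligne product $\Hilb_G^{\omega_1}\boxtimes_{\Hilb_G}\Hilb_G^{\omega_2}\simeq \Hilb_G^{\omega_1+\omega_2}$, which is a direct cocycle computation via the K\"unneth formula and the diagonal embedding $G\hookrightarrow G\times G$. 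Once the homomorphism property is established, combined with the bijectivity above it provides the desired group isomorphism $\mext(\Rp(G))\simeq H^3(G,U(1))$; that the trivial cocycle corresponds to the identity element $Z(\Rp(G))\simeq Z(\Hilb_G)$ of Theorem \ref{hegroup} is an immediate consistency check.
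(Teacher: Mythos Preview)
The paper does not supply a proof of this theorem in the text; immediately after the statement it writes ``For the proof and more related details, see also \Ref{LW160205936},'' deferring everything to the mathematical companion paper. The constructions laid out in Sec.\,\ref{sec:mext-repG} (the Lagrangian algebra $A=\Fun(G)$, the identification $\cM\simeq Z(\cM_A)$, and the intrinsic $G$-grading on $\cM_A$ via half-braidings) are exactly the ingredients you assemble, so your outline is aligned with what the paper sets up and is the standard route one would expect the companion paper to take.

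One point that deserves sharpening: in your injectivity argument you conclude that a $G$-graded monoidal equivalence of boundaries lifts to an equivalence of modular extensions. This is correct, but you should check that the induced braided equivalence of centers intertwines the \emph{specific} embeddings $\iota_\cM$, $\iota_{\cM'}$, not merely that it sends the subcategory $\Rp(G)$ to itself. Since $\iota_\cM$ is recovered from the half-braidings of $\cM_A$-objects with the simple summands of $A$, and the $G$-grading encodes precisely these half-braidings (this is the content of Fig.\,\ref{fig:G-grading}), the claim does follow --- but this is the one place where ``intrinsic'' needs to be made precise rather than asserted. Your strategy for the homomorphism property --- commuting the $L_\cE$-condensation with the $A$-condensation via the FFRS left/right-center result and reducing to the boundary identity $\Hilb_G^{\omega_1}\boxtimes_{\Hilb_G}\Hilb_G^{\omega_2}\simeq\Hilb_G^{\omega_1+\omega_2}$ --- is the right shape, and the last identification is indeed the pullback along the diagonal $\Delta:G\hookrightarrow G\times G$.
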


For the proof and more related details, see also \Ref{LW160205936}.

\subsection{Relation to numerical calculations}
In Section \ref{clGQL2} we proposed another way to characterise GQLs, using
the data $( \tilde N^{ab}_c,\tilde s_a; N^{ij}_k,s_i; \cN^{IJ}_K,\cS_I;c)$
which is more friendly in numerical calculations. We would like to investigate how
to calculate the stacking operation in terms of these data.

Assuming that $\cC$ and $\cC'$ can be characterized by data
$(N^{ij}_k,s_i)$ and $( N^{\prime ij}_k,s^\prime_i)$. Let $(
N^{\cD, ij}_k, s^\cD_i)$ be the data that characterizes the stacked $\mce{\cE}$ $\cD
=\cC\bt_\cE \cC'$.

To calculate $(N^{\cD, ij}_k, s^\cD_i)$, let us first construct
\begin{align}
 N^{ii',jj'}_{kk'} = N^{ij}_{k} N^{\prime i'j'}_{k'} 
,\ \ \ \ \
s_{ii'}=s_i+s'_{i'}.
\end{align}
Note that, the above data describes a $\mce{\cE\boxtimes \cE}$ $\cD' =\cC\bt
\cC'$ (\ie with centralizer $\cE\bt \cE$), which is not what we want.  We need
reduce centralizer from $\cE\bt \cE$ to $\cE$. This is the $G\times G$ to $G$
process and $\cC$-$\cC'$ coupling, or condensing the $L_\cE$
algebra, as discussed above

To do the $\cE\bt \cE$ to $\cE$ reduction (\ie to obtain the real stacking
operation $\bt_\cE$), we can introduce an equivalence relation. Noting that the
excitations in $\cD'=\cC\bt \cC'$ are labeled by $ii'=i\bt i'$, the  equivalence
relation is 
\begin{align}
  ii'\sim jj', \quad{\text{if }} ii'\ot L_\cE=jj'\ot L_\cE.
\end{align}
where $L_\cE=\oplus_a a\bar a, a\in\cE$. In the simple case of abelian groups,
where all the $a$'s are abelian particles, the equivalence relation reduces to
\begin{align}
 (a\ot i)i'\sim i(a\ot i'),\ \ \ \
\forall \ \ i\in \cC,\ i'\in \cC',\  a\in \cE.
\end{align}
Mathematically, this amounts to consider only the free local $L_\cE$ modules.
The equivalent classes $[ii']$ are then some composite anyons in
$\cD=\cC\bt_\cE\cC'$
\begin{align}
  [ii']=k\oplus l \oplus \cdots , \quad\text{ for some }k,l,\dots\in\cD.
\end{align}
In other words, they form a fusion sub ring of $\cD$.
Moreover, the spin of $ii'$ is the same as the direct summands
\begin{align}
  s_{ii'}=s_k^\cD=s_l^\cD=\cdots
\end{align}
Since it is
limited to a subset of data of $\mce{\cE}$'s, we can only give these necessary
conditions. However, as we already give a large list of GQLs in terms of these data,
they are usually enough to pick the resulting $\cC\bt_\cE\cC'$ from the list.

\section{How to calculate the modular extension of a $\mce{\cE}$}
\label{howto}

\subsection{A naive calculation}

How do we calculate the modular extension $\cM$ of $\mce{\cE}$ $\cC$ from
the data of $\cC$?  Actually, we do not know how to do that.  So here, we will
follow a closely related Conjecture \ref{NsNsNs}, and calculate instead
$(\cN^{IJ}_K,\cS_I,c)$ (that fully characterize $\cM$) from the data $(
\tilde N^{ab}_c,\tilde s_a; N^{ij}_k,s_i)$ (that partially characterize $\cC$).
In this section, we will describe such a calculation.

We note that all the simple objects (particles) in $\cC$ are contained in $\cM$
as simple objects, and $\cM$ may contain some extra simple objects.  Assume
that the particle labels of $\cM$ are $\{I,J,\cdots \}=\{i,j,\dots, x,
y,\dots\}$, where we use $i,j,\cdots $ to label the particles in $\cC$ and
$x,y,\cdots $ to label the additional particles (not in $\cC$).  Also let us
use $a,b,\cdots $ to label the simple objects in the centralizer of $\cC$:
$\cE=\cC_\cC^\text{cen}$.  Let $\cN^{IJ}_K$, $\cS_{I}$ be the fusion
coefficients and the spins for $\cM$, and $N^{ij}_k,\ s_i$ be the fusion
coefficients and the spins for $\cC$.  The idea is to find as many conditions on
$(\cN^{IJ}_K, \cS_{I})$ as possible, and use those conditions to solve for
$(\cN^{IJ}_K, \cS_{I})$.  Since the data $(\cN^{IJ}_K, \cS_{I})$ describe the
UMTC $\cM$, they should satisfy all the conditions discussed in
\Ref{W150605768}.  On the other hand, as a modular extension of $\cC$,
$(\cN^{IJ}_K, \cS_{I})$ also satisfy some additional conditions. Here, we will
discuss those additional conditions.

First, the modular extension $\cM$ has a fixed total quantum dimension.
\begin{align}
\label{dimMCE}
 \dim(\cM)=\dim(\cE)\dim(\cC).
\end{align}
In other words
\begin{align}
 \sum_{I\in \cM} d_I^2 = \sum_{a\in \cE} d_a^2 \sum_{i\in \cC} d_i^2.
\end{align}

Physically, the modular extension $\cM$ is obtained by ``gauging'' the
symmetry $\cE$ in $\cC$ (\ie adding the symmetry twists of $\cE$).  So the
additional particles $x,y,\cdots$ correspond to the symmetry twists.  Fusing an
original particle $i\in \cC$ to a symmetry twist $x\notin \cC$ still give us a
symmetry twist.  Thus
\begin{align}
 \cN^{ix}_j = \cN^{xi}_j = \cN^{ij}_x =0.
\end{align}

Therefore, $\cN_i$ for $i\in \cC$ is block diagonal: 
\begin{align}
\cN_i= N_i \oplus \hat N_i, 
\end{align}
where $( N_i)_{jk}=\cN^{ij}_{k}=N^{ij}_{k}$ and $(\hat N_i)_{x 
y}=\cN^{iy}_{x}$.  

If we pick a charge conjugation for the additional particles $x\mapsto
\bar x$, the conditions for fusion rules reduce to
\begin{align}
  \cN^{i x}_{y}=\cN^{x i}_{y}=\cN^{\bar{x} y}_{i}=\cN^{i \bar{y}}_{\bar{x}},
\nonumber \\
  \sum_{k\in\cC} N^{ij}_k \cN^{kx}_{y}= \sum_{z \notin\cC} \cN^{i z}_{ x} \cN^{j y}_{z}.
  \label{extN}
\end{align}
With a choice of charge conjugation, it is enough to construct (or search for)
the matrices $\hat N_i$ and $\cN^{xy}_z$ to determine all the extended fusion
rules $\cN^{IJ}_K$.  

Besides the general condition \eqref{extN}, there are also some simple
constraints on $\hat N_i$ that may speed up the numerical search.
Firstly, observe that \eqref{extN} is the same as
\begin{align}
\hat N_i \hat N_j = \sum_{k\in \cC} N^{ij}_k \hat N_k,
\end{align}
where $i,j,k \in \cF$.  This means that $\hat N_i$ satisfy the same fusion
algebra as $ N_i$, and $N^{ij}_k=\cN^{ij}_k$ is the structure constant;
therefore, the eigenvalues of $\hat N_i$ must be a subset of the eigenvalues
of $ N_i$. 

Secondly, since $\sum_{y\notin\cC} \cN^{i x}_{y}d_{y}= d_i d_{x}$, by
Perron-Frobenius theorem, we know that $d_i$ is the largest eigenvalue of
$\hat N_i$, with eigenvector $v, v_{x}=d_{x}$. ($d_i$ is also the largest
absolute values of the eigenvalues of $\hat N_i$.) Note that ${\hat N_{\bar
i} \hat N_{i}= \hat N_i \hat N_{\bar i},}$ ${ \hat N_{\bar i}=\hat
N_i^\dag}$.  Thus, $d_i^2$ is the largest eigenvalue of the positive
semi-definite Hermitian matrix $\hat N_{i}^\dag\hat N_{i}$. For any unit
vector $v$ we have $v^\dag \hat N_i^\dag \hat N_i v\leq d_i^2$, in
particular,
\begin{align}
  (\hat N_i^\dag \hat N_i)_{xx}=\sum_{y} (\cN^{ix}_{y})^2\leq
  d_i^2.
  \label{extentry}
\end{align}
The above result is very helpful to reduce the scope of numerical search.

Once we find the fusion rules, $\cN^{IJ}_K$, we can then use the rational
conditions and other conditions to determine the spins $\cS_I$ (for details, see
\Ref{W150605768}).  The set of data $(\cN^{IJ}_K,\cS_I)$ that satisfy all the
conditions give us the set of modular extensions.

The above proposed calculation for modular extensions is quite expensive.  If
the quantum dimensions of the particles in $\cC$ are all equal to 1: $d_i=1$,
then there is another much cheaper way to calculate the fusion coefficient
$\cN^{IJ}_K$ of the modular extension $\cM$.  Such an approach is explained in
Appendix \ref{FRgroup}.  We will also use such an approach in our calculation.

Last, we would like to mention that two sets of data $(\cN^{IJ}_K,\cS_I)$ and
$(\bar \cN^{IJ}_K,\bar \cS_I)$ describe the same modular extension of
$\cC$, if they only differ by a permutation of indices $x \in \cM$ but
$x \notin \cC$.  So some times, two sets of data $(\cN^{IJ}_K,\cS_I)$ and
$(\bar \cN^{IJ}_K,\bar \cS_I)$ can describe different modular extensions,
even through they describe the same UMTC.  (Two sets of data
$(\cN^{IJ}_K,\cS_I)$ and $(\bar \cN^{IJ}_K,\bar \cS_I)$ describe the same
UMTC, if they are only different by a permutation of indices $I \in \cM$.)

Why we use such a permutation in the calculation of modular extensions.  (which
is the ME-equivalence relation discussed before)?  This is because when we
considering modular extensions, the particle $x \in \cM$ but $x \notin \cC$
correspond to symmetry twists. They are extrinsic excitations that do not
appear in the finite energy spectrum of the Hamiltonian.  While the particle
$i\in \cC$ are intrinsic excitations that do appear in the finite energy
spectrum of the Hamiltonian.  So $x \notin \cC$ and $i\in \cC$ are physically
distinct and we do not allow permutations that mix them.  Also we should not
permute the particles $a\in \cE$, because they correspond to symmetries. We
should not mix, for example, the $Z_2$ symmetry of exchange layers and the
$Z_2$ symmetry of 180$^\circ$ spin rotation.

\subsection{The limitations of the naive calculation}

Since a $\mce{\cE}$ $\cC$ is not modular, the data $(\tilde N^{ab}_c,\tilde
s_a;N^{ij}_k,s_i)$ may not fully characterize $\cC$.  To fully characterize $\cC$, we need to use additional data, such
as the $F$-tensor and the $R$-tensor\cite{K062,W150605768}.  

In this paper, we will not use those additional data. As a result, the data
$(\tilde N^{ab}_c,\tilde s_a;N^{ij}_k,s_i)$ may correspond to several different
$\mce{\cE}$ $\cC$'s.  In other words, $(\tilde N^{ab}_c,\tilde
s_a;N^{ij}_k,s_i)$ is a one-to-many labeling of  $\mce{\cE}$'s.

So in our naive calculation, when we calculate the modular extensions of
$(\tilde N^{ab}_c,\tilde s_a;N^{ij}_k,s_i)$, we may actually calculate the
modular extension of several different $\cC$'s that are described by the same
data $(\tilde N^{ab}_c,\tilde s_a;N^{ij}_k,s_i)$.  But for $\mce{\cE}$'s that
can be fully characterized by the data $(\tilde N^{ab}_c,\tilde
s_a;N^{ij}_k,s_i)$, our calculation produce the modular extensions of a single
$\cC$.  For example, the naive calculation can obtain the correct modular
extensions of $\cC=\Rp(G)$ and $\cC=\sRp(G^f)$, when $G$ and $G^f$ are abelian
groups, or simple finite groups\cite{Yuan16}.

If the $(\tilde N^{ab}_c,\tilde s_a;N^{ij}_k,s_i)$ happen to describe two
different $\mce{\cE}$'s, we find that our naive calculation will produce the
modular extensions for both of $\mce{\cE}$'s (see Section \ref{Z2N5}).  So by
computing the modular extensions  of $(\tilde N^{ab}_c,\tilde s_a;N^{ij}_k,s_i)$,
we can tell if $(\tilde N^{ab}_c,\tilde s_a;N^{ij}_k,s_i)$ corresponds to none,
one, two, \etc $\mce{\cE}$'s.  This leads to the Conjecture \ref{NsNsNs} that
$(\tilde N^{ab}_c,\tilde s_a;N^{ij}_k,s_i,\cN^{IJ}_K,\cS_I;c)$ can fully and
one-to-one classify GQLs in 2+1D.

\section{Examples of 2+1D SET orders and SPT orders}
\label{examples}

\def\arraystretch{1.25} \setlength\tabcolsep{3pt}
\begin{table}[t] 
\caption{ The bottom two rows correspond to the two modular extensions of
$\Rp(Z_2)$ (denoted by $N_c^{|\Th|}=2^{\zeta^1_2}_0$).  Thus we have two
different trivial topological orders with $Z_2$ symmetry in 2+1D (\ie two $Z_2$
SPT states). 
} 
\label{mextZ2} 
\centering
\begin{tabular}{ |c|c|l|l|l| } 
\hline 
$N^{|\Th|}_{c}$ & $D^2$ & $d_1,d_2,\cdots$ & $s_1,s_2,\cdots$ & comment \\
 \hline 
$2^{\zeta_{2}^{1}}_{ 0}$ & $2$ & $1, 1$ & $0, 0$ & $\Rp(Z_2)$ \\
\hline
$4^{ B}_{ 0}$ & $4$ & $1, 1, 1, 1$ & $0, 0, 0, \frac{1}{2}$ & $Z_2$ gauge\\
$4^{ B}_{ 0}$ & $4$ & $1, 1, 1, 1$ & $0, 0, \frac{1}{4}, \frac{3}{4}$ & double semion\\
 \hline 
\end{tabular} 
\end{table}

\def\arraystretch{1.25} \setlength\tabcolsep{3pt}
\begin{table}[t] 
\caption{The two modular extensions of $N^{|\Th|}_{c}=3^{\zeta_{2}^{1}}_{ 2}$.
$3^{\zeta_{2}^{1}}_{ 2}$ has a centralizer $\Rp(Z_2)$.  Thus we have two
topological orders with $Z_2$ symmetry in 2+1D which has only one type of
spin-$1/3$ topological excitations.  We use $N^{|\Th|}_{c}$ to label
$\mce{\cE}$'s, where $\Theta ={D}^{-1}\sum_{i}\ee^{2\pi\ii s_i} d_i^2=
|\Th|\ee^{2\pi \ii c/8}$ and $D^2=\sum_id_i^2$.
} 
\label{mextZ2a} 
\centering
\begin{tabular}{ |c|c|l|l|l| } 
\hline 
$N^{|\Th|}_{c}$ & $D^2$ & $d_1,d_2,\cdots$ & $s_1,s_2,\cdots$ & comment \\
\hline 
$3^{\zeta_{2}^{1}}_{ 2}$ & $6$ & $1, 1, 2$ & $0, 0, \frac{1}{3}$ & 
\tiny $K=\begin{pmatrix}
 2 & -1 \\
 -1 & 2 \\
\end{pmatrix}
$
\\
\hline
$5^{ B}_{ 2}$ & $12$ & $1, 1, 2,\zeta_{4}^{1},\zeta_{4}^{1}=\sqrt{3}$ & $0, 0, \frac{1}{3}, \frac{1}{8}, \frac{5}{8}$ & $(A_1,4)$ \\
$5^{ B}_{ 2}$ & $12$ & $1, 1, 2,\zeta_{4}^{1},\zeta_{4}^{1}$ & $0, 0, \frac{1}{3}, \frac{3}{8}, \frac{7}{8}$ & \\
 \hline 
\end{tabular} 
\end{table}

In this section, we will discuss simple examples of $\mce{\cE}$ $\cC$'s, and
their modular extensions $\cM$. The triple $(\cC, \cM,c)$ describe a
topologically ordered or SPT phase.  A single $\mce{\cE}$ $\cC$ only describes
the set of bulk topological excitations, which correspond to topologically
ordered states up to invertible ones.

However, in this section we will not discuss examples of $\mce{\cE}$ $\cC$.
What we really do is to discuss examples of the solutions $(\tilde
N^{ab}_c,\tilde s_a;N^{ij}_k,s_i)$ (which are not really $\mce{\cE}$'s, but
closely related).  We will also discuss the modular extensions
$(\cN^{IJ}_K,\cS_I;c)$ of $(\tilde N^{ab}_c,\tilde s_a;N^{ij}_k,s_i)$.
$(\tilde N^{ab}_c,\tilde s_a;N^{ij}_k,s_i)$ will correspond to $\mce{\cE}$
$\cC$ if it has modular extensions $(\cN^{IJ}_K,\cS_I;c)$.  This allows us to
classify GQLs in terms of the data $(\tilde N^{ab}_c,\tilde
s_a;N^{ij}_k,s_i,\cN^{IJ}_K,\cS_I;c)$.

\subsection{$Z_2$ bosonic SPT states}

Tables \ref{SETZ2-34}, \ref{SETZ2-5}, and \ref{SETZ2-6} list the solutions
$(\tilde N^{ab}_c,\tilde s_a;N^{ij}_k,s_i)$ when  $(\tilde N^{ab}_c,\tilde
s_a)$ describes a SFC $\Rp(Z_2)$.  The table contains all $\mce{\Rp(Z_2)}$'s
but may contain extra fake entries.  Physically, they describe possible
sets of bulk excitations for $Z_2$-SET orders of bosonic systems.  The sets of
bulk excitations are listed by their quantum dimensions $d_i$ and spins $s_i$.

For example, let us consider the entry $N_c^{|\Th|}=2_0^{\zeta_2^1}$ in Table
\ref{SETZ2-34}.  Such an entry has a central charge $c=0$.  Also $N=2$, hence
the $Z_2$-SET state has two types of bulk excitations both with $d_i=1$ and
$s_i=0$.  Both types of excitations are local excitations; one is the trivial type
and the other carries an $Z_2$ charge.

The first question that we like to ask is that ``is such an entry a fake entry,
or it corresponds to some $Z_2$-symmetric GQL's?'' If it corresponds to
some $Z_2$-symmetric GQL's, how many distinct $Z_2$-symmetric GQL
phases that it corresponds to?  In other word, how many distinct
$Z_2$-symmetric GQL phases are there, that share the same set of bulk
topological excitations described by the entry $2_0^{\zeta_2^1}$?

Both questions can be answered by computing the modular extensions of
$2_0^{\zeta_2^1}$ (which is also denoted as $\Rp(Z_2)$).  We find that the
modular extensions exist, and thus $\Rp(Z_2)$ does correspond to some 
$Z_2$-symmetric GQL's.  In fact, one of the $Z_2$-symmetric GQL's is the
trivial product state with $Z_2$ symmetry.  Other $Z_2$-symmetric GQL's are
$Z_2$ SPT states.

After a numerical calculation, we find that there are only two different
modular extensions of $\Rp(Z_2)$ (see Table \ref{mextZ2}).  Thus there are two
distinct $Z_2$-symmetric GQL phases whose bulk excitations are described by the
$\Rp(Z_2)$.  The first one corresponds to the trivial product states whose
modular extension is the $Z_2$ gauge theory which has four types of particles
with $(d_i,s_i)=(1,0), (1,0),(1,0),(1,\frac12)$.  (Gauging the $Z_2$ symmetry
of the trivial product state gives rise to a $Z_2$ gauge theory.)  The second
one corresponds to the only non-trivial $Z_2$ bosonic SPT state in 2+1D, whose
modular extension is the double-semion theory which has four types of particles
with $(d_i,s_i)=(1,0), (1,0),(1,\frac14),(1,-\frac14)$. (Gauging the $Z_2$
symmetry of the $Z_2$-SPT state gives rise to a double-semion theory
\cite{LG1209}.)  So the $Z_2$-SPT phases are classified by $\Z_2$, reproducing
the group cohomology result\cite{CLW1141,CGL1314,CGL1204}.  In general, the
modular extensions of $\Rp(G)$ correspond to the bosonic SPT states in 2+1D
with symmetry $G$.

\subsection{$Z_2$-SET orders for bosonic systems}

\begin{table}[t] 
\caption{
The fusion rule of the $N_c^{|\Th|}=3_2^{\zeta_2^1}$ $Z_2$-SET order.
The particle $\textbf{1}$
carries the $Z_2$-charge $0$, and the particle $s$ carries the $Z_2$-charge
$1$.  From the table, we see that $\sigma\otimes\sigma=\textbf{1} \oplus s
\oplus \sigma$.
} 
\label{SET32} 
\centering
\begin{tabular}{ |c|ccc|}
 \hline 
 $s_i$ & $0$ & $ 0$ & $ \frac{1}{3}$\\
 $d_i$ & $1$ & $ 1$ & $ 2$\\
\hline
 $3^{\zeta_{2}^{1}}_{ 2}$ & $\textbf{1}$  & $s$  & $\sigma$ \\
\hline
$\textbf{1}$  & $ \textbf{1}$  & $ s$  & $ \sigma$  \\
$s$  & $ s$  & $ \textbf{1}$  & $ \sigma$  \\
$\sigma$  & $ \sigma$  & $ \sigma$  & $ \textbf{1} \oplus s \oplus \sigma$  \\
\hline
\end{tabular}
\end{table} 

\begin{table}[t] 
\caption{
The fusion rules of the two $N_c^{|\Th|}=4_1^{\zeta_2^1}$ $Z_2$ symmetry
enriched topological orders with identical $d_i$ and $s_i$. 
We see that one has a $Z_2\times Z_2$ fusion rule and
the other has a $Z_4$ fusion rule.
} 
\label{SETZ2-45} 
\centering
\begin{tabular}{ |c|cccc|}
 \hline 
 $s_i$ & $0$ & $ 0$ & $ \frac{1}{4}$ & $ \frac{1}{4}$\\
 $d_i$ & $1$ & $ 1$ & $ 1$ & $ 1$\\
\hline
 $4^{\zeta_{2}^{1}}_{ 1}$ & $\textbf{00}$  & $\textbf{01}$  & $\textbf{10}$  & $\textbf{11}$ \\
\hline
$\textbf{00}$  & $ \textbf{00}$  & $ \textbf{01}$  & $ \textbf{10}$  & $ \textbf{11}$  \\
$\textbf{01}$  & $ \textbf{01}$  & $ \textbf{00}$  & $ \textbf{11}$  & $ \textbf{10}$  \\
$\textbf{10}$  & $ \textbf{10}$  & $ \textbf{11}$  & $ \textbf{00}$  & $ \textbf{01}$  \\
$\textbf{11}$  & $ \textbf{11}$  & $ \textbf{10}$  & $ \textbf{01}$  & $ \textbf{00}$  \\
\hline
\end{tabular}
~~~~~~
\begin{tabular}{ |c|cccc|}
 \hline 
 $s_i$ & $0$ & $ 0$ & $ \frac{1}{4}$ & $ \frac{1}{4}$\\
 $d_i$ & $1$ & $ 1$ & $ 1$ & $ 1$\\
\hline
 $4^{\zeta_{2}^{1}}_{ 1}$ & $\textbf{0}$  & $\textbf{2}$  & $\textbf{1}$  & $\textbf{3}$ \\
\hline
$\textbf{0}$  & $ \textbf{0}$  & $ \textbf{2}$  & $ \textbf{1}$  & $ \textbf{3}$  \\
$\textbf{2}$  & $ \textbf{2}$  & $ \textbf{0}$  & $ \textbf{3}$  & $ \textbf{1}$  \\
$\textbf{1}$  & $ \textbf{1}$  & $ \textbf{3}$  & $ \textbf{2}$  & $ \textbf{0}$  \\
$\textbf{3}$  & $ \textbf{3}$  & $ \textbf{1}$  & $ \textbf{0}$  & $ \textbf{2}$  \\
\hline
\end{tabular}
\end{table} 

The entry $N_c^{|\Th|}=3_2^{\zeta_2^1}$ in Table \ref{SETZ2-34} corresponds to
more non-trivial $\mce{\Rp(Z_2)}$. It describes the bulk excitations of
$Z_2$-SET orders which has only one type of non-trivial topological
excitation(with quantum dimension $d=2$ and spin $s=1/3$, see Table
\ref{SET32}).  The other two types of excitations are local excitations with
$Z_2$-charge $0$ and $1$.  We find that $3_2^{\zeta_2^1}$ has modular
extensions and hence is not a fake entry.

To see how many SET orders that have such set of bulk excitations, we need to
compute how many modular extensions are there for $3_2^{\zeta_2^1}$.  We find
that $3_2^{\zeta_2^1}$ has two modular extensions (see Table \ref{mextZ2a}).
Thus there are two $Z_2$-SET orders with the above mentioned bulk excitations.
It is not an accident that the number of $Z_2$-SET orders with the same set of
bulk excitations is the same as the number of $Z_2$ SPT states.  This is
because the different $Z_2$-SET orders with a fixed set of bulk excitations are
generated by stacking with  $Z_2$ SPT states.

We would like to point out that for any $G$-SET state, if we break the
symmetry, the $G$-SET state will reduce to a topologically ordered state
described by a UMTC.  In fact,  the different $G$-SET states described by the
same $\mce{\cE}$ (\ie with the same set of bulk excitations) will reduce to the
same topologically ordered state (\ie the same UMTC).  In Appendix \ref{SB}, we
discussed such a symmetry breaking process and how to compute UMTC from
$\mce{\cE}$.  We found that the two $Z_2$-SET orders from $3_2^{\zeta_2^1}$
reduce to an abelian topological order described by a $K$-matrix $\bpm 2& -1\\
-1& 2 \epm$.  This is indicated by SB:$K=\bpm 2& -1\\ -1& 2 \epm$ in the
comment column of Table \ref{SETZ2-34}.  In other place, we use SB:$N^B_c$ or
SB:$N^F_c({a \atop b})$ to indicate the reduced topological order after the
symmetry breaking (for bosonic or fermionic cases).  (The topological orders
described by $N^B_c$ or $N^F_c({a \atop b})$ are given by the tables in
\Ref{W150605768} or \Ref{LW150704673}.)

As we have mentioned, there are two $Z_2$-SET orders with the same bulk
excitations.  But how to realize those $Z_2$-SET orders? We find that one of
the $Z_2$-SET orders is the double layer FQH state with $K$-matrix $\bpm 2 &
-1\\ -1 & 2\\ \epm$ (same as the reduced topological order after symmetry
breaking), where the $Z_2$ symmetry is the layer-exchange symmetry.  The
quasiparticles are labeled by the $l$-vectors $l=\bpm l_1\\l_2\epm$.  The two
non-trivial quasiparticles are given by 
\begin{align} 
l
&= \bpm 1 \\ 0\epm, \ \ \bpm 0 \\ 1\epm, \ \ 
\end{align} 
whose spins are all equal to $\frac13$.

Since the layer-exchange $Z_2$ symmetry exchanges $l_1$ and $l_2$, we see that
the two excitations $ \bpm 1 \\ 0\epm, \ \ \bpm 0 \\ 1\epm$ always have the
same energy. Despite the $Z_2$ symmetry has no 2-dim irreducible
representations, the above spin-1/3 topological excitations has an exact
two-fold degeneracy due to the $Z_2$ layer-exchange symmetry.   This effect is
an interplay between the long-range entanglement and the symmetry:
\emph{degeneracy in excitations may not always arise from high dimensional
irreducible representations of the symmetry.}

Such two degenerate excitations are viewed as one type of topological
excitations with quantum dimension $d=2$ (for the two-fold degeneracy) and spin
$s=\frac13$  (see Table \ref{SETZ2-34}).  The $Z_2$ symmetry twist in such a
double-layer state carry a non-abelian statistics with quantum dimension
$d=\sqrt{3}$.  In fact, there are two such $Z_2$ symmetry twists whose spin
differ by 1/2. 

The other $Z_2$-SET order can be viewed as the above
double layer FQH state $K=\bpm 2 & -1\\ -1 & 2\\ \epm$ stacked with a $Z_2$ SPT
state.

\def\arraystretch{1.25} \setlength\tabcolsep{3pt}
\begin{table}[t] 
\caption{ The four modular extensions of $N^{|\Th|}_{c}=5^{\zeta_{2}^{1}}_{ 0}$ with $Z_2\times Z_2$ fusion.
$5^{\zeta_{2}^{1}}_{ 0}$  has a centralizer $\Rp(Z_2)$. The first pair and the
second pair turns out to be equivalent.
} 
\label{mextZ2b} 
\centering
\begin{tabular}{ |c|c|l|l|l| } 
\hline 
$N^{|\Th|}_{c}$ & $D^2$ & $d_1,d_2,\cdots$ & $s_1,s_2,\cdots$ & comment \\
\hline 
$5^{\zeta_{2}^{1}}_{ 0}$ & $8$ & $1\times 4, 2$ & $0, 0, \frac{1}{2}, \frac{1}{2}, 0$ & \\
\hline
$9^{ B}_{ 0}$ & $16$ &\tiny $1\times 4, 2,\zeta_{2}^{1}\times 4$ &\tiny $0, 0,
\frac{1}{2}, \frac{1}{2}, 0, \frac{15}{16}, \frac{1}{16}, \frac{7}{16},
\frac{9}{16}$ & $3^{ B}_{-1/2}\boxtimes 3^{ B}_{ 1/2}$\\
$9^{ B}_{ 0}$ & $16$ &\tiny $1\times 4, 2,\zeta_{2}^{1}\times 4$ &\tiny $0, 0,
\frac{1}{2}, \frac{1}{2}, 0, \frac{3}{16}, \frac{13}{16}, \frac{11}{16},
\frac{5}{16}$ & $3^{ B}_{ 3/2}\boxtimes 3^{ B}_{-3/2}$\\
\hline
$9^{ B}_{ 0}$ & $16$ &\tiny $1\times 4, 2,\zeta_{2}^{1}\times 4$ &\tiny $0, 0,
\frac{1}{2}, \frac{1}{2}, 0, \frac{1}{16}, \frac{15}{16}, \frac{9}{16},
\frac{7}{16}$ & $3^{ B}_{ 1/2}\boxtimes 3^{ B}_{-1/2}$\\
$9^{ B}_{ 0}$ & $16$ &\tiny $1\times 4, 2,\zeta_{2}^{1}\times 4$ &\tiny $0, 0,
\frac{1}{2}, \frac{1}{2}, 0, \frac{13}{16}, \frac{3}{16}, \frac{5}{16},
\frac{11}{16}$ & $3^{ B}_{-3/2}\boxtimes 3^{ B}_{ 3/2}$\\
\hline 
\end{tabular} 
\end{table}

\def\arraystretch{1.25} \setlength\tabcolsep{3pt}
\begin{table}[t] 
\caption{ The four modular extensions of $N^{|\Th|}_{c}=5^{\zeta_{2}^{1}}_{ 1}$ with $Z_2\times Z_2$ fusion.
$5^{\zeta_{2}^{1}}_{ 1}$  has a centralizer $\Rp(Z_2)$.
} 
\label{mextZ2c} 
\centering
\begin{tabular}{ |c|c|l|l|l| } 
\hline 
$N^{|\Th|}_{c}$ & $D^2$ & $d_1,d_2,\cdots$ & $s_1,s_2,\cdots$ & comment \\
\hline 
$5^{\zeta_{2}^{1}}_{ 1}$ & $8$ & $1\times 4, 2$ & $0, 0, \frac{1}{2}, \frac{1}{2}, \frac{1}{8}$ & \\
\hline
$9^{ B}_{ 1}$ & $16$ &\tiny $1\times 4, 2,\zeta_{2}^{1}\times 4$ &\tiny $0, 0, \frac{1}{2}, \frac{1}{2}, \frac{1}{8}, \frac{1}{16}, \frac{1}{16}, \frac{9}{16}, \frac{9}{16}$ & $3^{ B}_{ 1/2}\boxtimes 3^{ B}_{ 1/2}$\\
$9^{ B}_{ 1}$ & $16$ &\tiny $1\times 4, 2,\zeta_{2}^{1}\times 4$ &\tiny $0, 0, \frac{1}{2}, \frac{1}{2}, \frac{1}{8}, \frac{13}{16}, \frac{13}{16}, \frac{5}{16}, \frac{5}{16}$ & $3^{ B}_{-3/2}\boxtimes 3^{ B}_{ 5/2}$\\
\hline
$9^{ B}_{ 1}$ & $16$ &\tiny $1\times 4, 2,\zeta_{2}^{1}\times 4$ &\tiny $0, 0, \frac{1}{2}, \frac{1}{2}, \frac{1}{8}, \frac{15}{16}, \frac{3}{16}, \frac{7}{16}, \frac{11}{16}$ & $3^{ B}_{-1/2}\boxtimes 3^{ B}_{ 3/2}$\\
$9^{ B}_{ 1}$ & $16$ &\tiny $1\times 4, 2,\zeta_{2}^{1}\times 4$ &\tiny $0, 0,
\frac{1}{2}, \frac{1}{2}, \frac{1}{8}, \frac{3}{16}, \frac{15}{16},
\frac{11}{16}, \frac{7}{16}$ & $3^{ B}_{ 3/2}\boxtimes 3^{ B}_{-1/2}$\\
\hline 
\end{tabular} 
\end{table}

\subsection{Two other $Z_2$-SET orders for bosonic
systems}

The fourth and fifth entries in Table \ref{SETZ2-34} describe the bulk
excitations of two other $Z_2$-SET  orders.  Those bulk excitations have
identical $s_i$ and $d_i$, but they have different fusion rules $N^{ij}_k$ (see
Table \ref{SETZ2-45}).  

Both entries have two modular extensions, and correspond to two SET orders.
Among the two SET orders for the $Z_2\times Z_2$ fusion rule,  one of them is
obtained by stacking a $Z_2$ \emph{neutral} $\nu=1/2$ Laughlin state with a
trivial $Z_2$ product state.  The other is obtained by stacking a $Z_2$ neutral
$\nu=1/2$ Laughlin state with a non-trivial $Z_2$ SPT state.  

The entry with $Z_4$ fusion rule also correspond to two SET orders.  They are
obtained by stacking a $Z_2$ \emph{charged} $\nu=1/2$ Laughlin state with a
trivial or a non-trivial $Z_2$ SPT state.  Here, \emph{charged} means that the
particles forming the $\nu=1/2$ Laughlin state carry $Z_2$-charge 1.  In this
case, the anyon in the $\nu=1/2$ Laughlin state carries a fractional
$Z_2$-charge $1/2$.  So the fusion of two such anyons give us a $Z_2$-charge 1
excitation instead of a trivial neutral excitation.  This leads to the $Z_4$
fusion rule.

\subsection{The rank $N=5$ $Z_2$-SET  orders for bosonic systems}
\label{Z2N5}

\def\arraystretch{1.25} \setlength\tabcolsep{3pt}
\begin{table}[t] 
\caption{
The first and the third entries in Table \ref{mextZ2b} have different
fusion rules, despite they have the same $(d_i,s_i)$.
} 
\label{frZ2_5_1} 
\centering
\begin{tabular}{ |c|ccccccccc|}
 \hline 
 $s_i$ & $0$ & $ 0$ & $ \frac{1}{2}$ & $ \frac{1}{2}$ & $ 0$ & $ \frac{1}{16}$ & $ \frac{7}{16}$ & $ \frac{9}{16}$ & $ \frac{15}{16}$\\
 $d_i$ & $1$ & $ 1$ & $ 1$ & $ 1$ & $ 2$ & $\zeta_{2}^{1}$ & $\zeta_{2}^{1}$ & $\zeta_{2}^{1}$ & $\zeta_{2}^{1}$\\
\hline
 $9^{ 1}_{ 0}$ & $\textbf{1}$  & $\textbf{2}$  & $\textbf{3}$  & $\textbf{4}$  & $\textbf{5}$  & $\textbf{6}$  & $\textbf{7}$  & $\textbf{8}$  & $\textbf{9}$ \\
\hline
$\textbf{1}$  & $ \textbf{1}$  & $ \textbf{2}$  & $ \textbf{3}$  & $ \textbf{4}$  & $ \textbf{5}$  & $ \textbf{6}$  & $ \textbf{7}$  & $ \textbf{8}$  & $ \textbf{9}$  \\
$\textbf{2}$  & $ \textbf{2}$  & $ \textbf{1}$  & $ \textbf{4}$  & $ \textbf{3}$  & $ \textbf{5}$  & $ \textbf{8}$  & $ \textbf{9}$  & $ \textbf{6}$  & $ \textbf{7}$  \\
$\textbf{3}$  & $ \textbf{3}$  & $ \textbf{4}$  & $ \textbf{1}$  & $ \textbf{2}$  & $ \textbf{5}$  & $ \textbf{8}$  & $ \textbf{7}$  & $ \textbf{6}$  & $ \textbf{9}$  \\
$\textbf{4}$  & $ \textbf{4}$  & $ \textbf{3}$  & $ \textbf{2}$  & $ \textbf{1}$  & $ \textbf{5}$  & $ \textbf{6}$  & $ \textbf{9}$  & $ \textbf{8}$  & $ \textbf{7}$  \\
$\textbf{5}$  & $ \textbf{5}$  & $ \textbf{5}$  & $ \textbf{5}$  & $ \textbf{5}$  & $ \textbf{1} \oplus \textbf{2} \oplus \textbf{3} \oplus \textbf{4}$  & $ \textbf{7} \oplus \textbf{9}$  & $ \textbf{6} \oplus \textbf{8}$  & $ \textbf{7} \oplus \textbf{9}$  & $ \textbf{6} \oplus \textbf{8}$  \\
$\textbf{6}$  & $ \textbf{6}$  & $ \textbf{8}$  & $ \textbf{8}$  & $ \textbf{6}$  & $ \textbf{7} \oplus \textbf{9}$  & $ \textbf{1} \oplus \textbf{4}$  & $ \textbf{5}$  & $ \textbf{2} \oplus \textbf{3}$  & $ \textbf{5}$  \\
$\textbf{7}$  & $ \textbf{7}$  & $ \textbf{9}$  & $ \textbf{7}$  & $ \textbf{9}$  & $ \textbf{6} \oplus \textbf{8}$  & $ \textbf{5}$  & $ \textbf{1} \oplus \textbf{3}$  & $ \textbf{5}$  & $ \textbf{2} \oplus \textbf{4}$  \\
$\textbf{8}$  & $ \textbf{8}$  & $ \textbf{6}$  & $ \textbf{6}$  & $ \textbf{8}$  & $ \textbf{7} \oplus \textbf{9}$  & $ \textbf{2} \oplus \textbf{3}$  & $ \textbf{5}$  & $ \textbf{1} \oplus \textbf{4}$  & $ \textbf{5}$  \\
$\textbf{9}$  & $ \textbf{9}$  & $ \textbf{7}$  & $ \textbf{9}$  & $ \textbf{7}$  & $ \textbf{6} \oplus \textbf{8}$  & $ \textbf{5}$  & $ \textbf{2} \oplus \textbf{4}$  & $ \textbf{5}$  & $ \textbf{1} \oplus \textbf{3}$  \\
\hline
\end{tabular}
\\[3mm]
\begin{tabular}{ |c|ccccccccc|}
 \hline 
 $s_i$ & $0$ & $ 0$ & $ \frac{1}{2}$ & $ \frac{1}{2}$ & $ 0$ & $ \frac{1}{16}$ & $ \frac{7}{16}$ & $ \frac{9}{16}$ & $ \frac{15}{16}$\\
 $d_i$ & $1$ & $ 1$ & $ 1$ & $ 1$ & $ 2$ & $\zeta_{2}^{1}$ & $\zeta_{2}^{1}$ & $\zeta_{2}^{1}$ & $\zeta_{2}^{1}$\\
\hline
 $9^{ 1}_{ 0}$ & $\textbf{1}$  & $\textbf{2}$  & $\textbf{3}$  & $\textbf{4}$  & $\textbf{5}$  & $\textbf{6}$  & $\textbf{7}$  & $\textbf{8}$  & $\textbf{9}$ \\
\hline
$\textbf{1}$  & $ \textbf{1}$  & $ \textbf{2}$  & $ \textbf{3}$  & $ \textbf{4}$  & $ \textbf{5}$  & $ \textbf{6}$  & $ \textbf{7}$  & $ \textbf{8}$  & $ \textbf{9}$  \\
$\textbf{2}$  & $ \textbf{2}$  & $ \textbf{1}$  & $ \textbf{4}$  & $ \textbf{3}$  & $ \textbf{5}$  & $ \textbf{8}$  & $ \textbf{9}$  & $ \textbf{6}$  & $ \textbf{7}$  \\
$\textbf{3}$  & $ \textbf{3}$  & $ \textbf{4}$  & $ \textbf{1}$  & $ \textbf{2}$  & $ \textbf{5}$  & $ \textbf{6}$  & $ \textbf{9}$  & $ \textbf{8}$  & $ \textbf{7}$  \\
$\textbf{4}$  & $ \textbf{4}$  & $ \textbf{3}$  & $ \textbf{2}$  & $ \textbf{1}$  & $ \textbf{5}$  & $ \textbf{8}$  & $ \textbf{7}$  & $ \textbf{6}$  & $ \textbf{9}$  \\
$\textbf{5}$  & $ \textbf{5}$  & $ \textbf{5}$  & $ \textbf{5}$  & $ \textbf{5}$  & $ \textbf{1} \oplus \textbf{2} \oplus \textbf{3} \oplus \textbf{4}$  & $ \textbf{7} \oplus \textbf{9}$  & $ \textbf{6} \oplus \textbf{8}$  & $ \textbf{7} \oplus \textbf{9}$  & $ \textbf{6} \oplus \textbf{8}$  \\
$\textbf{6}$  & $ \textbf{6}$  & $ \textbf{8}$  & $ \textbf{6}$  & $ \textbf{8}$  & $ \textbf{7} \oplus \textbf{9}$  & $ \textbf{1} \oplus \textbf{3}$  & $ \textbf{5}$  & $ \textbf{2} \oplus \textbf{4}$  & $ \textbf{5}$  \\
$\textbf{7}$  & $ \textbf{7}$  & $ \textbf{9}$  & $ \textbf{9}$  & $ \textbf{7}$  & $ \textbf{6} \oplus \textbf{8}$  & $ \textbf{5}$  & $ \textbf{1} \oplus \textbf{4}$  & $ \textbf{5}$  & $ \textbf{2} \oplus \textbf{3}$  \\
$\textbf{8}$  & $ \textbf{8}$  & $ \textbf{6}$  & $ \textbf{8}$  & $ \textbf{6}$  & $ \textbf{7} \oplus \textbf{9}$  & $ \textbf{2} \oplus \textbf{4}$  & $ \textbf{5}$  & $ \textbf{1} \oplus \textbf{3}$  & $ \textbf{5}$  \\
$\textbf{9}$  & $ \textbf{9}$  & $ \textbf{7}$  & $ \textbf{7}$  & $ \textbf{9}$  & $ \textbf{6} \oplus \textbf{8}$  & $ \textbf{5}$  & $ \textbf{2} \oplus \textbf{3}$  & $ \textbf{5}$  & $ \textbf{1} \oplus \textbf{4}$  \\
\hline
\end{tabular} 
\end{table}

\def\arraystretch{1.25} \setlength\tabcolsep{3pt}
\begin{table}[t] 
\caption{
The third and the fourth entries in Table \ref{mextZ2c} have different
fusion rules, despite they have the same $(d_i,s_i)$.
} 
\label{frZ2_5_3} 
\centering
\begin{tabular}{ |c|ccccccccc|}
 \hline 
 $s_i$ & $0$ & $ 0$ & $ \frac{1}{2}$ & $ \frac{1}{2}$ & $ \frac{1}{8}$ & $ \frac{3}{16}$ & $ \frac{7}{16}$ & $ \frac{11}{16}$ & $ \frac{15}{16}$\\
 $d_i$ & $1$ & $ 1$ & $ 1$ & $ 1$ & $ 2$ & $\zeta_{2}^{1}$ & $\zeta_{2}^{1}$ & $\zeta_{2}^{1}$ & $\zeta_{2}^{1}$\\
\hline
 $9^{ 1}_{ 1}$ & $\textbf{1}$  & $\textbf{2}$  & $\textbf{3}$  & $\textbf{4}$  & $\textbf{5}$  & $\textbf{6}$  & $\textbf{7}$  & $\textbf{8}$  & $\textbf{9}$ \\
\hline
$\textbf{1}$  & $ \textbf{1}$  & $ \textbf{2}$  & $ \textbf{3}$  & $ \textbf{4}$  & $ \textbf{5}$  & $ \textbf{6}$  & $ \textbf{7}$  & $ \textbf{8}$  & $ \textbf{9}$  \\
$\textbf{2}$  & $ \textbf{2}$  & $ \textbf{1}$  & $ \textbf{4}$  & $ \textbf{3}$  & $ \textbf{5}$  & $ \textbf{8}$  & $ \textbf{9}$  & $ \textbf{6}$  & $ \textbf{7}$  \\
$\textbf{3}$  & $ \textbf{3}$  & $ \textbf{4}$  & $ \textbf{1}$  & $ \textbf{2}$  & $ \textbf{5}$  & $ \textbf{8}$  & $ \textbf{7}$  & $ \textbf{6}$  & $ \textbf{9}$  \\
$\textbf{4}$  & $ \textbf{4}$  & $ \textbf{3}$  & $ \textbf{2}$  & $ \textbf{1}$  & $ \textbf{5}$  & $ \textbf{6}$  & $ \textbf{9}$  & $ \textbf{8}$  & $ \textbf{7}$  \\
$\textbf{5}$  & $ \textbf{5}$  & $ \textbf{5}$  & $ \textbf{5}$  & $ \textbf{5}$  & $ \textbf{1} \oplus \textbf{2} \oplus \textbf{3} \oplus \textbf{4}$  & $ \textbf{7} \oplus \textbf{9}$  & $ \textbf{6} \oplus \textbf{8}$  & $ \textbf{7} \oplus \textbf{9}$  & $ \textbf{6} \oplus \textbf{8}$  \\
$\textbf{6}$  & $ \textbf{6}$  & $ \textbf{8}$  & $ \textbf{8}$  & $ \textbf{6}$  & $ \textbf{7} \oplus \textbf{9}$  & $ \textbf{1} \oplus \textbf{4}$  & $ \textbf{5}$  & $ \textbf{2} \oplus \textbf{3}$  & $ \textbf{5}$  \\
$\textbf{7}$  & $ \textbf{7}$  & $ \textbf{9}$  & $ \textbf{7}$  & $ \textbf{9}$  & $ \textbf{6} \oplus \textbf{8}$  & $ \textbf{5}$  & $ \textbf{1} \oplus \textbf{3}$  & $ \textbf{5}$  & $ \textbf{2} \oplus \textbf{4}$  \\
$\textbf{8}$  & $ \textbf{8}$  & $ \textbf{6}$  & $ \textbf{6}$  & $ \textbf{8}$  & $ \textbf{7} \oplus \textbf{9}$  & $ \textbf{2} \oplus \textbf{3}$  & $ \textbf{5}$  & $ \textbf{1} \oplus \textbf{4}$  & $ \textbf{5}$  \\
$\textbf{9}$  & $ \textbf{9}$  & $ \textbf{7}$  & $ \textbf{9}$  & $ \textbf{7}$  & $ \textbf{6} \oplus \textbf{8}$  & $ \textbf{5}$  & $ \textbf{2} \oplus \textbf{4}$  & $ \textbf{5}$  & $ \textbf{1} \oplus \textbf{3}$  \\
\hline
\end{tabular}
\\[3mm]
\begin{tabular}{ |c|ccccccccc|}
 \hline 
 $s_i$ & $0$ & $ 0$ & $ \frac{1}{2}$ & $ \frac{1}{2}$ & $ \frac{1}{8}$ & $ \frac{3}{16}$ & $ \frac{7}{16}$ & $ \frac{11}{16}$ & $ \frac{15}{16}$\\
 $d_i$ & $1$ & $ 1$ & $ 1$ & $ 1$ & $ 2$ & $\zeta_{2}^{1}$ & $\zeta_{2}^{1}$ & $\zeta_{2}^{1}$ & $\zeta_{2}^{1}$\\
\hline
 $9^{ 1}_{ 1}$ & $\textbf{1}$  & $\textbf{2}$  & $\textbf{3}$  & $\textbf{4}$  & $\textbf{5}$  & $\textbf{6}$  & $\textbf{7}$  & $\textbf{8}$  & $\textbf{9}$ \\
\hline
$\textbf{1}$  & $ \textbf{1}$  & $ \textbf{2}$  & $ \textbf{3}$  & $ \textbf{4}$  & $ \textbf{5}$  & $ \textbf{6}$  & $ \textbf{7}$  & $ \textbf{8}$  & $ \textbf{9}$  \\
$\textbf{2}$  & $ \textbf{2}$  & $ \textbf{1}$  & $ \textbf{4}$  & $ \textbf{3}$  & $ \textbf{5}$  & $ \textbf{8}$  & $ \textbf{9}$  & $ \textbf{6}$  & $ \textbf{7}$  \\
$\textbf{3}$  & $ \textbf{3}$  & $ \textbf{4}$  & $ \textbf{1}$  & $ \textbf{2}$  & $ \textbf{5}$  & $ \textbf{6}$  & $ \textbf{9}$  & $ \textbf{8}$  & $ \textbf{7}$  \\
$\textbf{4}$  & $ \textbf{4}$  & $ \textbf{3}$  & $ \textbf{2}$  & $ \textbf{1}$  & $ \textbf{5}$  & $ \textbf{8}$  & $ \textbf{7}$  & $ \textbf{6}$  & $ \textbf{9}$  \\
$\textbf{5}$  & $ \textbf{5}$  & $ \textbf{5}$  & $ \textbf{5}$  & $ \textbf{5}$  & $ \textbf{1} \oplus \textbf{2} \oplus \textbf{3} \oplus \textbf{4}$  & $ \textbf{7} \oplus \textbf{9}$  & $ \textbf{6} \oplus \textbf{8}$  & $ \textbf{7} \oplus \textbf{9}$  & $ \textbf{6} \oplus \textbf{8}$  \\
$\textbf{6}$  & $ \textbf{6}$  & $ \textbf{8}$  & $ \textbf{6}$  & $ \textbf{8}$  & $ \textbf{7} \oplus \textbf{9}$  & $ \textbf{1} \oplus \textbf{3}$  & $ \textbf{5}$  & $ \textbf{2} \oplus \textbf{4}$  & $ \textbf{5}$  \\
$\textbf{7}$  & $ \textbf{7}$  & $ \textbf{9}$  & $ \textbf{9}$  & $ \textbf{7}$  & $ \textbf{6} \oplus \textbf{8}$  & $ \textbf{5}$  & $ \textbf{1} \oplus \textbf{4}$  & $ \textbf{5}$  & $ \textbf{2} \oplus \textbf{3}$  \\
$\textbf{8}$  & $ \textbf{8}$  & $ \textbf{6}$  & $ \textbf{8}$  & $ \textbf{6}$  & $ \textbf{7} \oplus \textbf{9}$  & $ \textbf{2} \oplus \textbf{4}$  & $ \textbf{5}$  & $ \textbf{1} \oplus \textbf{3}$  & $ \textbf{5}$  \\
$\textbf{9}$  & $ \textbf{9}$  & $ \textbf{7}$  & $ \textbf{7}$  & $ \textbf{9}$  & $ \textbf{6} \oplus \textbf{8}$  & $ \textbf{5}$  & $ \textbf{2} \oplus \textbf{3}$  & $ \textbf{5}$  & $ \textbf{1} \oplus \textbf{4}$  \\
\hline
\end{tabular}
\end{table}

\def\arraystretch{1.25} \setlength\tabcolsep{3pt}
\begin{table}[t] 
\caption{
The fusion rules of the first and the second entries in Table \ref{mextZ2c}.
} 
\label{frZ2_5_3a} 
\centering
\begin{tabular}{ |c|ccccccccc|}
 \hline 
 $s_i$ & $0$ & $ 0$ & $ \frac{1}{2}$ & $ \frac{1}{2}$ & $ \frac{1}{8}$ & $ \frac{1}{16}$ & $ \frac{1}{16}$ & $ \frac{9}{16}$ & $ \frac{9}{16}$\\
 $d_i$ & $1$ & $ 1$ & $ 1$ & $ 1$ & $ 2$ & $\zeta_{2}^{1}$ & $\zeta_{2}^{1}$ & $\zeta_{2}^{1}$ & $\zeta_{2}^{1}$\\
\hline
 $9^{ 1}_{ 1}$ & $\textbf{1}$  & $\textbf{2}$  & $\textbf{3}$  & $\textbf{4}$  & $\textbf{5}$  & $\textbf{6}$  & $\textbf{7}$  & $\textbf{8}$  & $\textbf{9}$ \\
\hline
$\textbf{1}$  & $ \textbf{1}$  & $ \textbf{2}$  & $ \textbf{3}$  & $ \textbf{4}$  & $ \textbf{5}$  & $ \textbf{6}$  & $ \textbf{7}$  & $ \textbf{8}$  & $ \textbf{9}$  \\
$\textbf{2}$  & $ \textbf{2}$  & $ \textbf{1}$  & $ \textbf{4}$  & $ \textbf{3}$  & $ \textbf{5}$  & $ \textbf{8}$  & $ \textbf{9}$  & $ \textbf{6}$  & $ \textbf{7}$  \\
$\textbf{3}$  & $ \textbf{3}$  & $ \textbf{4}$  & $ \textbf{1}$  & $ \textbf{2}$  & $ \textbf{5}$  & $ \textbf{8}$  & $ \textbf{7}$  & $ \textbf{6}$  & $ \textbf{9}$  \\
$\textbf{4}$  & $ \textbf{4}$  & $ \textbf{3}$  & $ \textbf{2}$  & $ \textbf{1}$  & $ \textbf{5}$  & $ \textbf{6}$  & $ \textbf{9}$  & $ \textbf{8}$  & $ \textbf{7}$  \\
$\textbf{5}$  & $ \textbf{5}$  & $ \textbf{5}$  & $ \textbf{5}$  & $ \textbf{5}$  & $ \textbf{1} \oplus \textbf{2} \oplus \textbf{3} \oplus \textbf{4}$  & $ \textbf{7} \oplus \textbf{9}$  & $ \textbf{6} \oplus \textbf{8}$  & $ \textbf{7} \oplus \textbf{9}$  & $ \textbf{6} \oplus \textbf{8}$  \\
$\textbf{6}$  & $ \textbf{6}$  & $ \textbf{8}$  & $ \textbf{8}$  & $ \textbf{6}$  & $ \textbf{7} \oplus \textbf{9}$  & $ \textbf{1} \oplus \textbf{4}$  & $ \textbf{5}$  & $ \textbf{2} \oplus \textbf{3}$  & $ \textbf{5}$  \\
$\textbf{7}$  & $ \textbf{7}$  & $ \textbf{9}$  & $ \textbf{7}$  & $ \textbf{9}$  & $ \textbf{6} \oplus \textbf{8}$  & $ \textbf{5}$  & $ \textbf{1} \oplus \textbf{3}$  & $ \textbf{5}$  & $ \textbf{2} \oplus \textbf{4}$  \\
$\textbf{8}$  & $ \textbf{8}$  & $ \textbf{6}$  & $ \textbf{6}$  & $ \textbf{8}$  & $ \textbf{7} \oplus \textbf{9}$  & $ \textbf{2} \oplus \textbf{3}$  & $ \textbf{5}$  & $ \textbf{1} \oplus \textbf{4}$  & $ \textbf{5}$  \\
$\textbf{9}$  & $ \textbf{9}$  & $ \textbf{7}$  & $ \textbf{9}$  & $ \textbf{7}$  & $ \textbf{6} \oplus \textbf{8}$  & $ \textbf{5}$  & $ \textbf{2} \oplus \textbf{4}$  & $ \textbf{5}$  & $ \textbf{1} \oplus \textbf{3}$  \\
\hline
\end{tabular}
\\[3mm]
\begin{tabular}{ |c|ccccccccc|}
 \hline 
 $s_i$ & $0$ & $ 0$ & $ \frac{1}{2}$ & $ \frac{1}{2}$ & $ \frac{1}{8}$ & $ \frac{5}{16}$ & $ \frac{5}{16}$ & $ \frac{13}{16}$ & $ \frac{13}{16}$\\
 $d_i$ & $1$ & $ 1$ & $ 1$ & $ 1$ & $ 2$ & $\zeta_{2}^{1}$ & $\zeta_{2}^{1}$ & $\zeta_{2}^{1}$ & $\zeta_{2}^{1}$\\
\hline
 $9^{ 1}_{ 1}$ & $\textbf{1}$  & $\textbf{2}$  & $\textbf{3}$  & $\textbf{4}$  & $\textbf{5}$  & $\textbf{6}$  & $\textbf{7}$  & $\textbf{8}$  & $\textbf{9}$ \\
\hline
$\textbf{1}$  & $ \textbf{1}$  & $ \textbf{2}$  & $ \textbf{3}$  & $ \textbf{4}$  & $ \textbf{5}$  & $ \textbf{6}$  & $ \textbf{7}$  & $ \textbf{8}$  & $ \textbf{9}$  \\
$\textbf{2}$  & $ \textbf{2}$  & $ \textbf{1}$  & $ \textbf{4}$  & $ \textbf{3}$  & $ \textbf{5}$  & $ \textbf{8}$  & $ \textbf{9}$  & $ \textbf{6}$  & $ \textbf{7}$  \\
$\textbf{3}$  & $ \textbf{3}$  & $ \textbf{4}$  & $ \textbf{1}$  & $ \textbf{2}$  & $ \textbf{5}$  & $ \textbf{8}$  & $ \textbf{7}$  & $ \textbf{6}$  & $ \textbf{9}$  \\
$\textbf{4}$  & $ \textbf{4}$  & $ \textbf{3}$  & $ \textbf{2}$  & $ \textbf{1}$  & $ \textbf{5}$  & $ \textbf{6}$  & $ \textbf{9}$  & $ \textbf{8}$  & $ \textbf{7}$  \\
$\textbf{5}$  & $ \textbf{5}$  & $ \textbf{5}$  & $ \textbf{5}$  & $ \textbf{5}$  & $ \textbf{1} \oplus \textbf{2} \oplus \textbf{3} \oplus \textbf{4}$  & $ \textbf{7} \oplus \textbf{9}$  & $ \textbf{6} \oplus \textbf{8}$  & $ \textbf{7} \oplus \textbf{9}$  & $ \textbf{6} \oplus \textbf{8}$  \\
$\textbf{6}$  & $ \textbf{6}$  & $ \textbf{8}$  & $ \textbf{8}$  & $ \textbf{6}$  & $ \textbf{7} \oplus \textbf{9}$  & $ \textbf{1} \oplus \textbf{4}$  & $ \textbf{5}$  & $ \textbf{2} \oplus \textbf{3}$  & $ \textbf{5}$  \\
$\textbf{7}$  & $ \textbf{7}$  & $ \textbf{9}$  & $ \textbf{7}$  & $ \textbf{9}$  & $ \textbf{6} \oplus \textbf{8}$  & $ \textbf{5}$  & $ \textbf{1} \oplus \textbf{3}$  & $ \textbf{5}$  & $ \textbf{2} \oplus \textbf{4}$  \\
$\textbf{8}$  & $ \textbf{8}$  & $ \textbf{6}$  & $ \textbf{6}$  & $ \textbf{8}$  & $ \textbf{7} \oplus \textbf{9}$  & $ \textbf{2} \oplus \textbf{3}$  & $ \textbf{5}$  & $ \textbf{1} \oplus \textbf{4}$  & $ \textbf{5}$  \\
$\textbf{9}$  & $ \textbf{9}$  & $ \textbf{7}$  & $ \textbf{9}$  & $ \textbf{7}$  & $ \textbf{6} \oplus \textbf{8}$  & $ \textbf{5}$  & $ \textbf{2} \oplus \textbf{4}$  & $ \textbf{5}$  & $ \textbf{1} \oplus \textbf{3}$  \\
\hline
\end{tabular}
\end{table}

The first and the second entries in Table \ref{SETZ2-5} describe two $N=5$
$\mce{\Rp(Z_2)}$'s.  They describe two different sets of bulk excitations for
$Z_2$-SET  orders.  Those bulk excitations have identical $s_i$ and $d_i$,
but they have different fusion rules $N^{ij}_k$: the 4 $d=1$ particles have a
$Z_2\times Z_2$ fusion rule for the first entry, and they have a $Z_4$ fusion
rule for the second entry (as indicated by F:$Z_2\times Z_2$ or F:$Z_4$ in the
comment column of Table \ref{SETZ2-5}).

\subsubsection{The first entry in Table \ref{SETZ2-5}}

Let us compute the modular extensions of the first entry (\ie
$5^{\zeta_2^1}_{0}$ with $Z_2\times Z_2$ fusion).  Since the total quantum
dimension of the modular extensions is $D^2=16$, the modular extensions must
have rank $N=13$ or less (since quantum dimension $d \geq 1$).

Now we would like to show $N=13$ is not possible.  If a modular extension has
$N=13$, then it must have 12 particles (labeled by $a=1,\cdots,12$) with
quantum dimension $d_a=1$, and one particle (labeled by $x$) with quantum
dimension $d_x=2$, so that $12\times 1^2+2^2=D^2=16$. In this case,
we must have the fusion rule
\begin{align}
 a\otimes x=x,\ \ \ \ x\otimes x= 1\oplus 2\oplus 3\oplus 4.
\end{align}
where $x\otimes x$ is determined by the fusion rule of the $\mce{\Rp(Z_2)}$.
The above determines the fusion matrix $N_x$ defined as $(N_x)_{ij} \equiv
N^{xi}_{j}$.  The largest eigenvalue of $N_x$ should be $2$, the quantum
dimension of $x$. Indeed, we find that the largest eigenvalue of $N_x$ is $2$.
But we also require that $N_x$ can be diagonalized by a unitary matrix (which
happens to be the $S$-matrix).  $N_x$ fails such a test.  So $N$ cannot be 13.

$N$ also cannot be 12.  If $N=12$, then the modular extension will have 10
particles (labeled by $a=1,\cdots,10$) with quantum dimension $d_a=1$, one
particle (labeled by $x$) with quantum dimension $d_x=2$, and one particle
(labeled by $y$) with quantum dimension $d_y=\sqrt 2$.  The fusion of 10
$d_a=1$ particles is described by an abelian group $Z_{10}$ or $Z_2\times Z_5$.
None of them contain $Z_2\times Z_2$ as subgroup. Thus $N=12$ is incompatible
with the $Z_2\times Z_2$ fusion of the first four $d_a=1$ particles.

We searched the modular extensions with $N$ up to 11.  We find four $N=9$
modular extensions (see Table \ref{mextZ2b}), and thus the first entry
corresponds to valid $Z_2$-SET states.

In fact one of the $Z_2$-SET states is the $Z_2$ gauge theory with a $Z_2$
global symmetry, where the $Z_2$ symmetry action exchange the $Z_2$-charge $e$
and the $Z_2$-vortex $m$.  The degenerate $e$ and $m$ give rise to the
$(d,s)=(2,0)$ particle (the fifth particle in the table).  The bound state of
$e$ and $m$ is a fermion $f$.  It may carry the $Z_2$-charge 0 or 1, which
correspond to the third and the fourth particle with $(d,s)=(1,1/2)$ in the
table. 

However, from the discussion in the last few sections, we know that a
$\mce{\Rp(Z_2)}$ always has 2 modular extensions, corresponding to the 2
bosonic $Z_2$-SPT states in 2+1D.  This seems contradictory with the above result
that the $Z_2$-SET state, $5^{\zeta_2^1}_{0}$ with $Z_2\times Z_2$ fusion, has
four different modular extensions. 

In fact, there is no  contradiction.  Here, we only use $(N^{ij}_k,s_i)$ to
label different entries. However, a $\mce{\cE}$ is fully characterized by
$(N^{ij}_k,s_i)$ plus the $F$-tensors and the $R$-tensors. 
To see this point, we note that the Ising-like UMTC $N^B_c=3^B_{m/2}$,
$m=1,3,\cdots,15$ (with central charge $c=m/2$) has three particles: $1$, $f$
with $(d_f,s_f)=(1,1/2)$, and $ \sigma $ with $(d_\sigma ,s_\sigma
)=(\sqrt{2},m/16)$. Its $R$-tensor is given by\cite{K062}
\begin{align}
 R^{ff}_1 &=-1, &
 R^{ \sigma f }_\sigma  &= 
 R^{ f \sigma }_\sigma  = -\ii^m,  
\\
 R^{ \sigma \sigma }_1 &= (-1)^{\frac{m^2-1}{8}} \ee^{-\ii \frac{ \pi }{8} m},  &
 R^{ \sigma \sigma }_f &= (-1)^{\frac{m^2-1}{8}} \ee^{\ii \frac{3 \pi }{8} m},
\nonumber 
\end{align}
and some components of the $F$-tensor are given by
\begin{align}
 F^{f \sigma \sigma; \sigma   }_{ f;1 }=
 F^{ \sigma \sigma f; \sigma   }_{ f;1 }=1.
\end{align}
The values of $R^{ \sigma f }_\sigma$ and $R^{ f \sigma }_\sigma$ are not gauge
invariant. But if we fix the values of the $F$-tensor to be the ones given
above, this will fix the gauge, and we can treat $R^{ \sigma f }_\sigma$ and
$R^{ f \sigma }_\sigma$ as if they are gauge invariant quantities.

If we stack $N^B_c=3^B_{m/2}$ and  $N^B_c=3^B_{m'/2}$ together, the induced
UMTC $ 3^B_{m/2}\boxtimes 3^B_{m'/2}$ contains particles
$\textbf{1}=(1,1)$, $\textbf{2}=(f,f')$, $\textbf{3}=(f,1)$,
$\textbf{4}=(1,f')$, $\textbf{5}=( \sigma , \sigma' )$.  Those 5 particles are
closed under the fusion, and correspond to the 5 particles in $\mce{\Rp(Z_2)}$
$5^{\zeta_2^1}_{m+m'}$.  We note that some components of the $R$-tensor of $
3^B_{m/2}\boxtimes 3^B_{m'/2}$ are given by
\begin{align}
 R^{(f,1),( \sigma , \sigma')}_{( \sigma , \sigma')}
&= R^{( \sigma , \sigma'), (f,1)}_{( \sigma , \sigma')} =-\ii^m,
\nonumber\\
 R^{(1,f'),( \sigma , \sigma')}_{( \sigma , \sigma')}
&= R^{( \sigma , \sigma'), (1,f')}_{( \sigma , \sigma')} =-\ii^{m'}.
\end{align}

Taking $(m,m')=(-1,1)$ and $(1,-1)$, it is clear the $ 3^B_{-1/2}\boxtimes
3^B_{ 1/2}$ and $ 3^B_{ 1/2}\boxtimes 3^B_{-1/2}$ give rise to two different
$R$-tensors that have identical $(N^{ij}_k,s_i)$.  So the first entry in
Table \ref{SETZ2-5} (\ie $5^{\zeta_2^1}_{0}$ with $Z_2\times Z_2$ fusion) split
into two different entries if we include the $R$-tensors. Each give rise to two
modular extensions, and this is why we got four modular extensions.  In
Table \ref{mextZ2b}, the first two modular extensions have the same
$(N^{ij}_k,s_i)$, $F$-tensor and $R$-tensors when restricted to the first 5
particles.  
The  second pair of modular extensions also have the same $(N^{ij}_k,s_i)$,
$F$-tensor and $R$-tensor when restricted to the first 5 particles, but their
$R$-tensor is different from that of the first pair.
However, note that under the exchange of the two fermions, the $R$-tensor of
the first pair becomes that of the second pair.


We like to stress that Table \ref{mextZ2b} is obtained using the
ME-equivalence relation, \ie the different entries are different under the
ME-equivalence relation (see Section \ref{clGQL2}).  We see that for each fixed
$\mce{\Rp(Z_2)}$ (\ie for each fixed set of $(N^{ij}_k,s_i)$, $F$-tensor and
$R$-tensor), there are two modular extensions, which agrees with our general result
for modular extensions. However, if we ignore  $F$-tensor and $R$-tensor, then
for each fixed set of $(N^{ij}_k,s_i)$, we get four modular extensions.  This
is because $(N^{ij}_k,s_i)$ is only a partial description of a
$\mce{\Rp(Z_2)}$, and
as discussed above, in this case there are two ways to assign $F$-tensor and $R$-tensor to
them.
This is why each fixed $(N^{ij}_k,s_i)$ has four modular extensions, while
each fixed $(N^{ij}_k,s_i,F,R)$ has only two  modular extensions.

On the other hand, under the TO-equivalence relation (see
Section \ref{clGQL2}),
the two ways to assign $F$-tensor and $R$-tensor are actually equivalent
(related by exchanging the two fermions), and the first entry in
Table \ref{SETZ2-5} corresponds to only one $\mce{\Rp(Z_2)}$. Thus,
the first entry is equivalent to the third entry, and
the second entry is equivalent to the fourth entry in Table \ref{mextZ2b}.
So the four entries of Table \ref{mextZ2b} in fact represent only two distinct
$Z_2$-SET orders.

One of the two $Z_2$-SET orders have been studied extensively.  It corresponds
to $Z_2$ gauge theory with a $\Z_2$ global symmetry that exchanges the
$Z_2$-gauge-charge $e$ and the $Z_2$-gauge-vortex $m$\cite{W0303,KLW0834}.  

\subsubsection{The second entry in Table \ref{SETZ2-5}}

Next, we compute the modular extensions of the second entry in Table
\ref{SETZ2-5} (\ie $5^{\zeta_2^1}_{0}$ with
$Z_4$ fusion).  Again, we can use the same argument to show that modular
extensions of rank 12 and above do not exist.  We searched the modular
extensions with $N$ up to 11,  and find that there is no modular extensions.
So the second entry is not realizable and does not correspond to any valid
bosonic $Z_2$-SET in 2+1D.  This is indicated by NR in the comment column of
Table \ref{SETZ2-5}.

Naively, the (none existing) state from the second entry is very similar to
that from the first entry.  It is also a $Z_2$ gauge theory with a $Z_2$ global
symmetry that exchange $e$ and $m$. However, for the second entry, the $f$
particles (the third and the fourth particles) are assigned fraction
$Z_2$-charge of $\pm 1/2$. This leads to the $Z_4$ fusion rule.  Our result
implies that such an assignment is not realizable (or is illegal).
It turns out that all the $5^{\zeta_2^1}_{c}$'s with $Z_4$ fusion do not have
modular extensions. They are not realizable, and do not correspond to any 2+1D
bosonic $Z_2$-SET orders.

\subsubsection{The third entry in Table \ref{SETZ2-5}}

Third, let us compute the modular extensions of the third entry in Table
\ref{SETZ2-5} (\ie $5^{\zeta_2^1}_{1}$ with $Z_2\times Z_2$ fusion).  We find
that the entry has four modular extensions.  In fact, the entry corresponds to
two different $\mce{\Rp(Z_2)}$s, each with two modular extensions, as implied
by the two $Z_2$-SPT states.  The two $\mce{\Rp(Z_2)}$s have identical
$(N^{ij}_k,s_i,c)$, but different $F$-tensors and $R$-tensors.  Sometimes two
different $\mce{\cE}$'s (with different $F$-tensors and the $R$-tensors) can
have the same $(N^{ij}_k,s_i)$'s. The third, seventh,\dots, entries of Table
\ref{SETZ2-5} provide such examples.  We like to stress that this is different
from the first entry in Table \ref{SETZ2-5} which corresponds to one
$\mce{\Rp(Z_2)}$.

To see those different $F$-tensors and $R$-tensors, we note that one of the two
$5^{\zeta_2^1}_{1}$ with $Z_2\times Z_2$ fusion has modular extensions given by
$3^B_{1/2}\boxtimes 3^B_{1/2}$ and $3^B_{-3/2}\boxtimes 3^B_{5/2}$.  We find the
$R$-tensor for this first $5^{\zeta_2^1}_{1}$ with $Z_2\times Z_2$ fusion is
given by
\begin{align}
 R^{(f,1),( \sigma , \sigma')}_{( \sigma , \sigma')}
&= R^{( \sigma , \sigma'), (f,1)}_{( \sigma , \sigma')} =-\ii,
\nonumber\\
 R^{(1,f'),( \sigma , \sigma')}_{( \sigma , \sigma')}
&= R^{( \sigma , \sigma'), (1,f')}_{( \sigma , \sigma')} =-\ii.
\end{align}
The second $5^{\zeta_2^1}_{1}$ with $Z_2\times Z_2$ fusion has modular
extensions given by $3^B_{-1/2}\boxtimes 3^B_{3/2}$ and $3^B_{3/2}\boxtimes
3^B_{-1/2}$.  We find the $R$-tensor for the second $5^{\zeta_2^1}_{1}$ with
$Z_2\times Z_2$ fusion is given by
\begin{align}
 R^{(f,1),( \sigma , \sigma')}_{( \sigma , \sigma')}
&= R^{( \sigma , \sigma'), (f,1)}_{( \sigma , \sigma')} =\ii,
\nonumber\\
 R^{(1,f'),( \sigma , \sigma')}_{( \sigma , \sigma')}
&= R^{( \sigma , \sigma'), (1,f')}_{( \sigma , \sigma')} =\ii.
\end{align}
We see that the two $5^{\zeta_2^1}_{1}$'s with $Z_2\times Z_2$ fusion are
really different $\mce{\Rp(Z_2)}$.  Each $5^{\zeta_2^1}_{1}$ has two modular
extensions, and that is why we have four entries in Table \ref{mextZ2c}.

Again, Table \ref{mextZ2c} is obtained using the ME-equivalence relation,
and is not a table of GQLs.  Under the  TO-equivalence relation, the third
entry is equivalent to the fourth entry of Table \ref{mextZ2c}.  So the four
entries in  Table \ref{mextZ2c} actually describe \emph{three} different
$Z_2$-SET orders.  This has a very interesting consequence: \emph{The $Z_2$-SET
state described by the third (or fourth) entry in \ref{mextZ2c}, after stacked
with an $Z_2$-SPT state, still remains in the same phase.} This is an example
of the following general statement made previously: \emph{The GQLs with bulk
excitations described by $\cC$ are in one-to-one correspondence with the
quotient $\mext(\cC)/\aute(\cC)$ plus a central charge $c$.} 
In such an example $\aute(\cC)$ is non-trivial.

It is worth noting here that for the second $5^{\zeta_2^1}_{1}$, two modular
extensions $3^B_{-1/2}\boxtimes 3^B_{3/2}$ and $3^B_{3/2}\boxtimes 3^B_{-1/2}$
are actually equivalent UMTCs. This is an example that different embedings
leads to different modular extensions. For $3^B_{-1/2}\boxtimes 3^B_{3/2}$ the
first fermion in $5^{\zeta_2^1}_{1}$ is embedded into $3^B_{-1/2}$ and the
second fermion is embedded into $3^B_{3/2}$, while for $3^B_{3/2}\boxtimes
3^B_{-1/2}$ the first fermion is embedded into $3^B_{3/2}$ and the second
fermion is embedded into $3^B_{-1/2}$.  The equivalence between
$3^B_{-1/2}\boxtimes 3^B_{3/2}$ and $3^B_{3/2}\boxtimes 3^B_{-1/2}$ that
exchanges both fermions and symmetry twists fails to relate the two
embeddings, as they differ by a non-trivial automorphism of $5^{\zeta_2^1}_{1}$ that exchanges only the two fermions. This is an
example that the $\mathrm{Aut}(\cC)$ action permutes the modular extensions, as discussed in Section \ref{clGQL}.

\def\arraystretch{1.25} \setlength\tabcolsep{3pt}
\begin{table}[t] 
\caption{
The three modular extensions of $\Rp(Z_3)$.
} 
\label{mextZ3} 
\centering
\begin{tabular}{ |c|c|l|l|l| } 
\hline 
$N^{|\Th|}_{c}$ & $D^2$ & $d_1,d_2,\cdots$ & $s_1,s_2,\cdots$ & comment \\
 \hline 
$3^{\zeta_{4}^{1}}_{ 0}$ & $3$ & $1, 1, 1$ & $0, 0, 0$ & $\Rp(Z_3)$ \\
\hline
$9^{ B}_{ 0}$ & $9$ & $1\times 9$ & $0, 0, 0, 0, 0, \frac{1}{3}, \frac{1}{3}, \frac{2}{3}, \frac{2}{3}$ & $Z_3$ gauge\\
$9^{ B}_{ 0}$ & $9$ & $1\times 9$ & $0, 0, 0, \frac{1}{9}, \frac{1}{9}, \frac{4}{9}, \frac{4}{9}, \frac{7}{9}, \frac{7}{9}$ & \\
$9^{ B}_{ 0}$ & $9$ & $1\times 9$ & $0, 0, 0, \frac{2}{9}, \frac{2}{9}, \frac{5}{9}, \frac{5}{9}, \frac{8}{9}, \frac{8}{9}$ & \\
 \hline 
\end{tabular} 
\end{table}

\def\arraystretch{1.25} \setlength\tabcolsep{3pt}
\begin{table}[t] 
\caption{
The six modular extensions of $\Rp(S_3)$.
} 
\label{mextS3} 
\centering
\begin{tabular}{ |c|c|l|l|l| } 
\hline 
$N^{|\Th|}_{c}$ & $D^2$ & $d_1,d_2,\cdots$ & $s_1,s_2,\cdots$ & comment \\
\hline 
$3^{\sqrt{6}}_{ 0}$ & $6$ & $1, 1, 2$ & $0, 0, 0$ & $\Rp(S_3)$ \\
\hline
$8^{ B}_{ 0}$ & $36$ & $1, 1, 2, 2, 2, 2, 3, 3$ & $0, 0, 0, 0, \frac{1}{3}, \frac{2}{3}, 0, \frac{1}{2}$ & $S_3$ gauge\\
$8^{ B}_{ 0}$ & $36$ & $1, 1, 2, 2, 2, 2, 3, 3$ & $0, 0, 0, 0, \frac{1}{3}, \frac{2}{3}, \frac{1}{4}, \frac{3}{4}$ & \\
$8^{ B}_{ 0}$ & $36$ & $1, 1, 2, 2, 2, 2, 3, 3$ & $0, 0, 0, \frac{1}{9}, \frac{4}{9}, \frac{7}{9}, 0, \frac{1}{2}$ & $(B_4,2)$ \\
$8^{ B}_{ 0}$ & $36$ & $1, 1, 2, 2, 2, 2, 3, 3$ & $0, 0, 0, \frac{1}{9}, \frac{4}{9}, \frac{7}{9}, \frac{1}{4}, \frac{3}{4}$ & \\
$8^{ B}_{ 0}$ & $36$ & $1, 1, 2, 2, 2, 2, 3, 3$ & $0, 0, 0, \frac{2}{9}, \frac{5}{9}, \frac{8}{9}, 0, \frac{1}{2}$ & $(B_4,-2)$ \\
$8^{ B}_{ 0}$ & $36$ & $1, 1, 2, 2, 2, 2, 3, 3$ & $0, 0, 0, \frac{2}{9}, \frac{5}{9}, \frac{8}{9}, \frac{1}{4}, \frac{3}{4}$ & \\
 \hline 
\end{tabular} 
\end{table}

\subsection{$Z_3$, $Z_5$, and $S_3$ SPT orders for bosonic systems}

We also find that $\Rp(Z_3)$ has 3 modular extensions (see Table \ref{mextZ3}),
$\Rp(Z_5)$ has 5 modular extensions (see Table \ref{mextZ5}), and $\Rp(S_3)$
has 6 modular extensions (see Table \ref{mextS3}).  They correspond to the 3
$Z_3$-SPT states, the 5 $Z_5$-SPT states and the 6 $S_3$-SPT states
respectively.  These results agree with those from group cohomology
theory\cite{CGL1314}.

We note that for $\Rp(Z_2)$, $\Rp(Z_3)$, and $\Rp(S_3)$, their modular
extensions all correspond to distinct UMTCs.  However, for $\Rp(Z_5)$, its 5
modular extensions only correspond to 3 distinct UMTCs.  $\Rp(Z_5)$ has 5
modular extensions because $\Rp(Z_5)$ can be embedded into the same UMTC in
different ways. The different embeddings correspond to different modular
extensions.

\def\arraystretch{1.25} \setlength\tabcolsep{3pt}
\begin{table}[t] 
\caption{
The 16 modular extensions of $\sRp(Z_2^f)$. 
} 
\label{mextZ2f} 
\centering
\begin{tabular}{ |c|c|l|l|l| } 
\hline 
$N^{|\Th|}_{c}$ & $D^2$ & $d_1,d_2,\cdots$ & $s_1,s_2,\cdots$ & comment \\
 \hline 
$2^{ 0}_{0}$ & $2$ & $1, 1$ & $0, \frac{1}{2}$ & $\sRp(Z_2^f)$ \\
\hline
$4^{ B}_{ 0}$ & $4$ & $1, 1, 1, 1$ & $0, \frac{1}{2}, 0, 0$ & $Z_2$ gauge\\
\hline
$4^{ B}_{ 1}$ & $4$ & $1, 1, 1, 1$ & $0, \frac{1}{2}, \frac{1}{8}, \frac{1}{8}$ & F:$Z_4$ \\
$4^{ B}_{ 2}$ & $4$ & $1, 1, 1, 1$ & $0, \frac{1}{2}, \frac{1}{4}, \frac{1}{4}$ & F:$Z_2\times Z_2$ \\
$4^{ B}_{ 3}$ & $4$ & $1, 1, 1, 1$ & $0, \frac{1}{2}, \frac{3}{8}, \frac{3}{8}$ & F:$Z_4$ \\
$4^{ B}_{ 4}$ & $4$ & $1, 1, 1, 1$ & $0, \frac{1}{2}, \frac{1}{2}, \frac{1}{2}$ & F:$Z_2\times Z_2$ \\
$4^{ B}_{-3}$ & $4$ & $1, 1, 1, 1$ & $0, \frac{1}{2}, \frac{5}{8}, \frac{5}{8}$ & F:$Z_4$ \\
$4^{ B}_{-2}$ & $4$ & $1, 1, 1, 1$ & $0, \frac{1}{2}, \frac{3}{4}, \frac{3}{4}$ & F:$Z_2\times Z_2$ \\
$4^{ B}_{-1}$ & $4$ & $1, 1, 1, 1$ & $0, \frac{1}{2}, \frac{7}{8}, \frac{7}{8}$ & F:$Z_4$ \\
$3^{ B}_{ 1/2}$ & $4$ & $1, 1,\zeta_{2}^{1}$ & $0, \frac{1}{2}, \frac{1}{16}$ & $p+\ii p$ SC\\
$3^{ B}_{ 3/2}$ & $4$ & $1, 1,\zeta_{2}^{1}$ & $0, \frac{1}{2}, \frac{3}{16}$ & \\
$3^{ B}_{ 5/2}$ & $4$ & $1, 1,\zeta_{2}^{1}$ & $0, \frac{1}{2}, \frac{5}{16}$ & \\
$3^{ B}_{ 7/2}$ & $4$ & $1, 1,\zeta_{2}^{1}$ & $0, \frac{1}{2}, \frac{7}{16}$ & \\
$3^{ B}_{-7/2}$ & $4$ & $1, 1,\zeta_{2}^{1}$ & $0, \frac{1}{2}, \frac{9}{16}$ & \\
$3^{ B}_{-5/2}$ & $4$ & $1, 1,\zeta_{2}^{1}$ & $0, \frac{1}{2}, \frac{11}{16}$ & \\
$3^{ B}_{-3/2}$ & $4$ & $1, 1,\zeta_{2}^{1}$ & $0, \frac{1}{2}, \frac{13}{16}$ & \\
$3^{ B}_{-1/2}$ & $4$ & $1, 1,\zeta_{2}^{1}$ & $0, \frac{1}{2}, \frac{15}{16}$ & \\
 \hline 
\end{tabular} 
\end{table}

\def\arraystretch{1.25} \setlength\tabcolsep{3pt}
\begin{table*}[t] 
\caption{
The five modular extensions of $\Rp(Z_5)$.
} 
\label{mextZ5} 
\centering
\begin{tabular}{ |c|c|l|l|l| } 
\hline 
$N^{|\Th|}_{c}$ & $D^2$ & $d_1,d_2,\cdots$ & $s_1,s_2,\cdots$ & comment \\
 \hline 
$5^{\sqrt{5}}_{ 0}$ & $5$ & $1\times 5$ & $0, 0, 0, 0, 0$ & \\
\hline
$25^{ B}_{ 0}$ & $25$ & $1\times 25$ & $0, 0, 0, 0, 0, 0, 0, 0, 0, \frac{1}{5}, \frac{1}{5}, \frac{1}{5}, \frac{1}{5}, \frac{2}{5}, \frac{2}{5}, \frac{2}{5}, \frac{2}{5}, \frac{3}{5}, \frac{3}{5}, \frac{3}{5}, \frac{3}{5}, \frac{4}{5}, \frac{4}{5}, \frac{4}{5}, \frac{4}{5}$ & $5^{ B}_{ 0}\boxtimes 5^{ B}_{ 0}$\\
$25^{ B}_{ 0}$ & $25$ & $1\times 25$ & $0, 0, 0, 0, 0, \frac{1}{25}, \frac{1}{25}, \frac{4}{25}, \frac{4}{25}, \frac{6}{25}, \frac{6}{25}, \frac{9}{25}, \frac{9}{25}, \frac{11}{25}, \frac{11}{25}, \frac{14}{25}, \frac{14}{25}, \frac{16}{25}, \frac{16}{25}, \frac{19}{25}, \frac{19}{25}, \frac{21}{25}, \frac{21}{25}, \frac{24}{25}, \frac{24}{25}$ & \\
$25^{ B}_{ 0}$ & $25$ & $1\times 25$ & $0, 0, 0, 0, 0, \frac{1}{25}, \frac{1}{25}, \frac{4}{25}, \frac{4}{25}, \frac{6}{25}, \frac{6}{25}, \frac{9}{25}, \frac{9}{25}, \frac{11}{25}, \frac{11}{25}, \frac{14}{25}, \frac{14}{25}, \frac{16}{25}, \frac{16}{25}, \frac{19}{25}, \frac{19}{25}, \frac{21}{25}, \frac{21}{25}, \frac{24}{25}, \frac{24}{25}$ & \\
$25^{ B}_{ 0}$ & $25$ & $1\times 25$ & $0, 0, 0, 0, 0, \frac{2}{25}, \frac{2}{25}, \frac{3}{25}, \frac{3}{25}, \frac{7}{25}, \frac{7}{25}, \frac{8}{25}, \frac{8}{25}, \frac{12}{25}, \frac{12}{25}, \frac{13}{25}, \frac{13}{25}, \frac{17}{25}, \frac{17}{25}, \frac{18}{25}, \frac{18}{25}, \frac{22}{25}, \frac{22}{25}, \frac{23}{25}, \frac{23}{25}$ & \\
$25^{ B}_{ 0}$ & $25$ & $1\times 25$ & $0, 0, 0, 0, 0, \frac{2}{25}, \frac{2}{25}, \frac{3}{25}, \frac{3}{25}, \frac{7}{25}, \frac{7}{25}, \frac{8}{25}, \frac{8}{25}, \frac{12}{25}, \frac{12}{25}, \frac{13}{25}, \frac{13}{25}, \frac{17}{25}, \frac{17}{25}, \frac{18}{25}, \frac{18}{25}, \frac{22}{25}, \frac{22}{25}, \frac{23}{25}, \frac{23}{25}$ & \\
\hline
\end{tabular} 
\end{table*}

\def\arraystretch{1.25} \setlength\tabcolsep{3pt}
\begin{table*}[t] 
\caption{
All the 8 modular extensions of $\sRp(Z_4^f)$.
} 
\label{mextZ4f} 
\centering
\begin{tabular}{ |c|c|l|l|l| } 
\hline 
$N^{|\Th|}_{c}$ & $D^2$ & $d_1,d_2,\cdots$ & $s_1,s_2,\cdots$ & comment \\
\hline 
$4^{ 0}_{0}$ & $4$ & $1, 1, 1, 1$ & $0, 0, \frac{1}{2}, \frac{1}{2}$ & 
$\sRp(Z_4^f)$ \\
\hline
$16^{ B}_{ 0}$ & $16$ & $1, 1, 1, 1, 1, 1, 1, 1, 1, 1, 1, 1, 1, 1, 1, 1$ & $0, 0, \frac{1}{2}, \frac{1}{2}, 0, 0, 0, 0, 0, 0, \frac{1}{4}, \frac{1}{4}, \frac{1}{2}, \frac{1}{2}, \frac{3}{4}, \frac{3}{4}$ & \\
\hline 
$16^{ B}_{ 1}$ & $16$ & $1, 1, 1, 1, 1, 1, 1, 1, 1, 1, 1, 1, 1, 1, 1, 1$ & $0, 0, \frac{1}{2}, \frac{1}{2}, \frac{1}{32}, \frac{1}{32}, \frac{1}{8}, \frac{1}{8}, \frac{1}{8}, \frac{1}{8}, \frac{9}{32}, \frac{9}{32}, \frac{17}{32}, \frac{17}{32}, \frac{25}{32}, \frac{25}{32}$ & \\
$16^{ B}_{ 2}$ & $16$ & $1, 1, 1, 1, 1, 1, 1, 1, 1, 1, 1, 1, 1, 1, 1, 1$ & $0, 0, \frac{1}{2}, \frac{1}{2}, \frac{1}{16}, \frac{1}{16}, \frac{1}{4}, \frac{1}{4}, \frac{1}{4}, \frac{1}{4}, \frac{5}{16}, \frac{5}{16}, \frac{9}{16}, \frac{9}{16}, \frac{13}{16}, \frac{13}{16}$ & $8^{ B}_{ 1}\boxtimes 2^{ B}_{ 1}$\\
$16^{ B}_{ 3}$ & $16$ & $1, 1, 1, 1, 1, 1, 1, 1, 1, 1, 1, 1, 1, 1, 1, 1$ & $0, 0, \frac{1}{2}, \frac{1}{2}, \frac{3}{32}, \frac{3}{32}, \frac{11}{32}, \frac{11}{32}, \frac{3}{8}, \frac{3}{8}, \frac{3}{8}, \frac{3}{8}, \frac{19}{32}, \frac{19}{32}, \frac{27}{32}, \frac{27}{32}$ & \\
$16^{ B}_{ 4}$ & $16$ & $1, 1, 1, 1, 1, 1, 1, 1, 1, 1, 1, 1, 1, 1, 1, 1$ & $0, 0, \frac{1}{2}, \frac{1}{2}, \frac{1}{8}, \frac{1}{8}, \frac{3}{8}, \frac{3}{8}, \frac{1}{2}, \frac{1}{2}, \frac{1}{2}, \frac{1}{2}, \frac{5}{8}, \frac{5}{8}, \frac{7}{8}, \frac{7}{8}$ & $4^{ B}_{ 3}\boxtimes 4^{ B}_{ 1}$\\
$16^{ B}_{-3}$ & $16$ & $1, 1, 1, 1, 1, 1, 1, 1, 1, 1, 1, 1, 1, 1, 1, 1$ & $0, 0, \frac{1}{2}, \frac{1}{2}, \frac{5}{32}, \frac{5}{32}, \frac{13}{32}, \frac{13}{32}, \frac{5}{8}, \frac{5}{8}, \frac{5}{8}, \frac{5}{8}, \frac{21}{32}, \frac{21}{32}, \frac{29}{32}, \frac{29}{32}$ & \\
$16^{ B}_{-2}$ & $16$ & $1, 1, 1, 1, 1, 1, 1, 1, 1, 1, 1, 1, 1, 1, 1, 1$ & $0, 0, \frac{1}{2}, \frac{1}{2}, \frac{3}{16}, \frac{3}{16}, \frac{7}{16}, \frac{7}{16}, \frac{11}{16}, \frac{11}{16}, \frac{3}{4}, \frac{3}{4}, \frac{3}{4}, \frac{3}{4}, \frac{15}{16}, \frac{15}{16}$ & $8^{ B}_{-1}\boxtimes 2^{ B}_{-1}$\\
$16^{ B}_{-1}$ & $16$ & $1, 1, 1, 1, 1, 1, 1, 1, 1, 1, 1, 1, 1, 1, 1, 1$ & $0, 0, \frac{1}{2}, \frac{1}{2}, \frac{7}{32}, \frac{7}{32}, \frac{15}{32}, \frac{15}{32}, \frac{23}{32}, \frac{23}{32}, \frac{7}{8}, \frac{7}{8}, \frac{7}{8}, \frac{7}{8}, \frac{31}{32}, \frac{31}{32}$ & \\
\hline
\end{tabular}
\end{table*}

\def\arraystretch{1.25} \setlength\tabcolsep{3pt}
\begin{table*}[t] 
\caption{
The two $c=0$ modular extensions of $\sRp(Z_8^f)$ imply that
the $Z_8^f$ fermionic SPT phases are described by $\Z_2$.
All other  modular extensions only appear for integer $c$ and are all abelian (two modular extensions
for each integer $c$).
} 
\label{mextZ8f} 
\centering
\begin{tabular}{ |c|c|l|p{5.3in}|l| } 
\hline 
$N^{|\Th|}_{c}$ & $D^2$ & $d_1,d_2,\cdots$ & $s_1,s_2,\cdots$ & comment \\
\hline 
$8^{ 0}_{0}$ & $8$ & $1\times 8$ & $0, \frac{1}{2}, 0, \frac{1}{2}, 0, \frac{1}{2}, 0, \frac{1}{2}$ & \\
\hline
$64^{ B}_{ 0}$ & $64$ & $1\times 64$ & $0, \frac{1}{2}, 0, \frac{1}{2}, 0, \frac{1}{2}, 0, \frac{1}{2}, 0, 0, 0, 0, 0, 0, 0, 0, 0, 0, 0, 0, 0, 0, 0, 0, \frac{1}{8}, \frac{1}{8}, \frac{1}{8}, \frac{1}{8}, \frac{1}{4}, \frac{1}{4}, \frac{1}{4}, \frac{1}{4}, \frac{1}{4}, \frac{1}{4}, \frac{1}{4}, \frac{1}{4}, \frac{3}{8}, \frac{3}{8}, \frac{3}{8}, \frac{3}{8}$, $\frac{1}{2}, \frac{1}{2}, \frac{1}{2}, \frac{1}{2}, \frac{1}{2}, \frac{1}{2}, \frac{1}{2}, \frac{1}{2}, \frac{5}{8}, \frac{5}{8}, \frac{5}{8}, \frac{5}{8}, \frac{3}{4}, \frac{3}{4}, \frac{3}{4}, \frac{3}{4}, \frac{3}{4}, \frac{3}{4}, \frac{3}{4}, \frac{3}{4}, \frac{7}{8}, \frac{7}{8}, \frac{7}{8}, \frac{7}{8}$& \\
$64^{ B}_{ 0}$ & $64$ & $1\times 64$ & $0, \frac{1}{2}, 0, \frac{1}{2}, 0, \frac{1}{2}, 0, \frac{1}{2}, 0, 0, 0, 0, 0, 0, 0, 0, 0, 0, 0, 0, \frac{1}{16}, \frac{1}{16}, \frac{1}{16}, \frac{1}{16}, \frac{3}{16}, \frac{3}{16}, \frac{3}{16}, \frac{3}{16}, \frac{1}{4}, \frac{1}{4}, \frac{1}{4}, \frac{1}{4}, \frac{5}{16}, \frac{5}{16}, \frac{5}{16}, \frac{5}{16}$, $\frac{7}{16}, \frac{7}{16}, \frac{7}{16}, \frac{7}{16}, \frac{1}{2}, \frac{1}{2}, \frac{1}{2}, \frac{1}{2}, \frac{9}{16}, \frac{9}{16}, \frac{9}{16}, \frac{9}{16}, \frac{11}{16}, \frac{11}{16}, \frac{11}{16}, \frac{11}{16}, \frac{3}{4}, \frac{3}{4}, \frac{3}{4}, \frac{3}{4}, \frac{13}{16}, \frac{13}{16}, \frac{13}{16}, \frac{13}{16}, \frac{15}{16}, \frac{15}{16}, \frac{15}{16}, \frac{15}{16}$ & \\
\hline
\end{tabular}
\end{table*}


\subsection{Invertible fermionic topological orders}

We find that $\sRp(Z_2^f)$ has 16 modular extensions (see Table \ref{mextZ2f})
which correspond to invertible fermionic topological orders in 2+1D.  One might
thought that the invertible fermionic topological orders are classified by
$\Z_{16}$.  But in fact, the invertible fermionic topological orders are
classified by $\Z$, obtained by stacking the $c=1/2$ $p+\ii p$ states.  The
discrepancy is due to the fact that the modular extensions cannot see the $c=8$
$E_8$ states. The 16 modular extensions exactly correspond to the invertible
fermionic topological orders modulo the $E_8$ states.  

We also find that the modular extensions with $c=$ even have a
$Z_2\times Z_2$ fusion rule, while the modular extensions with $c=$ odd have a
$Z_4$ fusion rule (indicated by F:$Z_2\times Z_2$ or F:$Z_4$ in the comment
column of Table).

The $Z_2^f$-SPT states for fermions is given by the modular extensions with
zero central charge.  We see that there is only one modular extension with
central charge $c=0$.  Thus there is no non-trivial 2+1D fermionic SPT states
with $Z_2^f$ symmetry.  In general, the  modular extensions of $\sRp(G^f)$ with
zero central charge correspond to the fermionic SPT states in 2+1D with
symmetry $G^f$.

To calculate the $Z_2\times Z_2^f$ SPT orders for fermionic systems, we first
compute the modular extensions for $\sRp(Z_2\times Z_2^f)$.  We note that
$\sRp(Z_2\times Z_2^f)=\sRp(Z_2^f\times \tilde Z_2^f)$.  Thus, the  modular
extensions for $\sRp(Z_2\times Z_2^f)$ is the  modular extensions of
$\sRp(Z_2^f\times \tilde Z_2^f)$.  Some of the modular extensions
of $\sRp(Z_2^f\times \tilde Z_2^f)$ are given by the modular extensions of
$\sRp(Z_2^f)$ stacked (under $\boxtimes$) with the modular extensions of
$\sRp(\tilde Z_2^f)$.  Some of the modular extensions of $\sRp(Z_2\times
Z_2^f)$ are given by the modular extensions for $\Rp(Z_2)$ stacked (under
$\boxtimes$) with the modular extensions of $\sRp(Z_2^f)$.

The above mathematical statements correspond to the following physical picture:
Some fermionic GQLs with $Z_2\times Z_2^f$ symmetry can be viewed as bosonic
GQLs with $Z_2$ symmetry stacked with fermionic GQLs with $Z_2^f$ symmetry.
Also some fermionic GQLs with $Z_2^f\times \tilde Z_2^f$ symmetry can be viewed
as fermionic GQLs with $Z_2^f$ symmetry stacked with fermionic GQLs with
$\tilde Z_2^f$ symmetry.

Using \eqn{Hgcnd}, we find that the modular extensions for $Z_2\times Z_2^f$
symmetry must have ranks $7, 9, 10, 12, 16$.  By direct search for those ranks,
we find that the modular extensions of $\sRp(Z_2\times Z_2^f)$ are given by
Tables \ref{mextZ2Z2f}, \ref{mextZ2Z2f12}, \ref{mextZ2Z2f12b} and
\ref{mextZ2Z2f16}.  The $N=9$ modular extensions of $\sRp(Z_2\times Z_2^f)$ in
Table \ref{mextZ2Z2f} are given by the stacking of the $N=3$ modular extensions
of $\sRp(Z_2^f)$ and the $N=3$ modular extensions of $\sRp(\tilde Z_2^f)$.  The
$N=16$ modular extensions of $\sRp(Z_2\times Z_2^f)$ in Table \ref{mextZ2Z2f16}
are given by the stacking of the $N=4$ modular extensions of $\sRp(Z_2^f)$ and
the $N=4$ modular extensions of $\sRp(\tilde Z_2^f)$.  There are also 64 $N=12$
modular extensions of $\sRp(Z_2\times Z_2^f)$ given by the stacking of the
$N=4$ ($N=3$) modular extensions of $\sRp(Z_2^f)$ and the $N=3$ ($N=4$) modular
extensions of $\sRp(\tilde Z_2^f)$.

Many of the modular extensions have non-trivial topological orders since the
central charge $c$ is non-zero. There are eight modular extensions for each central
charge $c=0,1/2,1,3/2,\dots,15/2$, and in total $8\times 16=128$ modular
extensions.  Those eight with $c=0$  correspond
to the $Z_2\times Z_2^f$ fermionic SPT states.  Those are all the $Z_2\times
Z_2^f$ fermionic SPT states\cite{GL1369}.

\def\arraystretch{1.25} \setlength\tabcolsep{3pt}
\begin{table*}[t] 
\caption{
All the 32 modular extensions of $\sRp(Z_2\times Z_2^f)$ with $N = 9$.
} 
\label{mextZ2Z2f} 
\centering
\begin{tabular}{ |c|c|l|l|l| } 
\hline 
$N^{|\Th|}_{c}$ & $D^2$ & $d_1,d_2,\cdots$ & $s_1,s_2,\cdots$ & comment \\
\hline 
$4^{ 0}_{0}$ & $4$ & $1, 1, 1, 1$ & $0, 0, \frac{1}{2}, \frac{1}{2}$ & $\sRp(Z_2\times Z_2^f)$ \\
\hline
$9^{ B}_{ 0}$ & $16$ & $1, 1, 1, 1,\zeta_{2}^{1},\zeta_{2}^{1},\zeta_{2}^{1},\zeta_{2}^{1}, 2$ & $0, 0, \frac{1}{2}, \frac{1}{2}, \frac{1}{16}, \frac{7}{16}, \frac{9}{16}, \frac{15}{16}, 0$ & $3^{ B}_{-1/2}\boxtimes 3^{ B}_{ 1/2}$\\
$9^{ B}_{ 0}$ & $16$ & $1, 1, 1, 1,\zeta_{2}^{1},\zeta_{2}^{1},\zeta_{2}^{1},\zeta_{2}^{1}, 2$ & $0, 0, \frac{1}{2}, \frac{1}{2}, \frac{1}{16}, \frac{7}{16}, \frac{9}{16}, \frac{15}{16}, 0$ & $3^{ B}_{-1/2}\boxtimes 3^{ B}_{ 1/2}$\\
$9^{ B}_{ 0}$ & $16$ & $1, 1, 1, 1,\zeta_{2}^{1},\zeta_{2}^{1},\zeta_{2}^{1},\zeta_{2}^{1}, 2$ & $0, 0, \frac{1}{2}, \frac{1}{2}, \frac{3}{16}, \frac{5}{16}, \frac{11}{16}, \frac{13}{16}, 0$ & $3^{ B}_{-3/2}\boxtimes 3^{ B}_{ 3/2}$\\
$9^{ B}_{ 0}$ & $16$ & $1, 1, 1, 1,\zeta_{2}^{1},\zeta_{2}^{1},\zeta_{2}^{1},\zeta_{2}^{1}, 2$ & $0, 0, \frac{1}{2}, \frac{1}{2}, \frac{3}{16}, \frac{5}{16}, \frac{11}{16}, \frac{13}{16}, 0$ & $3^{ B}_{-3/2}\boxtimes 3^{ B}_{ 3/2}$\\
\hline 
$9^{ B}_{ 1}$ & $16$ & $1, 1, 1, 1,\zeta_{2}^{1},\zeta_{2}^{1},\zeta_{2}^{1},\zeta_{2}^{1}, 2$ & $0, 0, \frac{1}{2}, \frac{1}{2}, \frac{1}{16}, \frac{1}{16}, \frac{9}{16}, \frac{9}{16}, \frac{1}{8}$ & $3^{ B}_{ 1/2}\boxtimes 3^{ B}_{ 1/2}$\\
$9^{ B}_{ 1}$ & $16$ & $1, 1, 1, 1,\zeta_{2}^{1},\zeta_{2}^{1},\zeta_{2}^{1},\zeta_{2}^{1}, 2$ & $0, 0, \frac{1}{2}, \frac{1}{2}, \frac{3}{16}, \frac{7}{16}, \frac{11}{16}, \frac{15}{16}, \frac{1}{8}$ & $3^{ B}_{-1/2}\boxtimes 3^{ B}_{ 3/2}$\\
$9^{ B}_{ 1}$ & $16$ & $1, 1, 1, 1,\zeta_{2}^{1},\zeta_{2}^{1},\zeta_{2}^{1},\zeta_{2}^{1}, 2$ & $0, 0, \frac{1}{2}, \frac{1}{2}, \frac{3}{16}, \frac{7}{16}, \frac{11}{16}, \frac{15}{16}, \frac{1}{8}$ & $3^{ B}_{-1/2}\boxtimes 3^{ B}_{ 3/2}$\\
$9^{ B}_{ 1}$ & $16$ & $1, 1, 1, 1,\zeta_{2}^{1},\zeta_{2}^{1},\zeta_{2}^{1},\zeta_{2}^{1}, 2$ & $0, 0, \frac{1}{2}, \frac{1}{2}, \frac{5}{16}, \frac{5}{16}, \frac{13}{16}, \frac{13}{16}, \frac{1}{8}$ & $3^{ B}_{-3/2}\boxtimes 3^{ B}_{ 5/2}$\\
$9^{ B}_{ 2}$ & $16$ & $1, 1, 1, 1,\zeta_{2}^{1},\zeta_{2}^{1},\zeta_{2}^{1},\zeta_{2}^{1}, 2$ & $0, 0, \frac{1}{2}, \frac{1}{2}, \frac{1}{16}, \frac{3}{16}, \frac{9}{16}, \frac{11}{16}, \frac{1}{4}$ & $3^{ B}_{ 3/2}\boxtimes 3^{ B}_{ 1/2}$\\
$9^{ B}_{ 2}$ & $16$ & $1, 1, 1, 1,\zeta_{2}^{1},\zeta_{2}^{1},\zeta_{2}^{1},\zeta_{2}^{1}, 2$ & $0, 0, \frac{1}{2}, \frac{1}{2}, \frac{1}{16}, \frac{3}{16}, \frac{9}{16}, \frac{11}{16}, \frac{1}{4}$ & $3^{ B}_{ 3/2}\boxtimes 3^{ B}_{ 1/2}$\\
$9^{ B}_{ 2}$ & $16$ & $1, 1, 1, 1,\zeta_{2}^{1},\zeta_{2}^{1},\zeta_{2}^{1},\zeta_{2}^{1}, 2$ & $0, 0, \frac{1}{2}, \frac{1}{2}, \frac{5}{16}, \frac{7}{16}, \frac{13}{16}, \frac{15}{16}, \frac{1}{4}$ & $3^{ B}_{-1/2}\boxtimes 3^{ B}_{ 5/2}$\\
$9^{ B}_{ 2}$ & $16$ & $1, 1, 1, 1,\zeta_{2}^{1},\zeta_{2}^{1},\zeta_{2}^{1},\zeta_{2}^{1}, 2$ & $0, 0, \frac{1}{2}, \frac{1}{2}, \frac{5}{16}, \frac{7}{16}, \frac{13}{16}, \frac{15}{16}, \frac{1}{4}$ & $3^{ B}_{-1/2}\boxtimes 3^{ B}_{ 5/2}$\\
$9^{ B}_{ 3}$ & $16$ & $1, 1, 1, 1,\zeta_{2}^{1},\zeta_{2}^{1},\zeta_{2}^{1},\zeta_{2}^{1}, 2$ & $0, 0, \frac{1}{2}, \frac{1}{2}, \frac{1}{16}, \frac{5}{16}, \frac{9}{16}, \frac{13}{16}, \frac{3}{8}$ & $3^{ B}_{ 5/2}\boxtimes 3^{ B}_{ 1/2}$\\
$9^{ B}_{ 3}$ & $16$ & $1, 1, 1, 1,\zeta_{2}^{1},\zeta_{2}^{1},\zeta_{2}^{1},\zeta_{2}^{1}, 2$ & $0, 0, \frac{1}{2}, \frac{1}{2}, \frac{1}{16}, \frac{5}{16}, \frac{9}{16}, \frac{13}{16}, \frac{3}{8}$ & $3^{ B}_{ 5/2}\boxtimes 3^{ B}_{ 1/2}$\\
$9^{ B}_{ 3}$ & $16$ & $1, 1, 1, 1,\zeta_{2}^{1},\zeta_{2}^{1},\zeta_{2}^{1},\zeta_{2}^{1}, 2$ & $0, 0, \frac{1}{2}, \frac{1}{2}, \frac{3}{16}, \frac{3}{16}, \frac{11}{16}, \frac{11}{16}, \frac{3}{8}$ & $3^{ B}_{ 3/2}\boxtimes 3^{ B}_{ 3/2}$\\
$9^{ B}_{ 3}$ & $16$ & $1, 1, 1, 1,\zeta_{2}^{1},\zeta_{2}^{1},\zeta_{2}^{1},\zeta_{2}^{1}, 2$ & $0, 0, \frac{1}{2}, \frac{1}{2}, \frac{7}{16}, \frac{7}{16}, \frac{15}{16}, \frac{15}{16}, \frac{3}{8}$ & $3^{ B}_{-1/2}\boxtimes 3^{ B}_{ 7/2}$\\
$9^{ B}_{ 4}$ & $16$ & $1, 1, 1, 1,\zeta_{2}^{1},\zeta_{2}^{1},\zeta_{2}^{1},\zeta_{2}^{1}, 2$ & $0, 0, \frac{1}{2}, \frac{1}{2}, \frac{1}{16}, \frac{7}{16}, \frac{9}{16}, \frac{15}{16}, \frac{1}{2}$ & $3^{ B}_{ 7/2}\boxtimes 3^{ B}_{ 1/2}$\\
$9^{ B}_{ 4}$ & $16$ & $1, 1, 1, 1,\zeta_{2}^{1},\zeta_{2}^{1},\zeta_{2}^{1},\zeta_{2}^{1}, 2$ & $0, 0, \frac{1}{2}, \frac{1}{2}, \frac{1}{16}, \frac{7}{16}, \frac{9}{16}, \frac{15}{16}, \frac{1}{2}$ & $3^{ B}_{ 7/2}\boxtimes 3^{ B}_{ 1/2}$\\
$9^{ B}_{ 4}$ & $16$ & $1, 1, 1, 1,\zeta_{2}^{1},\zeta_{2}^{1},\zeta_{2}^{1},\zeta_{2}^{1}, 2$ & $0, 0, \frac{1}{2}, \frac{1}{2}, \frac{3}{16}, \frac{5}{16}, \frac{11}{16}, \frac{13}{16}, \frac{1}{2}$ & $3^{ B}_{ 5/2}\boxtimes 3^{ B}_{ 3/2}$\\
$9^{ B}_{ 4}$ & $16$ & $1, 1, 1, 1,\zeta_{2}^{1},\zeta_{2}^{1},\zeta_{2}^{1},\zeta_{2}^{1}, 2$ & $0, 0, \frac{1}{2}, \frac{1}{2}, \frac{3}{16}, \frac{5}{16}, \frac{11}{16}, \frac{13}{16}, \frac{1}{2}$ & $3^{ B}_{ 5/2}\boxtimes 3^{ B}_{ 3/2}$\\
$9^{ B}_{-3}$ & $16$ & $1, 1, 1, 1,\zeta_{2}^{1},\zeta_{2}^{1},\zeta_{2}^{1},\zeta_{2}^{1}, 2$ & $0, 0, \frac{1}{2}, \frac{1}{2}, \frac{1}{16}, \frac{1}{16}, \frac{9}{16}, \frac{9}{16}, \frac{5}{8}$ & $3^{ B}_{-7/2}\boxtimes 3^{ B}_{ 1/2}$\\
$9^{ B}_{-3}$ & $16$ & $1, 1, 1, 1,\zeta_{2}^{1},\zeta_{2}^{1},\zeta_{2}^{1},\zeta_{2}^{1}, 2$ & $0, 0, \frac{1}{2}, \frac{1}{2}, \frac{3}{16}, \frac{7}{16}, \frac{11}{16}, \frac{15}{16}, \frac{5}{8}$ & $3^{ B}_{ 7/2}\boxtimes 3^{ B}_{ 3/2}$\\
$9^{ B}_{-3}$ & $16$ & $1, 1, 1, 1,\zeta_{2}^{1},\zeta_{2}^{1},\zeta_{2}^{1},\zeta_{2}^{1}, 2$ & $0, 0, \frac{1}{2}, \frac{1}{2}, \frac{3}{16}, \frac{7}{16}, \frac{11}{16}, \frac{15}{16}, \frac{5}{8}$ & $3^{ B}_{ 7/2}\boxtimes 3^{ B}_{ 3/2}$\\
$9^{ B}_{-3}$ & $16$ & $1, 1, 1, 1,\zeta_{2}^{1},\zeta_{2}^{1},\zeta_{2}^{1},\zeta_{2}^{1}, 2$ & $0, 0, \frac{1}{2}, \frac{1}{2}, \frac{5}{16}, \frac{5}{16}, \frac{13}{16}, \frac{13}{16}, \frac{5}{8}$ & $3^{ B}_{ 5/2}\boxtimes 3^{ B}_{ 5/2}$\\
$9^{ B}_{-2}$ & $16$ & $1, 1, 1, 1,\zeta_{2}^{1},\zeta_{2}^{1},\zeta_{2}^{1},\zeta_{2}^{1}, 2$ & $0, 0, \frac{1}{2}, \frac{1}{2}, \frac{1}{16}, \frac{3}{16}, \frac{9}{16}, \frac{11}{16}, \frac{3}{4}$ & $3^{ B}_{-5/2}\boxtimes 3^{ B}_{ 1/2}$\\
$9^{ B}_{-2}$ & $16$ & $1, 1, 1, 1,\zeta_{2}^{1},\zeta_{2}^{1},\zeta_{2}^{1},\zeta_{2}^{1}, 2$ & $0, 0, \frac{1}{2}, \frac{1}{2}, \frac{1}{16}, \frac{3}{16}, \frac{9}{16}, \frac{11}{16}, \frac{3}{4}$ & $3^{ B}_{-5/2}\boxtimes 3^{ B}_{ 1/2}$\\
$9^{ B}_{-2}$ & $16$ & $1, 1, 1, 1,\zeta_{2}^{1},\zeta_{2}^{1},\zeta_{2}^{1},\zeta_{2}^{1}, 2$ & $0, 0, \frac{1}{2}, \frac{1}{2}, \frac{5}{16}, \frac{7}{16}, \frac{13}{16}, \frac{15}{16}, \frac{3}{4}$ & $3^{ B}_{ 7/2}\boxtimes 3^{ B}_{ 5/2}$\\
$9^{ B}_{-2}$ & $16$ & $1, 1, 1, 1,\zeta_{2}^{1},\zeta_{2}^{1},\zeta_{2}^{1},\zeta_{2}^{1}, 2$ & $0, 0, \frac{1}{2}, \frac{1}{2}, \frac{5}{16}, \frac{7}{16}, \frac{13}{16}, \frac{15}{16}, \frac{3}{4}$ & $3^{ B}_{ 7/2}\boxtimes 3^{ B}_{ 5/2}$\\
$9^{ B}_{-1}$ & $16$ & $1, 1, 1, 1,\zeta_{2}^{1},\zeta_{2}^{1},\zeta_{2}^{1},\zeta_{2}^{1}, 2$ & $0, 0, \frac{1}{2}, \frac{1}{2}, \frac{1}{16}, \frac{5}{16}, \frac{9}{16}, \frac{13}{16}, \frac{7}{8}$ & $3^{ B}_{-3/2}\boxtimes 3^{ B}_{ 1/2}$\\
$9^{ B}_{-1}$ & $16$ & $1, 1, 1, 1,\zeta_{2}^{1},\zeta_{2}^{1},\zeta_{2}^{1},\zeta_{2}^{1}, 2$ & $0, 0, \frac{1}{2}, \frac{1}{2}, \frac{1}{16}, \frac{5}{16}, \frac{9}{16}, \frac{13}{16}, \frac{7}{8}$ & $3^{ B}_{-3/2}\boxtimes 3^{ B}_{ 1/2}$\\
$9^{ B}_{-1}$ & $16$ & $1, 1, 1, 1,\zeta_{2}^{1},\zeta_{2}^{1},\zeta_{2}^{1},\zeta_{2}^{1}, 2$ & $0, 0, \frac{1}{2}, \frac{1}{2}, \frac{3}{16}, \frac{3}{16}, \frac{11}{16}, \frac{11}{16}, \frac{7}{8}$ & $3^{ B}_{-5/2}\boxtimes 3^{ B}_{ 3/2}$\\
$9^{ B}_{-1}$ & $16$ & $1, 1, 1, 1,\zeta_{2}^{1},\zeta_{2}^{1},\zeta_{2}^{1},\zeta_{2}^{1}, 2$ & $0, 0, \frac{1}{2}, \frac{1}{2}, \frac{7}{16}, \frac{7}{16}, \frac{15}{16}, \frac{15}{16}, \frac{7}{8}$ & $3^{ B}_{ 7/2}\boxtimes 3^{ B}_{ 7/2}$\\
\hline
\end{tabular}
\end{table*}

\subsection{$Z_{2n}^f$ SPT orders for fermionic systems}

We also find the modular extensions for $\sRp(Z_4^f)$, $\sRp(Z_6^f)$, and
$\sRp(Z_8^f)$ (see Tables \ref{mextZ4f}, \ref{mextZ6f}, and \ref{mextZ8f}).
Again, many of them has non-trivial topological orders since the central charge
$c$ is non-zero.  

For $Z_4^f$ group, only one of them have $c=0$.  So there is no non-trivial
$Z_4^f$ fermionic SPT states.  For $Z_6^f$ group, only three of them have
$c=0$.  So, the $Z_6^f$ fermionic SPT states are described by $\Z_3$.  For
$Z_8^f$ group, only two of them have $c=0$.  So, the $Z_8^f$ fermionic SPT
states are described by $\Z_2$.  Those results are consistent with the results
in \Ref{KTT1429,CJWang}.  However, the calculation present here is more
complete.

\def\arraystretch{1.25} \setlength\tabcolsep{3pt}
\begin{table*}[t] 
\caption{
The first 32 modular extensions of $\sRp(Z_2\times Z_2^f)$ with $N =12$.
} 
\label{mextZ2Z2f12} 
\centering
\begin{tabular}{ |c|c|l|l|l| } 
\hline 
$N^{|\Th|}_{c}$ & $D^2$ & $d_1,d_2,\cdots$ & $s_1,s_2,\cdots$ & comment \\
\hline 
$4^{ 0}_{0}$ & $4$ & $1, 1, 1, 1$ & $0, 0, \frac{1}{2}, \frac{1}{2}$ & $\sRp(Z_2\times Z_2^f)$ \\
\hline
$12^{ B}_{ 1/2}$ & $16$ & $1, 1, 1, 1, 1, 1, 1, 1,\zeta_{2}^{1},\zeta_{2}^{1},\zeta_{2}^{1},\zeta_{2}^{1}$ & $0, 0, \frac{1}{2}, \frac{1}{2}, 0, 0, \frac{1}{2}, \frac{1}{2}, \frac{1}{16}, \frac{1}{16}, \frac{1}{16}, \frac{9}{16}$ & $4^{ B}_{ 0}\boxtimes 3^{ B}_{ 1/2}$\\
$12^{ B}_{ 1/2}$ & $16$ & $1, 1, 1, 1, 1, 1, 1, 1,\zeta_{2}^{1},\zeta_{2}^{1},\zeta_{2}^{1},\zeta_{2}^{1}$ & $0, 0, \frac{1}{2}, \frac{1}{2}, 0, 0, \frac{1}{2}, \frac{1}{2}, \frac{1}{16}, \frac{1}{16}, \frac{1}{16}, \frac{9}{16}$ & $4^{ B}_{ 0}\boxtimes 3^{ B}_{ 1/2}$\\
$12^{ B}_{ 1/2}$ & $16$ & $1, 1, 1, 1, 1, 1, 1, 1,\zeta_{2}^{1},\zeta_{2}^{1},\zeta_{2}^{1},\zeta_{2}^{1}$ & $0, 0, \frac{1}{2}, \frac{1}{2}, \frac{1}{8}, \frac{1}{8}, \frac{5}{8}, \frac{5}{8}, \frac{1}{16}, \frac{1}{16}, \frac{7}{16}, \frac{15}{16}$ & $4^{ B}_{-3}\boxtimes 3^{ B}_{ 7/2}$\\
$12^{ B}_{ 1/2}$ & $16$ & $1, 1, 1, 1, 1, 1, 1, 1,\zeta_{2}^{1},\zeta_{2}^{1},\zeta_{2}^{1},\zeta_{2}^{1}$ & $0, 0, \frac{1}{2}, \frac{1}{2}, \frac{1}{8}, \frac{1}{8}, \frac{5}{8}, \frac{5}{8}, \frac{1}{16}, \frac{1}{16}, \frac{7}{16}, \frac{15}{16}$ & $4^{ B}_{-3}\boxtimes 3^{ B}_{ 7/2}$\\
$12^{ B}_{ 1/2}$ & $16$ & $1, 1, 1, 1, 1, 1, 1, 1,\zeta_{2}^{1},\zeta_{2}^{1},\zeta_{2}^{1},\zeta_{2}^{1}$ & $0, 0, \frac{1}{2}, \frac{1}{2}, \frac{1}{4}, \frac{1}{4}, \frac{3}{4}, \frac{3}{4}, \frac{1}{16}, \frac{1}{16}, \frac{5}{16}, \frac{13}{16}$ & $6^{ B}_{-1/2}\boxtimes 2^{ B}_{ 1}$\\
$12^{ B}_{ 1/2}$ & $16$ & $1, 1, 1, 1, 1, 1, 1, 1,\zeta_{2}^{1},\zeta_{2}^{1},\zeta_{2}^{1},\zeta_{2}^{1}$ & $0, 0, \frac{1}{2}, \frac{1}{2}, \frac{1}{4}, \frac{1}{4}, \frac{3}{4}, \frac{3}{4}, \frac{1}{16}, \frac{1}{16}, \frac{5}{16}, \frac{13}{16}$ & $6^{ B}_{-1/2}\boxtimes 2^{ B}_{ 1}$\\
$12^{ B}_{ 1/2}$ & $16$ & $1, 1, 1, 1, 1, 1, 1, 1,\zeta_{2}^{1},\zeta_{2}^{1},\zeta_{2}^{1},\zeta_{2}^{1}$ & $0, 0, \frac{1}{2}, \frac{1}{2}, \frac{3}{8}, \frac{3}{8}, \frac{7}{8}, \frac{7}{8}, \frac{1}{16}, \frac{1}{16}, \frac{3}{16}, \frac{11}{16}$ & $4^{ B}_{-1}\boxtimes 3^{ B}_{ 3/2}$\\
$12^{ B}_{ 1/2}$ & $16$ & $1, 1, 1, 1, 1, 1, 1, 1,\zeta_{2}^{1},\zeta_{2}^{1},\zeta_{2}^{1},\zeta_{2}^{1}$ & $0, 0, \frac{1}{2}, \frac{1}{2}, \frac{3}{8}, \frac{3}{8}, \frac{7}{8}, \frac{7}{8}, \frac{1}{16}, \frac{1}{16}, \frac{3}{16}, \frac{11}{16}$ & $4^{ B}_{-1}\boxtimes 3^{ B}_{ 3/2}$\\
$12^{ B}_{ 3/2}$ & $16$ & $1, 1, 1, 1, 1, 1, 1, 1,\zeta_{2}^{1},\zeta_{2}^{1},\zeta_{2}^{1},\zeta_{2}^{1}$ & $0, 0, \frac{1}{2}, \frac{1}{2}, 0, 0, \frac{1}{2}, \frac{1}{2}, \frac{3}{16}, \frac{3}{16}, \frac{3}{16}, \frac{11}{16}$ & $4^{ B}_{ 0}\boxtimes 3^{ B}_{ 3/2}$\\
$12^{ B}_{ 3/2}$ & $16$ & $1, 1, 1, 1, 1, 1, 1, 1,\zeta_{2}^{1},\zeta_{2}^{1},\zeta_{2}^{1},\zeta_{2}^{1}$ & $0, 0, \frac{1}{2}, \frac{1}{2}, 0, 0, \frac{1}{2}, \frac{1}{2}, \frac{3}{16}, \frac{3}{16}, \frac{3}{16}, \frac{11}{16}$ & $4^{ B}_{ 0}\boxtimes 3^{ B}_{ 3/2}$\\
$12^{ B}_{ 3/2}$ & $16$ & $1, 1, 1, 1, 1, 1, 1, 1,\zeta_{2}^{1},\zeta_{2}^{1},\zeta_{2}^{1},\zeta_{2}^{1}$ & $0, 0, \frac{1}{2}, \frac{1}{2}, \frac{1}{8}, \frac{1}{8}, \frac{5}{8}, \frac{5}{8}, \frac{1}{16}, \frac{3}{16}, \frac{3}{16}, \frac{9}{16}$ & $4^{ B}_{ 1}\boxtimes 3^{ B}_{ 1/2}$\\
$12^{ B}_{ 3/2}$ & $16$ & $1, 1, 1, 1, 1, 1, 1, 1,\zeta_{2}^{1},\zeta_{2}^{1},\zeta_{2}^{1},\zeta_{2}^{1}$ & $0, 0, \frac{1}{2}, \frac{1}{2}, \frac{1}{8}, \frac{1}{8}, \frac{5}{8}, \frac{5}{8}, \frac{1}{16}, \frac{3}{16}, \frac{3}{16}, \frac{9}{16}$ & $4^{ B}_{ 1}\boxtimes 3^{ B}_{ 1/2}$\\
$12^{ B}_{ 3/2}$ & $16$ & $1, 1, 1, 1, 1, 1, 1, 1,\zeta_{2}^{1},\zeta_{2}^{1},\zeta_{2}^{1},\zeta_{2}^{1}$ & $0, 0, \frac{1}{2}, \frac{1}{2}, \frac{1}{4}, \frac{1}{4}, \frac{3}{4}, \frac{3}{4}, \frac{3}{16}, \frac{3}{16}, \frac{7}{16}, \frac{15}{16}$ & $6^{ B}_{ 1/2}\boxtimes 2^{ B}_{ 1}$\\
$12^{ B}_{ 3/2}$ & $16$ & $1, 1, 1, 1, 1, 1, 1, 1,\zeta_{2}^{1},\zeta_{2}^{1},\zeta_{2}^{1},\zeta_{2}^{1}$ & $0, 0, \frac{1}{2}, \frac{1}{2}, \frac{1}{4}, \frac{1}{4}, \frac{3}{4}, \frac{3}{4}, \frac{3}{16}, \frac{3}{16}, \frac{7}{16}, \frac{15}{16}$ & $6^{ B}_{ 1/2}\boxtimes 2^{ B}_{ 1}$\\
$12^{ B}_{ 3/2}$ & $16$ & $1, 1, 1, 1, 1, 1, 1, 1,\zeta_{2}^{1},\zeta_{2}^{1},\zeta_{2}^{1},\zeta_{2}^{1}$ & $0, 0, \frac{1}{2}, \frac{1}{2}, \frac{3}{8}, \frac{3}{8}, \frac{7}{8}, \frac{7}{8}, \frac{3}{16}, \frac{3}{16}, \frac{5}{16}, \frac{13}{16}$ & $4^{ B}_{-1}\boxtimes 3^{ B}_{ 5/2}$\\
$12^{ B}_{ 3/2}$ & $16$ & $1, 1, 1, 1, 1, 1, 1, 1,\zeta_{2}^{1},\zeta_{2}^{1},\zeta_{2}^{1},\zeta_{2}^{1}$ & $0, 0, \frac{1}{2}, \frac{1}{2}, \frac{3}{8}, \frac{3}{8}, \frac{7}{8}, \frac{7}{8}, \frac{3}{16}, \frac{3}{16}, \frac{5}{16}, \frac{13}{16}$ & $4^{ B}_{-1}\boxtimes 3^{ B}_{ 5/2}$\\
$12^{ B}_{ 5/2}$ & $16$ & $1, 1, 1, 1, 1, 1, 1, 1,\zeta_{2}^{1},\zeta_{2}^{1},\zeta_{2}^{1},\zeta_{2}^{1}$ & $0, 0, \frac{1}{2}, \frac{1}{2}, 0, 0, \frac{1}{2}, \frac{1}{2}, \frac{5}{16}, \frac{5}{16}, \frac{5}{16}, \frac{13}{16}$ & $4^{ B}_{ 0}\boxtimes 3^{ B}_{ 5/2}$\\
$12^{ B}_{ 5/2}$ & $16$ & $1, 1, 1, 1, 1, 1, 1, 1,\zeta_{2}^{1},\zeta_{2}^{1},\zeta_{2}^{1},\zeta_{2}^{1}$ & $0, 0, \frac{1}{2}, \frac{1}{2}, 0, 0, \frac{1}{2}, \frac{1}{2}, \frac{5}{16}, \frac{5}{16}, \frac{5}{16}, \frac{13}{16}$ & $4^{ B}_{ 0}\boxtimes 3^{ B}_{ 5/2}$\\
$12^{ B}_{ 5/2}$ & $16$ & $1, 1, 1, 1, 1, 1, 1, 1,\zeta_{2}^{1},\zeta_{2}^{1},\zeta_{2}^{1},\zeta_{2}^{1}$ & $0, 0, \frac{1}{2}, \frac{1}{2}, \frac{1}{8}, \frac{1}{8}, \frac{5}{8}, \frac{5}{8}, \frac{3}{16}, \frac{5}{16}, \frac{5}{16}, \frac{11}{16}$ & $4^{ B}_{ 1}\boxtimes 3^{ B}_{ 3/2}$\\
$12^{ B}_{ 5/2}$ & $16$ & $1, 1, 1, 1, 1, 1, 1, 1,\zeta_{2}^{1},\zeta_{2}^{1},\zeta_{2}^{1},\zeta_{2}^{1}$ & $0, 0, \frac{1}{2}, \frac{1}{2}, \frac{1}{8}, \frac{1}{8}, \frac{5}{8}, \frac{5}{8}, \frac{3}{16}, \frac{5}{16}, \frac{5}{16}, \frac{11}{16}$ & $4^{ B}_{ 1}\boxtimes 3^{ B}_{ 3/2}$\\
$12^{ B}_{ 5/2}$ & $16$ & $1, 1, 1, 1, 1, 1, 1, 1,\zeta_{2}^{1},\zeta_{2}^{1},\zeta_{2}^{1},\zeta_{2}^{1}$ & $0, 0, \frac{1}{2}, \frac{1}{2}, \frac{1}{4}, \frac{1}{4}, \frac{3}{4}, \frac{3}{4}, \frac{1}{16}, \frac{5}{16}, \frac{5}{16}, \frac{9}{16}$ & $6^{ B}_{ 3/2}\boxtimes 2^{ B}_{ 1}$\\
$12^{ B}_{ 5/2}$ & $16$ & $1, 1, 1, 1, 1, 1, 1, 1,\zeta_{2}^{1},\zeta_{2}^{1},\zeta_{2}^{1},\zeta_{2}^{1}$ & $0, 0, \frac{1}{2}, \frac{1}{2}, \frac{1}{4}, \frac{1}{4}, \frac{3}{4}, \frac{3}{4}, \frac{1}{16}, \frac{5}{16}, \frac{5}{16}, \frac{9}{16}$ & $6^{ B}_{ 3/2}\boxtimes 2^{ B}_{ 1}$\\
$12^{ B}_{ 5/2}$ & $16$ & $1, 1, 1, 1, 1, 1, 1, 1,\zeta_{2}^{1},\zeta_{2}^{1},\zeta_{2}^{1},\zeta_{2}^{1}$ & $0, 0, \frac{1}{2}, \frac{1}{2}, \frac{3}{8}, \frac{3}{8}, \frac{7}{8}, \frac{7}{8}, \frac{5}{16}, \frac{5}{16}, \frac{7}{16}, \frac{15}{16}$ & $4^{ B}_{-1}\boxtimes 3^{ B}_{ 7/2}$\\
$12^{ B}_{ 5/2}$ & $16$ & $1, 1, 1, 1, 1, 1, 1, 1,\zeta_{2}^{1},\zeta_{2}^{1},\zeta_{2}^{1},\zeta_{2}^{1}$ & $0, 0, \frac{1}{2}, \frac{1}{2}, \frac{3}{8}, \frac{3}{8}, \frac{7}{8}, \frac{7}{8}, \frac{5}{16}, \frac{5}{16}, \frac{7}{16}, \frac{15}{16}$ & $4^{ B}_{-1}\boxtimes 3^{ B}_{ 7/2}$\\
$12^{ B}_{ 7/2}$ & $16$ & $1, 1, 1, 1, 1, 1, 1, 1,\zeta_{2}^{1},\zeta_{2}^{1},\zeta_{2}^{1},\zeta_{2}^{1}$ & $0, 0, \frac{1}{2}, \frac{1}{2}, 0, 0, \frac{1}{2}, \frac{1}{2}, \frac{7}{16}, \frac{7}{16}, \frac{7}{16}, \frac{15}{16}$ & $4^{ B}_{ 0}\boxtimes 3^{ B}_{ 7/2}$\\
$12^{ B}_{ 7/2}$ & $16$ & $1, 1, 1, 1, 1, 1, 1, 1,\zeta_{2}^{1},\zeta_{2}^{1},\zeta_{2}^{1},\zeta_{2}^{1}$ & $0, 0, \frac{1}{2}, \frac{1}{2}, 0, 0, \frac{1}{2}, \frac{1}{2}, \frac{7}{16}, \frac{7}{16}, \frac{7}{16}, \frac{15}{16}$ & $4^{ B}_{ 0}\boxtimes 3^{ B}_{ 7/2}$\\
$12^{ B}_{ 7/2}$ & $16$ & $1, 1, 1, 1, 1, 1, 1, 1,\zeta_{2}^{1},\zeta_{2}^{1},\zeta_{2}^{1},\zeta_{2}^{1}$ & $0, 0, \frac{1}{2}, \frac{1}{2}, \frac{1}{8}, \frac{1}{8}, \frac{5}{8}, \frac{5}{8}, \frac{5}{16}, \frac{7}{16}, \frac{7}{16}, \frac{13}{16}$ & $4^{ B}_{ 1}\boxtimes 3^{ B}_{ 5/2}$\\
$12^{ B}_{ 7/2}$ & $16$ & $1, 1, 1, 1, 1, 1, 1, 1,\zeta_{2}^{1},\zeta_{2}^{1},\zeta_{2}^{1},\zeta_{2}^{1}$ & $0, 0, \frac{1}{2}, \frac{1}{2}, \frac{1}{8}, \frac{1}{8}, \frac{5}{8}, \frac{5}{8}, \frac{5}{16}, \frac{7}{16}, \frac{7}{16}, \frac{13}{16}$ & $4^{ B}_{ 1}\boxtimes 3^{ B}_{ 5/2}$\\
$12^{ B}_{ 7/2}$ & $16$ & $1, 1, 1, 1, 1, 1, 1, 1,\zeta_{2}^{1},\zeta_{2}^{1},\zeta_{2}^{1},\zeta_{2}^{1}$ & $0, 0, \frac{1}{2}, \frac{1}{2}, \frac{1}{4}, \frac{1}{4}, \frac{3}{4}, \frac{3}{4}, \frac{3}{16}, \frac{7}{16}, \frac{7}{16}, \frac{11}{16}$ & $6^{ B}_{ 5/2}\boxtimes 2^{ B}_{ 1}$\\
$12^{ B}_{ 7/2}$ & $16$ & $1, 1, 1, 1, 1, 1, 1, 1,\zeta_{2}^{1},\zeta_{2}^{1},\zeta_{2}^{1},\zeta_{2}^{1}$ & $0, 0, \frac{1}{2}, \frac{1}{2}, \frac{1}{4}, \frac{1}{4}, \frac{3}{4}, \frac{3}{4}, \frac{3}{16}, \frac{7}{16}, \frac{7}{16}, \frac{11}{16}$ & $6^{ B}_{ 5/2}\boxtimes 2^{ B}_{ 1}$\\
$12^{ B}_{ 7/2}$ & $16$ & $1, 1, 1, 1, 1, 1, 1, 1,\zeta_{2}^{1},\zeta_{2}^{1},\zeta_{2}^{1},\zeta_{2}^{1}$ & $0, 0, \frac{1}{2}, \frac{1}{2}, \frac{3}{8}, \frac{3}{8}, \frac{7}{8}, \frac{7}{8}, \frac{1}{16}, \frac{7}{16}, \frac{7}{16}, \frac{9}{16}$ & $4^{ B}_{ 3}\boxtimes 3^{ B}_{ 1/2}$\\
$12^{ B}_{ 7/2}$ & $16$ & $1, 1, 1, 1, 1, 1, 1, 1,\zeta_{2}^{1},\zeta_{2}^{1},\zeta_{2}^{1},\zeta_{2}^{1}$ & $0, 0, \frac{1}{2}, \frac{1}{2}, \frac{3}{8}, \frac{3}{8}, \frac{7}{8}, \frac{7}{8}, \frac{1}{16}, \frac{7}{16}, \frac{7}{16}, \frac{9}{16}$ & $4^{ B}_{ 3}\boxtimes 3^{ B}_{ 1/2}$\\
\hline
\end{tabular}
\end{table*}

\section{Summary}
\label{sum}

GQLs contain both topologically ordered states and SPT states.  In this paper,
we present a theory that classify GQLs in 2+1D for bosonic/fermionic systems
with symmetry.

We propose that the possible non-abelian statistics (or sets of bulk
quasiparticles excitations) in 2+1D GQLs are classified by 
$\mce{\cE}$, where $\cE=\Rp(G)$ or $\sRp(G^f)$ describing the symmetry in
bosonic or fermionic systems.  However, $\mce{\cE}$'s fail to
classify GQLs, since different GQL phases can have identical non-abelian
statistics, which correspond to identical $\mce{\cE}$.  

To fix this problem, we introduce the notion of modular extensions for a
$\mce{\cE}$.  We propose to use the triple $(\cC,\cM,c)$ to
classify 2+1D GQLs with symmetry $G$ (for boson) or $G^f$ (for fermion).  Here
$\cC$ is a $\mce{\cE}$ with $\cE=\Rp(G)$ or $\sRp(G^f)$,  $\cM$ is a modular
extension of $\cC$ and $c$ is the chiral central charge of the edge state.  We
show that the modular extensions of a $\mce{\cE}$ has a one-to-one
correspondence  with the modular extensions of  $\cE$.  So the number of the
modular extensions is solely determined by the symmetry $\cE$.  Also, the $c=0$
modular extensions of a $\cE$ ($\cE=\Rp(G)$ or $\sRp(G^f)$) classify the 2+1D
SPT states for bosons or fermions with symmetry $G$ or $G^f$.

Although the above result has a nice mathematical structure, it is hard to
implement numerically to produce a table of GQLs.  To fix this problem, we
propose a different description of 2+1D GQLs.  We propose to use the data $(
\tilde N^{ab}_c,\tilde s_a; N^{ij}_k,s_i; \cN^{IJ}_K,\cS_I;c)$,  up to some
permutations of the indices, to describe 2+1D GQLs with symmetry $G$
(for boson) or $G^f$ (for fermion), with a restriction that the symmetry group
$G$ can be fully characterized by the fusion ring of its irreducible
representations (for example, for simple groups or abelian groups).  Here the
data $(\tilde N^{ab}_c,\tilde s_a)$ describe the symmetry and the data
$(N^{ij}_k,s_i)$ describes fusion and the spins of the bulk particles in the
GQL.  The modular extensions are obtained by ``gauging'' the symmetry $G$ or
$G^f$.  The data $(\cN^{IJ}_K,\cS_I)$  describes fusion and the spins of the
bulk particles in the ``gauged'' theory.  Last, $c$ is the chiral central
charge of the edge state.

In this paper (see Appendix \ref{cnds}) and in \Ref{W150605768}, we list the
necessary and the sufficient conditions on the data $(\tilde N^{ab}_c,\tilde
s_a; N^{ij}_k,s_i; \cN^{IJ}_K,\cS_I;c)$, which allow us to obtain a
list of GQLs.  However, in this paper, we did not give the list
of GQLs directly.  We first give a list of $(\tilde N^{ab}_c,\tilde s_a;
N^{ij}_k,s_i)$, which is an imperfect list of $\mce{\cE}$'s.  We then compute
the modular extensions $(\cN^{IJ}_K,\cS_I;c)$ for each entry $(\tilde
N^{ab}_c,\tilde s_a; N^{ij}_k,s_i)$, which allows us to obtain a perfect list
of GQLs (for certain symmetry groups).  As a special case, we calculated the
bosonic/fermionic SPT states for some groups in 2+1D.

In \Ref{LW160205936}, we will give a more mathematical description of our
theory.  Certainly we hope to generalize the above framework to higher
dimensions.  We also hope to develop more efficient numerical codes to obtain
bigger tables of GQLs.

\bigskip

\noindent {\bf Acknowledgement}: 
We like to thank Pavel Etingof, Dmitri Nikshych, Chenjie Wang, and Zhenghan
Wang for many helpful discussions.  This research is supported by NSF Grant No.
DMR-1506475, and NSFC 11274192. It is also supported by the John Templeton
Foundation No.  39901.  Research at Perimeter Institute is supported by the
Government of Canada through Industry Canada and by the Province of Ontario
through the Ministry of Research.  LK is supported by the Center of
Mathematical Sciences and Applications at Harvard University.

\def\arraystretch{1.25} \setlength\tabcolsep{3pt}
\begin{table*}[t] 
\caption{
The second 32 modular extensions of $\sRp(Z_2\times Z_2^f)$ with $N =12$.
} 
\label{mextZ2Z2f12b} 
\centering
\begin{tabular}{ |c|c|l|l|l| } 
\hline 
$N^{|\Th|}_{c}$ & $D^2$ & $d_1,d_2,\cdots$ & $s_1,s_2,\cdots$ & comment \\
\hline 
$4^{ 0}_{0}$ & $4$ & $1, 1, 1, 1$ & $0, 0, \frac{1}{2}, \frac{1}{2}$ & $\sRp(Z_2\times Z_2^f)$ \\
\hline
$12^{ B}_{-7/2}$ & $16$ & $1, 1, 1, 1, 1, 1, 1, 1,\zeta_{2}^{1},\zeta_{2}^{1},\zeta_{2}^{1},\zeta_{2}^{1}$ & $0, 0, \frac{1}{2}, \frac{1}{2}, 0, 0, \frac{1}{2}, \frac{1}{2}, \frac{1}{16}, \frac{9}{16}, \frac{9}{16}, \frac{9}{16}$ & $4^{ B}_{ 4}\boxtimes 3^{ B}_{ 1/2}$\\
$12^{ B}_{-7/2}$ & $16$ & $1, 1, 1, 1, 1, 1, 1, 1,\zeta_{2}^{1},\zeta_{2}^{1},\zeta_{2}^{1},\zeta_{2}^{1}$ & $0, 0, \frac{1}{2}, \frac{1}{2}, 0, 0, \frac{1}{2}, \frac{1}{2}, \frac{1}{16}, \frac{9}{16}, \frac{9}{16}, \frac{9}{16}$ & $4^{ B}_{ 4}\boxtimes 3^{ B}_{ 1/2}$\\
$12^{ B}_{-7/2}$ & $16$ & $1, 1, 1, 1, 1, 1, 1, 1,\zeta_{2}^{1},\zeta_{2}^{1},\zeta_{2}^{1},\zeta_{2}^{1}$ & $0, 0, \frac{1}{2}, \frac{1}{2}, \frac{1}{8}, \frac{1}{8}, \frac{5}{8}, \frac{5}{8}, \frac{7}{16}, \frac{9}{16}, \frac{9}{16}, \frac{15}{16}$ & $4^{ B}_{ 1}\boxtimes 3^{ B}_{ 7/2}$\\
$12^{ B}_{-7/2}$ & $16$ & $1, 1, 1, 1, 1, 1, 1, 1,\zeta_{2}^{1},\zeta_{2}^{1},\zeta_{2}^{1},\zeta_{2}^{1}$ & $0, 0, \frac{1}{2}, \frac{1}{2}, \frac{1}{8}, \frac{1}{8}, \frac{5}{8}, \frac{5}{8}, \frac{7}{16}, \frac{9}{16}, \frac{9}{16}, \frac{15}{16}$ & $4^{ B}_{ 1}\boxtimes 3^{ B}_{ 7/2}$\\
$12^{ B}_{-7/2}$ & $16$ & $1, 1, 1, 1, 1, 1, 1, 1,\zeta_{2}^{1},\zeta_{2}^{1},\zeta_{2}^{1},\zeta_{2}^{1}$ & $0, 0, \frac{1}{2}, \frac{1}{2}, \frac{1}{4}, \frac{1}{4}, \frac{3}{4}, \frac{3}{4}, \frac{5}{16}, \frac{9}{16}, \frac{9}{16}, \frac{13}{16}$ & $6^{ B}_{ 7/2}\boxtimes 2^{ B}_{ 1}$\\
$12^{ B}_{-7/2}$ & $16$ & $1, 1, 1, 1, 1, 1, 1, 1,\zeta_{2}^{1},\zeta_{2}^{1},\zeta_{2}^{1},\zeta_{2}^{1}$ & $0, 0, \frac{1}{2}, \frac{1}{2}, \frac{1}{4}, \frac{1}{4}, \frac{3}{4}, \frac{3}{4}, \frac{5}{16}, \frac{9}{16}, \frac{9}{16}, \frac{13}{16}$ & $6^{ B}_{ 7/2}\boxtimes 2^{ B}_{ 1}$\\
$12^{ B}_{-7/2}$ & $16$ & $1, 1, 1, 1, 1, 1, 1, 1,\zeta_{2}^{1},\zeta_{2}^{1},\zeta_{2}^{1},\zeta_{2}^{1}$ & $0, 0, \frac{1}{2}, \frac{1}{2}, \frac{3}{8}, \frac{3}{8}, \frac{7}{8}, \frac{7}{8}, \frac{3}{16}, \frac{9}{16}, \frac{9}{16}, \frac{11}{16}$ & $4^{ B}_{ 3}\boxtimes 3^{ B}_{ 3/2}$\\
$12^{ B}_{-7/2}$ & $16$ & $1, 1, 1, 1, 1, 1, 1, 1,\zeta_{2}^{1},\zeta_{2}^{1},\zeta_{2}^{1},\zeta_{2}^{1}$ & $0, 0, \frac{1}{2}, \frac{1}{2}, \frac{3}{8}, \frac{3}{8}, \frac{7}{8}, \frac{7}{8}, \frac{3}{16}, \frac{9}{16}, \frac{9}{16}, \frac{11}{16}$ & $4^{ B}_{ 3}\boxtimes 3^{ B}_{ 3/2}$\\
$12^{ B}_{-5/2}$ & $16$ & $1, 1, 1, 1, 1, 1, 1, 1,\zeta_{2}^{1},\zeta_{2}^{1},\zeta_{2}^{1},\zeta_{2}^{1}$ & $0, 0, \frac{1}{2}, \frac{1}{2}, 0, 0, \frac{1}{2}, \frac{1}{2}, \frac{3}{16}, \frac{11}{16}, \frac{11}{16}, \frac{11}{16}$ & $4^{ B}_{ 4}\boxtimes 3^{ B}_{ 3/2}$\\
$12^{ B}_{-5/2}$ & $16$ & $1, 1, 1, 1, 1, 1, 1, 1,\zeta_{2}^{1},\zeta_{2}^{1},\zeta_{2}^{1},\zeta_{2}^{1}$ & $0, 0, \frac{1}{2}, \frac{1}{2}, 0, 0, \frac{1}{2}, \frac{1}{2}, \frac{3}{16}, \frac{11}{16}, \frac{11}{16}, \frac{11}{16}$ & $4^{ B}_{ 4}\boxtimes 3^{ B}_{ 3/2}$\\
$12^{ B}_{-5/2}$ & $16$ & $1, 1, 1, 1, 1, 1, 1, 1,\zeta_{2}^{1},\zeta_{2}^{1},\zeta_{2}^{1},\zeta_{2}^{1}$ & $0, 0, \frac{1}{2}, \frac{1}{2}, \frac{1}{8}, \frac{1}{8}, \frac{5}{8}, \frac{5}{8}, \frac{1}{16}, \frac{9}{16}, \frac{11}{16}, \frac{11}{16}$ & $4^{ B}_{-3}\boxtimes 3^{ B}_{ 1/2}$\\
$12^{ B}_{-5/2}$ & $16$ & $1, 1, 1, 1, 1, 1, 1, 1,\zeta_{2}^{1},\zeta_{2}^{1},\zeta_{2}^{1},\zeta_{2}^{1}$ & $0, 0, \frac{1}{2}, \frac{1}{2}, \frac{1}{8}, \frac{1}{8}, \frac{5}{8}, \frac{5}{8}, \frac{1}{16}, \frac{9}{16}, \frac{11}{16}, \frac{11}{16}$ & $4^{ B}_{-3}\boxtimes 3^{ B}_{ 1/2}$\\
$12^{ B}_{-5/2}$ & $16$ & $1, 1, 1, 1, 1, 1, 1, 1,\zeta_{2}^{1},\zeta_{2}^{1},\zeta_{2}^{1},\zeta_{2}^{1}$ & $0, 0, \frac{1}{2}, \frac{1}{2}, \frac{1}{4}, \frac{1}{4}, \frac{3}{4}, \frac{3}{4}, \frac{7}{16}, \frac{11}{16}, \frac{11}{16}, \frac{15}{16}$ & $6^{ B}_{-7/2}\boxtimes 2^{ B}_{ 1}$\\
$12^{ B}_{-5/2}$ & $16$ & $1, 1, 1, 1, 1, 1, 1, 1,\zeta_{2}^{1},\zeta_{2}^{1},\zeta_{2}^{1},\zeta_{2}^{1}$ & $0, 0, \frac{1}{2}, \frac{1}{2}, \frac{1}{4}, \frac{1}{4}, \frac{3}{4}, \frac{3}{4}, \frac{7}{16}, \frac{11}{16}, \frac{11}{16}, \frac{15}{16}$ & $6^{ B}_{-7/2}\boxtimes 2^{ B}_{ 1}$\\
$12^{ B}_{-5/2}$ & $16$ & $1, 1, 1, 1, 1, 1, 1, 1,\zeta_{2}^{1},\zeta_{2}^{1},\zeta_{2}^{1},\zeta_{2}^{1}$ & $0, 0, \frac{1}{2}, \frac{1}{2}, \frac{3}{8}, \frac{3}{8}, \frac{7}{8}, \frac{7}{8}, \frac{5}{16}, \frac{11}{16}, \frac{11}{16}, \frac{13}{16}$ & $4^{ B}_{ 3}\boxtimes 3^{ B}_{ 5/2}$\\
$12^{ B}_{-5/2}$ & $16$ & $1, 1, 1, 1, 1, 1, 1, 1,\zeta_{2}^{1},\zeta_{2}^{1},\zeta_{2}^{1},\zeta_{2}^{1}$ & $0, 0, \frac{1}{2}, \frac{1}{2}, \frac{3}{8}, \frac{3}{8}, \frac{7}{8}, \frac{7}{8}, \frac{5}{16}, \frac{11}{16}, \frac{11}{16}, \frac{13}{16}$ & $4^{ B}_{ 3}\boxtimes 3^{ B}_{ 5/2}$\\
$12^{ B}_{-3/2}$ & $16$ & $1, 1, 1, 1, 1, 1, 1, 1,\zeta_{2}^{1},\zeta_{2}^{1},\zeta_{2}^{1},\zeta_{2}^{1}$ & $0, 0, \frac{1}{2}, \frac{1}{2}, 0, 0, \frac{1}{2}, \frac{1}{2}, \frac{5}{16}, \frac{13}{16}, \frac{13}{16}, \frac{13}{16}$ & $4^{ B}_{ 4}\boxtimes 3^{ B}_{ 5/2}$\\
$12^{ B}_{-3/2}$ & $16$ & $1, 1, 1, 1, 1, 1, 1, 1,\zeta_{2}^{1},\zeta_{2}^{1},\zeta_{2}^{1},\zeta_{2}^{1}$ & $0, 0, \frac{1}{2}, \frac{1}{2}, 0, 0, \frac{1}{2}, \frac{1}{2}, \frac{5}{16}, \frac{13}{16}, \frac{13}{16}, \frac{13}{16}$ & $4^{ B}_{ 4}\boxtimes 3^{ B}_{ 5/2}$\\
$12^{ B}_{-3/2}$ & $16$ & $1, 1, 1, 1, 1, 1, 1, 1,\zeta_{2}^{1},\zeta_{2}^{1},\zeta_{2}^{1},\zeta_{2}^{1}$ & $0, 0, \frac{1}{2}, \frac{1}{2}, \frac{1}{8}, \frac{1}{8}, \frac{5}{8}, \frac{5}{8}, \frac{3}{16}, \frac{11}{16}, \frac{13}{16}, \frac{13}{16}$ & $4^{ B}_{-3}\boxtimes 3^{ B}_{ 3/2}$\\
$12^{ B}_{-3/2}$ & $16$ & $1, 1, 1, 1, 1, 1, 1, 1,\zeta_{2}^{1},\zeta_{2}^{1},\zeta_{2}^{1},\zeta_{2}^{1}$ & $0, 0, \frac{1}{2}, \frac{1}{2}, \frac{1}{8}, \frac{1}{8}, \frac{5}{8}, \frac{5}{8}, \frac{3}{16}, \frac{11}{16}, \frac{13}{16}, \frac{13}{16}$ & $4^{ B}_{-3}\boxtimes 3^{ B}_{ 3/2}$\\
$12^{ B}_{-3/2}$ & $16$ & $1, 1, 1, 1, 1, 1, 1, 1,\zeta_{2}^{1},\zeta_{2}^{1},\zeta_{2}^{1},\zeta_{2}^{1}$ & $0, 0, \frac{1}{2}, \frac{1}{2}, \frac{1}{4}, \frac{1}{4}, \frac{3}{4}, \frac{3}{4}, \frac{1}{16}, \frac{9}{16}, \frac{13}{16}, \frac{13}{16}$ & $6^{ B}_{-5/2}\boxtimes 2^{ B}_{ 1}$\\
$12^{ B}_{-3/2}$ & $16$ & $1, 1, 1, 1, 1, 1, 1, 1,\zeta_{2}^{1},\zeta_{2}^{1},\zeta_{2}^{1},\zeta_{2}^{1}$ & $0, 0, \frac{1}{2}, \frac{1}{2}, \frac{1}{4}, \frac{1}{4}, \frac{3}{4}, \frac{3}{4}, \frac{1}{16}, \frac{9}{16}, \frac{13}{16}, \frac{13}{16}$ & $6^{ B}_{-5/2}\boxtimes 2^{ B}_{ 1}$\\
$12^{ B}_{-3/2}$ & $16$ & $1, 1, 1, 1, 1, 1, 1, 1,\zeta_{2}^{1},\zeta_{2}^{1},\zeta_{2}^{1},\zeta_{2}^{1}$ & $0, 0, \frac{1}{2}, \frac{1}{2}, \frac{3}{8}, \frac{3}{8}, \frac{7}{8}, \frac{7}{8}, \frac{7}{16}, \frac{13}{16}, \frac{13}{16}, \frac{15}{16}$ & $4^{ B}_{ 3}\boxtimes 3^{ B}_{ 7/2}$\\
$12^{ B}_{-3/2}$ & $16$ & $1, 1, 1, 1, 1, 1, 1, 1,\zeta_{2}^{1},\zeta_{2}^{1},\zeta_{2}^{1},\zeta_{2}^{1}$ & $0, 0, \frac{1}{2}, \frac{1}{2}, \frac{3}{8}, \frac{3}{8}, \frac{7}{8}, \frac{7}{8}, \frac{7}{16}, \frac{13}{16}, \frac{13}{16}, \frac{15}{16}$ & $4^{ B}_{ 3}\boxtimes 3^{ B}_{ 7/2}$\\
$12^{ B}_{-1/2}$ & $16$ & $1, 1, 1, 1, 1, 1, 1, 1,\zeta_{2}^{1},\zeta_{2}^{1},\zeta_{2}^{1},\zeta_{2}^{1}$ & $0, 0, \frac{1}{2}, \frac{1}{2}, 0, 0, \frac{1}{2}, \frac{1}{2}, \frac{7}{16}, \frac{15}{16}, \frac{15}{16}, \frac{15}{16}$ & $4^{ B}_{ 4}\boxtimes 3^{ B}_{ 7/2}$\\
$12^{ B}_{-1/2}$ & $16$ & $1, 1, 1, 1, 1, 1, 1, 1,\zeta_{2}^{1},\zeta_{2}^{1},\zeta_{2}^{1},\zeta_{2}^{1}$ & $0, 0, \frac{1}{2}, \frac{1}{2}, 0, 0, \frac{1}{2}, \frac{1}{2}, \frac{7}{16}, \frac{15}{16}, \frac{15}{16}, \frac{15}{16}$ & $4^{ B}_{ 4}\boxtimes 3^{ B}_{ 7/2}$\\
$12^{ B}_{-1/2}$ & $16$ & $1, 1, 1, 1, 1, 1, 1, 1,\zeta_{2}^{1},\zeta_{2}^{1},\zeta_{2}^{1},\zeta_{2}^{1}$ & $0, 0, \frac{1}{2}, \frac{1}{2}, \frac{1}{8}, \frac{1}{8}, \frac{5}{8}, \frac{5}{8}, \frac{5}{16}, \frac{13}{16}, \frac{15}{16}, \frac{15}{16}$ & $4^{ B}_{-3}\boxtimes 3^{ B}_{ 5/2}$\\
$12^{ B}_{-1/2}$ & $16$ & $1, 1, 1, 1, 1, 1, 1, 1,\zeta_{2}^{1},\zeta_{2}^{1},\zeta_{2}^{1},\zeta_{2}^{1}$ & $0, 0, \frac{1}{2}, \frac{1}{2}, \frac{1}{8}, \frac{1}{8}, \frac{5}{8}, \frac{5}{8}, \frac{5}{16}, \frac{13}{16}, \frac{15}{16}, \frac{15}{16}$ & $4^{ B}_{-3}\boxtimes 3^{ B}_{ 5/2}$\\
$12^{ B}_{-1/2}$ & $16$ & $1, 1, 1, 1, 1, 1, 1, 1,\zeta_{2}^{1},\zeta_{2}^{1},\zeta_{2}^{1},\zeta_{2}^{1}$ & $0, 0, \frac{1}{2}, \frac{1}{2}, \frac{1}{4}, \frac{1}{4}, \frac{3}{4}, \frac{3}{4}, \frac{3}{16}, \frac{11}{16}, \frac{15}{16}, \frac{15}{16}$ & $6^{ B}_{-3/2}\boxtimes 2^{ B}_{ 1}$\\
$12^{ B}_{-1/2}$ & $16$ & $1, 1, 1, 1, 1, 1, 1, 1,\zeta_{2}^{1},\zeta_{2}^{1},\zeta_{2}^{1},\zeta_{2}^{1}$ & $0, 0, \frac{1}{2}, \frac{1}{2}, \frac{1}{4}, \frac{1}{4}, \frac{3}{4}, \frac{3}{4}, \frac{3}{16}, \frac{11}{16}, \frac{15}{16}, \frac{15}{16}$ & $6^{ B}_{-3/2}\boxtimes 2^{ B}_{ 1}$\\
$12^{ B}_{-1/2}$ & $16$ & $1, 1, 1, 1, 1, 1, 1, 1,\zeta_{2}^{1},\zeta_{2}^{1},\zeta_{2}^{1},\zeta_{2}^{1}$ & $0, 0, \frac{1}{2}, \frac{1}{2}, \frac{3}{8}, \frac{3}{8}, \frac{7}{8}, \frac{7}{8}, \frac{1}{16}, \frac{9}{16}, \frac{15}{16}, \frac{15}{16}$ & $4^{ B}_{-1}\boxtimes 3^{ B}_{ 1/2}$\\
$12^{ B}_{-1/2}$ & $16$ & $1, 1, 1, 1, 1, 1, 1, 1,\zeta_{2}^{1},\zeta_{2}^{1},\zeta_{2}^{1},\zeta_{2}^{1}$ & $0, 0, \frac{1}{2}, \frac{1}{2}, \frac{3}{8}, \frac{3}{8}, \frac{7}{8}, \frac{7}{8}, \frac{1}{16}, \frac{9}{16}, \frac{15}{16}, \frac{15}{16}$ & $4^{ B}_{-1}\boxtimes 3^{ B}_{ 1/2}$\\
\hline
\end{tabular}
\end{table*} 

\def\arraystretch{1.25} \setlength\tabcolsep{3pt}
\begin{table*}[t] 
\caption{
All the 32 modular extensions of $\sRp(Z_2\times Z_2^f)$ with $N =16$.
} 
\label{mextZ2Z2f16} 
\centering
\begin{tabular}{ |c|c|l|l|l| } 
\hline 
$N^{|\Th|}_{c}$ & $D^2$ & $d_1,d_2,\cdots$ & $s_1,s_2,\cdots$ & comment \\
\hline 
$4^{ 0}_{0}$ & $4$ & $1, 1, 1, 1$ & $0, 0, \frac{1}{2}, \frac{1}{2}$ & $\sRp(Z_2\times Z_2^f)$ \\
\hline
$16^{ B}_{ 0}$ & $16$ & $1, 1, 1, 1, 1, 1, 1, 1, 1, 1, 1, 1, 1, 1, 1, 1$ & $0, 0, \frac{1}{2}, \frac{1}{2}, 0, 0, 0, 0, 0, 0, 0, 0, \frac{1}{2}, \frac{1}{2}, \frac{1}{2}, \frac{1}{2}$ & $4^{ B}_{ 0}\boxtimes 4^{ B}_{ 0}$\\
$16^{ B}_{ 0}$ & $16$ & $1, 1, 1, 1, 1, 1, 1, 1, 1, 1, 1, 1, 1, 1, 1, 1$ & $0, 0, \frac{1}{2}, \frac{1}{2}, 0, 0, 0, 0, \frac{1}{8}, \frac{1}{8}, \frac{3}{8}, \frac{3}{8}, \frac{5}{8}, \frac{5}{8}, \frac{7}{8}, \frac{7}{8}$ & $4^{ B}_{-1}\boxtimes 4^{ B}_{ 1}$\\
$16^{ B}_{ 0}$ & $16$ & $1, 1, 1, 1, 1, 1, 1, 1, 1, 1, 1, 1, 1, 1, 1, 1$ & $0, 0, \frac{1}{2}, \frac{1}{2}, 0, 0, 0, 0, \frac{1}{8}, \frac{1}{8}, \frac{3}{8}, \frac{3}{8}, \frac{5}{8}, \frac{5}{8}, \frac{7}{8}, \frac{7}{8}$ & $4^{ B}_{-1}\boxtimes 4^{ B}_{ 1}$\\
$16^{ B}_{ 0}$ & $16$ & $1, 1, 1, 1, 1, 1, 1, 1, 1, 1, 1, 1, 1, 1, 1, 1$ & $0, 0, \frac{1}{2}, \frac{1}{2}, 0, 0, 0, 0, \frac{1}{4}, \frac{1}{4}, \frac{1}{4}, \frac{1}{4}, \frac{3}{4}, \frac{3}{4}, \frac{3}{4}, \frac{3}{4}$ & $8^{ B}_{-1}\boxtimes 2^{ B}_{ 1}$\\
\hline
$16^{ B}_{ 1}$ & $16$ & $1, 1, 1, 1, 1, 1, 1, 1, 1, 1, 1, 1, 1, 1, 1, 1$ & $0, 0, \frac{1}{2}, \frac{1}{2}, 0, 0, \frac{1}{8}, \frac{1}{8}, \frac{1}{8}, \frac{1}{8}, \frac{1}{8}, \frac{1}{8}, \frac{1}{2}, \frac{1}{2}, \frac{5}{8}, \frac{5}{8}$ & $4^{ B}_{ 1}\boxtimes 4^{ B}_{ 0}$\\
$16^{ B}_{ 1}$ & $16$ & $1, 1, 1, 1, 1, 1, 1, 1, 1, 1, 1, 1, 1, 1, 1, 1$ & $0, 0, \frac{1}{2}, \frac{1}{2}, 0, 0, \frac{1}{8}, \frac{1}{8}, \frac{1}{8}, \frac{1}{8}, \frac{1}{8}, \frac{1}{8}, \frac{1}{2}, \frac{1}{2}, \frac{5}{8}, \frac{5}{8}$ & $4^{ B}_{ 1}\boxtimes 4^{ B}_{ 0}$\\
$16^{ B}_{ 1}$ & $16$ & $1, 1, 1, 1, 1, 1, 1, 1, 1, 1, 1, 1, 1, 1, 1, 1$ & $0, 0, \frac{1}{2}, \frac{1}{2}, \frac{1}{8}, \frac{1}{8}, \frac{1}{8}, \frac{1}{8}, \frac{1}{4}, \frac{1}{4}, \frac{3}{8}, \frac{3}{8}, \frac{3}{4}, \frac{3}{4}, \frac{7}{8}, \frac{7}{8}$ & $8^{ B}_{ 0}\boxtimes 2^{ B}_{ 1}$\\
$16^{ B}_{ 1}$ & $16$ & $1, 1, 1, 1, 1, 1, 1, 1, 1, 1, 1, 1, 1, 1, 1, 1$ & $0, 0, \frac{1}{2}, \frac{1}{2}, \frac{1}{8}, \frac{1}{8}, \frac{1}{8}, \frac{1}{8}, \frac{1}{4}, \frac{1}{4}, \frac{3}{8}, \frac{3}{8}, \frac{3}{4}, \frac{3}{4}, \frac{7}{8}, \frac{7}{8}$ & $8^{ B}_{ 0}\boxtimes 2^{ B}_{ 1}$\\
$16^{ B}_{ 2}$ & $16$ & $1, 1, 1, 1, 1, 1, 1, 1, 1, 1, 1, 1, 1, 1, 1, 1$ & $0, 0, \frac{1}{2}, \frac{1}{2}, 0, 0, \frac{1}{4}, \frac{1}{4}, \frac{1}{4}, \frac{1}{4}, \frac{1}{4}, \frac{1}{4}, \frac{1}{2}, \frac{1}{2}, \frac{3}{4}, \frac{3}{4}$ & $8^{ B}_{ 1}\boxtimes 2^{ B}_{ 1}$\\
$16^{ B}_{ 2}$ & $16$ & $1, 1, 1, 1, 1, 1, 1, 1, 1, 1, 1, 1, 1, 1, 1, 1$ & $0, 0, \frac{1}{2}, \frac{1}{2}, 0, 0, \frac{1}{4}, \frac{1}{4}, \frac{1}{4}, \frac{1}{4}, \frac{1}{4}, \frac{1}{4}, \frac{1}{2}, \frac{1}{2}, \frac{3}{4}, \frac{3}{4}$ & $8^{ B}_{ 1}\boxtimes 2^{ B}_{ 1}$\\
$16^{ B}_{ 2}$ & $16$ & $1, 1, 1, 1, 1, 1, 1, 1, 1, 1, 1, 1, 1, 1, 1, 1$ & $0, 0, \frac{1}{2}, \frac{1}{2}, \frac{1}{8}, \frac{1}{8}, \frac{1}{8}, \frac{1}{8}, \frac{1}{4}, \frac{1}{4}, \frac{1}{4}, \frac{1}{4}, \frac{5}{8}, \frac{5}{8}, \frac{5}{8}, \frac{5}{8}$ & $4^{ B}_{ 1}\boxtimes 4^{ B}_{ 1}$\\
$16^{ B}_{ 2}$ & $16$ & $1, 1, 1, 1, 1, 1, 1, 1, 1, 1, 1, 1, 1, 1, 1, 1$ & $0, 0, \frac{1}{2}, \frac{1}{2}, \frac{1}{4}, \frac{1}{4}, \frac{1}{4}, \frac{1}{4}, \frac{3}{8}, \frac{3}{8}, \frac{3}{8}, \frac{3}{8}, \frac{7}{8}, \frac{7}{8}, \frac{7}{8}, \frac{7}{8}$ & $4^{ B}_{-1}\boxtimes 4^{ B}_{ 3}$\\
$16^{ B}_{ 3}$ & $16$ & $1, 1, 1, 1, 1, 1, 1, 1, 1, 1, 1, 1, 1, 1, 1, 1$ & $0, 0, \frac{1}{2}, \frac{1}{2}, 0, 0, \frac{3}{8}, \frac{3}{8}, \frac{3}{8}, \frac{3}{8}, \frac{3}{8}, \frac{3}{8}, \frac{1}{2}, \frac{1}{2}, \frac{7}{8}, \frac{7}{8}$ & $4^{ B}_{ 3}\boxtimes 4^{ B}_{ 0}$\\
$16^{ B}_{ 3}$ & $16$ & $1, 1, 1, 1, 1, 1, 1, 1, 1, 1, 1, 1, 1, 1, 1, 1$ & $0, 0, \frac{1}{2}, \frac{1}{2}, 0, 0, \frac{3}{8}, \frac{3}{8}, \frac{3}{8}, \frac{3}{8}, \frac{3}{8}, \frac{3}{8}, \frac{1}{2}, \frac{1}{2}, \frac{7}{8}, \frac{7}{8}$ & $4^{ B}_{ 3}\boxtimes 4^{ B}_{ 0}$\\
$16^{ B}_{ 3}$ & $16$ & $1, 1, 1, 1, 1, 1, 1, 1, 1, 1, 1, 1, 1, 1, 1, 1$ & $0, 0, \frac{1}{2}, \frac{1}{2}, \frac{1}{8}, \frac{1}{8}, \frac{1}{4}, \frac{1}{4}, \frac{3}{8}, \frac{3}{8}, \frac{3}{8}, \frac{3}{8}, \frac{5}{8}, \frac{5}{8}, \frac{3}{4}, \frac{3}{4}$ & $8^{ B}_{ 2}\boxtimes 2^{ B}_{ 1}$\\
$16^{ B}_{ 3}$ & $16$ & $1, 1, 1, 1, 1, 1, 1, 1, 1, 1, 1, 1, 1, 1, 1, 1$ & $0, 0, \frac{1}{2}, \frac{1}{2}, \frac{1}{8}, \frac{1}{8}, \frac{1}{4}, \frac{1}{4}, \frac{3}{8}, \frac{3}{8}, \frac{3}{8}, \frac{3}{8}, \frac{5}{8}, \frac{5}{8}, \frac{3}{4}, \frac{3}{4}$ & $8^{ B}_{ 2}\boxtimes 2^{ B}_{ 1}$\\
$16^{ B}_{ 4}$ & $16$ & $1, 1, 1, 1, 1, 1, 1, 1, 1, 1, 1, 1, 1, 1, 1, 1$ & $0, 0, \frac{1}{2}, \frac{1}{2}, 0, 0, 0, 0, \frac{1}{2}, \frac{1}{2}, \frac{1}{2}, \frac{1}{2}, \frac{1}{2}, \frac{1}{2}, \frac{1}{2}, \frac{1}{2}$ & $4^{ B}_{ 4}\boxtimes 4^{ B}_{ 0}$\\
$16^{ B}_{ 4}$ & $16$ & $1, 1, 1, 1, 1, 1, 1, 1, 1, 1, 1, 1, 1, 1, 1, 1$ & $0, 0, \frac{1}{2}, \frac{1}{2}, \frac{1}{8}, \frac{1}{8}, \frac{3}{8}, \frac{3}{8}, \frac{1}{2}, \frac{1}{2}, \frac{1}{2}, \frac{1}{2}, \frac{5}{8}, \frac{5}{8}, \frac{7}{8}, \frac{7}{8}$ & $4^{ B}_{ 3}\boxtimes 4^{ B}_{ 1}$\\
$16^{ B}_{ 4}$ & $16$ & $1, 1, 1, 1, 1, 1, 1, 1, 1, 1, 1, 1, 1, 1, 1, 1$ & $0, 0, \frac{1}{2}, \frac{1}{2}, \frac{1}{8}, \frac{1}{8}, \frac{3}{8}, \frac{3}{8}, \frac{1}{2}, \frac{1}{2}, \frac{1}{2}, \frac{1}{2}, \frac{5}{8}, \frac{5}{8}, \frac{7}{8}, \frac{7}{8}$ & $4^{ B}_{ 3}\boxtimes 4^{ B}_{ 1}$\\
$16^{ B}_{ 4}$ & $16$ & $1, 1, 1, 1, 1, 1, 1, 1, 1, 1, 1, 1, 1, 1, 1, 1$ & $0, 0, \frac{1}{2}, \frac{1}{2}, \frac{1}{4}, \frac{1}{4}, \frac{1}{4}, \frac{1}{4}, \frac{1}{2}, \frac{1}{2}, \frac{1}{2}, \frac{1}{2}, \frac{3}{4}, \frac{3}{4}, \frac{3}{4}, \frac{3}{4}$ & $8^{ B}_{ 3}\boxtimes 2^{ B}_{ 1}$\\
$16^{ B}_{-3}$ & $16$ & $1, 1, 1, 1, 1, 1, 1, 1, 1, 1, 1, 1, 1, 1, 1, 1$ & $0, 0, \frac{1}{2}, \frac{1}{2}, 0, 0, \frac{1}{8}, \frac{1}{8}, \frac{1}{2}, \frac{1}{2}, \frac{5}{8}, \frac{5}{8}, \frac{5}{8}, \frac{5}{8}, \frac{5}{8}, \frac{5}{8}$ & $4^{ B}_{-3}\boxtimes 4^{ B}_{ 0}$\\
$16^{ B}_{-3}$ & $16$ & $1, 1, 1, 1, 1, 1, 1, 1, 1, 1, 1, 1, 1, 1, 1, 1$ & $0, 0, \frac{1}{2}, \frac{1}{2}, 0, 0, \frac{1}{8}, \frac{1}{8}, \frac{1}{2}, \frac{1}{2}, \frac{5}{8}, \frac{5}{8}, \frac{5}{8}, \frac{5}{8}, \frac{5}{8}, \frac{5}{8}$ & $4^{ B}_{-3}\boxtimes 4^{ B}_{ 0}$\\
$16^{ B}_{-3}$ & $16$ & $1, 1, 1, 1, 1, 1, 1, 1, 1, 1, 1, 1, 1, 1, 1, 1$ & $0, 0, \frac{1}{2}, \frac{1}{2}, \frac{1}{4}, \frac{1}{4}, \frac{3}{8}, \frac{3}{8}, \frac{5}{8}, \frac{5}{8}, \frac{5}{8}, \frac{5}{8}, \frac{3}{4}, \frac{3}{4}, \frac{7}{8}, \frac{7}{8}$ & $8^{ B}_{ 4}\boxtimes 2^{ B}_{ 1}$\\
$16^{ B}_{-3}$ & $16$ & $1, 1, 1, 1, 1, 1, 1, 1, 1, 1, 1, 1, 1, 1, 1, 1$ & $0, 0, \frac{1}{2}, \frac{1}{2}, \frac{1}{4}, \frac{1}{4}, \frac{3}{8}, \frac{3}{8}, \frac{5}{8}, \frac{5}{8}, \frac{5}{8}, \frac{5}{8}, \frac{3}{4}, \frac{3}{4}, \frac{7}{8}, \frac{7}{8}$ & $8^{ B}_{ 4}\boxtimes 2^{ B}_{ 1}$\\
$16^{ B}_{-2}$ & $16$ & $1, 1, 1, 1, 1, 1, 1, 1, 1, 1, 1, 1, 1, 1, 1, 1$ & $0, 0, \frac{1}{2}, \frac{1}{2}, 0, 0, \frac{1}{4}, \frac{1}{4}, \frac{1}{2}, \frac{1}{2}, \frac{3}{4}, \frac{3}{4}, \frac{3}{4}, \frac{3}{4}, \frac{3}{4}, \frac{3}{4}$ & $8^{ B}_{-3}\boxtimes 2^{ B}_{ 1}$\\
$16^{ B}_{-2}$ & $16$ & $1, 1, 1, 1, 1, 1, 1, 1, 1, 1, 1, 1, 1, 1, 1, 1$ & $0, 0, \frac{1}{2}, \frac{1}{2}, 0, 0, \frac{1}{4}, \frac{1}{4}, \frac{1}{2}, \frac{1}{2}, \frac{3}{4}, \frac{3}{4}, \frac{3}{4}, \frac{3}{4}, \frac{3}{4}, \frac{3}{4}$ & $8^{ B}_{-3}\boxtimes 2^{ B}_{ 1}$\\
$16^{ B}_{-2}$ & $16$ & $1, 1, 1, 1, 1, 1, 1, 1, 1, 1, 1, 1, 1, 1, 1, 1$ & $0, 0, \frac{1}{2}, \frac{1}{2}, \frac{1}{8}, \frac{1}{8}, \frac{1}{8}, \frac{1}{8}, \frac{5}{8}, \frac{5}{8}, \frac{5}{8}, \frac{5}{8}, \frac{3}{4}, \frac{3}{4}, \frac{3}{4}, \frac{3}{4}$ & $4^{ B}_{-3}\boxtimes 4^{ B}_{ 1}$\\
$16^{ B}_{-2}$ & $16$ & $1, 1, 1, 1, 1, 1, 1, 1, 1, 1, 1, 1, 1, 1, 1, 1$ & $0, 0, \frac{1}{2}, \frac{1}{2}, \frac{3}{8}, \frac{3}{8}, \frac{3}{8}, \frac{3}{8}, \frac{3}{4}, \frac{3}{4}, \frac{3}{4}, \frac{3}{4}, \frac{7}{8}, \frac{7}{8}, \frac{7}{8}, \frac{7}{8}$ & $4^{ B}_{ 3}\boxtimes 4^{ B}_{ 3}$\\
$16^{ B}_{-1}$ & $16$ & $1, 1, 1, 1, 1, 1, 1, 1, 1, 1, 1, 1, 1, 1, 1, 1$ & $0, 0, \frac{1}{2}, \frac{1}{2}, 0, 0, \frac{3}{8}, \frac{3}{8}, \frac{1}{2}, \frac{1}{2}, \frac{7}{8}, \frac{7}{8}, \frac{7}{8}, \frac{7}{8}, \frac{7}{8}, \frac{7}{8}$ & $4^{ B}_{-1}\boxtimes 4^{ B}_{ 0}$\\
$16^{ B}_{-1}$ & $16$ & $1, 1, 1, 1, 1, 1, 1, 1, 1, 1, 1, 1, 1, 1, 1, 1$ & $0, 0, \frac{1}{2}, \frac{1}{2}, 0, 0, \frac{3}{8}, \frac{3}{8}, \frac{1}{2}, \frac{1}{2}, \frac{7}{8}, \frac{7}{8}, \frac{7}{8}, \frac{7}{8}, \frac{7}{8}, \frac{7}{8}$ & $4^{ B}_{-1}\boxtimes 4^{ B}_{ 0}$\\
$16^{ B}_{-1}$ & $16$ & $1, 1, 1, 1, 1, 1, 1, 1, 1, 1, 1, 1, 1, 1, 1, 1$ & $0, 0, \frac{1}{2}, \frac{1}{2}, \frac{1}{8}, \frac{1}{8}, \frac{1}{4}, \frac{1}{4}, \frac{5}{8}, \frac{5}{8}, \frac{3}{4}, \frac{3}{4}, \frac{7}{8}, \frac{7}{8}, \frac{7}{8}, \frac{7}{8}$ & $8^{ B}_{-2}\boxtimes 2^{ B}_{ 1}$\\
$16^{ B}_{-1}$ & $16$ & $1, 1, 1, 1, 1, 1, 1, 1, 1, 1, 1, 1, 1, 1, 1, 1$ & $0, 0, \frac{1}{2}, \frac{1}{2}, \frac{1}{8}, \frac{1}{8}, \frac{1}{4}, \frac{1}{4}, \frac{5}{8}, \frac{5}{8}, \frac{3}{4}, \frac{3}{4}, \frac{7}{8}, \frac{7}{8}, \frac{7}{8}, \frac{7}{8}$ & $8^{ B}_{-2}\boxtimes 2^{ B}_{ 1}$\\
\hline
\end{tabular}
\end{table*}

\def\arraystretch{1.25} \setlength\tabcolsep{3pt}
\begin{table*}[t] 
\caption{
All the modular extensions of $\sRp(Z_6^f)=\sRp(Z_3\times Z_2^f)$.
} 
\label{mextZ6f} 
\centering
\begin{tabular}{ |c|c|l|l| } 
\hline 
$N^{|\Th|}_{c}$ & $D^2$ & $d_1,d_2,\cdots$ & $s_1,s_2,\cdots$ \\
\hline 
$6^{ 0}_{0}$ & $6$ & $1, 1, 1, 1, 1, 1$ & $0, \frac{1}{2}, 0, \frac{1}{2}, 0, \frac{1}{2}$  \hfill $\sRp(Z_6^f)$\\
\hline
$36^{ B}_{ 0}$ & $36$ & $1\times 36$ & $0, \frac{1}{2}, 0, \frac{1}{2}, 0, \frac{1}{2}, 0, 0, 0, 0, 0, 0, 0, 0, 0, 0, 0, 0, \frac{1}{6}, \frac{1}{6}, \frac{1}{3}, \frac{1}{3}, \frac{1}{3}, \frac{1}{3}, \frac{1}{3}, \frac{1}{3}, \frac{1}{2}, \frac{1}{2}, \frac{2}{3}, \frac{2}{3}, \frac{2}{3}, \frac{2}{3}, \frac{2}{3}, \frac{2}{3}, \frac{5}{6}, \frac{5}{6}$  \\
$36^{ B}_{ 0}$ & $36$ & $1\times 36$ & $0, \frac{1}{2}, 0, \frac{1}{2}, 0, \frac{1}{2}, 0, 0, 0, 0, 0, 0, \frac{1}{18}, \frac{1}{18}, \frac{2}{9}, \frac{2}{9}, \frac{2}{9}, \frac{2}{9}, \frac{2}{9}, \frac{2}{9}, \frac{7}{18}, \frac{7}{18}, \frac{5}{9}, \frac{5}{9}, \frac{5}{9}, \frac{5}{9}, \frac{5}{9}, \frac{5}{9}, \frac{13}{18}, \frac{13}{18}, \frac{8}{9}, \frac{8}{9}, \frac{8}{9}, \frac{8}{9}, \frac{8}{9}, \frac{8}{9}$  \\
$36^{ B}_{ 0}$ & $36$ & $1\times 36$ & $0, \frac{1}{2}, 0, \frac{1}{2}, 0, \frac{1}{2}, 0, 0, 0, 0, 0, 0, \frac{1}{9}, \frac{1}{9}, \frac{1}{9}, \frac{1}{9}, \frac{1}{9}, \frac{1}{9}, \frac{5}{18}, \frac{5}{18}, \frac{4}{9}, \frac{4}{9}, \frac{4}{9}, \frac{4}{9}, \frac{4}{9}, \frac{4}{9}, \frac{11}{18}, \frac{11}{18}, \frac{7}{9}, \frac{7}{9}, \frac{7}{9}, \frac{7}{9}, \frac{7}{9}, \frac{7}{9}, \frac{17}{18}, \frac{17}{18}$  \\
\hline
$36^{ B}_{ 1}$ & $36$ & $1\times 36$ &\tiny $0, \frac{1}{2}, 0, \frac{1}{2}, 0, \frac{1}{2}, 0, 0, \frac{1}{8}, \frac{1}{8}, \frac{1}{8}, \frac{1}{8}, \frac{1}{8}, \frac{1}{8}, \frac{1}{8}, \frac{1}{8}, \frac{1}{8}, \frac{1}{8}, \frac{1}{6}, \frac{1}{6}, \frac{1}{3}, \frac{1}{3}, \frac{11}{24}, \frac{11}{24}, \frac{11}{24}, \frac{11}{24}, \frac{1}{2}, \frac{1}{2}, \frac{2}{3}, \frac{2}{3}, \frac{19}{24}, \frac{19}{24}, \frac{19}{24}, \frac{19}{24}, \frac{5}{6}, \frac{5}{6}$  \\
$36^{ B}_{ 1}$ & $36$ & $1\times 36$ &\tiny $0, \frac{1}{2}, 0, \frac{1}{2}, 0, \frac{1}{2}, \frac{1}{72}, \frac{1}{72}, \frac{1}{72}, \frac{1}{72}, \frac{1}{18}, \frac{1}{18}, \frac{1}{8}, \frac{1}{8}, \frac{1}{8}, \frac{1}{8}, \frac{1}{8}, \frac{1}{8}, \frac{2}{9}, \frac{2}{9}, \frac{25}{72}, \frac{25}{72}, \frac{25}{72}, \frac{25}{72}, \frac{7}{18}, \frac{7}{18}, \frac{5}{9}, \frac{5}{9}, \frac{49}{72}, \frac{49}{72}, \frac{49}{72}, \frac{49}{72}, \frac{13}{18}, \frac{13}{18}, \frac{8}{9}, \frac{8}{9}$  \\
$36^{ B}_{ 1}$ & $36$ & $1\times 36$ &\tiny $0, \frac{1}{2}, 0, \frac{1}{2}, 0, \frac{1}{2}, \frac{1}{9}, \frac{1}{9}, \frac{1}{8}, \frac{1}{8}, \frac{1}{8}, \frac{1}{8}, \frac{1}{8}, \frac{1}{8}, \frac{17}{72}, \frac{17}{72}, \frac{17}{72}, \frac{17}{72}, \frac{5}{18}, \frac{5}{18}, \frac{4}{9}, \frac{4}{9}, \frac{41}{72}, \frac{41}{72}, \frac{41}{72}, \frac{41}{72}, \frac{11}{18}, \frac{11}{18}, \frac{7}{9}, \frac{7}{9}, \frac{65}{72}, \frac{65}{72}, \frac{65}{72}, \frac{65}{72}, \frac{17}{18}, \frac{17}{18}$  \\
$36^{ B}_{ 2}$ & $36$ & $1\times 36$ &\tiny $0, \frac{1}{2}, 0, \frac{1}{2}, 0, \frac{1}{2}, 0, 0, \frac{1}{6}, \frac{1}{6}, \frac{1}{4}, \frac{1}{4}, \frac{1}{4}, \frac{1}{4}, \frac{1}{4}, \frac{1}{4}, \frac{1}{4}, \frac{1}{4}, \frac{1}{4}, \frac{1}{4}, \frac{1}{3}, \frac{1}{3}, \frac{1}{2}, \frac{1}{2}, \frac{7}{12}, \frac{7}{12}, \frac{7}{12}, \frac{7}{12}, \frac{2}{3}, \frac{2}{3}, \frac{5}{6}, \frac{5}{6}, \frac{11}{12}, \frac{11}{12}, \frac{11}{12}, \frac{11}{12}$  \\
$36^{ B}_{ 2}$ & $36$ & $1\times 36$ &\tiny $0, \frac{1}{2}, 0, \frac{1}{2}, 0, \frac{1}{2}, \frac{1}{36}, \frac{1}{36}, \frac{1}{36}, \frac{1}{36}, \frac{1}{9}, \frac{1}{9}, \frac{1}{4}, \frac{1}{4}, \frac{1}{4}, \frac{1}{4}, \frac{1}{4}, \frac{1}{4}, \frac{5}{18}, \frac{5}{18}, \frac{13}{36}, \frac{13}{36}, \frac{13}{36}, \frac{13}{36}, \frac{4}{9}, \frac{4}{9}, \frac{11}{18}, \frac{11}{18}, \frac{25}{36}, \frac{25}{36}, \frac{25}{36}, \frac{25}{36}, \frac{7}{9}, \frac{7}{9}, \frac{17}{18}, \frac{17}{18}$  \\
$36^{ B}_{ 2}$ & $36$ & $1\times 36$ &\tiny $0, \frac{1}{2}, 0, \frac{1}{2}, 0, \frac{1}{2}, \frac{1}{18}, \frac{1}{18}, \frac{5}{36}, \frac{5}{36}, \frac{5}{36}, \frac{5}{36}, \frac{2}{9}, \frac{2}{9}, \frac{1}{4}, \frac{1}{4}, \frac{1}{4}, \frac{1}{4}, \frac{1}{4}, \frac{1}{4}, \frac{7}{18}, \frac{7}{18}, \frac{17}{36}, \frac{17}{36}, \frac{17}{36}, \frac{17}{36}, \frac{5}{9}, \frac{5}{9}, \frac{13}{18}, \frac{13}{18}, \frac{29}{36}, \frac{29}{36}, \frac{29}{36}, \frac{29}{36}, \frac{8}{9}, \frac{8}{9}$  \\
$36^{ B}_{ 3}$ & $36$ & $1\times 36$ &\tiny $0, \frac{1}{2}, 0, \frac{1}{2}, 0, \frac{1}{2}, 0, 0, \frac{1}{24}, \frac{1}{24}, \frac{1}{24}, \frac{1}{24}, \frac{1}{6}, \frac{1}{6}, \frac{1}{3}, \frac{1}{3}, \frac{3}{8}, \frac{3}{8}, \frac{3}{8}, \frac{3}{8}, \frac{3}{8}, \frac{3}{8}, \frac{3}{8}, \frac{3}{8}, \frac{3}{8}, \frac{3}{8}, \frac{1}{2}, \frac{1}{2}, \frac{2}{3}, \frac{2}{3}, \frac{17}{24}, \frac{17}{24}, \frac{17}{24}, \frac{17}{24}, \frac{5}{6}, \frac{5}{6}$  \\
$36^{ B}_{ 3}$ & $36$ & $1\times 36$ &\tiny $0, \frac{1}{2}, 0, \frac{1}{2}, 0, \frac{1}{2}, \frac{1}{18}, \frac{1}{18}, \frac{2}{9}, \frac{2}{9}, \frac{19}{72}, \frac{19}{72}, \frac{19}{72}, \frac{19}{72}, \frac{3}{8}, \frac{3}{8}, \frac{3}{8}, \frac{3}{8}, \frac{3}{8}, \frac{3}{8}, \frac{7}{18}, \frac{7}{18}, \frac{5}{9}, \frac{5}{9}, \frac{43}{72}, \frac{43}{72}, \frac{43}{72}, \frac{43}{72}, \frac{13}{18}, \frac{13}{18}, \frac{8}{9}, \frac{8}{9}, \frac{67}{72}, \frac{67}{72}, \frac{67}{72}, \frac{67}{72}$  \\
$36^{ B}_{ 3}$ & $36$ & $1\times 36$ &\tiny $0, \frac{1}{2}, 0, \frac{1}{2}, 0, \frac{1}{2}, \frac{1}{9}, \frac{1}{9}, \frac{11}{72}, \frac{11}{72}, \frac{11}{72}, \frac{11}{72}, \frac{5}{18}, \frac{5}{18}, \frac{3}{8}, \frac{3}{8}, \frac{3}{8}, \frac{3}{8}, \frac{3}{8}, \frac{3}{8}, \frac{4}{9}, \frac{4}{9}, \frac{35}{72}, \frac{35}{72}, \frac{35}{72}, \frac{35}{72}, \frac{11}{18}, \frac{11}{18}, \frac{7}{9}, \frac{7}{9}, \frac{59}{72}, \frac{59}{72}, \frac{59}{72}, \frac{59}{72}, \frac{17}{18}, \frac{17}{18}$  \\
$36^{ B}_{ 4}$ & $36$ & $1\times 36$ &\tiny $0, \frac{1}{2}, 0, \frac{1}{2}, 0, \frac{1}{2}, 0, 0, \frac{1}{6}, \frac{1}{6}, \frac{1}{6}, \frac{1}{6}, \frac{1}{6}, \frac{1}{6}, \frac{1}{3}, \frac{1}{3}, \frac{1}{2}, \frac{1}{2}, \frac{1}{2}, \frac{1}{2}, \frac{1}{2}, \frac{1}{2}, \frac{1}{2}, \frac{1}{2}, \frac{1}{2}, \frac{1}{2}, \frac{1}{2}, \frac{1}{2}, \frac{2}{3}, \frac{2}{3}, \frac{5}{6}, \frac{5}{6}, \frac{5}{6}, \frac{5}{6}, \frac{5}{6}, \frac{5}{6}$  \\
$36^{ B}_{ 4}$ & $36$ & $1\times 36$ &\tiny $0, \frac{1}{2}, 0, \frac{1}{2}, 0, \frac{1}{2}, \frac{1}{18}, \frac{1}{18}, \frac{1}{18}, \frac{1}{18}, \frac{1}{18}, \frac{1}{18}, \frac{2}{9}, \frac{2}{9}, \frac{7}{18}, \frac{7}{18}, \frac{7}{18}, \frac{7}{18}, \frac{7}{18}, \frac{7}{18}, \frac{1}{2}, \frac{1}{2}, \frac{1}{2}, \frac{1}{2}, \frac{1}{2}, \frac{1}{2}, \frac{5}{9}, \frac{5}{9}, \frac{13}{18}, \frac{13}{18}, \frac{13}{18}, \frac{13}{18}, \frac{13}{18}, \frac{13}{18}, \frac{8}{9}, \frac{8}{9}$  \\
$36^{ B}_{ 4}$ & $36$ & $1\times 36$ &\tiny $0, \frac{1}{2}, 0, \frac{1}{2}, 0, \frac{1}{2}, \frac{1}{9}, \frac{1}{9}, \frac{5}{18}, \frac{5}{18}, \frac{5}{18}, \frac{5}{18}, \frac{5}{18}, \frac{5}{18}, \frac{4}{9}, \frac{4}{9}, \frac{1}{2}, \frac{1}{2}, \frac{1}{2}, \frac{1}{2}, \frac{1}{2}, \frac{1}{2}, \frac{11}{18}, \frac{11}{18}, \frac{11}{18}, \frac{11}{18}, \frac{11}{18}, \frac{11}{18}, \frac{7}{9}, \frac{7}{9}, \frac{17}{18}, \frac{17}{18}, \frac{17}{18}, \frac{17}{18}, \frac{17}{18}, \frac{17}{18}$  \\
$36^{ B}_{-3}$ & $36$ & $1\times 36$ &\tiny $0, \frac{1}{2}, 0, \frac{1}{2}, 0, \frac{1}{2}, 0, 0, \frac{1}{6}, \frac{1}{6}, \frac{7}{24}, \frac{7}{24}, \frac{7}{24}, \frac{7}{24}, \frac{1}{3}, \frac{1}{3}, \frac{1}{2}, \frac{1}{2}, \frac{5}{8}, \frac{5}{8}, \frac{5}{8}, \frac{5}{8}, \frac{5}{8}, \frac{5}{8}, \frac{5}{8}, \frac{5}{8}, \frac{5}{8}, \frac{5}{8}, \frac{2}{3}, \frac{2}{3}, \frac{5}{6}, \frac{5}{6}, \frac{23}{24}, \frac{23}{24}, \frac{23}{24}, \frac{23}{24}$  \\
$36^{ B}_{-3}$ & $36$ & $1\times 36$ &\tiny $0, \frac{1}{2}, 0, \frac{1}{2}, 0, \frac{1}{2}, \frac{1}{18}, \frac{1}{18}, \frac{13}{72}, \frac{13}{72}, \frac{13}{72}, \frac{13}{72}, \frac{2}{9}, \frac{2}{9}, \frac{7}{18}, \frac{7}{18}, \frac{37}{72}, \frac{37}{72}, \frac{37}{72}, \frac{37}{72}, \frac{5}{9}, \frac{5}{9}, \frac{5}{8}, \frac{5}{8}, \frac{5}{8}, \frac{5}{8}, \frac{5}{8}, \frac{5}{8}, \frac{13}{18}, \frac{13}{18}, \frac{61}{72}, \frac{61}{72}, \frac{61}{72}, \frac{61}{72}, \frac{8}{9}, \frac{8}{9}$  \\
$36^{ B}_{-3}$ & $36$ & $1\times 36$ &\tiny $0, \frac{1}{2}, 0, \frac{1}{2}, 0, \frac{1}{2}, \frac{5}{72}, \frac{5}{72}, \frac{5}{72}, \frac{5}{72}, \frac{1}{9}, \frac{1}{9}, \frac{5}{18}, \frac{5}{18}, \frac{29}{72}, \frac{29}{72}, \frac{29}{72}, \frac{29}{72}, \frac{4}{9}, \frac{4}{9}, \frac{11}{18}, \frac{11}{18}, \frac{5}{8}, \frac{5}{8}, \frac{5}{8}, \frac{5}{8}, \frac{5}{8}, \frac{5}{8}, \frac{53}{72}, \frac{53}{72}, \frac{53}{72}, \frac{53}{72}, \frac{7}{9}, \frac{7}{9}, \frac{17}{18}, \frac{17}{18}$  \\
$36^{ B}_{-2}$ & $36$ & $1\times 36$ &\tiny $0, \frac{1}{2}, 0, \frac{1}{2}, 0, \frac{1}{2}, 0, 0, \frac{1}{12}, \frac{1}{12}, \frac{1}{12}, \frac{1}{12}, \frac{1}{6}, \frac{1}{6}, \frac{1}{3}, \frac{1}{3}, \frac{5}{12}, \frac{5}{12}, \frac{5}{12}, \frac{5}{12}, \frac{1}{2}, \frac{1}{2}, \frac{2}{3}, \frac{2}{3}, \frac{3}{4}, \frac{3}{4}, \frac{3}{4}, \frac{3}{4}, \frac{3}{4}, \frac{3}{4}, \frac{3}{4}, \frac{3}{4}, \frac{3}{4}, \frac{3}{4}, \frac{5}{6}, \frac{5}{6}$  \\
$36^{ B}_{-2}$ & $36$ & $1\times 36$ &\tiny $0, \frac{1}{2}, 0, \frac{1}{2}, 0, \frac{1}{2}, \frac{1}{18}, \frac{1}{18}, \frac{2}{9}, \frac{2}{9}, \frac{11}{36}, \frac{11}{36}, \frac{11}{36}, \frac{11}{36}, \frac{7}{18}, \frac{7}{18}, \frac{5}{9}, \frac{5}{9}, \frac{23}{36}, \frac{23}{36}, \frac{23}{36}, \frac{23}{36}, \frac{13}{18}, \frac{13}{18}, \frac{3}{4}, \frac{3}{4}, \frac{3}{4}, \frac{3}{4}, \frac{3}{4}, \frac{3}{4}, \frac{8}{9}, \frac{8}{9}, \frac{35}{36}, \frac{35}{36}, \frac{35}{36}, \frac{35}{36}$  \\
$36^{ B}_{-2}$ & $36$ & $1\times 36$ &\tiny $0, \frac{1}{2}, 0, \frac{1}{2}, 0, \frac{1}{2}, \frac{1}{9}, \frac{1}{9}, \frac{7}{36}, \frac{7}{36}, \frac{7}{36}, \frac{7}{36}, \frac{5}{18}, \frac{5}{18}, \frac{4}{9}, \frac{4}{9}, \frac{19}{36}, \frac{19}{36}, \frac{19}{36}, \frac{19}{36}, \frac{11}{18}, \frac{11}{18}, \frac{3}{4}, \frac{3}{4}, \frac{3}{4}, \frac{3}{4}, \frac{3}{4}, \frac{3}{4}, \frac{7}{9}, \frac{7}{9}, \frac{31}{36}, \frac{31}{36}, \frac{31}{36}, \frac{31}{36}, \frac{17}{18}, \frac{17}{18}$  \\
$36^{ B}_{-1}$ & $36$ & $1\times 36$ &\tiny $0, \frac{1}{2}, 0, \frac{1}{2}, 0, \frac{1}{2}, 0, 0, \frac{1}{6}, \frac{1}{6}, \frac{5}{24}, \frac{5}{24}, \frac{5}{24}, \frac{5}{24}, \frac{1}{3}, \frac{1}{3}, \frac{1}{2}, \frac{1}{2}, \frac{13}{24}, \frac{13}{24}, \frac{13}{24}, \frac{13}{24}, \frac{2}{3}, \frac{2}{3}, \frac{5}{6}, \frac{5}{6}, \frac{7}{8}, \frac{7}{8}, \frac{7}{8}, \frac{7}{8}, \frac{7}{8}, \frac{7}{8}, \frac{7}{8}, \frac{7}{8}, \frac{7}{8}, \frac{7}{8}$  \\
$36^{ B}_{-1}$ & $36$ & $1\times 36$ &\tiny $0, \frac{1}{2}, 0, \frac{1}{2}, 0, \frac{1}{2}, \frac{1}{18}, \frac{1}{18}, \frac{7}{72}, \frac{7}{72}, \frac{7}{72}, \frac{7}{72}, \frac{2}{9}, \frac{2}{9}, \frac{7}{18}, \frac{7}{18}, \frac{31}{72}, \frac{31}{72}, \frac{31}{72}, \frac{31}{72}, \frac{5}{9}, \frac{5}{9}, \frac{13}{18}, \frac{13}{18}, \frac{55}{72}, \frac{55}{72}, \frac{55}{72}, \frac{55}{72}, \frac{7}{8}, \frac{7}{8}, \frac{7}{8}, \frac{7}{8}, \frac{7}{8}, \frac{7}{8}, \frac{8}{9}, \frac{8}{9}$  \\
$36^{ B}_{-1}$ & $36$ & $1\times 36$ &\tiny $0, \frac{1}{2}, 0, \frac{1}{2}, 0, \frac{1}{2}, \frac{1}{9}, \frac{1}{9}, \frac{5}{18}, \frac{5}{18}, \frac{23}{72}, \frac{23}{72}, \frac{23}{72}, \frac{23}{72}, \frac{4}{9}, \frac{4}{9}, \frac{11}{18}, \frac{11}{18}, \frac{47}{72}, \frac{47}{72}, \frac{47}{72}, \frac{47}{72}, \frac{7}{9}, \frac{7}{9}, \frac{7}{8}, \frac{7}{8}, \frac{7}{8}, \frac{7}{8}, \frac{7}{8}, \frac{7}{8}, \frac{17}{18}, \frac{17}{18}, \frac{71}{72}, \frac{71}{72}, \frac{71}{72}, \frac{71}{72}$  \\
\hline
$27^{ B}_{ 1/2}$ & $36$ &\tiny $1\times 18 , \zeta_2^1\times 9$ & $0, \frac{1}{2}, 0, \frac{1}{2}, 0, \frac{1}{2}, 0, 0, \frac{1}{6}, \frac{1}{6}, \frac{1}{3}, \frac{1}{3}, \frac{1}{2}, \frac{1}{2}, \frac{2}{3}, \frac{2}{3}, \frac{5}{6}, \frac{5}{6}, \frac{1}{16}, \frac{1}{16}, \frac{1}{16}, \frac{1}{16}, \frac{1}{16}, \frac{19}{48}, \frac{19}{48}, \frac{35}{48}, \frac{35}{48}$  \\
$27^{ B}_{ 1/2}$ & $36$ &\tiny $1\times 18 , \zeta_2^1\times 9$ & $0, \frac{1}{2}, 0, \frac{1}{2}, 0, \frac{1}{2}, \frac{1}{18}, \frac{1}{18}, \frac{2}{9}, \frac{2}{9}, \frac{7}{18}, \frac{7}{18}, \frac{5}{9}, \frac{5}{9}, \frac{13}{18}, \frac{13}{18}, \frac{8}{9}, \frac{8}{9}, \frac{1}{16}, \frac{1}{16}, \frac{1}{16}, \frac{41}{144}, \frac{41}{144}, \frac{89}{144}, \frac{89}{144}, \frac{137}{144}, \frac{137}{144}$  \\
$27^{ B}_{ 1/2}$ & $36$ &\tiny $1\times 18 , \zeta_2^1\times 9$ & $0, \frac{1}{2}, 0, \frac{1}{2}, 0, \frac{1}{2}, \frac{1}{9}, \frac{1}{9}, \frac{5}{18}, \frac{5}{18}, \frac{4}{9}, \frac{4}{9}, \frac{11}{18}, \frac{11}{18}, \frac{7}{9}, \frac{7}{9}, \frac{17}{18}, \frac{17}{18}, \frac{1}{16}, \frac{1}{16}, \frac{1}{16}, \frac{25}{144}, \frac{25}{144}, \frac{73}{144}, \frac{73}{144}, \frac{121}{144}, \frac{121}{144}$  \\
$27^{ B}_{ 3/2}$ & $36$ &\tiny $1\times 18 , \zeta_2^1\times 9$ & $0, \frac{1}{2}, 0, \frac{1}{2}, 0, \frac{1}{2}, 0, 0, \frac{1}{6}, \frac{1}{6}, \frac{1}{3}, \frac{1}{3}, \frac{1}{2}, \frac{1}{2}, \frac{2}{3}, \frac{2}{3}, \frac{5}{6}, \frac{5}{6}, \frac{3}{16}, \frac{3}{16}, \frac{3}{16}, \frac{3}{16}, \frac{3}{16}, \frac{25}{48}, \frac{25}{48}, \frac{41}{48}, \frac{41}{48}$  \\
$27^{ B}_{ 3/2}$ & $36$ &\tiny $1\times 18 , \zeta_2^1\times 9$ & $0, \frac{1}{2}, 0, \frac{1}{2}, 0, \frac{1}{2}, \frac{1}{18}, \frac{1}{18}, \frac{2}{9}, \frac{2}{9}, \frac{7}{18}, \frac{7}{18}, \frac{5}{9}, \frac{5}{9}, \frac{13}{18}, \frac{13}{18}, \frac{8}{9}, \frac{8}{9}, \frac{11}{144}, \frac{11}{144}, \frac{3}{16}, \frac{3}{16}, \frac{3}{16}, \frac{59}{144}, \frac{59}{144}, \frac{107}{144}, \frac{107}{144}$  \\
$27^{ B}_{ 3/2}$ & $36$ &\tiny $1\times 18 , \zeta_2^1\times 9$ & $0, \frac{1}{2}, 0, \frac{1}{2}, 0, \frac{1}{2}, \frac{1}{9}, \frac{1}{9}, \frac{5}{18}, \frac{5}{18}, \frac{4}{9}, \frac{4}{9}, \frac{11}{18}, \frac{11}{18}, \frac{7}{9}, \frac{7}{9}, \frac{17}{18}, \frac{17}{18}, \frac{3}{16}, \frac{3}{16}, \frac{3}{16}, \frac{43}{144}, \frac{43}{144}, \frac{91}{144}, \frac{91}{144}, \frac{139}{144}, \frac{139}{144}$  \\
$27^{ B}_{ 5/2}$ & $36$ &\tiny $1\times 18 , \zeta_2^1\times 9$ & $0, \frac{1}{2}, 0, \frac{1}{2}, 0, \frac{1}{2}, 0, 0, \frac{1}{6}, \frac{1}{6}, \frac{1}{3}, \frac{1}{3}, \frac{1}{2}, \frac{1}{2}, \frac{2}{3}, \frac{2}{3}, \frac{5}{6}, \frac{5}{6}, \frac{5}{16}, \frac{5}{16}, \frac{5}{16}, \frac{5}{16}, \frac{5}{16}, \frac{31}{48}, \frac{31}{48}, \frac{47}{48}, \frac{47}{48}$  \\
$27^{ B}_{ 5/2}$ & $36$ &\tiny $1\times 18 , \zeta_2^1\times 9$ & $0, \frac{1}{2}, 0, \frac{1}{2}, 0, \frac{1}{2}, \frac{1}{18}, \frac{1}{18}, \frac{2}{9}, \frac{2}{9}, \frac{7}{18}, \frac{7}{18}, \frac{5}{9}, \frac{5}{9}, \frac{13}{18}, \frac{13}{18}, \frac{8}{9}, \frac{8}{9}, \frac{29}{144}, \frac{29}{144}, \frac{5}{16}, \frac{5}{16}, \frac{5}{16}, \frac{77}{144}, \frac{77}{144}, \frac{125}{144}, \frac{125}{144}$  \\
$27^{ B}_{ 5/2}$ & $36$ &\tiny $1\times 18 , \zeta_2^1\times 9$ & $0, \frac{1}{2}, 0, \frac{1}{2}, 0, \frac{1}{2}, \frac{1}{9}, \frac{1}{9}, \frac{5}{18}, \frac{5}{18}, \frac{4}{9}, \frac{4}{9}, \frac{11}{18}, \frac{11}{18}, \frac{7}{9}, \frac{7}{9}, \frac{17}{18}, \frac{17}{18}, \frac{13}{144}, \frac{13}{144}, \frac{5}{16}, \frac{5}{16}, \frac{5}{16}, \frac{61}{144}, \frac{61}{144}, \frac{109}{144}, \frac{109}{144}$  \\
$27^{ B}_{ 7/2}$ & $36$ &\tiny $1\times 18 , \zeta_2^1\times 9$ & $0, \frac{1}{2}, 0, \frac{1}{2}, 0, \frac{1}{2}, 0, 0, \frac{1}{6}, \frac{1}{6}, \frac{1}{3}, \frac{1}{3}, \frac{1}{2}, \frac{1}{2}, \frac{2}{3}, \frac{2}{3}, \frac{5}{6}, \frac{5}{6}, \frac{5}{48}, \frac{5}{48}, \frac{7}{16}, \frac{7}{16}, \frac{7}{16}, \frac{7}{16}, \frac{7}{16}, \frac{37}{48}, \frac{37}{48}$  \\
$27^{ B}_{ 7/2}$ & $36$ &\tiny $1\times 18 , \zeta_2^1\times 9$ & $0, \frac{1}{2}, 0, \frac{1}{2}, 0, \frac{1}{2}, \frac{1}{18}, \frac{1}{18}, \frac{2}{9}, \frac{2}{9}, \frac{7}{18}, \frac{7}{18}, \frac{5}{9}, \frac{5}{9}, \frac{13}{18}, \frac{13}{18}, \frac{8}{9}, \frac{8}{9}, \frac{47}{144}, \frac{47}{144}, \frac{7}{16}, \frac{7}{16}, \frac{7}{16}, \frac{95}{144}, \frac{95}{144}, \frac{143}{144}, \frac{143}{144}$  \\
$27^{ B}_{ 7/2}$ & $36$ &\tiny $1\times 18 , \zeta_2^1\times 9$ & $0, \frac{1}{2}, 0, \frac{1}{2}, 0, \frac{1}{2}, \frac{1}{9}, \frac{1}{9}, \frac{5}{18}, \frac{5}{18}, \frac{4}{9}, \frac{4}{9}, \frac{11}{18}, \frac{11}{18}, \frac{7}{9}, \frac{7}{9}, \frac{17}{18}, \frac{17}{18}, \frac{31}{144}, \frac{31}{144}, \frac{7}{16}, \frac{7}{16}, \frac{7}{16}, \frac{79}{144}, \frac{79}{144}, \frac{127}{144}, \frac{127}{144}$  \\
$27^{ B}_{-7/2}$ & $36$ &\tiny $1\times 18 , \zeta_2^1\times 9$ & $0, \frac{1}{2}, 0, \frac{1}{2}, 0, \frac{1}{2}, 0, 0, \frac{1}{6}, \frac{1}{6}, \frac{1}{3}, \frac{1}{3}, \frac{1}{2}, \frac{1}{2}, \frac{2}{3}, \frac{2}{3}, \frac{5}{6}, \frac{5}{6}, \frac{11}{48}, \frac{11}{48}, \frac{9}{16}, \frac{9}{16}, \frac{9}{16}, \frac{9}{16}, \frac{9}{16}, \frac{43}{48}, \frac{43}{48}$  \\
$27^{ B}_{-7/2}$ & $36$ &\tiny $1\times 18 , \zeta_2^1\times 9$ & $0, \frac{1}{2}, 0, \frac{1}{2}, 0, \frac{1}{2}, \frac{1}{18}, \frac{1}{18}, \frac{2}{9}, \frac{2}{9}, \frac{7}{18}, \frac{7}{18}, \frac{5}{9}, \frac{5}{9}, \frac{13}{18}, \frac{13}{18}, \frac{8}{9}, \frac{8}{9}, \frac{17}{144}, \frac{17}{144}, \frac{65}{144}, \frac{65}{144}, \frac{9}{16}, \frac{9}{16}, \frac{9}{16}, \frac{113}{144}, \frac{113}{144}$  \\
$27^{ B}_{-7/2}$ & $36$ &\tiny $1\times 18 , \zeta_2^1\times 9$ & $0, \frac{1}{2}, 0, \frac{1}{2}, 0, \frac{1}{2}, \frac{1}{9}, \frac{1}{9}, \frac{5}{18}, \frac{5}{18}, \frac{4}{9}, \frac{4}{9}, \frac{11}{18}, \frac{11}{18}, \frac{7}{9}, \frac{7}{9}, \frac{17}{18}, \frac{17}{18}, \frac{1}{144}, \frac{1}{144}, \frac{49}{144}, \frac{49}{144}, \frac{9}{16}, \frac{9}{16}, \frac{9}{16}, \frac{97}{144}, \frac{97}{144}$  \\
$27^{ B}_{-5/2}$ & $36$ &\tiny $1\times 18 , \zeta_2^1\times 9$ & $0, \frac{1}{2}, 0, \frac{1}{2}, 0, \frac{1}{2}, 0, 0, \frac{1}{6}, \frac{1}{6}, \frac{1}{3}, \frac{1}{3}, \frac{1}{2}, \frac{1}{2}, \frac{2}{3}, \frac{2}{3}, \frac{5}{6}, \frac{5}{6}, \frac{1}{48}, \frac{1}{48}, \frac{17}{48}, \frac{17}{48}, \frac{11}{16}, \frac{11}{16}, \frac{11}{16}, \frac{11}{16}, \frac{11}{16}$  \\
$27^{ B}_{-5/2}$ & $36$ &\tiny $1\times 18 , \zeta_2^1\times 9$ & $0, \frac{1}{2}, 0, \frac{1}{2}, 0, \frac{1}{2}, \frac{1}{18}, \frac{1}{18}, \frac{2}{9}, \frac{2}{9}, \frac{7}{18}, \frac{7}{18}, \frac{5}{9}, \frac{5}{9}, \frac{13}{18}, \frac{13}{18}, \frac{8}{9}, \frac{8}{9}, \frac{35}{144}, \frac{35}{144}, \frac{83}{144}, \frac{83}{144}, \frac{11}{16}, \frac{11}{16}, \frac{11}{16}, \frac{131}{144}, \frac{131}{144}$  \\
$27^{ B}_{-5/2}$ & $36$ &\tiny $1\times 18 , \zeta_2^1\times 9$ & $0, \frac{1}{2}, 0, \frac{1}{2}, 0, \frac{1}{2}, \frac{1}{9}, \frac{1}{9}, \frac{5}{18}, \frac{5}{18}, \frac{4}{9}, \frac{4}{9}, \frac{11}{18}, \frac{11}{18}, \frac{7}{9}, \frac{7}{9}, \frac{17}{18}, \frac{17}{18}, \frac{19}{144}, \frac{19}{144}, \frac{67}{144}, \frac{67}{144}, \frac{11}{16}, \frac{11}{16}, \frac{11}{16}, \frac{115}{144}, \frac{115}{144}$  \\
$27^{ B}_{-3/2}$ & $36$ &\tiny $1\times 18 , \zeta_2^1\times 9$ & $0, \frac{1}{2}, 0, \frac{1}{2}, 0, \frac{1}{2}, 0, 0, \frac{1}{6}, \frac{1}{6}, \frac{1}{3}, \frac{1}{3}, \frac{1}{2}, \frac{1}{2}, \frac{2}{3}, \frac{2}{3}, \frac{5}{6}, \frac{5}{6}, \frac{7}{48}, \frac{7}{48}, \frac{23}{48}, \frac{23}{48}, \frac{13}{16}, \frac{13}{16}, \frac{13}{16}, \frac{13}{16}, \frac{13}{16}$  \\
$27^{ B}_{-3/2}$ & $36$ &\tiny $1\times 18 , \zeta_2^1\times 9$ & $0, \frac{1}{2}, 0, \frac{1}{2}, 0, \frac{1}{2}, \frac{1}{18}, \frac{1}{18}, \frac{2}{9}, \frac{2}{9}, \frac{7}{18}, \frac{7}{18}, \frac{5}{9}, \frac{5}{9}, \frac{13}{18}, \frac{13}{18}, \frac{8}{9}, \frac{8}{9}, \frac{5}{144}, \frac{5}{144}, \frac{53}{144}, \frac{53}{144}, \frac{101}{144}, \frac{101}{144}, \frac{13}{16}, \frac{13}{16}, \frac{13}{16}$  \\
$27^{ B}_{-3/2}$ & $36$ &\tiny $1\times 18 , \zeta_2^1\times 9$ & $0, \frac{1}{2}, 0, \frac{1}{2}, 0, \frac{1}{2}, \frac{1}{9}, \frac{1}{9}, \frac{5}{18}, \frac{5}{18}, \frac{4}{9}, \frac{4}{9}, \frac{11}{18}, \frac{11}{18}, \frac{7}{9}, \frac{7}{9}, \frac{17}{18}, \frac{17}{18}, \frac{37}{144}, \frac{37}{144}, \frac{85}{144}, \frac{85}{144}, \frac{13}{16}, \frac{13}{16}, \frac{13}{16}, \frac{133}{144}, \frac{133}{144}$  \\
$27^{ B}_{-1/2}$ & $36$ &\tiny $1\times 18 , \zeta_2^1\times 9$ & $0, \frac{1}{2}, 0, \frac{1}{2}, 0, \frac{1}{2}, 0, 0, \frac{1}{6}, \frac{1}{6}, \frac{1}{3}, \frac{1}{3}, \frac{1}{2}, \frac{1}{2}, \frac{2}{3}, \frac{2}{3}, \frac{5}{6}, \frac{5}{6}, \frac{13}{48}, \frac{13}{48}, \frac{29}{48}, \frac{29}{48}, \frac{15}{16}, \frac{15}{16}, \frac{15}{16}, \frac{15}{16}, \frac{15}{16}$  \\
$27^{ B}_{-1/2}$ & $36$ &\tiny $1\times 18 , \zeta_2^1\times 9$ & $0, \frac{1}{2}, 0, \frac{1}{2}, 0, \frac{1}{2}, \frac{1}{18}, \frac{1}{18}, \frac{2}{9}, \frac{2}{9}, \frac{7}{18}, \frac{7}{18}, \frac{5}{9}, \frac{5}{9}, \frac{13}{18}, \frac{13}{18}, \frac{8}{9}, \frac{8}{9}, \frac{23}{144}, \frac{23}{144}, \frac{71}{144}, \frac{71}{144}, \frac{119}{144}, \frac{119}{144}, \frac{15}{16}, \frac{15}{16}, \frac{15}{16}$  \\
$27^{ B}_{-1/2}$ & $36$ &\tiny $1\times 18 , \zeta_2^1\times 9$ & $0, \frac{1}{2}, 0, \frac{1}{2}, 0, \frac{1}{2}, \frac{1}{9}, \frac{1}{9}, \frac{5}{18}, \frac{5}{18}, \frac{4}{9}, \frac{4}{9}, \frac{11}{18}, \frac{11}{18}, \frac{7}{9}, \frac{7}{9}, \frac{17}{18}, \frac{17}{18}, \frac{7}{144}, \frac{7}{144}, \frac{55}{144}, \frac{55}{144}, \frac{103}{144}, \frac{103}{144}, \frac{15}{16}, \frac{15}{16}, \frac{15}{16}$  \\
\hline
\end{tabular}
\end{table*} 

\vfill
\clearpage

\appendix

\section{Tables for the solutions of $(\tilde N^{ab}_c,\tilde s_a;
N^{ij}_k,s_i)$ -- imperfect tables for $\mce{\cE}$}

\label{SETtbl}

\def\arraystretch{1.25} \setlength\tabcolsep{3pt}
\begin{table}[t] 
\caption{
$Z_2$-SET orders (or $\mce{\Rp(Z_2)}$) for bosonic systems labeled in terms of sets of topological excitations.
The list contains all topological orders with 
$N=3,4$ and $D^2\leq 100$.
All the topologically orders in this list are anomaly free
(\ie have modular extensions), and are realizable by 2+1D bosonic systems.
We use $N^{|\Th|}_{c}$ to label $\mce{\cE}$'s, where
$\Theta ={D}^{-1}\sum_{i}\ee^{2\pi\ii s_i} d_i^2=
|\Th|\ee^{2\pi \ii c/8}$ and $D^2=\sum_id_i^2$.
} 
\label{SETZ2-34} 
\centering
\begin{tabular}{ |c|c|l|l|l| } 
\hline 
$N^{|\Th|}_{c}$ & $D^2$ & $d_1,d_2,\cdots$ & $s_1,s_2,\cdots$ & comment \\
\hline 
$2^{\zeta_{2}^{1}}_{ 0}$ & $2$ & $1, 1$ & $0, 0$ & $\cE=\text{Rep}(Z_2)$\\
\hline
$3^{\zeta_{2}^{1}}_{ 2}$ & $6$ & $1, 1, 2$ & $0, 0, \frac{1}{3}$ & 
SB:$K=$\tiny $\begin{pmatrix}
 2 & -1 \\
 -1 & 2 \\
\end{pmatrix}
$
\\
$3^{\zeta_{2}^{1}}_{-2}$ & $6$ & $1, 1, 2$ & $0, 0, \frac{2}{3}$ & SB:$K=$\tiny $\begin{pmatrix}
 -2 & 1 \\
 1 & -2 \\
\end{pmatrix}
$
\\
\hline
$4^{\zeta_{2}^{1}}_{ 1}$ & $4$ & $1, 1, 1, 1$ & $0, 0, \frac{1}{4}, \frac{1}{4}$ & $2^B_1\boxtimes \text{Rep}(Z_2)$\\
 $4^{\zeta_{2}^{1}}_{ 1}$ & $4$ & $1, 1, 1, 1$ & $0, 0, \frac{1}{4}, \frac{1}{4}$ & $2^B_1\boxtimes^t \text{Rep}(Z_2)$ \\
$4^{\zeta_{2}^{1}}_{-1}$ & $4$ & $1, 1, 1, 1$ & $0, 0, \frac{3}{4}, \frac{3}{4}$ & $2^B_{-1}\boxtimes \text{Rep}(Z_2)$\\
 $4^{\zeta_{2}^{1}}_{-1}$ & $4$ & $1, 1, 1, 1$ & $0, 0, \frac{3}{4}, \frac{3}{4}$ & $2^B_{-1}\boxtimes^t \text{Rep}(Z_2)$ \\
$4^{\zeta_{2}^{1}}_{ 14/5}$ & $7.2360$ & $1, 1,\zeta_{3}^{1},\zeta_{3}^{1}$ & $0, 0, \frac{2}{5}, \frac{2}{5}$ & $2^B_{ 14/5}\boxtimes \text{Rep}(Z_2)$ \\
$4^{\zeta_{2}^{1}}_{-14/5}$ & $7.2360$ & $1, 1,\zeta_{3}^{1},\zeta_{3}^{1}$ & $0, 0, \frac{3}{5}, \frac{3}{5}$ & $2^B_{-14/5}\boxtimes \text{Rep}(Z_2)$ \\
$4^{\zeta_{2}^{1}}_{ 0}$ & $10$ & $1, 1, 2, 2$ & $0, 0, \frac{1}{5}, \frac{4}{5}$ & 
SB:$K=$\tiny $\begin{pmatrix}
 2 & -3 \\
 -3 & 2 \\
\end{pmatrix}
$
\\
$4^{\zeta_{2}^{1}}_{ 4}$ & $10$ & $1, 1, 2, 2$ & $0, 0, \frac{2}{5}, \frac{3}{5}$ & SB:$K=$\tiny $
\begin{pmatrix}
2 & 1 & 0 & 0 \\ 
1 & 2 & 0 & 1 \\ 
0 & 0 & 2 & 1\\
0 & 1 & 1 & 2\\
\end{pmatrix}
$\\
 \hline 
\end{tabular} 
\end{table}

\def\arraystretch{1.25} \setlength\tabcolsep{3pt}
\begin{table}[t] 
\caption{
$Z_2$-SET orders for bosonic systems labeled in terms of sets of topological excitations.
The list contains all topological orders with 
$N=5$ and $D^2\leq 100$.
} 
\label{SETZ2-5} 
\centering
\begin{tabular}{ |c|c|l|l|l| } 
\hline 
$N^{|\Th|}_{c}$ & $D^2$ & $d_1,d_2,\cdots$ & $s_1,s_2,\cdots$ & comment \\
\hline 
$2^{\zeta_{2}^{1}}_{ 0}$ & $2$ & $1, 1$ & $0, 0$ & $\cE=\Rp(Z_2)$\\
\hline
$5^{\zeta_{2}^{1}}_{ 0}$ & $8$ & $1, 1, 1, 1, 2$ & $0, 0, \frac{1}{2}, \frac{1}{2}, 0$ & SB:$4^{ B}_{ 0}$ F:$Z_2\times Z_2$ \\
$5^{\zeta_{2}^{1}}_{ 0}$ & $8$ & $1, 1, 1, 1, 2$ & $0, 0, \frac{1}{2}, \frac{1}{2}, 0$ & SB:$4^{ B}_{ 0}$ F:$Z_4$ NR \\
$5^{\zeta_{2}^{1}}_{ 1}$ & $8$ & $1, 1, 1, 1, 2$ & $0, 0, \frac{1}{2}, \frac{1}{2}, \frac{1}{8}$ & SB:$4^{ B}_{ 1}$  F:$Z_2\times Z_2$\\
$5^{\zeta_{2}^{1}}_{ 1}$ & $8$ & $1, 1, 1, 1, 2$ & $0, 0, \frac{1}{2}, \frac{1}{2}, \frac{1}{8}$ & SB:$4^{ B}_{ 1}$ F:$Z_4$ NR \\
$5^{\zeta_{2}^{1}}_{ 2}$ & $8$ & $1, 1, 1, 1, 2$ & $0, 0, \frac{1}{2}, \frac{1}{2}, \frac{1}{4}$ & SB:$4^{ B}_{ 2}$  F:$Z_2\times Z_2$\\
$5^{\zeta_{2}^{1}}_{ 2}$ & $8$ & $1, 1, 1, 1, 2$ & $0, 0, \frac{1}{2}, \frac{1}{2}, \frac{1}{4}$ & SB:$4^{ B}_{ 2}$ F:$Z_4$ NR \\
$5^{\zeta_{2}^{1}}_{ 3}$ & $8$ & $1, 1, 1, 1, 2$ & $0, 0, \frac{1}{2}, \frac{1}{2}, \frac{3}{8}$ & SB:$4^{ B}_{ 3}$  F:$Z_2\times Z_2$\\
$5^{\zeta_{2}^{1}}_{ 3}$ & $8$ & $1, 1, 1, 1, 2$ & $0, 0, \frac{1}{2}, \frac{1}{2}, \frac{3}{8}$ & SB:$4^{ B}_{ 3}$ F:$Z_4$ NR \\
$5^{\zeta_{2}^{1}}_{ 4}$ & $8$ & $1, 1, 1, 1, 2$ & $0, 0, \frac{1}{2}, \frac{1}{2}, \frac{1}{2}$ & SB:$4^{ B}_{ 4}$ 
\tiny $
\begin{pmatrix}
2 & 1 & 1 & 1 \\ 
1 & 2 & 0 & 0 \\ 
1 & 0 & 2 & 0\\
1 & 0 & 0 & 2\\
\end{pmatrix}
$
\\
$5^{\zeta_{2}^{1}}_{ 4}$ & $8$ & $1, 1, 1, 1, 2$ & $0, 0, \frac{1}{2}, \frac{1}{2}, \frac{1}{2}$ & SB:$4^{ B}_{ 4}$ F:$Z_4$ NR \\
$5^{\zeta_{2}^{1}}_{-3}$ & $8$ & $1, 1, 1, 1, 2$ & $0, 0, \frac{1}{2}, \frac{1}{2}, \frac{5}{8}$ & SB:$4^{ B}_{-3}$  F:$Z_2\times Z_2$\\
$5^{\zeta_{2}^{1}}_{-3}$ & $8$ & $1, 1, 1, 1, 2$ & $0, 0, \frac{1}{2}, \frac{1}{2}, \frac{5}{8}$ & SB:$4^{ B}_{-3}$ F:$Z_4$ NR \\
$5^{\zeta_{2}^{1}}_{-2}$ & $8$ & $1, 1, 1, 1, 2$ & $0, 0, \frac{1}{2}, \frac{1}{2}, \frac{3}{4}$ & SB:$4^{ B}_{-2}$  F:$Z_2\times Z_2$\\
$5^{\zeta_{2}^{1}}_{-2}$ & $8$ & $1, 1, 1, 1, 2$ & $0, 0, \frac{1}{2}, \frac{1}{2}, \frac{3}{4}$ & SB:$4^{ B}_{-2}$ F:$Z_4$ NR \\
$5^{\zeta_{2}^{1}}_{-1}$ & $8$ & $1, 1, 1, 1, 2$ & $0, 0, \frac{1}{2}, \frac{1}{2}, \frac{7}{8}$ & SB:$4^{ B}_{-1}$  F:$Z_2\times Z_2$\\
$5^{\zeta_{2}^{1}}_{-1}$ & $8$ & $1, 1, 1, 1, 2$ & $0, 0, \frac{1}{2}, \frac{1}{2}, \frac{7}{8}$ & SB:$4^{ B}_{-1}$ F:$Z_4$ NR \\
$5^{\zeta_{2}^{1}}_{ 2}$ & $14$ & $1, 1, 2, 2, 2$ & $0, 0, \frac{1}{7}, \frac{2}{7}, \frac{4}{7}$ & SB:$7^{ B}_{ 2}$ \\
$5^{\zeta_{2}^{1}}_{-2}$ & $14$ & $1, 1, 2, 2, 2$ & $0, 0, \frac{3}{7}, \frac{5}{7}, \frac{6}{7}$ & SB:$7^{ B}_{-2}$ \\
$5^{\zeta_{2}^{1}}_{ 12/5}$ & $26.180$ & $1, 1,\zeta_{8}^{2},\zeta_{8}^{2},\zeta_{8}^{4}$ & $0, 0, \frac{1}{5}, \frac{1}{5}, \frac{3}{5}$ & SB:$4^{ B}_{ 12/5}$ \\
$5^{\zeta_{2}^{1}}_{-12/5}$ & $26.180$ & $1, 1,\zeta_{8}^{2},\zeta_{8}^{2},\zeta_{8}^{4}$ & $0, 0, \frac{4}{5}, \frac{4}{5}, \frac{2}{5}$ & SB:$4^{ B}_{-12/5}$ \\
 \hline 
\end{tabular} 
\end{table}

In this appendix, we list $\mce{\cE}$'s for various symmetry $\cE$,  which can
also be viewed as the list of 2+1D SET orders (up to invertible ones) with
symmetry $\cE$.  Those lists are created using a naive calculation, by
checking the necessary conditions on the data $(\tilde N^{ab}_c,\tilde s_a;
N^{ij}_k,s_i)$ (for details, see Appendix \ref{cnds}).  So those lists should
contain all $\mce{\cE}$'s (\ie all SET orders).  But since the conditions are
only known to be necessary, the lists may contain fake entries that do not
correspond to any $\mce{\cE}$ (or any SET order).  In other words, some entries
in the lists have no modular extensions and those entries do not correspond any
real 2+1D SET order. 

The entries with known decomposition $N^B_c\boxtimes \Rp(G)$ or $N^B_c\boxtimes
\sRp(G^f)$, or with given $K$-matrix in the comment column all  correspond to
existing 2+1D SET orders.  (The topological orders described by $N^B_c$ are
given by the tables in \Ref{W150605768}.) Other entries may or may not
correspond to existing 2+1D SET orders, which need to be determined by checking
the existence of modular extensions.

Even for the entries that have modular extensions, some times they may
correspond to more than one $\mce{\cE}$'s.  This is because $(\tilde
N^{ab}_c,\tilde s_a;N^{ij}_k,s_i)$ cannot distinguish all different
$\mce{\cE}$'s.

\subsection{$Z_2$-SET orders}

Tables \ref{SETZ2-34}, \ref{SETZ2-5}, and \ref{SETZ2-6}
list the $Z_2$-SET
orders  (up to invertible ones) for 2+1D bosonic systems.  
For bosonic systems the central charge is determined up to $8$ by the bulk
excitations.  The $3^{\zeta_{2}^{1}}_{2}$ states and the two
$4^{\zeta_{2}^{1}}_{ 1}$ states in Table \ref{SETZ2-34} are discussed in the
main text.

All the  $Z_2$-SET orders in Table \ref{SETZ2-34} are
realizable.  Some of the them are realized as $N^B_c \boxtimes \Rp(Z_2)$, as
indicated in the comment column.  Here $N^B_c$ describes a neutral bosonic
topological order (which was denoted as $N^B_c$ in \Ref{W150605768}) with rank
$N$ and central charge $c$, which does not transform under the $Z_2$ symmetry.
For example $2^B_1$ is the $\nu=1/2$ bosonic Laughlin state, and $2^B_{14/5}$
is the bosonic Fibonacci state\cite{W150605768}.  $\Rp(Z_2)$ describes a
product state with $Z_2$ symmetry of $Z_2$ charged bosons.  $N^B_c \boxtimes
\Rp(Z_2)$ is simply the stacking of the neutral bosonic topological order
$N^B_c$ with the $Z_2$ symmetric product state.

We also introduced $N^B_c \boxtimes^t \Rp(Z_2)$ which describe a state
similar to $N^B_c \boxtimes \Rp(Z_2)$, except here the bosons that form the
topological order $N^B_c$ also carries a $Z_2$ charge.
The $3^{\zeta_2^1}_2$ state can be realized by double-layer FQH state with
$K$-matrix $\bpm 2&-1\\ -1&2\epm$, which is discussed in the main text.

Since we did not use the condition of the existence of modular extensions when
we calculate the tables, some the entries in the tables may not by realizable
by any 2+1D bosonic systems. We use NR in the comment column
to indicate such entries (see Table \ref{SETZ2-5}).

\def\arraystretch{1.25} \setlength\tabcolsep{3pt}
\begin{table*}[t] 
\caption{
$Z_2$-SET orders for bosonic systems labeled in terms of sets of topological excitations.
The list contains all topological orders with $N=6$ $D^2\leq 50$.
} 
\label{SETZ2-6} 
\centering
\begin{tabular}{ |c|c|l|l|l| } 
\hline 
$N^{|\Th|}_{c}$ & $D^2$ & $d_1,d_2,\cdots$ & $s_1,s_2,\cdots$ & comment \\
\hline 
$2^{\zeta_{2}^{1}}_{ 0}$ & $2$ & $1, 1$ & $0, 0$ & $\cE=\Rp(Z_2)$\\
\hline
$6^{\zeta_{2}^{1}}_{ 2}$ & $6$ & $1, 1, 1, 1, 1, 1$ & $0, 0, \frac{1}{3}, \frac{1}{3}, \frac{1}{3}, \frac{1}{3}$ & $3^{ B}_{ 2}\boxtimes \Rp(Z_2)$\\
$6^{\zeta_{2}^{1}}_{-2}$ & $6$ & $1, 1, 1, 1, 1, 1$ & $0, 0, \frac{2}{3}, \frac{2}{3}, \frac{2}{3}, \frac{2}{3}$ & $3^{ B}_{-2}\boxtimes \Rp(Z_2)$\\
$6^{\zeta_{2}^{1}}_{ 1/2}$ & $8$ & $1, 1, 1, 1,\zeta_{2}^{1},\zeta_{2}^{1}$ & $0, 0, \frac{1}{2}, \frac{1}{2}, \frac{1}{16}, \frac{1}{16}$ & $3^{ B}_{ 1/2}\boxtimes \Rp(Z_2)$\\
$6^{\zeta_{2}^{1}}_{ 1/2}$ & $8$ & $1, 1, 1, 1,\zeta_{2}^{1},\zeta_{2}^{1}$ & $0, 0, \frac{1}{2}, \frac{1}{2}, \frac{1}{16}, \frac{1}{16}$ & SB:$3^{ B}_{ 1/2}$ \\
$6^{\zeta_{2}^{1}}_{ 3/2}$ & $8$ & $1, 1, 1, 1,\zeta_{2}^{1},\zeta_{2}^{1}$ & $0, 0, \frac{1}{2}, \frac{1}{2}, \frac{3}{16}, \frac{3}{16}$ & $3^{ B}_{ 3/2}\boxtimes \Rp(Z_2)$\\
$6^{\zeta_{2}^{1}}_{ 3/2}$ & $8$ & $1, 1, 1, 1,\zeta_{2}^{1},\zeta_{2}^{1}$ & $0, 0, \frac{1}{2}, \frac{1}{2}, \frac{3}{16}, \frac{3}{16}$ & SB:$3^{ B}_{ 3/2}$ \\
$6^{\zeta_{2}^{1}}_{ 5/2}$ & $8$ & $1, 1, 1, 1,\zeta_{2}^{1},\zeta_{2}^{1}$ & $0, 0, \frac{1}{2}, \frac{1}{2}, \frac{5}{16}, \frac{5}{16}$ & $3^{ B}_{ 5/2}\boxtimes \Rp(Z_2)$\\
$6^{\zeta_{2}^{1}}_{ 5/2}$ & $8$ & $1, 1, 1, 1,\zeta_{2}^{1},\zeta_{2}^{1}$ & $0, 0, \frac{1}{2}, \frac{1}{2}, \frac{5}{16}, \frac{5}{16}$ & SB:$3^{ B}_{ 5/2}$ \\
$6^{\zeta_{2}^{1}}_{ 7/2}$ & $8$ & $1, 1, 1, 1,\zeta_{2}^{1},\zeta_{2}^{1}$ & $0, 0, \frac{1}{2}, \frac{1}{2}, \frac{7}{16}, \frac{7}{16}$ & $3^{ B}_{ 7/2}\boxtimes \Rp(Z_2)$\\
$6^{\zeta_{2}^{1}}_{ 7/2}$ & $8$ & $1, 1, 1, 1,\zeta_{2}^{1},\zeta_{2}^{1}$ & $0, 0, \frac{1}{2}, \frac{1}{2}, \frac{7}{16}, \frac{7}{16}$ & SB:$3^{ B}_{ 7/2}$ \\
$6^{\zeta_{2}^{1}}_{-7/2}$ & $8$ & $1, 1, 1, 1,\zeta_{2}^{1},\zeta_{2}^{1}$ & $0, 0, \frac{1}{2}, \frac{1}{2}, \frac{9}{16}, \frac{9}{16}$ & $3^{ B}_{-7/2}\boxtimes \Rp(Z_2)$\\
$6^{\zeta_{2}^{1}}_{-7/2}$ & $8$ & $1, 1, 1, 1,\zeta_{2}^{1},\zeta_{2}^{1}$ & $0, 0, \frac{1}{2}, \frac{1}{2}, \frac{9}{16}, \frac{9}{16}$ & SB:$3^{ B}_{-7/2}$ \\
$6^{\zeta_{2}^{1}}_{-5/2}$ & $8$ & $1, 1, 1, 1,\zeta_{2}^{1},\zeta_{2}^{1}$ & $0, 0, \frac{1}{2}, \frac{1}{2}, \frac{11}{16}, \frac{11}{16}$ & $3^{ B}_{-5/2}\boxtimes \Rp(Z_2)$\\
$6^{\zeta_{2}^{1}}_{-5/2}$ & $8$ & $1, 1, 1, 1,\zeta_{2}^{1},\zeta_{2}^{1}$ & $0, 0, \frac{1}{2}, \frac{1}{2}, \frac{11}{16}, \frac{11}{16}$ & SB:$3^{ B}_{-5/2}$ \\
$6^{\zeta_{2}^{1}}_{-3/2}$ & $8$ & $1, 1, 1, 1,\zeta_{2}^{1},\zeta_{2}^{1}$ & $0, 0, \frac{1}{2}, \frac{1}{2}, \frac{13}{16}, \frac{13}{16}$ & $3^{ B}_{-3/2}\boxtimes \Rp(Z_2)$\\
$6^{\zeta_{2}^{1}}_{-3/2}$ & $8$ & $1, 1, 1, 1,\zeta_{2}^{1},\zeta_{2}^{1}$ & $0, 0, \frac{1}{2}, \frac{1}{2}, \frac{13}{16}, \frac{13}{16}$ & SB:$3^{ B}_{-3/2}$ \\
$6^{\zeta_{2}^{1}}_{-1/2}$ & $8$ & $1, 1, 1, 1,\zeta_{2}^{1},\zeta_{2}^{1}$ & $0, 0, \frac{1}{2}, \frac{1}{2}, \frac{15}{16}, \frac{15}{16}$ & $3^{ B}_{-1/2}\boxtimes \Rp(Z_2)$\\
$6^{\zeta_{2}^{1}}_{-1/2}$ & $8$ & $1, 1, 1, 1,\zeta_{2}^{1},\zeta_{2}^{1}$ & $0, 0, \frac{1}{2}, \frac{1}{2}, \frac{15}{16}, \frac{15}{16}$ & SB:$3^{ B}_{-1/2}$ \\
$6^{\zeta_{2}^{1}}_{ 1}$ & $12$ & $1, 1, 1, 1, 2, 2$ & $0, 0, \frac{3}{4}, \frac{3}{4}, \frac{1}{12}, \frac{1}{3}$ & SB:$6^{ B}_{ 1}$ \\
$6^{\zeta_{2}^{1}}_{ 3}$ & $12$ & $1, 1, 1, 1, 2, 2$ & $0, 0, \frac{1}{4}, \frac{1}{4}, \frac{1}{3}, \frac{7}{12}$ & SB:$6^{ B}_{ 3}$ \\
$6^{\zeta_{2}^{1}}_{-3}$ & $12$ & $1, 1, 1, 1, 2, 2$ & $0, 0, \frac{3}{4}, \frac{3}{4}, \frac{5}{12}, \frac{2}{3}$ & SB:$6^{ B}_{-3}$ \\
$6^{\zeta_{2}^{1}}_{-1}$ & $12$ & $1, 1, 1, 1, 2, 2$ & $0, 0, \frac{1}{4}, \frac{1}{4}, \frac{2}{3}, \frac{11}{12}$ & SB:$6^{ B}_{-1}$ \\
$6^{\zeta_{2}^{1}}_{ 0}$ & $18$ & $1, 1, 2, 2, 2, 2$ & $0, 0, 0, 0, \frac{1}{3}, \frac{2}{3}$ & SB:$9^{ B}_{ 0}$ \\
$6^{\zeta_{2}^{1}}_{ 0}$ & $18$ & $1, 1, 2, 2, 2, 2$ & $0, 0, 0, \frac{1}{9}, \frac{4}{9}, \frac{7}{9}$ & SB:$9^{ B}_{ 0}$ \\
$6^{\zeta_{2}^{1}}_{ 0}$ & $18$ & $1, 1, 2, 2, 2, 2$ & $0, 0, 0, \frac{2}{9}, \frac{5}{9}, \frac{8}{9}$ & SB:$9^{ B}_{ 0}$ \\
$6^{\zeta_{2}^{1}}_{ 4}$ & $18$ & $1, 1, 2, 2, 2, 2$ & $0, 0, \frac{1}{3}, \frac{1}{3}, \frac{2}{3}, \frac{2}{3}$ & SB:$9^{ B}_{ 4}$ \\
$6^{\zeta_{2}^{1}}_{ 8/7}$ & $18.591$ & $1, 1,\zeta_{5}^{1},\zeta_{5}^{1},\zeta_{5}^{2},\zeta_{5}^{2}$ & $0, 0, \frac{6}{7}, \frac{6}{7}, \frac{2}{7}, \frac{2}{7}$ & $3^{ B}_{ 8/7}\boxtimes \Rp(Z_2)$\\
$6^{\zeta_{2}^{1}}_{-8/7}$ & $18.591$ & $1, 1,\zeta_{5}^{1},\zeta_{5}^{1},\zeta_{5}^{2},\zeta_{5}^{2}$ & $0, 0, \frac{1}{7}, \frac{1}{7}, \frac{5}{7}, \frac{5}{7}$ & $3^{ B}_{-8/7}\boxtimes \Rp(Z_2)$\\
$6^{\zeta_{2}^{1}}_{ 4/5}$ & $21.708$ & $1, 1,\zeta_{3}^{1},\zeta_{3}^{1}, 2,\zeta_{8}^{4}$ & $0, 0, \frac{2}{5}, \frac{2}{5}, \frac{2}{3}, \frac{1}{15}$ & $2^{ B}_{ 14/5}\boxtimes 3^{\zeta_{2}^{1}}_{-2}$\\
$6^{\zeta_{2}^{1}}_{ 16/5}$ & $21.708$ & $1, 1,\zeta_{3}^{1},\zeta_{3}^{1}, 2,\zeta_{8}^{4}$ & $0, 0, \frac{3}{5}, \frac{3}{5}, \frac{2}{3}, \frac{4}{15}$ & $2^{ B}_{-14/5}\boxtimes 3^{\zeta_{2}^{1}}_{-2}$\\
$6^{\zeta_{2}^{1}}_{-16/5}$ & $21.708$ & $1, 1,\zeta_{3}^{1},\zeta_{3}^{1}, 2,\zeta_{8}^{4}$ & $0, 0, \frac{2}{5}, \frac{2}{5}, \frac{1}{3}, \frac{11}{15}$ & $2^{ B}_{ 14/5}\boxtimes 3^{\zeta_{2}^{1}}_{ 2}$\\
$6^{\zeta_{2}^{1}}_{-4/5}$ & $21.708$ & $1, 1,\zeta_{3}^{1},\zeta_{3}^{1}, 2,\zeta_{8}^{4}$ & $0, 0, \frac{3}{5}, \frac{3}{5}, \frac{1}{3}, \frac{14}{15}$ & $2^{ B}_{-14/5}\boxtimes 3^{\zeta_{2}^{1}}_{ 2}$\\
 \hline 
\end{tabular} 
\end{table*}

\subsection{$Z_3$-SET orders}

Table \ref{SETZ3} lists the $Z_3$-SET orders  (up
to invertible ones) for 2+1D bosonic systems.  

The $Z_3$-SET state $4^{\zeta_{4}^{1}}_{ 4}$ in the table
becomes the $K=\tiny \begin{pmatrix}
2 & 1 & 1 & 1 \\ 
1 & 2 & 0 & 0 \\ 
1 & 0 & 2 & 0\\
1 & 0 & 0 & 2\\
\end{pmatrix}
$
4-layer FQH state after we break the $Z_3$-symmetry.  We can add the
$Z_3$-symmetry back to obtain the  $Z_3$-SET state.  The $Z_3$-symmetry is the
cyclic permutation of the second, the third, and the fourth layers.
 
Without the symmetry,
the $K=\tiny \begin{pmatrix}
2 & 1 & 1 & 1 \\ 
1 & 2 & 0 & 0 \\ 
1 & 0 & 2 & 0\\
1 & 0 & 0 & 2\\
\end{pmatrix}
$
state has four types of particles, a trivial boson and three non-trivial
fermions.  With the symmetry, the  three fermions become degenerate and is
combined into the $d=3$ particle (the fourth particle) for the
$4^{\zeta_{4}^{1}}_{ 4}$ state.
The first three particles for the
$4^{\zeta_{4}^{1}}_{ 4}$ state all come from the trivial boson.
They carry different $Z_3$ charges: $0,1,2$, in the presence
of the symmetry.

\def\arraystretch{1.25} \setlength\tabcolsep{3pt}
\begin{table*}[t] 
\caption{
$Z_3$-SET orders for bosonic systems labeled in terms of sets of topological excitations.
The list contains all topological orders with 
$N=4,5,6$ $D^2\leq 100$,
$N=7$ $D^2\leq 60$,
$N=8$ $D^2\leq 40$, and
$N=9$ $D^2\leq 28$.
} 
\label{SETZ3} 
\centering
\begin{tabular}{ |c|c|l|l|l| } 
\hline 
$N^{|\Th|}_{c}$ & $D^2$ & $d_1,d_2,\cdots$ & $s_1,s_2,\cdots$ & comment \\
\hline 
$3^{\zeta_{4}^{1}}_{ 0}$ & $3$ & $1, 1, 1$ & $0, 0, 0$ & $\cE=\Rp(Z_3)$\\
\hline
$4^{\zeta_{4}^{1}}_{ 4}$ & $12$ & $1, 1, 1, 3$ & $0, 0, 0, \frac{1}{2}$ & 
SB:$K=$\tiny $\begin{pmatrix}
2 & 1 & 1 & 1 \\ 
1 & 2 & 0 & 0 \\ 
1 & 0 & 2 & 0\\
1 & 0 & 0 & 2\\
\end{pmatrix}
$
\\
\hline
$6^{\zeta_{4}^{1}}_{ 1}$ & $6$ & $1, 1, 1, 1, 1, 1$ & $0, 0, 0, \frac{1}{4}, \frac{1}{4}, \frac{1}{4}$ & $2^B_1\boxtimes \text{Rep}(Z_3)$ \\
$6^{\zeta_{4}^{1}}_{-1}$ & $6$ & $1, 1, 1, 1, 1, 1$ & $0, 0, 0, \frac{3}{4}, \frac{3}{4}, \frac{3}{4}$ & $2^B_{-1}\boxtimes \text{Rep}(Z_3)$ \\
$6^{\zeta_{4}^{1}}_{ 14/5}$ & $10.854$ & $1, 1, 1,\zeta_{3}^{1},\zeta_{3}^{1},\zeta_{3}^{1}$ & $0, 0, 0, \frac{2}{5}, \frac{2}{5}, \frac{2}{5}$ &  $2^B_{ 14/5}\boxtimes \text{Rep}(Z_3)$\\
$6^{\zeta_{4}^{1}}_{-14/5}$ & $10.854$ & $1, 1, 1,\zeta_{3}^{1},\zeta_{3}^{1},\zeta_{3}^{1}$ & $0, 0, 0, \frac{3}{5}, \frac{3}{5}, \frac{3}{5}$ &  $2^B_{-14/5}\boxtimes \text{Rep}(Z_3)$\\
\hline
$8^{\zeta_{4}^{1}}_{ 3}$ & $24$ & $1, 1, 1, 1, 1, 1, 3, 3$ & $0, 0, 0, \frac{3}{4}, \frac{3}{4}, \frac{3}{4}, \frac{1}{4}, \frac{1}{2}$ & $2^{ B}_{-1}\boxtimes 4^{\zeta_{4}^{1}}_{ 4}$\\
$8^{\zeta_{4}^{1}}_{-3}$ & $24$ & $1, 1, 1, 1, 1, 1, 3, 3$ & $0, 0, 0, \frac{1}{4}, \frac{1}{4}, \frac{1}{4}, \frac{1}{2}, \frac{3}{4}$ & $2^{ B}_{ 1}\boxtimes 4^{\zeta_{4}^{1}}_{ 4}$\\
$8^{\zeta_{4}^{1}}_{ 6/5}$ & $43.416$ & $1, 1, 1,\zeta_{3}^{1},\zeta_{3}^{1},\zeta_{3}^{1}, 3,\frac{3+\sqrt{45}}{2}$ & $0, 0, 0, \frac{3}{5}, \frac{3}{5}, \frac{3}{5}, \frac{1}{2}, \frac{1}{10}$ & $2^{ B}_{-14/5}\boxtimes 4^{\zeta_{4}^{1}}_{ 4}$\\
$8^{\zeta_{4}^{1}}_{-6/5}$ & $43.416$ & $1, 1, 1,\zeta_{3}^{1},\zeta_{3}^{1},\zeta_{3}^{1}, 3,\frac{3+\sqrt{45}}{2}$ & $0, 0, 0, \frac{2}{5}, \frac{2}{5}, \frac{2}{5}, \frac{1}{2}, \frac{9}{10}$ & $2^{ B}_{ 14/5}\boxtimes 4^{\zeta_{4}^{1}}_{ 4}$\\
 \hline 
$9^{\zeta_{4}^{1}}_{ 2}$ & $9$ & $1, 1, 1, 1, 1, 1, 1, 1, 1$ & $0, 0, 0, \frac{1}{3}, \frac{1}{3}, \frac{1}{3}, \frac{1}{3}, \frac{1}{3}, \frac{1}{3}$ &  SB:$3^{ B}_{ 2}$ F:$Z_9$\\
$9^{\zeta_{4}^{1}}_{ 2}$ & $9$ & $1, 1, 1, 1, 1, 1, 1, 1, 1$ & $0, 0, 0, \frac{1}{3}, \frac{1}{3}, \frac{1}{3}, \frac{1}{3}, \frac{1}{3}, \frac{1}{3}$ & $3^{ B}_{ 2}\boxtimes \Rp(Z_3)$ F:$Z_3\times Z_3$\\
$9^{\zeta_{4}^{1}}_{-2}$ & $9$ & $1, 1, 1, 1, 1, 1, 1, 1, 1$ & $0, 0, 0, \frac{2}{3}, \frac{2}{3}, \frac{2}{3}, \frac{2}{3}, \frac{2}{3}, \frac{2}{3}$ & SB:$3^{ B}_{-2}$ F:$Z_9$\\
$9^{\zeta_{4}^{1}}_{-2}$ & $9$ & $1, 1, 1, 1, 1, 1, 1, 1, 1$ & $0, 0, 0, \frac{2}{3}, \frac{2}{3}, \frac{2}{3}, \frac{2}{3}, \frac{2}{3}, \frac{2}{3}$ & $3^{ B}_{-2}\boxtimes \Rp(Z_3)$ F:$Z_3\times Z_3$\\
$9^{\zeta_{4}^{1}}_{ 1/2}$ & $12$ & $1, 1, 1, 1, 1, 1,\zeta_{2}^{1},\zeta_{2}^{1},\zeta_{2}^{1}$ & $0, 0, 0, \frac{1}{2}, \frac{1}{2}, \frac{1}{2}, \frac{1}{16}, \frac{1}{16}, \frac{1}{16}$ & $3^{ B}_{ 1/2}\boxtimes \Rp(Z_3)$\\
$9^{\zeta_{4}^{1}}_{ 3/2}$ & $12$ & $1, 1, 1, 1, 1, 1,\zeta_{2}^{1},\zeta_{2}^{1},\zeta_{2}^{1}$ & $0, 0, 0, \frac{1}{2}, \frac{1}{2}, \frac{1}{2}, \frac{3}{16}, \frac{3}{16}, \frac{3}{16}$ & $3^{ B}_{ 3/2}\boxtimes \Rp(Z_3)$\\
$9^{\zeta_{4}^{1}}_{ 5/2}$ & $12$ & $1, 1, 1, 1, 1, 1,\zeta_{2}^{1},\zeta_{2}^{1},\zeta_{2}^{1}$ & $0, 0, 0, \frac{1}{2}, \frac{1}{2}, \frac{1}{2}, \frac{5}{16}, \frac{5}{16}, \frac{5}{16}$ & $3^{ B}_{ 5/2}\boxtimes \Rp(Z_3)$\\
$9^{\zeta_{4}^{1}}_{ 7/2}$ & $12$ & $1, 1, 1, 1, 1, 1,\zeta_{2}^{1},\zeta_{2}^{1},\zeta_{2}^{1}$ & $0, 0, 0, \frac{1}{2}, \frac{1}{2}, \frac{1}{2}, \frac{7}{16}, \frac{7}{16}, \frac{7}{16}$ & $3^{ B}_{ 7/2}\boxtimes \Rp(Z_3)$\\
$9^{\zeta_{4}^{1}}_{-7/2}$ & $12$ & $1, 1, 1, 1, 1, 1,\zeta_{2}^{1},\zeta_{2}^{1},\zeta_{2}^{1}$ & $0, 0, 0, \frac{1}{2}, \frac{1}{2}, \frac{1}{2}, \frac{9}{16}, \frac{9}{16}, \frac{9}{16}$ & $3^{ B}_{-7/2}\boxtimes \Rp(Z_3)$\\
$9^{\zeta_{4}^{1}}_{-5/2}$ & $12$ & $1, 1, 1, 1, 1, 1,\zeta_{2}^{1},\zeta_{2}^{1},\zeta_{2}^{1}$ & $0, 0, 0, \frac{1}{2}, \frac{1}{2}, \frac{1}{2}, \frac{11}{16}, \frac{11}{16}, \frac{11}{16}$ & $3^{ B}_{-5/2}\boxtimes \Rp(Z_3)$\\
$9^{\zeta_{4}^{1}}_{-3/2}$ & $12$ & $1, 1, 1, 1, 1, 1,\zeta_{2}^{1},\zeta_{2}^{1},\zeta_{2}^{1}$ & $0, 0, 0, \frac{1}{2}, \frac{1}{2}, \frac{1}{2}, \frac{13}{16}, \frac{13}{16}, \frac{13}{16}$ & $3^{ B}_{-3/2}\boxtimes \Rp(Z_3)$\\
$9^{\zeta_{4}^{1}}_{-1/2}$ & $12$ & $1, 1, 1, 1, 1, 1,\zeta_{2}^{1},\zeta_{2}^{1},\zeta_{2}^{1}$ & $0, 0, 0, \frac{1}{2}, \frac{1}{2}, \frac{1}{2}, \frac{15}{16}, \frac{15}{16}, \frac{15}{16}$ & $3^{ B}_{-1/2}\boxtimes \Rp(Z_3)$\\
$9^{\zeta_{4}^{1}}_{ 8/7}$ & $27.887$ & $1, 1, 1,\zeta_{5}^{1},\zeta_{5}^{1},\zeta_{5}^{1},\zeta_{5}^{2},\zeta_{5}^{2},\zeta_{5}^{2}$ & $0, 0, 0, \frac{6}{7}, \frac{6}{7}, \frac{6}{7}, \frac{2}{7}, \frac{2}{7}, \frac{2}{7}$ & $3^{ B}_{ 8/7}\boxtimes \Rp(Z_3)$\\
$9^{\zeta_{4}^{1}}_{-8/7}$ & $27.887$ & $1, 1, 1,\zeta_{5}^{1},\zeta_{5}^{1},\zeta_{5}^{1},\zeta_{5}^{2},\zeta_{5}^{2},\zeta_{5}^{2}$ & $0, 0, 0, \frac{1}{7}, \frac{1}{7}, \frac{1}{7}, \frac{5}{7}, \frac{5}{7}, \frac{5}{7}$ & $3^{ B}_{-8/7}\boxtimes \Rp(Z_3)$\\
 \hline 
\end{tabular} 
\end{table*}

\subsection{$S_3$-SET orders}

\def\arraystretch{1.25} \setlength\tabcolsep{3pt}
\begin{table}[t] 
\caption{
The fusion rules for the three $5^{\sqrt{6}}_{ 4}$ entries in Table \ref{SETS3-567}.
The three entries have identical $(d_i,s_i)$ but different fusions rules.
$\textbf{1}$, $a$, $b$ are the three irreducible representations
of $S_3$ with dimension 1, 1, 2.
} 
\label{frS3} 
\centering
\begin{tabular}{ |c|ccccc|}
 \hline 
 $s_i$ & $0$ & $ 0$ & $ 0$ & $ \frac{1}{2}$ & $ \frac{1}{2}$\\
 $d_i$ & $1$ & $ 1$ & $ 2$ & $ 3$ & $ 3$\\
\hline
 $5^{\sqrt{6}}_{ 4}$ & $\textbf{1}$  & $a$  & $b$  & $\sigma$  & $\tau$ \\
\hline
$\textbf{1}$  & $ \textbf{1}$  & $ a$  & $ b$  & $ \sigma$  & $ \tau$  \\
$a$  & $ a$  & $ \textbf{1}$  & $ b$  & $ \tau$  & $ \sigma$  \\
$b$  & $ b$  & $ b$  & $ \textbf{1} \oplus a \oplus b$  & $ \sigma \oplus \tau$  & $ \sigma \oplus \tau$  \\
$\sigma$  & $ \sigma$  & $ \tau$  & $ \sigma \oplus \tau$  & $ \textbf{1} \oplus b \oplus 2\sigma$  & $ a \oplus b \oplus 2\tau$  \\
$\tau$  & $ \tau$  & $ \sigma$  & $ \sigma \oplus \tau$  & $ a \oplus b \oplus 2\tau$  & $ \textbf{1} \oplus b \oplus 2\sigma$  \\
\hline
\end{tabular}
\\[3mm]
\begin{tabular}{ |c|ccccc|}
 \hline 
 $s_i$ & $0$ & $ 0$ & $ 0$ & $ \frac{1}{2}$ & $ \frac{1}{2}$\\
 $d_i$ & $1$ & $ 1$ & $ 2$ & $ 3$ & $ 3$\\
\hline
 $5^{\sqrt{6}}_{ 4}$ & $\textbf{1}$  & $a$  & $b$  & $\sigma$  & $\tau$ \\
\hline
$\textbf{1}$  & $ \textbf{1}$  & $ a$  & $ b$  & $ \sigma$  & $ \tau$  \\
$a$  & $ a$  & $ \textbf{1}$  & $ b$  & $ \tau$  & $ \sigma$  \\
$b$  & $ b$  & $ b$  & $ \textbf{1} \oplus a \oplus b$  & $ \sigma \oplus \tau$  & $ \sigma \oplus \tau$  \\
$\sigma$  & $ \sigma$  & $ \tau$  & $ \sigma \oplus \tau$  & $ \textbf{1} \oplus b \oplus \sigma \oplus \tau$  & $ a \oplus b \oplus \sigma \oplus \tau$  \\
$\tau$  & $ \tau$  & $ \sigma$  & $ \sigma \oplus \tau$  & $ a \oplus b \oplus \sigma \oplus \tau$  & $ \textbf{1} \oplus b \oplus \sigma \oplus \tau$  \\
\hline
\end{tabular}
\\[3mm]
\begin{tabular}{ |c|ccccc|}
 \hline 
 $s_i$ & $0$ & $ 0$ & $ 0$ & $ \frac{1}{2}$ & $ \frac{1}{2}$\\
 $d_i$ & $1$ & $ 1$ & $ 2$ & $ 3$ & $ 3$\\
\hline
 $5^{\sqrt{6}}_{ 4}$ & $\textbf{1}$  & $a$  & $b$  & $\sigma$  & $\tau$ \\
\hline
$\textbf{1}$  & $ \textbf{1}$  & $ a$  & $ b$  & $ \sigma$  & $ \tau$  \\
$a$  & $ a$  & $ \textbf{1}$  & $ b$  & $ \tau$  & $ \sigma$  \\
$b$  & $ b$  & $ b$  & $ \textbf{1} \oplus a \oplus b$  & $ \sigma \oplus \tau$  & $ \sigma \oplus \tau$  \\
$\sigma$  & $ \sigma$  & $ \tau$  & $ \sigma \oplus \tau$  & $ a \oplus b \oplus \sigma \oplus \tau$  & $ \textbf{1} \oplus b \oplus \sigma \oplus \tau$  \\
$\tau$  & $ \tau$  & $ \sigma$  & $ \sigma \oplus \tau$  & $ \textbf{1} \oplus b \oplus \sigma \oplus \tau$  & $ a \oplus b \oplus \sigma \oplus \tau$  \\
\hline
\end{tabular}
\end{table}

Tables \ref{SETS3-567} and \ref{SETS3-9}
list the $S_3$-SET orders  (up to invertible ones) for
2+1D bosonic systems.  

Table \ref{SETS3-567} has three $5^{\sqrt{6}}_{ 4}$ entries that have
identical $(d_i,s_i)$.  But the three entries have different fusion rules (see
Table \ref{frS3}).  If we break the symmetry, the three entries all reduce to
the $K=\tiny \begin{pmatrix}
2 & 1 & 1 & 1 \\ 
1 & 2 & 0 & 0 \\ 
1 & 0 & 2 & 0\\
1 & 0 & 0 & 2\\
\end{pmatrix}
$
4-layer state.  So we expect the $S_3$ symmetry is the permutation symmetry of
the second, the third, and the fourth layers. 

The second
$5^{\sqrt{6}}_{ 4}$ entry can be realized
by the $K=\tiny \begin{pmatrix}
2 & 1 & 1 & 1 \\ 
1 & 2 & 0 & 0 \\ 
1 & 0 & 2 & 0\\
1 & 0 & 0 & 2\\
\end{pmatrix}
$
4-layer state.
The two $d=3$ fermions are the direct-sum of the three degenerate
fermions in the $K=\tiny \begin{pmatrix}
2 & 1 & 1 & 1 \\ 
1 & 2 & 0 & 0 \\ 
1 & 0 & 2 & 0\\
1 & 0 & 0 & 2\\
\end{pmatrix}
$ state.
They carry the following $S_3$ representations
\begin{align}
\label{S3ch}
 \sigma \to \textbf{1}\oplus b,\ \ \ \
 \tau \to a\oplus b.
\end{align}
It is strange that two different irreducible representations are degenerate in
energy. But this can happen for topological excitations in the presence of
symmetry.

Such an assignment of the $S_3$-representations (or $S_3$ ``charges'')
is consistent with the fusion rule (see the second table in Table \ref{frS3}).
For example
\begin{align}
& \ \ \ \
\sigma \otimes \sigma \to \textbf{1} \oplus  2 b \oplus  b\otimes b
=\textbf{1} \oplus  2 b \oplus   (\textbf{1} \oplus  a \oplus b)  
\nonumber\\
& \to  \textbf{1} \oplus  b \oplus   \sigma \oplus \tau
\end{align}
This is why we say that
the second
$5^{\sqrt{6}}_{ 4}$ entry can be realized
by the $K=\tiny \begin{pmatrix}
2 & 1 & 1 & 1 \\ 
1 & 2 & 0 & 0 \\ 
1 & 0 & 2 & 0\\
1 & 0 & 0 & 2\\
\end{pmatrix}
$
state.

However, the $S_3$-charge assignment \eqn{S3ch} does not work for the first and
the third $5^{\sqrt{6}}_{ 4}$ entries (\ie inconsistent with fusion rules in
the first and the third tables in Table \ref{frS3}).  In fact, none of the
$S_3$-charge assignment works.  This mean that the $d=3$ fermions in the first
and the third $5^{\sqrt{6}}_{ 4}$ entries must carry fractionalized
$S_3$-charges or fractionalized $S_3$-representations.  It is not clear if such
fractionalized $S_3$-charges are realizable or not, since we cannot calculate
the modular extensions for those entries (due to the limitation of computer
power).

\def\arraystretch{1.25} \setlength\tabcolsep{3pt}
\begin{table*}[t] 
\caption{
$S_3$-SET orders for bosonic systems labeled in terms of sets of topological excitations.
The list contains all topological orders with 
$N=4,5,6$ $D^2\leq 100$,
$N=7$ $D^2\leq 60$, and
$N=8$ $D^2\leq 40$. (In fact, we fail to find any bosonic $S_3$-SET orders with $N=4,7,8$.)
} 
\label{SETS3-567} 
\centering
\begin{tabular}{ |c|c|l|l|l| } 
\hline 
$N^{|\Th|}_{c}$ & $D^2$ & $d_1,d_2,\cdots$ & $s_1,s_2,\cdots$ & comment \\
\hline 
$3^{\sqrt{6}}_{ 0}$ & $6$ & $1, 1, 2$ & $0, 0, 0$ & $\cE=\Rp(S_3)$\\
\hline
$5^{\sqrt{6}}_{ 4}$ & $24$ & $1, 1, 2, 3, 3$ & $0, 0, 0, \frac{1}{2}, \frac{1}{2}$ & 
SB:$4^{ B}_{ 4}$ 
\\
$5^{\sqrt{6}}_{ 4}$ & $24$ & $1, 1, 2, 3, 3$ & $0, 0, 0, \frac{1}{2}, \frac{1}{2}$ & SB:$4^{ B}_{ 4}$ 
\tiny $
\begin{pmatrix}
2 & 1 & 1 & 1 \\ 
1 & 2 & 0 & 0 \\ 
1 & 0 & 2 & 0\\
1 & 0 & 0 & 2\\
\end{pmatrix}
$
\\
$5^{\sqrt{6}}_{ 4}$ & $24$ & $1, 1, 2, 3, 3$ & $0, 0, 0, \frac{1}{2}, \frac{1}{2}$ & SB:$4^{ B}_{ 4}$ \\
\hline
$6^{\sqrt{6}}_{ 1}$ & $12$ & $1, 1, 2, 1, 1, 2$ & $0, 0, 0, \frac{1}{4}, \frac{1}{4}, \frac{1}{4}$ & $2^{ B}_{ 1}\boxtimes \Rp(S_3)$\\
$6^{\sqrt{6}}_{ 1}$ & $12$ & $1, 1, 2, 1, 1, 2$ & $0, 0, 0, \frac{1}{4}, \frac{1}{4}, \frac{1}{4}$ & SB:$2^{ B}_{ 1}$ \\
$6^{\sqrt{6}}_{-1}$ & $12$ & $1, 1, 2, 1, 1, 2$ & $0, 0, 0, \frac{3}{4}, \frac{3}{4}, \frac{3}{4}$ & $2^{ B}_{-1}\boxtimes \Rp(S_3)$\\
$6^{\sqrt{6}}_{-1}$ & $12$ & $1, 1, 2, 1, 1, 2$ & $0, 0, 0, \frac{3}{4}, \frac{3}{4}, \frac{3}{4}$ & SB:$2^{ B}_{-1}$ \\
$6^{\sqrt{6}}_{ 2}$ & $18$ & $1, 1, 2, 2, 2, 2$ & $0, 0, 0, \frac{1}{3}, \frac{1}{3}, \frac{1}{3}$ & SB:$3^{ B}_{ 2}$ \\
$6^{\sqrt{6}}_{ 2}$ & $18$ & $1, 1, 2, 2, 2, 2$ & $0, 0, 0, \frac{1}{3}, \frac{1}{3}, \frac{1}{3}$ & SB:$3^{ B}_{ 2}$ \\
$6^{\sqrt{6}}_{-2}$ & $18$ & $1, 1, 2, 2, 2, 2$ & $0, 0, 0, \frac{2}{3}, \frac{2}{3}, \frac{2}{3}$ & SB:$3^{ B}_{-2}$ \\
$6^{\sqrt{6}}_{-2}$ & $18$ & $1, 1, 2, 2, 2, 2$ & $0, 0, 0, \frac{2}{3}, \frac{2}{3}, \frac{2}{3}$ & SB:$3^{ B}_{-2}$ \\
$6^{\sqrt{6}}_{ 14/5}$ & $21.708$ & $1, 1, 2,\zeta_{3}^{1},\zeta_{3}^{1},\zeta_{8}^{4}$ & $0, 0, 0, \frac{2}{5}, \frac{2}{5}, \frac{2}{5}$ & $2^{ B}_{ 14/5}\boxtimes \Rp(S_3)$\\
$6^{\sqrt{6}}_{-14/5}$ & $21.708$ & $1, 1, 2,\zeta_{3}^{1},\zeta_{3}^{1},\zeta_{8}^{4}$ & $0, 0, 0, \frac{3}{5}, \frac{3}{5}, \frac{3}{5}$ & $2^{ B}_{-14/5}\boxtimes \Rp(S_3)$\\
 \hline 
\end{tabular} 
\end{table*}

\def\arraystretch{1.25} \setlength\tabcolsep{3pt}
\begin{table*}[t] 
\caption{
$S_3$-SET orders for bosonic systems labeled in terms of sets of topological excitations.
The list contains all topological orders with 
$N=9$ $D^2\leq 30$.
} 
\label{SETS3-9} 
\centering
\begin{tabular}{ |c|c|l|l|l| } 
\hline 
$N^{|\Th|}_{c}$ & $D^2$ & $d_1,d_2,\cdots$ & $s_1,s_2,\cdots$ & comment \\
\hline 
$3^{\sqrt{6}}_{ 0}$ & $6$ & $1, 1, 2$ & $0, 0, 0$ & $\cE=\Rp(S_3)$\\
\hline
$9^{\sqrt{6}}_{ 2}$ & $18$ & $1, 1, 2, 1, 1, 1, 1, 2, 2$ & $0, 0, 0, \frac{1}{3}, \frac{1}{3}, \frac{1}{3}, \frac{1}{3}, \frac{1}{3}, \frac{1}{3}$ & $3^{ B}_{ 2}\boxtimes \Rp(S_3)$\\
$9^{\sqrt{6}}_{-2}$ & $18$ & $1, 1, 2, 1, 1, 1, 1, 2, 2$ & $0, 0, 0, \frac{2}{3}, \frac{2}{3}, \frac{2}{3}, \frac{2}{3}, \frac{2}{3}, \frac{2}{3}$ & $3^{ B}_{-2}\boxtimes \Rp(S_3)$\\
\hline
$9^{\sqrt{6}}_{ 0}$ & $24$ & $1, 1, 2, 1, 1, 2, 2, 2, 2$ & $0, 0, 0, \frac{1}{2}, \frac{1}{2}, 0, 0, 0, \frac{1}{2}$ & SB:$4^{ B}_{ 0}$ \\
$9^{\sqrt{6}}_{ 0}$ & $24$ & $1, 1, 2, 1, 1, 2, 2, 2, 2$ & $0, 0, 0, \frac{1}{2}, \frac{1}{2}, 0, 0, 0, \frac{1}{2}$ & SB:$4^{ B}_{ 0}$ \\
$9^{\sqrt{6}}_{ 1}$ & $24$ & $1, 1, 2, 1, 1, 2, 2, 2, 2$ & $0, 0, 0, \frac{1}{2}, \frac{1}{2}, \frac{1}{8}, \frac{1}{8}, \frac{1}{8}, \frac{1}{2}$ & SB:$4^{ B}_{ 1}$ \\
$9^{\sqrt{6}}_{ 1}$ & $24$ & $1, 1, 2, 1, 1, 2, 2, 2, 2$ & $0, 0, 0, \frac{1}{2}, \frac{1}{2}, \frac{1}{8}, \frac{1}{8}, \frac{1}{8}, \frac{1}{2}$ & SB:$4^{ B}_{ 1}$ \\
$9^{\sqrt{6}}_{ 2}$ & $24$ & $1, 1, 2, 1, 1, 2, 2, 2, 2$ & $0, 0, 0, \frac{1}{2}, \frac{1}{2}, \frac{1}{4}, \frac{1}{4}, \frac{1}{4}, \frac{1}{2}$ & SB:$4^{ B}_{ 2}$ \\
$9^{\sqrt{6}}_{ 2}$ & $24$ & $1, 1, 2, 1, 1, 2, 2, 2, 2$ & $0, 0, 0, \frac{1}{2}, \frac{1}{2}, \frac{1}{4}, \frac{1}{4}, \frac{1}{4}, \frac{1}{2}$ & SB:$4^{ B}_{ 2}$ \\
$9^{\sqrt{6}}_{ 3}$ & $24$ & $1, 1, 2, 1, 1, 2, 2, 2, 2$ & $0, 0, 0, \frac{1}{2}, \frac{1}{2}, \frac{3}{8}, \frac{3}{8}, \frac{3}{8}, \frac{1}{2}$ & SB:$4^{ B}_{ 3}$ \\
$9^{\sqrt{6}}_{ 3}$ & $24$ & $1, 1, 2, 1, 1, 2, 2, 2, 2$ & $0, 0, 0, \frac{1}{2}, \frac{1}{2}, \frac{3}{8}, \frac{3}{8}, \frac{3}{8}, \frac{1}{2}$ & SB:$4^{ B}_{ 3}$ \\
$9^{\sqrt{6}}_{ 4}$ & $24$ & $1, 1, 2, 1, 1, 2, 2, 2, 2$ & $0, 0, 0, \frac{1}{2}, \frac{1}{2}, \frac{1}{2}, \frac{1}{2}, \frac{1}{2}, \frac{1}{2}$ & SB:$4^{ B}_{ 4}$ \\
$9^{\sqrt{6}}_{ 4}$ & $24$ & $1, 1, 2, 1, 1, 2, 2, 2, 2$ & $0, 0, 0, \frac{1}{2}, \frac{1}{2}, \frac{1}{2}, \frac{1}{2}, \frac{1}{2}, \frac{1}{2}$ & SB:$4^{ B}_{ 4}$ \\
$9^{\sqrt{6}}_{-3}$ & $24$ & $1, 1, 2, 1, 1, 2, 2, 2, 2$ & $0, 0, 0, \frac{1}{2}, \frac{1}{2}, \frac{1}{2}, \frac{5}{8}, \frac{5}{8}, \frac{5}{8}$ & SB:$4^{ B}_{-3}$ \\
$9^{\sqrt{6}}_{-3}$ & $24$ & $1, 1, 2, 1, 1, 2, 2, 2, 2$ & $0, 0, 0, \frac{1}{2}, \frac{1}{2}, \frac{1}{2}, \frac{5}{8}, \frac{5}{8}, \frac{5}{8}$ & SB:$4^{ B}_{-3}$ \\
$9^{\sqrt{6}}_{-2}$ & $24$ & $1, 1, 2, 1, 1, 2, 2, 2, 2$ & $0, 0, 0, \frac{1}{2}, \frac{1}{2}, \frac{1}{2}, \frac{3}{4}, \frac{3}{4}, \frac{3}{4}$ & SB:$4^{ B}_{-2}$ \\
$9^{\sqrt{6}}_{-2}$ & $24$ & $1, 1, 2, 1, 1, 2, 2, 2, 2$ & $0, 0, 0, \frac{1}{2}, \frac{1}{2}, \frac{1}{2}, \frac{3}{4}, \frac{3}{4}, \frac{3}{4}$ & SB:$4^{ B}_{-2}$ \\
$9^{\sqrt{6}}_{-1}$ & $24$ & $1, 1, 2, 1, 1, 2, 2, 2, 2$ & $0, 0, 0, \frac{1}{2}, \frac{1}{2}, \frac{1}{2}, \frac{7}{8}, \frac{7}{8}, \frac{7}{8}$ & SB:$4^{ B}_{-1}$ \\
$9^{\sqrt{6}}_{-1}$ & $24$ & $1, 1, 2, 1, 1, 2, 2, 2, 2$ & $0, 0, 0, \frac{1}{2}, \frac{1}{2}, \frac{1}{2}, \frac{7}{8}, \frac{7}{8}, \frac{7}{8}$ & SB:$4^{ B}_{-1}$ \\
\hline
$9^{\sqrt{6}}_{ 5/2}$ & $24$ & $1, 1, 2, 1, 1,\zeta_{2}^{1},\zeta_{2}^{1}, 2,\sqrt{8}$ & $0, 0, 0, \frac{1}{2}, \frac{1}{2}, \frac{5}{16}, \frac{5}{16}, \frac{1}{2}, \frac{5}{16}$ & $3^{ B}_{ 5/2}\boxtimes \Rp(S_3)$\\
$9^{\sqrt{6}}_{ 5/2}$ & $24$ & $1, 1, 2, 1, 1,\zeta_{2}^{1},\zeta_{2}^{1}, 2,\sqrt{8}$ & $0, 0, 0, \frac{1}{2}, \frac{1}{2}, \frac{5}{16}, \frac{5}{16}, \frac{1}{2}, \frac{5}{16}$ & SB:$3^{ B}_{ 5/2}$ \\
$9^{\sqrt{6}}_{ 1/2}$ & $24$ & $1, 1, 2, 1, 1,\zeta_{2}^{1},\zeta_{2}^{1}, 2,\sqrt{8}$ & $0, 0, 0, \frac{1}{2}, \frac{1}{2}, \frac{1}{16}, \frac{1}{16}, \frac{1}{2}, \frac{1}{16}$ & $3^{ B}_{ 1/2}\boxtimes \Rp(S_3)$\\
$9^{\sqrt{6}}_{ 1/2}$ & $24$ & $1, 1, 2, 1, 1,\zeta_{2}^{1},\zeta_{2}^{1}, 2,\sqrt{8}$ & $0, 0, 0, \frac{1}{2}, \frac{1}{2}, \frac{1}{16}, \frac{1}{16}, \frac{1}{2}, \frac{1}{16}$ & SB:$3^{ B}_{ 1/2}$ \\
$9^{\sqrt{6}}_{ 3/2}$ & $24$ & $1, 1, 2, 1, 1,\zeta_{2}^{1},\zeta_{2}^{1}, 2,\sqrt{8}$ & $0, 0, 0, \frac{1}{2}, \frac{1}{2}, \frac{3}{16}, \frac{3}{16}, \frac{1}{2}, \frac{3}{16}$ & $3^{ B}_{ 3/2}\boxtimes \Rp(S_3)$\\
$9^{\sqrt{6}}_{ 3/2}$ & $24$ & $1, 1, 2, 1, 1,\zeta_{2}^{1},\zeta_{2}^{1}, 2,\sqrt{8}$ & $0, 0, 0, \frac{1}{2}, \frac{1}{2}, \frac{3}{16}, \frac{3}{16}, \frac{1}{2}, \frac{3}{16}$ & SB:$3^{ B}_{ 3/2}$ \\
$9^{\sqrt{6}}_{ 7/2}$ & $24$ & $1, 1, 2, 1, 1,\zeta_{2}^{1},\zeta_{2}^{1}, 2,\sqrt{8}$ & $0, 0, 0, \frac{1}{2}, \frac{1}{2}, \frac{7}{16}, \frac{7}{16}, \frac{1}{2}, \frac{7}{16}$ & $3^{ B}_{ 7/2}\boxtimes \Rp(S_3)$\\
$9^{\sqrt{6}}_{ 7/2}$ & $24$ & $1, 1, 2, 1, 1,\zeta_{2}^{1},\zeta_{2}^{1}, 2,\sqrt{8}$ & $0, 0, 0, \frac{1}{2}, \frac{1}{2}, \frac{7}{16}, \frac{7}{16}, \frac{1}{2}, \frac{7}{16}$ & SB:$3^{ B}_{ 7/2}$ \\
$9^{\sqrt{6}}_{-7/2}$ & $24$ & $1, 1, 2, 1, 1,\zeta_{2}^{1},\zeta_{2}^{1}, 2,\sqrt{8}$ & $0, 0, 0, \frac{1}{2}, \frac{1}{2}, \frac{9}{16}, \frac{9}{16}, \frac{1}{2}, \frac{9}{16}$ & $3^{ B}_{-7/2}\boxtimes \Rp(S_3)$\\
$9^{\sqrt{6}}_{-7/2}$ & $24$ & $1, 1, 2, 1, 1,\zeta_{2}^{1},\zeta_{2}^{1}, 2,\sqrt{8}$ & $0, 0, 0, \frac{1}{2}, \frac{1}{2}, \frac{9}{16}, \frac{9}{16}, \frac{1}{2}, \frac{9}{16}$ & SB:$3^{ B}_{-7/2}$ \\
$9^{\sqrt{6}}_{-5/2}$ & $24$ & $1, 1, 2, 1, 1,\zeta_{2}^{1},\zeta_{2}^{1}, 2,\sqrt{8}$ & $0, 0, 0, \frac{1}{2}, \frac{1}{2}, \frac{11}{16}, \frac{11}{16}, \frac{1}{2}, \frac{11}{16}$ & $3^{ B}_{-5/2}\boxtimes \Rp(S_3)$\\
$9^{\sqrt{6}}_{-5/2}$ & $24$ & $1, 1, 2, 1, 1,\zeta_{2}^{1},\zeta_{2}^{1}, 2,\sqrt{8}$ & $0, 0, 0, \frac{1}{2}, \frac{1}{2}, \frac{11}{16}, \frac{11}{16}, \frac{1}{2}, \frac{11}{16}$ & SB:$3^{ B}_{-5/2}$ \\
$9^{\sqrt{6}}_{-3/2}$ & $24$ & $1, 1, 2, 1, 1,\zeta_{2}^{1},\zeta_{2}^{1}, 2,\sqrt{8}$ & $0, 0, 0, \frac{1}{2}, \frac{1}{2}, \frac{13}{16}, \frac{13}{16}, \frac{1}{2}, \frac{13}{16}$ & $3^{ B}_{-3/2}\boxtimes \Rp(S_3)$\\
$9^{\sqrt{6}}_{-3/2}$ & $24$ & $1, 1, 2, 1, 1,\zeta_{2}^{1},\zeta_{2}^{1}, 2,\sqrt{8}$ & $0, 0, 0, \frac{1}{2}, \frac{1}{2}, \frac{13}{16}, \frac{13}{16}, \frac{1}{2}, \frac{13}{16}$ & SB:$3^{ B}_{-3/2}$ \\
$9^{\sqrt{6}}_{-1/2}$ & $24$ & $1, 1, 2, 1, 1,\zeta_{2}^{1},\zeta_{2}^{1}, 2,\sqrt{8}$ & $0, 0, 0, \frac{1}{2}, \frac{1}{2}, \frac{15}{16}, \frac{15}{16}, \frac{1}{2}, \frac{15}{16}$ & $3^{ B}_{-1/2}\boxtimes \Rp(S_3)$\\
$9^{\sqrt{6}}_{-1/2}$ & $24$ & $1, 1, 2, 1, 1,\zeta_{2}^{1},\zeta_{2}^{1}, 2,\sqrt{8}$ & $0, 0, 0, \frac{1}{2}, \frac{1}{2}, \frac{15}{16}, \frac{15}{16}, \frac{1}{2}, \frac{15}{16}$ & SB:$3^{ B}_{-1/2}$ \\
\hline
$9^{\sqrt{6}}_{ 0}$ & $30$ & $1, 1, 2, 2, 2, 2, 2, 2, 2$ & $0, 0, 0, \frac{1}{5}, \frac{1}{5}, \frac{1}{5}, \frac{4}{5}, \frac{4}{5}, \frac{4}{5}$ & SB:$5^{ B}_{ 0}$ \\
$9^{\sqrt{6}}_{ 4}$ & $30$ & $1, 1, 2, 2, 2, 2, 2, 2, 2$ & $0, 0, 0, \frac{2}{5}, \frac{2}{5}, \frac{2}{5}, \frac{3}{5}, \frac{3}{5}, \frac{3}{5}$ & SB:$5^{ B}_{ 4}$ \\
\hline
$9^{\sqrt{6}}_{ 8/7}$ & $55.775$ & $1, 1, 2,\zeta_{5}^{1},\zeta_{5}^{1},\zeta_{5}^{2},\zeta_{5}^{2},2\zeta_5^1,\zeta_{12}^{6}$ & $0, 0, 0, \frac{6}{7}, \frac{6}{7}, \frac{2}{7}, \frac{2}{7}, \frac{6}{7}, \frac{2}{7}$ & $3^{ B}_{ 8/7}\boxtimes \Rp(S_3)$\\
$9^{\sqrt{6}}_{-8/7}$ & $55.775$ & $1, 1, 2,\zeta_{5}^{1},\zeta_{5}^{1},\zeta_{5}^{2},\zeta_{5}^{2},2\zeta_5^1,\zeta_{12}^{6}$ & $0, 0, 0, \frac{1}{7}, \frac{1}{7}, \frac{5}{7}, \frac{5}{7}, \frac{1}{7}, \frac{5}{7}$ & $3^{ B}_{-8/7}\boxtimes \Rp(S_3)$\\
 \hline 
\end{tabular} 
\end{table*}

\subsection{$Z_2 \times Z_2$-SET orders}

Tables \ref{SETZ2Z2-56}, \ref{SETZ2Z2-7}, and \ref{SETZ2Z2-8} list the $Z_2
\times Z_2$-SET orders  (up to invertible ones) for
2+1D bosonic systems.  

\def\arraystretch{1.25} \setlength\tabcolsep{3pt}
\begin{table*}[t] 
\caption{
$Z_2\times Z_2$-SET orders for bosonic systems labeled in terms of sets of topological excitations.
The list contains all topological orders with 
$N=5$ $D^2\leq 100$ and
$N=6$ $D^2\leq 200$.
} 
\label{SETZ2Z2-56} 
\centering
\begin{tabular}{ |c|c|l|l|l| } 
\hline 
$N^{|\Th|}_{c}$ & $D^2$ & $d_1,d_2,\cdots$ & $s_1,s_2,\cdots$ & comment \\
\hline 
$4^{ 2}_{ 0}$ & $4$ & $1, 1, 1, 1$ & $0, 0, 0, 0$ & $\cE=\Rp(Z_2\times Z_2)$\\
\hline
$5^{ 2}_{ 1}$ & $8$ & $1, 1, 1, 1, 2$ & $0, 0, 0, 0, \frac{1}{4}$ & SB:$2^{ B}_{ 1}$ \\
$5^{ 2}_{-1}$ & $8$ & $1, 1, 1, 1, 2$ & $0, 0, 0, 0, \frac{3}{4}$ & SB:$2^{ B}_{-1}$ \\
$5^{ 2}_{ 14/5}$ & $14.472$ & $1, 1, 1, 1,\zeta_{8}^{4}$ & $0, 0, 0, 0, \frac{2}{5}$ & SB:$2^{ B}_{ 14/5}$ \\
$5^{ 2}_{-14/5}$ & $14.472$ & $1, 1, 1, 1,\zeta_{8}^{4}$ & $0, 0, 0, 0, \frac{3}{5}$ & SB:$2^{ B}_{-14/5}$ \\
 \hline 
$6^{ 2}_{ 2}$ & $12$ & $1, 1, 1, 1, 2, 2$ & $0, 0, 0, 0, \frac{1}{3}, \frac{1}{3}$ & SB:$3^{ B}_{ 2}$ \\
$6^{ 2}_{ 2}$ & $12$ & $1, 1, 1, 1, 2, 2$ & $0, 0, 0, 0, \frac{1}{3}, \frac{1}{3}$ & SB:$3^{ B}_{ 2}$ \\
$6^{ 2}_{ 2}$ & $12$ & $1, 1, 1, 1, 2, 2$ & $0, 0, 0, 0, \frac{1}{3}, \frac{1}{3}$ & SB:$3^{ B}_{ 2}$ \\
$6^{ 2}_{ 2}$ & $12$ & $1, 1, 1, 1, 2, 2$ & $0, 0, 0, 0, \frac{1}{3}, \frac{1}{3}$ & SB:$3^{ B}_{ 2}$ \\
$6^{ 2}_{-2}$ & $12$ & $1, 1, 1, 1, 2, 2$ & $0, 0, 0, 0, \frac{2}{3}, \frac{2}{3}$ & SB:$3^{ B}_{-2}$ \\
$6^{ 2}_{-2}$ & $12$ & $1, 1, 1, 1, 2, 2$ & $0, 0, 0, 0, \frac{2}{3}, \frac{2}{3}$ & SB:$3^{ B}_{-2}$ \\
$6^{ 2}_{-2}$ & $12$ & $1, 1, 1, 1, 2, 2$ & $0, 0, 0, 0, \frac{2}{3}, \frac{2}{3}$ & SB:$3^{ B}_{-2}$ \\
$6^{ 2}_{-2}$ & $12$ & $1, 1, 1, 1, 2, 2$ & $0, 0, 0, 0, \frac{2}{3}, \frac{2}{3}$ & SB:$3^{ B}_{-2}$ \\
$6^{ 2}_{ 1/2}$ & $16$ & $1, 1, 1, 1, 2,\sqrt{8}$ & $0, 0, 0, 0, \frac{1}{2}, \frac{1}{16}$ & SB:$3^{ B}_{ 1/2}$ \\
$6^{ 2}_{ 3/2}$ & $16$ & $1, 1, 1, 1, 2,\sqrt{8}$ & $0, 0, 0, 0, \frac{1}{2}, \frac{3}{16}$ & SB:$3^{ B}_{ 3/2}$ \\
$6^{ 2}_{ 5/2}$ & $16$ & $1, 1, 1, 1, 2,\sqrt{8}$ & $0, 0, 0, 0, \frac{1}{2}, \frac{5}{16}$ & SB:$3^{ B}_{ 5/2}$ \\
$6^{ 2}_{ 7/2}$ & $16$ & $1, 1, 1, 1, 2,\sqrt{8}$ & $0, 0, 0, 0, \frac{1}{2}, \frac{7}{16}$ & SB:$3^{ B}_{ 7/2}$ \\
$6^{ 2}_{-7/2}$ & $16$ & $1, 1, 1, 1, 2,\sqrt{8}$ & $0, 0, 0, 0, \frac{1}{2}, \frac{9}{16}$ & SB:$3^{ B}_{-7/2}$ \\
$6^{ 2}_{-5/2}$ & $16$ & $1, 1, 1, 1, 2,\sqrt{8}$ & $0, 0, 0, 0, \frac{1}{2}, \frac{11}{16}$ & SB:$3^{ B}_{-5/2}$ \\
$6^{ 2}_{-3/2}$ & $16$ & $1, 1, 1, 1, 2,\sqrt{8}$ & $0, 0, 0, 0, \frac{1}{2}, \frac{13}{16}$ & SB:$3^{ B}_{-3/2}$ \\
$6^{ 2}_{-1/2}$ & $16$ & $1, 1, 1, 1, 2,\sqrt{8}$ & $0, 0, 0, 0, \frac{1}{2}, \frac{15}{16}$ & SB:$3^{ B}_{-1/2}$ \\
$6^{ 2}_{ 4}$ & $36$ & $1, 1, 1, 1, 4, 4$ & $0, 0, 0, 0, \frac{1}{3}, \frac{2}{3}$ & SB:$9^{ B}_{ 4}$ \\
$6^{ 2}_{ 8/7}$ & $37.183$ & $1, 1, 1, 1,2\zeta_5^1,\zeta_{12}^{6}$ & $0, 0, 0, 0, \frac{6}{7}, \frac{2}{7}$ & SB:$3^{ B}_{ 8/7}$ \\
$6^{ 2}_{-8/7}$ & $37.183$ & $1, 1, 1, 1,2\zeta_5^1,\zeta_{12}^{6}$ & $0, 0, 0, 0, \frac{1}{7}, \frac{5}{7}$ & SB:$3^{ B}_{-8/7}$ \\
 \hline 
\end{tabular} 
\end{table*}

\def\arraystretch{1.25} \setlength\tabcolsep{3pt}
\begin{table*}[t] 
\caption{
$Z_2\times Z_2$-SET orders for bosonic systems labeled in terms of sets of topological excitations.
The list contains all topological orders with 
$N=7$ $D^2\leq 120$.
} 
\label{SETZ2Z2-7} 
\centering
\begin{tabular}{ |c|c|l|l|l| } 
\hline 
$N^{|\Th|}_{c}$ & $D^2$ & $d_1,d_2,\cdots$ & $s_1,s_2,\cdots$ & comment \\
\hline 
$4^{ 2}_{ 0}$ & $4$ & $1, 1, 1, 1$ & $0, 0, 0, 0$ & $\cE=\Rp(Z_2\times Z_2)$\\
\hline 
$7^{ 2}_{ 0}$ & $16$ & $1, 1, 1, 1, 2, 2, 2$ & $0, 0, 0, 0, 0, 0, \frac{1}{2}$ & SB:$4^{ B}_{ 0}$ \\
$7^{ 2}_{ 0}$ & $16$ & $1, 1, 1, 1, 2, 2, 2$ & $0, 0, 0, 0, 0, \frac{1}{4}, \frac{3}{4}$ & SB:$4^{ B}_{ 0}$ \\
$7^{ 2}_{ 1}$ & $16$ & $1, 1, 1, 1, 2, 2, 2$ & $0, 0, 0, 0, \frac{1}{8}, \frac{1}{8}, \frac{1}{2}$ & SB:$4^{ B}_{ 1}$ \\
$7^{ 2}_{ 1}$ & $16$ & $1, 1, 1, 1, 2, 2, 2$ & $0, 0, 0, 0, \frac{1}{8}, \frac{1}{8}, \frac{1}{2}$ & SB:$4^{ B}_{ 1}$ \\
$7^{ 2}_{ 1}$ & $16$ & $1, 1, 1, 1, 2, 2, 2$ & $0, 0, 0, 0, \frac{1}{8}, \frac{1}{8}, \frac{1}{2}$ & SB:$4^{ B}_{ 1}$ \\
$7^{ 2}_{ 1}$ & $16$ & $1, 1, 1, 1, 2, 2, 2$ & $0, 0, 0, 0, \frac{1}{8}, \frac{1}{8}, \frac{1}{2}$ & SB:$4^{ B}_{ 1}$ \\
$7^{ 2}_{ 1}$ & $16$ & $1, 1, 1, 1, 2, 2, 2$ & $0, 0, 0, 0, \frac{1}{8}, \frac{1}{8}, \frac{1}{2}$ & SB:$4^{ B}_{ 1}$ \\
$7^{ 2}_{ 1}$ & $16$ & $1, 1, 1, 1, 2, 2, 2$ & $0, 0, 0, 0, \frac{1}{8}, \frac{1}{8}, \frac{1}{2}$ & SB:$4^{ B}_{ 1}$ \\
$7^{ 2}_{ 1}$ & $16$ & $1, 1, 1, 1, 2, 2, 2$ & $0, 0, 0, 0, \frac{1}{8}, \frac{1}{8}, \frac{1}{2}$ & SB:$4^{ B}_{ 1}$ \\
$7^{ 2}_{ 2}$ & $16$ & $1, 1, 1, 1, 2, 2, 2$ & $0, 0, 0, 0, \frac{1}{4}, \frac{1}{4}, \frac{1}{2}$ & SB:$4^{ B}_{ 2}$ \\
$7^{ 2}_{ 3}$ & $16$ & $1, 1, 1, 1, 2, 2, 2$ & $0, 0, 0, 0, \frac{3}{8}, \frac{3}{8}, \frac{1}{2}$ & SB:$4^{ B}_{ 3}$ \\
$7^{ 2}_{ 3}$ & $16$ & $1, 1, 1, 1, 2, 2, 2$ & $0, 0, 0, 0, \frac{3}{8}, \frac{3}{8}, \frac{1}{2}$ & SB:$4^{ B}_{ 3}$ \\
$7^{ 2}_{ 3}$ & $16$ & $1, 1, 1, 1, 2, 2, 2$ & $0, 0, 0, 0, \frac{3}{8}, \frac{3}{8}, \frac{1}{2}$ & SB:$4^{ B}_{ 3}$ \\
$7^{ 2}_{ 3}$ & $16$ & $1, 1, 1, 1, 2, 2, 2$ & $0, 0, 0, 0, \frac{3}{8}, \frac{3}{8}, \frac{1}{2}$ & SB:$4^{ B}_{ 3}$ \\
$7^{ 2}_{ 3}$ & $16$ & $1, 1, 1, 1, 2, 2, 2$ & $0, 0, 0, 0, \frac{3}{8}, \frac{3}{8}, \frac{1}{2}$ & SB:$4^{ B}_{ 3}$ \\
$7^{ 2}_{ 3}$ & $16$ & $1, 1, 1, 1, 2, 2, 2$ & $0, 0, 0, 0, \frac{3}{8}, \frac{3}{8}, \frac{1}{2}$ & SB:$4^{ B}_{ 3}$ \\
$7^{ 2}_{ 3}$ & $16$ & $1, 1, 1, 1, 2, 2, 2$ & $0, 0, 0, 0, \frac{3}{8}, \frac{3}{8}, \frac{1}{2}$ & SB:$4^{ B}_{ 3}$ \\
$7^{ 2}_{ 4}$ & $16$ & $1, 1, 1, 1, 2, 2, 2$ & $0, 0, 0, 0, \frac{1}{2}, \frac{1}{2}, \frac{1}{2}$ & SB:$4^{ B}_{ 4}$ \\
$7^{ 2}_{-3}$ & $16$ & $1, 1, 1, 1, 2, 2, 2$ & $0, 0, 0, 0, \frac{1}{2}, \frac{5}{8}, \frac{5}{8}$ & SB:$4^{ B}_{-3}$ \\
$7^{ 2}_{-3}$ & $16$ & $1, 1, 1, 1, 2, 2, 2$ & $0, 0, 0, 0, \frac{1}{2}, \frac{5}{8}, \frac{5}{8}$ & SB:$4^{ B}_{-3}$ \\
$7^{ 2}_{-3}$ & $16$ & $1, 1, 1, 1, 2, 2, 2$ & $0, 0, 0, 0, \frac{1}{2}, \frac{5}{8}, \frac{5}{8}$ & SB:$4^{ B}_{-3}$ \\
$7^{ 2}_{-3}$ & $16$ & $1, 1, 1, 1, 2, 2, 2$ & $0, 0, 0, 0, \frac{1}{2}, \frac{5}{8}, \frac{5}{8}$ & SB:$4^{ B}_{-3}$ \\
$7^{ 2}_{-3}$ & $16$ & $1, 1, 1, 1, 2, 2, 2$ & $0, 0, 0, 0, \frac{1}{2}, \frac{5}{8}, \frac{5}{8}$ & SB:$4^{ B}_{-3}$ \\
$7^{ 2}_{-3}$ & $16$ & $1, 1, 1, 1, 2, 2, 2$ & $0, 0, 0, 0, \frac{1}{2}, \frac{5}{8}, \frac{5}{8}$ & SB:$4^{ B}_{-3}$ \\
$7^{ 2}_{-3}$ & $16$ & $1, 1, 1, 1, 2, 2, 2$ & $0, 0, 0, 0, \frac{1}{2}, \frac{5}{8}, \frac{5}{8}$ & SB:$4^{ B}_{-3}$ \\
$7^{ 2}_{-2}$ & $16$ & $1, 1, 1, 1, 2, 2, 2$ & $0, 0, 0, 0, \frac{1}{2}, \frac{3}{4}, \frac{3}{4}$ & SB:$4^{ B}_{-2}$ \\
$7^{ 2}_{-1}$ & $16$ & $1, 1, 1, 1, 2, 2, 2$ & $0, 0, 0, 0, \frac{1}{2}, \frac{7}{8}, \frac{7}{8}$ & SB:$4^{ B}_{-1}$ \\
$7^{ 2}_{-1}$ & $16$ & $1, 1, 1, 1, 2, 2, 2$ & $0, 0, 0, 0, \frac{1}{2}, \frac{7}{8}, \frac{7}{8}$ & SB:$4^{ B}_{-1}$ \\
$7^{ 2}_{-1}$ & $16$ & $1, 1, 1, 1, 2, 2, 2$ & $0, 0, 0, 0, \frac{1}{2}, \frac{7}{8}, \frac{7}{8}$ & SB:$4^{ B}_{-1}$ \\
$7^{ 2}_{-1}$ & $16$ & $1, 1, 1, 1, 2, 2, 2$ & $0, 0, 0, 0, \frac{1}{2}, \frac{7}{8}, \frac{7}{8}$ & SB:$4^{ B}_{-1}$ \\
$7^{ 2}_{-1}$ & $16$ & $1, 1, 1, 1, 2, 2, 2$ & $0, 0, 0, 0, \frac{1}{2}, \frac{7}{8}, \frac{7}{8}$ & SB:$4^{ B}_{-1}$ \\
$7^{ 2}_{-1}$ & $16$ & $1, 1, 1, 1, 2, 2, 2$ & $0, 0, 0, 0, \frac{1}{2}, \frac{7}{8}, \frac{7}{8}$ & SB:$4^{ B}_{-1}$ \\
$7^{ 2}_{-1}$ & $16$ & $1, 1, 1, 1, 2, 2, 2$ & $0, 0, 0, 0, \frac{1}{2}, \frac{7}{8}, \frac{7}{8}$ & SB:$4^{ B}_{-1}$ \\
$7^{ 2}_{ 9/5}$ & $28.944$ & $1, 1, 1, 1, 2,\zeta_{8}^{4},\zeta_{8}^{4}$ & $0, 0, 0, 0, \frac{3}{4}, \frac{3}{20}, \frac{2}{5}$ & SB:$4^{ B}_{ 9/5}$ \\
$7^{ 2}_{ 19/5}$ & $28.944$ & $1, 1, 1, 1, 2,\zeta_{8}^{4},\zeta_{8}^{4}$ & $0, 0, 0, 0, \frac{1}{4}, \frac{2}{5}, \frac{13}{20}$ & SB:$4^{ B}_{ 19/5}$ \\
$7^{ 2}_{-19/5}$ & $28.944$ & $1, 1, 1, 1, 2,\zeta_{8}^{4},\zeta_{8}^{4}$ & $0, 0, 0, 0, \frac{3}{4}, \frac{7}{20}, \frac{3}{5}$ & SB:$4^{ B}_{-19/5}$ \\
$7^{ 2}_{-9/5}$ & $28.944$ & $1, 1, 1, 1, 2,\zeta_{8}^{4},\zeta_{8}^{4}$ & $0, 0, 0, 0, \frac{1}{4}, \frac{3}{5}, \frac{17}{20}$ & SB:$4^{ B}_{-9/5}$ \\
$7^{ 2}_{ 0}$ & $52.360$ & $1, 1, 1, 1,\zeta_{8}^{4},\zeta_{8}^{4},3+\sqrt{5}$ & $0, 0, 0, 0, \frac{2}{5}, \frac{3}{5}, 0$ & SB:$4^{ B}_{ 0}$ \\
$7^{ 2}_{ 12/5}$ & $52.360$ & $1, 1, 1, 1,\zeta_{8}^{4},\zeta_{8}^{4},3+\sqrt{5}$ & $0, 0, 0, 0, \frac{3}{5}, \frac{3}{5}, \frac{1}{5}$ & SB:$4^{ B}_{ 12/5}$ \\
$7^{ 2}_{-12/5}$ & $52.360$ & $1, 1, 1, 1,\zeta_{8}^{4},\zeta_{8}^{4},3+\sqrt{5}$ & $0, 0, 0, 0, \frac{2}{5}, \frac{2}{5}, \frac{4}{5}$ & SB:$4^{ B}_{-12/5}$ \\
$7^{ 2}_{ 10/3}$ & $76.937$ & $1, 1, 1, 1,2\zeta_7^1,2\zeta_7^2,\zeta_{16}^{8}$ & $0, 0, 0, 0, \frac{1}{3}, \frac{2}{9}, \frac{2}{3}$ & SB:$4^{ B}_{ 10/3}$ \\
$7^{ 2}_{-10/3}$ & $76.937$ & $1, 1, 1, 1,2\zeta_7^1,2\zeta_7^2,\zeta_{16}^{8}$ & $0, 0, 0, 0, \frac{2}{3}, \frac{7}{9}, \frac{1}{3}$ & SB:$4^{ B}_{-10/3}$ \\
 \hline 
\end{tabular} 
\end{table*}

\def\arraystretch{1.25} \setlength\tabcolsep{3pt}
\begin{table*}[t] 
\caption{
$Z_2\times Z_2$-SET orders for bosonic systems labeled in terms of sets of topological excitations.
The list contains all topological orders with 
$N=8$ $D^2\leq 60$.
} 
\label{SETZ2Z2-8} 
\centering
\begin{tabular}{ |c|c|l|l|l| } 
\hline 
$N^{|\Th|}_{c}$ & $D^2$ & $d_1,d_2,\cdots$ & $s_1,s_2,\cdots$ & comment \\
\hline 
$4^{ 2}_{ 0}$ & $4$ & $1, 1, 1, 1$ & $0, 0, 0, 0$ & $\cE=\Rp(Z_2\times Z_2)$\\
\hline
$8^{ 2}_{ 1}$ & $8$ & $1, 1, 1, 1, 1, 1, 1, 1$ & $0, 0, 0, 0, \frac{1}{4}, \frac{1}{4}, \frac{1}{4}, \frac{1}{4}$ & $2^{ B}_{ 1}\boxtimes \Rp(Z_2\times Z_2)$\\
$8^{ 2}_{ 1}$ & $8$ & $1, 1, 1, 1, 1, 1, 1, 1$ & $0, 0, 0, 0, \frac{1}{4}, \frac{1}{4}, \frac{1}{4}, \frac{1}{4}$ & SB:$2^{ B}_{ 1}$ \\
$8^{ 2}_{ 1}$ & $8$ & $1, 1, 1, 1, 1, 1, 1, 1$ & $0, 0, 0, 0, \frac{1}{4}, \frac{1}{4}, \frac{1}{4}, \frac{1}{4}$ & SB:$2^{ B}_{ 1}$ \\
$8^{ 2}_{ 1}$ & $8$ & $1, 1, 1, 1, 1, 1, 1, 1$ & $0, 0, 0, 0, \frac{1}{4}, \frac{1}{4}, \frac{1}{4}, \frac{1}{4}$ & SB:$2^{ B}_{ 1}$ \\
$8^{ 2}_{-1}$ & $8$ & $1, 1, 1, 1, 1, 1, 1, 1$ & $0, 0, 0, 0, \frac{3}{4}, \frac{3}{4}, \frac{3}{4}, \frac{3}{4}$ & $2^{ B}_{-1}\boxtimes \Rp(Z_2\times Z_2)$\\
$8^{ 2}_{-1}$ & $8$ & $1, 1, 1, 1, 1, 1, 1, 1$ & $0, 0, 0, 0, \frac{3}{4}, \frac{3}{4}, \frac{3}{4}, \frac{3}{4}$ & SB:$2^{ B}_{-1}$ \\
$8^{ 2}_{-1}$ & $8$ & $1, 1, 1, 1, 1, 1, 1, 1$ & $0, 0, 0, 0, \frac{3}{4}, \frac{3}{4}, \frac{3}{4}, \frac{3}{4}$ & SB:$2^{ B}_{-1}$ \\
$8^{ 2}_{-1}$ & $8$ & $1, 1, 1, 1, 1, 1, 1, 1$ & $0, 0, 0, 0, \frac{3}{4}, \frac{3}{4}, \frac{3}{4}, \frac{3}{4}$ & SB:$2^{ B}_{-1}$ \\
$8^{ 2}_{ 14/5}$ & $14.472$ & $1, 1, 1, 1,\zeta_{3}^{1},\zeta_{3}^{1},\zeta_{3}^{1},\zeta_{3}^{1}$ & $0, 0, 0, 0, \frac{2}{5}, \frac{2}{5}, \frac{2}{5}, \frac{2}{5}$ & $2^{ B}_{ 14/5}\boxtimes \Rp(Z_2\times Z_2)$\\
$8^{ 2}_{-14/5}$ & $14.472$ & $1, 1, 1, 1,\zeta_{3}^{1},\zeta_{3}^{1},\zeta_{3}^{1},\zeta_{3}^{1}$ & $0, 0, 0, 0, \frac{3}{5}, \frac{3}{5}, \frac{3}{5}, \frac{3}{5}$ & $2^{ B}_{-14/5}\boxtimes \Rp(Z_2\times Z_2)$\\
$8^{ 2}_{ 0}$ & $20$ & $1, 1, 1, 1, 2, 2, 2, 2$ & $0, 0, 0, 0, \frac{1}{5}, \frac{1}{5}, \frac{4}{5}, \frac{4}{5}$ & SB:$5^{ B}_{ 0}$ \\
$8^{ 2}_{ 0}$ & $20$ & $1, 1, 1, 1, 2, 2, 2, 2$ & $0, 0, 0, 0, \frac{1}{5}, \frac{1}{5}, \frac{4}{5}, \frac{4}{5}$ & SB:$5^{ B}_{ 0}$ \\
$8^{ 2}_{ 0}$ & $20$ & $1, 1, 1, 1, 2, 2, 2, 2$ & $0, 0, 0, 0, \frac{1}{5}, \frac{1}{5}, \frac{4}{5}, \frac{4}{5}$ & SB:$5^{ B}_{ 0}$ \\
$8^{ 2}_{ 0}$ & $20$ & $1, 1, 1, 1, 2, 2, 2, 2$ & $0, 0, 0, 0, \frac{1}{5}, \frac{1}{5}, \frac{4}{5}, \frac{4}{5}$ & SB:$5^{ B}_{ 0}$ \\
$8^{ 2}_{ 4}$ & $20$ & $1, 1, 1, 1, 2, 2, 2, 2$ & $0, 0, 0, 0, \frac{2}{5}, \frac{2}{5}, \frac{3}{5}, \frac{3}{5}$ & SB:$5^{ B}_{ 4}$ \\
$8^{ 2}_{ 4}$ & $20$ & $1, 1, 1, 1, 2, 2, 2, 2$ & $0, 0, 0, 0, \frac{2}{5}, \frac{2}{5}, \frac{3}{5}, \frac{3}{5}$ & SB:$5^{ B}_{ 4}$ \\
$8^{ 2}_{ 4}$ & $20$ & $1, 1, 1, 1, 2, 2, 2, 2$ & $0, 0, 0, 0, \frac{2}{5}, \frac{2}{5}, \frac{3}{5}, \frac{3}{5}$ & SB:$5^{ B}_{ 4}$ \\
$8^{ 2}_{ 4}$ & $20$ & $1, 1, 1, 1, 2, 2, 2, 2$ & $0, 0, 0, 0, \frac{2}{5}, \frac{2}{5}, \frac{3}{5}, \frac{3}{5}$ & SB:$5^{ B}_{ 4}$ \\
$8^{ 2}_{ 2}$ & $48$ & $1, 1, 1, 1, 2,\sqrt{12},\sqrt{12}, 4$ & $0, 0, 0, 0, 0, \frac{1}{8}, \frac{5}{8}, \frac{1}{3}$ & SB:$5^{ B}_{ 2}$ \\
$8^{ 2}_{ 2}$ & $48$ & $1, 1, 1, 1, 2,\sqrt{12},\sqrt{12}, 4$ & $0, 0, 0, 0, 0, \frac{3}{8}, \frac{7}{8}, \frac{1}{3}$ & SB:$5^{ B}_{ 2}$ \\
$8^{ 2}_{-2}$ & $48$ & $1, 1, 1, 1, 2,\sqrt{12},\sqrt{12}, 4$ & $0, 0, 0, 0, 0, \frac{1}{8}, \frac{5}{8}, \frac{2}{3}$ & SB:$5^{ B}_{-2}$ \\
$8^{ 2}_{-2}$ & $48$ & $1, 1, 1, 1, 2,\sqrt{12},\sqrt{12}, 4$ & $0, 0, 0, 0, 0, \frac{3}{8}, \frac{7}{8}, \frac{2}{3}$ & SB:$5^{ B}_{-2}$ \\
$8^{ 2}_{ 16/11}$ & $138.58$ & $1, 1, 1, 1,2\zeta_9^1,2\zeta_9^2,2\zeta_9^3, \zeta_{20}^{10}$ & $0, 0, 0, 0, \frac{9}{11}, \frac{2}{11}, \frac{1}{11}, \frac{6}{11}$ & SB:$5^{ B}_{ 16/11}$ \\
$8^{ 2}_{-16/11}$ & $138.58$ & $1, 1, 1, 1,2\zeta_9^1,2\zeta_9^2,2\zeta_9^3, \zeta_{20}^{10}$ & $0, 0, 0, 0, \frac{2}{11}, \frac{9}{11}, \frac{10}{11}, \frac{5}{11}$ & SB:$5^{ B}_{-16/11}$ \\
$8^{ 2}_{ 18/7}$ & $141.36$ & $1, 1, 1, 1,\zeta_{12}^{6},\zeta_{12}^{6},2\zeta_{12}^2,2\zeta_{12}^4 $ & $0, 0, 0, 0, \frac{6}{7}, \frac{6}{7}, \frac{1}{7}, \frac{3}{7}$ & SB:$5^{ B}_{ 18/7}$ \\
$8^{ 2}_{-18/7}$ & $141.36$ & $1, 1, 1, 1,\zeta_{12}^{6},\zeta_{12}^{6},2\zeta_{12}^2,2\zeta_{12}^4 $ & $0, 0, 0, 0, \frac{1}{7}, \frac{1}{7}, \frac{6}{7}, \frac{4}{7}$ & SB:$5^{ B}_{-18/7}$ \\
 \hline 
\end{tabular} 
\end{table*}

Table \ref{SETZ2Z2frule} list the fusion rules for some $Z_2\times
Z_2$-SET orders.  We see that the $5_1^2$
state is a $\nu=1/2$ bosonic Laughlin state with $Z_2\times Z_2$ symmetry,
where the only topological excitation carries the projective representation of
$Z_2\times Z_2$.  We also see that the $5_{14/2}^2$ state is a bosonic
Fibonacci state with $Z_2\times Z_2$ symmetry, where the only non-abelian
topological excitation carries the projective representation of $Z_2\times
Z_2$.

\begin{table*}[t] 
\caption{
The fusion rules for some $Z_2\times Z_2$-SET orders.
} 
\label{SETZ2Z2frule} 
\centering
\begin{tabular}{ |c|ccccc|}
 \hline 
 $s_i$ & $0$ & $ 0$ & $ 0$ & $ 0$ & $ \frac{1}{4}$\\
 $d_i$ & $1$ & $ 1$ & $ 1$ & $ 1$ & $ 2$\\
\hline
 $5^{ 2}_{ 1}$ & $\textbf{1}$  & $a$  & $b$  & $c$  & $\phi$ \\
\hline
$\textbf{1}$  & $ \textbf{1}$  & $ a$  & $ b$  & $ c$  & $ \phi$  \\
$a$  & $ a$  & $ \textbf{1}$  & $ c$  & $ b$  & $ \phi$  \\
$b$  & $ b$  & $ c$  & $ \textbf{1}$  & $ a$  & $ \phi$  \\
$c$  & $ c$  & $ b$  & $ a$  & $ \textbf{1}$  & $ \phi$  \\
$\phi$  & $ \phi$  & $ \phi$  & $ \phi$  & $ \phi$  & $ \textbf{1} \oplus a \oplus b \oplus c$  \\
\hline
\end{tabular}
~~~~~~~~~
\begin{tabular}{ |c|ccccc|}
 \hline 
 $s_i$ & $0$ & $ 0$ & $ 0$ & $ 0$ & $ \frac{2}{5}$\\
 $d_i$ & $1$ & $ 1$ & $ 1$ & $ 1$ & $2\zeta_{3}^{1}$\\
\hline
 $5^{ 2}_{ 14/5}$ & $\textbf{1}$  & $a$  & $b$  & $c$  & $\eta$ \\
\hline
$\textbf{1}$  & $ \textbf{1}$  & $ a$  & $ b$  & $ c$  & $ \eta$  \\
$a$  & $ a$  & $ \textbf{1}$  & $ c$  & $ b$  & $ \eta$  \\
$b$  & $ b$  & $ c$  & $ \textbf{1}$  & $ a$  & $ \eta$  \\
$c$  & $ c$  & $ b$  & $ a$  & $ \textbf{1}$  & $ \eta$  \\
$\eta$  & $ \eta$  & $ \eta$  & $ \eta$  & $ \eta$  & $ \textbf{1} \oplus a \oplus b \oplus c \oplus 2\eta$  \\
\hline
\end{tabular}
\end{table*}

\subsection{$Z_2 \times Z_2^f$-SET and $Z_4^f$-SET orders}

Table \ref{SETZ2Z2f} lists the $Z_2 \times Z_2^f$-SET
orders  (up to invertible ones) for 2+1D fermionic systems.  
Table \ref{SETZ4f} lists the $Z_4^f$-SET orders (up to invertible ones) for
2+1D fermionic systems.  
For fermionic systems the central charge is determined up to $c_\text{min}$ by the bulk
excitations, where $c_\text{min}$ is the smallest positive central charge of the
modular extensions of $\sRp(G^f)$, for example, $c_\text{min}=1/2$ for
$Z_2^f,Z_2\times Z_2^f, Z_6^f$, $c_\text{min}=1$ for $Z_4^f,Z_8^f$.

\def\arraystretch{1.25} \setlength\tabcolsep{3pt}
\begin{table*}[t] 
\caption{
$Z_2\times Z_2^f$-SET orders  (up to invertible ones) for fermionic systems.
The list contains all topological orders with 
$N=6$ $D^2\leq 300$,
$N=8$ $D^2\leq 60$, and
$N=10$ $D^2\leq 20$.
} 
\label{SETZ2Z2f} 
\centering
\begin{tabular}{ |c|c|l|l|l| } 
\hline 
$N^{|\Th|}_{c}$ & $D^2$ & $d_1,d_2,\cdots$ & $s_1,s_2,\cdots$ & comment \\
\hline 
$4^{ 0}_{0}({ 2\atop  0})$ & $4$ & $1, 1, 1, 1$ & $0, 0, \frac{1}{2}, \frac{1}{2}$ & $\cE=\sRp(Z_2\times Z_2^f)$\\
\hline
$6^{ 0}_{0}$ & $12$ & $1, 1, 1, 1, 2, 2$ & $0, 0, \frac{1}{2}, \frac{1}{2}, \frac{1}{6}, \frac{2}{3}$ & 
SB:$K=$\tiny $\bpm -1 & -2 \\ -2&-1\epm$\\
$6^{ 0}_{0}$ & $12$ & $1, 1, 1, 1, 2, 2$ & $0, 0, \frac{1}{2}, \frac{1}{2}, \frac{1}{3}, \frac{5}{6}$ & SB:$K=$\tiny $\bpm 1 & 2 \\ 2&1\epm$\\
\hline
$8^{ 0}_{0}({ 0\atop 0})$ & $8$ & $1, 1, 1, 1, 1, 1, 1, 1$ & $0, 0, \frac{1}{2}, \frac{1}{2}, \frac{1}{4}, \frac{1}{4}, \frac{3}{4}, \frac{3}{4}$ & $2^{ B}_{ 1}\boxtimes \sRp(Z_2\times Z_2^f)$\\
$8^{ 0}_{0}({ 0\atop 0})$ & $8$ & $1, 1, 1, 1, 1, 1, 1, 1$ & $0, 0, \frac{1}{2}, \frac{1}{2}, \frac{1}{4}, \frac{1}{4}, \frac{3}{4}, \frac{3}{4}$ & SB:$4^{ F}_{0}({ 0\atop 0})$ \\
$8^{ 0}_{-14/5}({\zeta_{8}^{4}\atop  3/20})$ & $14.472$ & $1, 1, 1, 1,\zeta_{3}^{1},\zeta_{3}^{1},\zeta_{3}^{1},\zeta_{3}^{1}$ & $0, 0, \frac{1}{2}, \frac{1}{2}, \frac{1}{10}, \frac{1}{10}, \frac{3}{5}, \frac{3}{5}$ & $2^{ B}_{-14/5}\boxtimes \sRp(Z_2\times Z_2^f)$\\
$8^{ 0}_{14/5}({\zeta_{8}^{4}\atop -3/20})$ & $14.472$ & $1, 1, 1, 1,\zeta_{3}^{1},\zeta_{3}^{1},\zeta_{3}^{1},\zeta_{3}^{1}$ & $0, 0, \frac{1}{2}, \frac{1}{2}, \frac{2}{5}, \frac{2}{5}, \frac{9}{10}, \frac{9}{10}$ & $2^{ B}_{ 14/5}\boxtimes \sRp(Z_2\times Z_2^f)$\\
$8^{ 0}_{0}({ 2\atop  0})$ & $20$ & $1, 1, 1, 1, 2, 2, 2, 2$ & $0, 0, \frac{1}{2}, \frac{1}{2}, \frac{1}{10}, \frac{2}{5}, \frac{3}{5}, \frac{9}{10}$ & SB:$10^{ F}_{0}({\zeta_{2}^{1}\atop  0})$ \\
$8^{ 0}_{0}({ 2\atop  1/2})$ & $20$ & $1, 1, 1, 1, 2, 2, 2, 2$ & $0, 0, \frac{1}{2}, \frac{1}{2}, \frac{1}{5}, \frac{3}{10}, \frac{7}{10}, \frac{4}{5}$ & SB:$10^{ F}_{0}({\zeta_{2}^{1}\atop  1/2})$ \\
$8^{ 0}_{1/4}({\zeta_{2}^{1}\zeta_{6}^{3} \atop  1/2})$ & $27.313$ & $1, 1, 1, 1,\zeta_{6}^{2},\zeta_{6}^{2},\zeta_{6}^{2},\zeta_{6}^{2}$ & $0, 0, \frac{1}{2}, \frac{1}{2}, \frac{1}{4}, \frac{1}{4}, \frac{3}{4}, \frac{3}{4}$ & SB:$4^{ F}_{1/4}({\zeta_{6}^{3}\atop  1/2})$ \\
$8^{ 0}_{1/4}({\zeta_{2}^{1}\zeta_{6}^{3}\atop  1/2})$ & $27.313$ & $1, 1, 1, 1,\zeta_{6}^{2},\zeta_{6}^{2},\zeta_{6}^{2},\zeta_{6}^{2}$ & $0, 0, \frac{1}{2}, \frac{1}{2}, \frac{1}{4}, \frac{1}{4}, \frac{3}{4}, \frac{3}{4}$ & SB:$4^{ F}_{1/4}({\zeta_{6}^{3}\atop  1/2})$ \\
\hline
$10^{ 0}_{0}({ 4\atop  0})$ & $16$ & $1, 1, 1, 1, 1, 1, 1, 1, 2, 2$ & $0, 0, \frac{1}{2}, \frac{1}{2}, 0, 0, \frac{1}{2}, \frac{1}{2}, 0, \frac{1}{2}$ & SB:$8^{ F}_{0}({\sqrt{8}\atop  0})$ \\
$10^{ 0}_{0}({ 4\atop  0})$ & $16$ & $1, 1, 1, 1, 1, 1, 1, 1, 2, 2$ & $0, 0, \frac{1}{2}, \frac{1}{2}, 0, 0, \frac{1}{2}, \frac{1}{2}, 0, \frac{1}{2}$ & SB:$8^{ F}_{0}({\sqrt{8}\atop  0})$ \\
$10^{ 0}_{0}({\sqrt{8}\atop  1/8})$ & $16$ & $1, 1, 1, 1, 1, 1, 1, 1, 2, 2$ & $0, 0, \frac{1}{2}, \frac{1}{2}, 0, 0, \frac{1}{2}, \frac{1}{2}, \frac{1}{8}, \frac{5}{8}$ & SB:$8^{ F}_{0}({ 2\atop  1/8})$ \\
$10^{ 0}_{0}({\sqrt{8}\atop  1/8})$ & $16$ & $1, 1, 1, 1, 1, 1, 1, 1, 2, 2$ & $0, 0, \frac{1}{2}, \frac{1}{2}, 0, 0, \frac{1}{2}, \frac{1}{2}, \frac{1}{8}, \frac{5}{8}$ & SB:$8^{ F}_{0}({ 2\atop  1/8})$ \\
$10^{ 0}_{0}({ 0\atop 0})$ & $16$ & $1, 1, 1, 1, 1, 1, 1, 1, 2, 2$ & $0, 0, \frac{1}{2}, \frac{1}{2}, 0, 0, \frac{1}{2}, \frac{1}{2}, \frac{1}{4}, \frac{3}{4}$ & SB:$8^{ F}_{0}({ 0\atop 0})$ \\
$10^{ 0}_{0}({ 0\atop 0})$ & $16$ & $1, 1, 1, 1, 1, 1, 1, 1, 2, 2$ & $0, 0, \frac{1}{2}, \frac{1}{2}, 0, 0, \frac{1}{2}, \frac{1}{2}, \frac{1}{4}, \frac{3}{4}$ & SB:$8^{ F}_{0}({ 0\atop 0})$ \\
$10^{ 0}_{0}({\sqrt{8}\atop -1/8})$ & $16$ & $1, 1, 1, 1, 1, 1, 1, 1, 2, 2$ & $0, 0, \frac{1}{2}, \frac{1}{2}, 0, 0, \frac{1}{2}, \frac{1}{2}, \frac{3}{8}, \frac{7}{8}$ & SB:$8^{ F}_{0}({ 2\atop -1/8})$ \\
$10^{ 0}_{0}({\sqrt{8}\atop -1/8})$ & $16$ & $1, 1, 1, 1, 1, 1, 1, 1, 2, 2$ & $0, 0, \frac{1}{2}, \frac{1}{2}, 0, 0, \frac{1}{2}, \frac{1}{2}, \frac{3}{8}, \frac{7}{8}$ & SB:$8^{ F}_{0}({ 2\atop -1/8})$ \\
 \hline 
\end{tabular} 
\end{table*}

\def\arraystretch{1.25} \setlength\tabcolsep{3pt}
\begin{table*}[t] 
\caption{
$Z_4^f$-SET orders for fermionic systems.
The list contains all topological orders with 
$N=6$ $D^2\leq 100$,
$N=8$ $D^2\leq 60$, and
$N=10$ $D^2\leq 20$.
} 
\label{SETZ4f} 
\centering
\begin{tabular}{ |c|c|l|l|l| } 
\hline 
$N^{|\Th|}_{c}$ & $D^2$ & $d_1,d_2,\cdots$ & $s_1,s_2,\cdots$ & comment \\
\hline 
$4^{ 0}_{0}$ & $4$ & $1, 1, 1, 1$ & $0, 0, \frac{1}{2}, \frac{1}{2}$ & $\cE=\sRp(Z_4^f)$\\
\hline
$6^{ 0}_{0}$ & $12$ & $1, 1, 1, 1, 2, 2$ & $0, 0, \frac{1}{2}, \frac{1}{2}, \frac{1}{6}, \frac{2}{3}$ & 
\tiny $K=
-\begin{pmatrix}
1&2 \\
2&1\\ 
\end{pmatrix}
$
\\
$6^{ 0}_{0}$ & $12$ & $1, 1, 1, 1, 2, 2$ & $0, 0, \frac{1}{2}, \frac{1}{2}, \frac{1}{3}, \frac{5}{6}$ & 
\tiny $K=
\begin{pmatrix}
1&2 \\
2&1\\ 
\end{pmatrix}
$
\\
\hline
$8^{ 0}_{0}$ & $8$ & $1, 1, 1, 1, 1, 1, 1, 1$ & $0, 0, \frac{1}{2}, \frac{1}{2}, \frac{1}{4}, \frac{1}{4}, \frac{3}{4}, \frac{3}{4}$ & $2^{ B}_{-1}\boxtimes \sRp(Z_4^f)$\\
$8^{ 0}_{0}$ & $8$ & $1, 1, 1, 1, 1, 1, 1, 1$ & $0, 0, \frac{1}{2}, \frac{1}{2}, \frac{1}{4}, \frac{1}{4}, \frac{3}{4}, \frac{3}{4}$ & $2^{ B}_{ 1}\boxtimes \sRp(Z_4^f)$\\
$8^{ 0}_{-14/5}$ & $14.472$ & $1, 1, 1, 1,\zeta_{3}^{1},\zeta_{3}^{1},\zeta_{3}^{1},\zeta_{3}^{1}$ & $0, 0, \frac{1}{2}, \frac{1}{2}, \frac{1}{10}, \frac{1}{10}, \frac{3}{5}, \frac{3}{5}$ & $2^{ B}_{-14/5}\boxtimes \sRp(Z_4^f)$\\
$8^{ 0}_{14/5}$ & $14.472$ & $1, 1, 1, 1,\zeta_{3}^{1},\zeta_{3}^{1},\zeta_{3}^{1},\zeta_{3}^{1}$ & $0, 0, \frac{1}{2}, \frac{1}{2}, \frac{2}{5}, \frac{2}{5}, \frac{9}{10}, \frac{9}{10}$ & $2^{ B}_{ 14/5}\boxtimes \sRp(Z_4^f)$\\
$8^{ 0}_{0}$ & $20$ & $1, 1, 1, 1, 2, 2, 2, 2$ & $0, 0, \frac{1}{2}, \frac{1}{2}, \frac{1}{10}, \frac{2}{5}, \frac{3}{5}, \frac{9}{10}$ &  SB:$10^{ F}_{0}({\zeta_{2}^{1}\atop  0})$ \\
$8^{ 0}_{0}$ & $20$ & $1, 1, 1, 1, 2, 2, 2, 2$ & $0, 0, \frac{1}{2}, \frac{1}{2}, \frac{1}{5}, \frac{3}{10}, \frac{7}{10}, \frac{4}{5}$ & SB:$10^{ F}_{0}({\zeta_{2}^{1}\atop  1/2})$ \\
\hline
$10^{ 0}_{0}({ 4\atop  0})$ & $16$ & $1, 1, 1, 1, 1, 1, 1, 1, 2, 2$ & $0, 0, \frac{1}{2}, \frac{1}{2}, 0, 0, \frac{1}{2}, \frac{1}{2}, 0, \frac{1}{2}$ & SB:$8^{ F}_{0}({\sqrt{8}\atop  0})$ \\
$10^{ 0}_{0}({ 4\atop  0})$ & $16$ & $1, 1, 1, 1, 1, 1, 1, 1, 2, 2$ & $0, 0, \frac{1}{2}, \frac{1}{2}, 0, 0, \frac{1}{2}, \frac{1}{2}, 0, \frac{1}{2}$ & SB:$8^{ F}_{0}({\sqrt{8}\atop  0})$ \\
$10^{ 0}_{0}({\sqrt{8}\atop  1/8})$ & $16$ & $1, 1, 1, 1, 1, 1, 1, 1, 2, 2$ & $0, 0, \frac{1}{2}, \frac{1}{2}, 0, 0, \frac{1}{2}, \frac{1}{2}, \frac{1}{8}, \frac{5}{8}$ & SB:$8^{ F}_{0}({ 2\atop  1/8})$ \\
$10^{ 0}_{0}({\sqrt{8}\atop  1/8})$ & $16$ & $1, 1, 1, 1, 1, 1, 1, 1, 2, 2$ & $0, 0, \frac{1}{2}, \frac{1}{2}, 0, 0, \frac{1}{2}, \frac{1}{2}, \frac{1}{8}, \frac{5}{8}$ & SB:$8^{ F}_{0}({ 2\atop  1/8})$ \\
$10^{ 0}_{0}({ 0\atop 0})$ & $16$ & $1, 1, 1, 1, 1, 1, 1, 1, 2, 2$ & $0, 0, \frac{1}{2}, \frac{1}{2}, 0, 0, \frac{1}{2}, \frac{1}{2}, \frac{1}{4}, \frac{3}{4}$ & SB:$8^{ F}_{0}({ 0\atop 0})$ \\
$10^{ 0}_{0}({ 0\atop 0})$ & $16$ & $1, 1, 1, 1, 1, 1, 1, 1, 2, 2$ & $0, 0, \frac{1}{2}, \frac{1}{2}, 0, 0, \frac{1}{2}, \frac{1}{2}, \frac{1}{4}, \frac{3}{4}$ & SB:$8^{ F}_{0}({ 0\atop 0})$ \\
$10^{ 0}_{0}({\sqrt{8}\atop -1/8})$ & $16$ & $1, 1, 1, 1, 1, 1, 1, 1, 2, 2$ & $0, 0, \frac{1}{2}, \frac{1}{2}, 0, 0, \frac{1}{2}, \frac{1}{2}, \frac{3}{8}, \frac{7}{8}$ & SB:$8^{ F}_{0}({ 2\atop -1/8})$ \\
$10^{ 0}_{0}({\sqrt{8}\atop -1/8})$ & $16$ & $1, 1, 1, 1, 1, 1, 1, 1, 2, 2$ & $0, 0, \frac{1}{2}, \frac{1}{2}, 0, 0, \frac{1}{2}, \frac{1}{2}, \frac{3}{8}, \frac{7}{8}$ & SB:$8^{ F}_{0}({ 2\atop -1/8})$ \\
 \hline 
\end{tabular} 
\end{table*}

\section{Fusion ring for the modular extensions of $\Rp(G)$ or $\sRp(G^f)$ when
$G$ or $G^f$ is abelian group}

\label{FRgroup}

When the symmetry group $G$ is abelian, the different irreducible
representations, under the fusion, form the same group $G$.  Thus different
irreducible representations can be labeled by the group elements: $(q),\ q\in
G$.  The different symmetry twists are also labeled by the group elements:
$[g],\ g\in G$.  More general symmetry twists may carry some charge. We denote
such charge carrying symmetry twists by $[g, q]$ where $q\in G$.  In fact we can
identify $(q)$ as $[1,q]$.  Those irreducible representations and charged
symmetry twists are particles in the modular extensions of
$\Rp(G)$ or $\sRp(G^f)$.

Since the group is abelian, the symmetry twists do not break the symmetry.
Thus, we have the following fusion rule
\begin{align}
 [1,q] \otimes [g,q'] = [g,qq']
\end{align}
This means that $[g,q']$ and $[g,qq']$ differ by charge $q$.
We also have
\begin{align}
 [g,q]\otimes [g',q']=[gg',qq']
\end{align}
However, the above fusion rule is too restrictive.
Although  $[g,q']$ and $[g,qq']$ differ by charge $q$, we do not know
the net charge of $[g,q']$ when $g\neq 1$.
Thus the more general fusion rule that still preserves charge conservation is
\begin{align}
  [g,q]\otimes [g',q']=[gg',\omega_2(g,g') qq'], \ \ \ \omega_2(g,g') \in G.
\end{align}

From
\begin{align}
&\ \ \ ( [g_1,q_1]\otimes [g_2,q_2])\otimes [g_3,q_3]
\nonumber\\
&=
[g_1g_2g_3, \omega(g_1,g_2)\omega(g_1g_2,g_3)q_1q_2q_3]
\nonumber\\
&= [g_1,q_1]\otimes ([g_2,q_2]\otimes [g_3,q_3])
\nonumber\\
&=
[g_1g_2g_3, \omega(g_1,g_2g_3)\omega(g_2,g_3)q_1q_2q_3]
\end{align}
we see that
\begin{align}
\label{omcnd}
 \omega(g_1,g_2)\omega(g_1g_2,g_3)= \omega(g_1,g_2g_3)\omega(g_2,g_3).
\end{align}
\ie $\omega(g_1,g_2)$ is a group 2-cocycle in $\cH^2(G,G)$.

In the above, we have assumed that the modular extension is abelian (\ie all
the particles in the modular extension have a quantum dimension $1$).  We see
that the fusion rules of abelian  modular extensions are labeled by 2-cocycles
in $\cH^2(G,G)$.

However, sometimes the modular extension can be non-abelian, such as the
modular extension of $\sRp(Z_2^f)$ and $\Rp(Z_2\times Z_2\times Z_2)$.  To
allow such a possibility, we allow $[g,q]$ to be a many-to-one label of the
particle, and define a subgroup $H_g\subset G$:
\begin{align}
H_g = \{ h| [g,q] =[g,hq], h\in G\}.
\end{align} 
The mapping $g\to H_g$ is an important data to describe the fusion.  $H_g$
represents the charge ambiguity of the symmetry twist $[g,q]$.
To get an one-to-one label, we can use
\begin{align}
 [g,qH_g].
\end{align}
Note that, when $g$ is an identity: $g=1$, $H_{g}$ is trivial: $H_1=1$.

The fusion of $[1,q']$ and $[g, qH_g]$ is still given by
\begin{align}
 [1,q']\otimes [g, qH_g] = [g, q'qH_g].
\end{align}
We also have $H_g=H_{g^{-1}}$ and
\begin{align}
 [g, qH_g]\otimes [g^{-1}, q'H_g]= \oplus_{h\in qq' H_g} [1,h]
\end{align}
We see that the quantum dimension of
$[g,qH_g]$ is $d=\sqrt{|H_g|}$.

The fusion rule should satisfy
\begin{align}
&\ \ \ \ [1,q]\otimes  ([g_1, q_1H_{g_1}]\otimes [g_2, q_2H_{g_2}]) 
\\
&= ([1,q]\otimes  [g_1, q_1H_{g_1}])\otimes [g_2, q_2H_{g_2}] 
\nonumber\\
&= [g_1, q_1H_{g_1}]\otimes ([1,q]\otimes[g_2, q_2H_{g_2}]) 
\nonumber
\end{align}
We find that the following ansatz satisfy the above condition
\begin{align}
\label{frG}
&
 [g_1, q_1H_{g_1}]\otimes [g_2, q_2H_{g_2}] 
=
\frac{m^{g_1g_2}}{|(H_{g_1}\vee H_{g_2})\cap H_{g_1g_2}|}
\nonumber\\
&\ \ \ \
\oplus_{q \in \omega(g_1,g_2) q_1 q_2 H_{g_1}\vee H_{g_2}} [g_1g_2, qH_{g_1g_2}]
\end{align}
where $m^{g_1g_2}\in \Z$ and $H_{g_1}\vee H_{g_2}$ is the subgroup generated by $H_{g_1}$ and
$H_{g_2}$.  The above implies that
\begin{align}
\label{Hgcnd}
 \sqrt{|H_{g_1}|}\sqrt{|H_{g_2}|}=m^{g_1g_2}\frac{|H_{g_1}\vee H_{g_2}|}{|(H_{g_1}\vee H_{g_2})\cap H_{g_1g_2}|} \sqrt{|H_{g_1g_2}|}
\end{align}
We see that different fusion rules are labeled by $\omega(g_1,g_2)$
and $H_g$.

It is much easier to find all the $H_g$'s that satisfy \eqn{Hgcnd} and all the
$\omega(g_1,g_2)$ that satisfy \eqn{omcnd}.  From those solutions, we can
directly construct the fusion rule from \eqn{frG}.

\section{Conditions to obtain $\mce{\cE}$'s 
}
\label{cnds}

In our simplified theory, a $\mce{\cE}$ is described by an integer tensor
$N^{ij}_k$ and a mod-1 rational vector $s_i$, where $i,j,k$ run from 1 to $N$ and
$N$ is called the rank of the $\mce{\cE}$.  We may simply denote a $\mce{\cE}$
(the collection of data ($N^{ij}_k,s_i$)) by $\cC$, a particle $i$ in $\cC$ by
$i\in\cC$.  Sometimes it is more convenient to use abstract labels rather than
1 to $N$; we may also abuse $\cC$ as the set of labels (particles).

Not all $(N^{ij}_k, s_i)$ describe a valid $\mce{\cE}$ $\cC$
with modular extensions.  In order to describe a valid $\cC$, $(N^{ij}_k,s_i)$
must satisfy the following
conditions:\cite{W8951,GKh9410089,BK01,RSW0777,Wang10}
\begin{enumerate} 
\item \emph{Fusion ring}:\\
$N^{ij}_k$ for the $\mce{\cE}$ $\cC$ are non-negative integers that satisfy 
\begin{align} 
\label{Ncnd} &
N^{ij}_k=N_k^{ji}, \ \ N_j^{1i}=\delta_{ij}, \ \ \sum_{k=1}^N N_1^{i k}N_1^{
kj}=\delta_{ij},
\\
 & \sum_{m} N_m^{ij}N_l^{m k} = \sum_{n}
 N_l^{in}N_n^{j k} \text{ or } \sum_m N^{ij}_m N_m = N_i N_ j 
\nonumber 
\end{align} 
where the matrix $N_i$ is given by $(N_{i})_{ kj} =
N^{ij}_k$, and the indices $i,j,k$ run from 1 to $N$.  In fact $ N_1^{ij}$ defines a charge conjugation $i\to \bar i$:
\begin{align} 
N_1^{ij}=\delta_{\bar ij}.  
\end{align}
$N^{ij}_k$ satisfying the above conditions define a fusion ring  which
is viewed as the set (of simple objects)
\begin{align}
 \{1,2,\cdots,N\} .
\end{align}

\item \emph{Charge conjugation condition}:\\
\begin{align}
\label{conj}
&\ \ \ \ N^{ij}_k =
 N^{j\bar k}_{\bar i} =
 N^{\bar k i}_{\bar j} 
\nonumber\\
&= N^{\bar i k}_j
= N^{k\bar j}_{i} 
= N^{\bar j\bar i}_{\bar k}. 
\end{align}

\item \emph{Rational condition}:\\
$N^{ij}_k$ and $s_i$ for $\cC$ satisfy\cite{V8821,AM8841,BK01,Em0207007}
\begin{align}
\label{Vcnd}
\sum_r V_{ijkl}^r s_r =0 \text{ mod }1
\end{align} 
where
\begin{align}
\ \ \ \ \ \ 
V_{ijkl}^r &=  
N^{ij}_r N^{kl}_{\bar r}+
N^{il}_r N^{jk}_{\bar r}+
N^{ik}_r N^{jl}_{\bar r}
\nonumber\\
&\ \ \ \
- ( \delta_{ir}+ \delta_{jr}+ \delta_{kr}+ \delta_{lr}) \sum_m N^{ij}_m N^{kl}_{\bar m}
\end{align}

\item \emph{Verlinde fusion characters}:\\
Let the topological $S$-matrix be [see eqn. (223) in \Ref{K062}] 
\begin{align} 
\label{SNsss}
S_{ij}&=\frac{1}{D}\sum_k N^{i j}_k \ee^{2\pi\ii(s_i+s_j-s_k)} d_k ,
\end{align}
where $d_i$ (called quantum dimension) is the largest eigenvalue of the matrix
$N_i$ and ${D=\sqrt{\sum_i d_i^2}}$ (called the total quantum dimension).  
Then
\cite{V8860}:
\begin{align}
\label{Ver}
  \frac{ S_{il} S_{jl}}{ S_{1l} } =\sum_k N^{ij}_k S_{kl}.
\end{align} 

\item
\emph{Weak modularity}: \\
Let the topological $T$-matrix be
\begin{align} 
T_{ij}&=\delta_{ij}\ee^{2\pi\ii s_i}.
\end{align} 
Then [see eqn. (232) in \Ref{K062}]
  \begin{align}
    S^\dag T S&=\Theta T^\dag S^\dag T^\dag,
\nonumber\\
\Theta &={D}^{-1}\sum_{i}\ee^{2\pi\ii s_i} d_i^2=
|\Th|\ee^{2\pi \ii c/8}.
  \end{align}
The parameter $c$ mod 8 is defined via $\Th$, if $|\Th|\neq 0$.

\item 
\emph{Charge conjugation symmetry}:\\
\begin{align}
  S_{ij}=S_{i\bar j}^*,\ s_i=s_{\bar i}, \text{ or } S=S^\dag C,\ \ T=TC,
\end{align}
where the charge conjugation matrix $C$ is given by $C_{ij}=N_1^{ij}=\delta_{i\bar j}$.

\item
\emph{The centralizer describes the symmetry}:\\
Let the centralizer of $\cC$, $\cC_\cC^\text{cen}$, be the subset of the particle labels:
\begin{align}
\cC_\cC^\text{cen}=
\{i\ |\
  S_{ij}=\frac{d_id_j}{D},\  
\forall j \in \cC\}.
\end{align}
Then, $\cC_\cC^\text{cen} = \cE$.

\item
\emph{The second Frobenius-Schur indicator}:\\
Let 
\begin{align}
 \nu_k=D^{-2}\sum_{ij} N^{ij}_k d_id_j \cos(4\pi (s_i-s_j)),
\end{align}
then $\nu_k \in \Z$ if $k=\bar k$\cite{B12044836}.

\item
\emph{Symmetry breaking}:\\
There is a symmetry breaking induced map $\cC \to \cC_0$,
where $\cC_0$ is a UMTC if $\cE=\Rp(G)$
or a $\mce{\sRp(Z_2)}$ if $\cE=\sRp(G^f)$.
See Appendix \ref{SB} for details.
 
\item
\emph{Modular extension}:\\
The $\mce{\cE}$ $\cC$ has modular extensions.

\end{enumerate} 
The above conditions are necessary and sufficient (due to the condition 10) for
$(N^{ij}_k,s_i)$ to describe a $\mce{\cE}$ $\cC$ with modular extensions.

However, when we calculate the tables in Appeandix \ref{SETtbl}, we do not use
the condition 10. So the used conditions are only necessary.  As a result, the
tables may contain fake entries that have no modular extensions.

\section{Symmetry breaking}
\label{SB}

A $\mce{\cE}$ $\cC$ describes a SET with symmetry $\cE$ (up to invertible
GQLs).  If we break the symmetry $\cE$, then the $\mce{\cE}$ will become a UMTC
$\cC_0$ if $\cE=\Rp{G}$ or become a $\mce{Z_2^f}$ $\cC_0$ if $\cE=\sRp{G^f}$.
So there is a natural mapping from $\mce{\cE}$'s to UMTCs or $\mce{Z_2^f}$:
$\cC\to \cC_0$.  Requiring the existance of such map can give us some
additional conditions on $(N^{ij}_k,s_i)$ of $\cC$.

To understand such a map, we note that $\cC$ can be viewed as a subcategory of
$\cC_0$, in the sense that the simple objects in $\cC$ can be viewed
as the simple or composite objects in $\cC_0$:
\begin{align}
\label{iMI}
 i \to \oplus_I M^{iI} I,\ \ \ i \in \cC,\ \ \ I\in \cC_0.
\end{align}
Physical, if we just pretend the symmetry is not there, then every particle in
$\cC$ can also be viewed as a particle in $\cC_0$.  However, a particle in
$\cC$ may be the direct sum of several degenerate particles in $\cC_0$, where
the degeneracy is due to the symmetry, as described by \eqn{iMI}.

In the following, we will obtain some conditions on $M^{iI}$, which will help
us to calculate it.  Let us label the particles in $\cC$ as
$\{i\}=\{1,a,b,\cdots,x,y,\cdots\}$.  Here $a,b,\cdots$ label the
\emph{bosonic} part of $\cE$, and $x,y,\cdots$ label the fermionic part of
$\cE$ (if any) and the rest of non-trival topological excitations.  We have
also used $I$ to label the particles in $\cC_0$.  Clearly, the bosonic part of
$\cE$ are local excitations and are direct sums of $\textbf{1} \in \cC_0$:
\begin{align}
 a \to d_a \textbf{1},\ \ \ \ \text{ or } \ \ \
M^{aI}=d_a\delta_{1I},
\end{align}
(Here $\textbf{1}$ is the trivial particle in $\cC_0$.)
By computing $i\otimes j$ in two different ways, we find that
$M^{iI}$ must also satisfy
\begin{align}
\sum_{IJ} M^{iI} M^{jJ} N^{IJ}_K= \sum_k N^{ij}_k M^{kK}
\end{align}
Taking $K=\textbf{1}$, we obtain
\begin{align}
\sum_{I} M^{iI} M^{j\bar I} = \sum_a N^{ij}_a d_a
\end{align}
Assuming the charge conjugation symmetry: $M^{iI}=M^{\bar i\bar I}$, 
we can rewrite the above as
\begin{align}
\label{MiMj}
\sum_{I} M^{iI} M^{j I} = \sum_a N^{i\bar j}_a d_a,
\end{align}
which implies that
\begin{align}
\sum_{I} (M^{iI})^2  = \sum_a N^{i\bar i}_a d_a.
\end{align}

To obtain more properties of $M^{iI}$ and to solve the above conditions on
$M^{iI}$, let us consider the fusion with $a$ partciles: \begin{align} a\otimes
x = \oplus_y N^{ax}_y y.  \end{align} We define $x$ to be equivalent to $y$ if
there exists $a$ such that $N^{ax}_y\neq 0$.  Let $[x]$ be the equivalent class
of $x$. Clearly $[1]=[a]$.  

First, we like to pointed out that if $i$ and $j$ are equivalent, then
$i$ and $j$ are formed by the same combination of $I$'s, up to
an overall factor, such as
\begin{align}
 i \to I_1\oplus 2 I_2,\ \ \ \ j \to 3I_1\oplus 6I_2.
\end{align}
This is because $a$ particles in $\cC$ is mapped to the direct-sum of identity
in $\cC_0$. Since $i$ and $j$ is related by fusing $a$ or identity in $\cC_0$,
then $i$ and $j$ must be formed by the same combination of $I$'s.

Second, if  $i$ and $j$ are not equivalent, then the $I$'s that enter $i$ do not
overlap with the $I$'s that enter $j$. This is a consequence of
\eqn{MiMj}. The right hand side of \eqn{MiMj} will vanish if
$i$ and $j$ are not equivalent.

Third, the $I$'s that appear in $i$ must have the same quantum dimensions and
spins. This is because those $I$'s must be degenerate. This can only happen if
they have  the same quantum dimensions and spins.

Fourth,  the $I$'s that appears in $i$ must each enter with an equal weight,
such as
\begin{align}
 i \to 2I_1 \oplus 2I_2.
\end{align} 
Again, this is because those $I$'s must be degenerate.  This can only happen if
they can be mapped into each other by symmetry transformations.  Since the
symmetry transformations only permute $I$'s, each $I$ enters with an equal
weight.

Combine the above results, we see that
$M^{iI}$ has the following block structure. We can divide the index $I$
into groups $[I]$, such that there is one-to-one correspondence between
$[i]$ and $[I]$: $[i]\leftrightarrow [I]_{[i]}$, and
\begin{align}
 M^{iI}=0\ \ \ &\text{ if }\ \ \ i\in [i],\ I\notin [I]_{[i]},
\nonumber\\
 M^{iI}=m_i>0\ \ \ &\text{ if }\ \ \ i\in [i],\ I\in [I]_{[i]}.
\end{align}
Therefore, we have
\begin{align}
 m_i^2 n_{[i]} = \sum_a N^{i\bar i}_a d_a,
\end{align}
where $n_{[i]}$ is the size of the set $[I]_{[i]}$.
Since 
\begin{align}
 i = \oplus_{I\in [I]_{[i]}} m_i I,
\end{align}
we have
\begin{align}
 m_im_j n_{[i]} = \sum_a N^{i\bar j}_a d_a,\ \ \ 
i,j \in [i]
\end{align}
In other words, the matrix $\tilde N$ with elements $\tilde N_{ij}=\sum_a
N^{i\bar j}_a d_a$ is block diagonal.  Each block is formed by particles in an
equivalent class $[i]$, and is given by the above expression.  We see that, for
$i,j \in [i]$, $\sum_a N^{i\bar j}_a d_a$ must be a symmetric matrix with a
single non-zero eigenvalue $n_{[i]} \sum_{j\in [i]} m_j^2$ and eigenvector
$(m_j)$.

We also find that
\begin{align}
 d_i =m_in_{[i]} d_I,
\end{align}
or
\begin{align}
 d_I =\frac{m_i d_i}{\sum_a N^{i\bar i}_a d_a}\ \ \ \forall \ \ \ I \in [I]_{[i]}.
\end{align}
Using the fact $s_i=s_j=s_I$, $\forall i,j \in [i],\ I \in [I]_{[i]}$, we can
obtain $(d_I,s_I)$ of $\cC_0$ from $(N^{ij}_k,s_i)$ of $\cC$.  The resulting
$(d_I,s_I)$ must be the quantum dimensions and the spins of a UMTC.
This gives us some extra conditions on $(N^{ij}_k,s_i)$.

\section{Physical and mathematical meaning of
 $\mce{\cE}$ and its modular extensions}

In the main text of the paper, we have explained why $\mce{\cE}$ describes the
bulk particle-like excitations.  We also explained the motivation of  modular
extension via ``gauging'' the symmetry.  In this section, we will discuss a
deeper meaning of  $\mce{\cE}$ and its modular extensions.

We know that $\mce{\cE}$ is a very abstract way to describe the non-abelian
statistics of the excitations.  It is not clear at all that why the excitations
described by $\mce{\cE}$ can be realized by a local lattice model with on-site
symmetry. In physics, we mainly concern about local lattice models and their
properties.  It appears that there is a big gap between the $\mce{\cE}$ studied
in this paper and local lattice models that physicists want to study.  In fact,
the two are closely related.  Here, we will try to explain such a connection
between lattice models and $\mce{\cE}$ (with their modular extensions).

We know that the fusion-braiding properties of particles within a 2-dimensional
open disk can be described by a unitary braided fusion category.  From this
point of view,  a unitary braided fusion category is a \emph{local} theory that
only encode the local properties of the fusion and braiding ({\it i.e.} on an
open disk). We want to promote fusion-braiding properties to be integrable to
any 2-dimensional manifolds because we want those fusion-braiding properties to
be realizable by some local lattice models, which can always be defined on any
2-dimensional manifolds. Therefore, the integrability of fusion-braiding properties to any 2-dimensional manifolds is necessary for the fusion-braiding properties to be realized by some local lattice models.

Now we assume that ``all 2-dimensional manifolds'' are the most powerful
probes. This means that the integrability of the local fusion-braiding
properties to global invariants (on all 2-dimensional manifolds), satisfying
natural physically required properties, is also sufficient for those properties to be realizable by some local lattice models.

The process of integrating the local fusion-braiding properties of particles
(described by a UBFC $\cC$) to give global invariants is defined by the
so-called factorization homology.\cite{lurie, bbj} In order to be free of
framing anomaly, we need a spherical structure, which is guaranteed by the
unitarity of a UBFC. \cite{K062} For general UBFCs, although the global
invariants are well-defined by factorization homology,\cite{bbj} they do not
have nice properties that allow us to give them a natural physical meaning. A
stronger {\it integrability condition} needs to be imposed in order for the
global invariants to have natural physical meanings.

For example, if $\cC$ is assumed to be non-degenerate (\ie UMTC), it was shown
in \Ref{ai-kong-zheng} that factorization homology of a UMTC $\cC$ over a
closed 2-dimensional manifold is given by the category of finite dimensional
Hilbert spaces. If one inserts a finite number of particle-like excitations
$x_1, \cdots, x_r$ on the closed surface, one simply obtain the Hilbert space
$\hom_\cC(\one, x_1\otimes \cdots \otimes x_r)$, which is also the space of
degenerate ground states. This result remains to be true for all closed
2-dimensional manifolds with topological gapped defects and with 2-cells
decorated by different phases.\cite{ai-kong-zheng} This includes the cases that the topological order is defined on any surfaces with boundaries. Therefore, the non-degeneracy is certainly a sufficient integrability condition, which is too strong for the purpose of this work.


In this paper, we consider something more complicated -- the fusion-braiding
properties of particles with symmetry.  By ``with symmetry'', we mean to
include local excitations that carry representations of the symmetry group.
Mathematically, this means that the unitary braided fusion category $\cC$ contain a
SFC $\cE$ as its M\"{u}ger center, i.e. a $\mce{\cE}$. We know that either $\cE=\text{Rep}(G)$ or $\cE=\text{sRep}(G^f)$, where $G$ or $G^f$ is the symmetry group. In this case, we must find a proper integrability condition that is weaker than the non-degeneracy of UBFC. 

In order for the factorization homology of $\cC$ on a surface, a unitary category denoted by $\cC_\Sigma$, to have a physical meaning, we suspect that we should be able to interpret its 
object as finite dimensional Hilbert spaces in a natural way. This suggests
that the category $\cC_\Sigma$ should equipped with a natural functor to the
category of finite dimensional Hilbert spaces, which is a factorization
homology $\cM_\Sigma$ of a UMTC $\cM$.\cite{ai-kong-zheng} So we expect that we
should be able to embed $\cC$ into a UMTC $\cM$ such that the embedding
naturally descends to a functor $\cC_\Sigma \to \cM_\Sigma$ on factorization
homologies. An arbitrary UMTC such as the Drinfeld center $Z(\cC)$ of $\cC$ can
not do the job because there is no canonical way to identify $\cC$ in $\cM$
(with a fixed symmetry $\cE$) so that it is unlikely that it can be compatible
with the integration process. So we expect that the condition
$\cen{\cE}{\cM}=\cC$ is a natural integrability condition that replace the
non-degeneracy condition in this case. This flow of thinking leads us to the
concept of the modular extension of $\cC$. It also suggests that the
non-existence of the modular extension of a given $\cC$ means that $\cC$ is
somewhat inconsistent globally or not integrable to all 2-dimensional manifolds
with natural physical meanings. 

\medskip
This can also be viewed from a different point of view. If we require each particle to be non-trivial in some sense, then we must only consider the non-degenerate unitary braided fusion category over SFC $\cE$.  In this case, for particles not in $\cE$, we know they are non-trivial because their non-trivial double braiding (or non-trivial mutual statistics)
with some particles.  But we still have trouble to know why the particles in
$\cE$ are non-trivial?  From their fusion and braiding properties, they just
behave like the identity or a composite of identities.  

To fix this problem, we put our particles on any 2-dimensional manifolds.  In this case, we can find a way to understand the non-trivialness of the particle
in $\cE$.  This require us to twist the symmetry $G$ or $G^f$ on the
2-dimensional manifold.  In other words, we equip the 2-dimensional manifold
with a flat $G$-connection.  Since the particles in $\cE$ all carry irreducible
representations of $G$, as we move the particles along a non-contractile loop,
the flat $G$-connection will induce a $G$ transformation on the particle (or
more precisely, on the hom space of the particles).  This allows us to probe
the particles in $\cE$ and detect their non-trivialness.

Therefore, as we put particles on a 2-dimensional manifold, it is important to
allow any flat $G$-connection on the manifold.  Now we ask, in this case, can a
non-degenerate unitary braided fusion category $\cC$ over a SFC $\cE$ describes
the fusion-braiding properties of particles that are consistent on any
2-dimensional manifolds with any flat $G$-connections?

In this paper, we propose that the answer is no.  We also propose that the
answer is yes iff the $\cC$ over $\cE$ has modular extensions, which are the categorical ways of gauging the symmetry $\cE$. So, non-degenerate unitary braided fusion categories over SFC can describe the consistent local fusion and braiding on an open disk.  Only the ones with modular extensions can describe the consistent fusion and braiding on any
manifolds (with any flat $G$-connections).

\begin{figure}[tb] 
\centerline{ \includegraphics[scale=0.8]{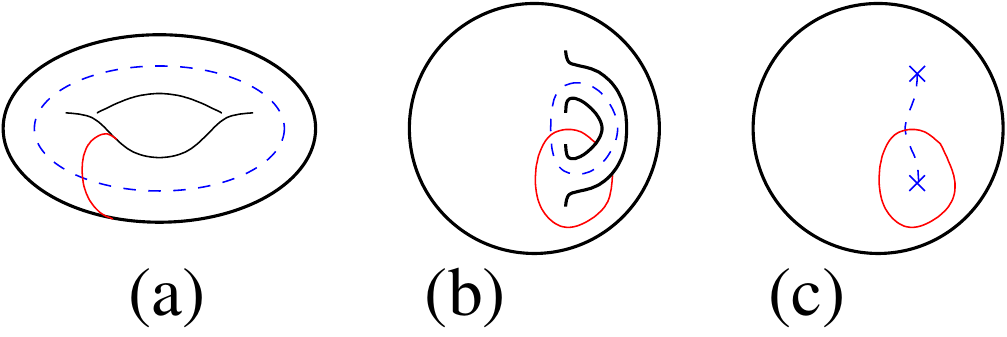} } 
\caption{
(a) A torus with a flat $G$-connection (described by a symmetry twist
along the dashed loop). The thin solid loop is a braiding path.
(b) A handle is deformed into a very thin one.
(c) A very thin handle can be viewed as two defects, and each defect
corresponds to the added particle in the modular extension.
} 
\label{torusbraid} 
\end{figure}

The intuition for the above conjecture is explained in the Fig.
\ref{torusbraid}.  Fig. \ref{torusbraid}(a) describes a braiding of particles
on a torus with flat $G$-connection.  As we deform a handle into a very thin
one, we may view the above braiding on torus as a braiding around the added
particles in the modular extension.  So the consistent fusion and braiding on
any manifolds with any flat $G$-connection must be closely related to the
consistent fusion and braiding on a sphere with the added particles in the
modular extension.  So, the mathematical meaning of the modular extension is to
make the fusion and braiding to be consistent on any manifolds with any flat
$G$-connection.

For a given $\cC$ over $\cE$, there can  be several modular extensions $\cM$.
We believe that those different modular extensions describe the different
structures at the boundary.  This picture leads to the physical conjecture that
the triple $(\cC, \cM, c)$ classify the 2+1D topological/SPT orders
with symmetry $\cE$.

\vfill
\clearpage

\bibliography{../../bib/wencross,../../bib/all,../../bib/publst,./local} 

\end{document}